\renewcommand{\section}{\@startsection
  {section}%
  {1}%
  {0mm}%
  {-1\baselineskip}%
  {0.5\baselineskip}%
  {\normalfont\large\bfseries}%
}
\renewcommand{\subsection}{\@startsection
  {subsection}%
  {2}%
  {0mm}%
  {-1\baselineskip}%
  {0.5\baselineskip}%
  {\normalfont\large\itshape}%
}
\renewcommand{\subsubsection}{\@startsection
  {subsubsection}%
  {3}%
  {0mm}%
  {-1\baselineskip}%
  {0.5\baselineskip}%
  {\normalfont\itshape}%
}
\newsavebox{\tempbox}
\renewcommand{\@makecaption}[2]{
  \vspace{10pt}
  \sbox{\tempbox}{\textbf{#1.} #2}
  \ifthenelse{\lengthtest{\wd\tempbox > \linewidth}}{
    \textbf{#1.} #2\par
  }{
    \begin{center}
      \textbf{#1.} #2
    \end{center}
  }
}
\numberwithin{equation}{section}
\numberwithin{figure}{section}
\newtheoremstyle{mythm}%
  {}%
  {}%
  {\itshape}%
  {}%
  {\bfseries}%
  {.}%
  {.5em}%
  {\thmname{#1}~\thmnumber{#2}\ifthenelse{\equal{\thmnote{#3}}{}}{}{~(\thmnote{#3})}}%
\newtheoremstyle{mydefn}%
  {}%
  {}%
  {\upshape}%
  {}%
  {\bfseries}%
  {.}%
  {.5em}%
  {\thmname{#1}~\thmnumber{#2}\ifthenelse{\equal{\thmnote{#3}}{}}{}{~(\thmnote{#3})}}%
\newtheoremstyle{myremark}%
  {}%
  {}%
  {\upshape}%
  {}%
  {\itshape}%
  {.}%
  {.5em}%
  {\thmname{#1}~\thmnumber{#2}\ifthenelse{\equal{\thmnote{#3}}{}}{}{~(\thmnote{#3})}}%
\theoremstyle{mythm}
\newtheorem{theo}{Theorem}[section]
\newtheorem{lem}[theo]{Lemma}
\newtheorem{prop}[theo]{Proposition}
\newtheorem{cor}[theo]{Corollary}
\newtheorem{obs}[theo]{Observation}
\theoremstyle{mydefn}
\newtheorem{defn}[theo]{Definition}
\newtheorem{exa}[theo]{Example}
\newtheorem{exe}[theo]{Exercise}
\theoremstyle{myremark}
\newtheorem{rem}[theo]{Remark}
\theoremstyle{mythm}
\newcommand{\uend}{\hfill$\lrcorner$}
\newcommand{\uende}{\eqno\lrcorner}
\newcounter{claimcounter}
\newenvironment{claim}[1][]{
  \renewcommand{\proof}{\smallskip\par\noindent\textit{Proof. }}
  \medskip\par\noindent%
  \ifthenelse{\equal{#1}{}}{%
    \setcounter{claimcounter}{0}\refstepcounter{claimcounter}\textit{Claim~\arabic{claimcounter}.}
  }{%
    \ifthenelse{\equal{#1}{resume}}{%
      \refstepcounter{claimcounter}\textit{Claim~\arabic{claimcounter}.}
    }{%
      \textit{Claim~#1.}
    }
  }
}{
  \par\medskip
}
\newcommand{\case}[1]{\par\medskip\noindent\textit{Case #1: }}
\newenvironment{cs}{
  \begin{description}
    \renewcommand{\case}[1]{\item[\itshape\mdseries Case ##1:]}
  }{
  \end{description}
}
\newlist{caselist}{description}{10}
\setlist[caselist]{font=\itshape\mdseries}
\newlist{eroman}{enumerate}{2}
\setlist[eroman,1]{label=(\roman*)}
\setlist[eroman,2]{label=(\alph*)}
\newlist{ealph}{enumerate}{1}
\setlist[ealph]{label=(\Alph*)}
\newcounter{nlistcounter}
\newenvironment{nlist}[1]{
  \renewcommand{\thenlistcounter}{\upshape(#1.\arabic{nlistcounter})}
  \begin{list}{\bfseries\thenlistcounter}{%
      \usecounter{nlistcounter}
      \setlength{\labelwidth}{1.5em}%
      \setlength{\leftmargin}{\labelwidth}%
      \addtolength{\leftmargin}{\labelsep}%
      \setlength{\listparindent}{0em}%
      \setlength{\topsep}{5pt}%
      \setlength{\itemsep}{5pt}%
      \setlength{\parsep}{0pt}%
    }
  }{
  \end{list}
}
\definecolor{blau}{RGB}{0,84,159}
\definecolor{hellblau}{RGB}{142,168,229}
\definecolor{petrol}{RGB}{0,97,101}
\definecolor{tuerkis}{RGB}{0,152,161}
\definecolor{gruen}{RGB}{87,171,39}
\definecolor{maigruen}{RGB}{189,205,0}
\definecolor{gelb}{RGB}{255,237,0}
\definecolor{orange}{RGB}{255,128,0}
\definecolor{magenta}{RGB}{227,0,102}
\definecolor{rot}{RGB}{204,7,30}
\definecolor{bordeaux}{RGB}{161,16,53}
\definecolor{violett}{RGB}{97,33,88}
\definecolor{lila}{RGB}{122,111,172}
\definecolor{grey}{gray}{0.7}
\definecolor{mittelblau}{RGB}{0,128,255}
\definecolor{rosa}{RGB}{255,153,204}
\newcommand{\bigmid}{\;\big|\;}
\newcommand{\Bigmid}{\;\Big|\;}
\newcommand{\ceil}[1]{\left\lceil#1\right\rceil}
\newcommand{\floor}[1]{\left\lfloor#1\right\rfloor}
\renewcommand{\mathbf}[1]{\textit{\bfseries #1}}
\renewcommand{\tilde}{\widetilde}
\renewcommand{\hat}{\widehat}
\renewcommand{\bar}{\overline}
\renewcommand{\vec}{\overrightarrow}
\newcommand{\angles}[1]{\left\langle#1\right\rangle}
\renewcommand{\phi}{\varphi}
\renewcommand{\epsilon}{\varepsilon}
\newcommand{\NN}{{\mathbb N}}
\newcommand{\RR}{{\mathbb R}}
\newcommand{\ZZ}{{\mathbb Z}}
\newcommand{\CA}{{\mathcal A}}
\newcommand{\CF}{{\mathcal F}}
\newcommand{\CH}{{\mathcal H}}
\newcommand{\CI}{{\mathcal I}}
\newcommand{\CL}{{\mathcal L}}
\newcommand{\CM}{{\mathcal M}}
\newcommand{\CS}{{\mathcal S}}
\newcommand{\CT}{{\mathcal T}}
\newcommand{\CY}{{\mathcal Y}}
\newcounter{rbcounter}
\newcommand{\val}{\operatorname{val}}
\newcommand{\tw}{\operatorname{tw}}
\newcommand{\rk}{\operatorname{rk}}
\newcommand{\bw}{\operatorname{bw}}
\newcommand{\ord}{\operatorname{ord}}
\newcommand{\width}{\operatorname{wd}}
\newcommand{\ad}{\operatorname{ad}}
\newcommand{\Sep}{\operatorname{Sep}}
\newcommand{\At}{\operatorname{At}}
\newcommand{\KT}{\mathfrak T}
\newcommand\contract{\mathord{\downarrow}}
\newcommand\expand{\mathord{\uparrow}}
\newcommand{\comp}{\bar{\raisebox{1ex}{\hspace{0.6em}}}}
\newcommand{\Sing}[1]{{\operatorname{Sing}(#1)}}
\newcommand{\U}{U}
\begin{document}
\title{Tangled up in Blue\\[0.5ex]
  \Large\itshape A Survey on Connectivity, Decompositions, and
  Tangles}
\author{Martin Grohe\\\normalsize RWTH Aachen
  University}
\date{}
\maketitle

\begin{abstract}
  We survey an abstract theory of connectivity, based on symmetric
  submodular set functions. We start by developing Robertson and
  Seymour's \cite{gm19} fundamental duality theory
  between branch decompositions (related to the better-known tree
  decompositions) and so-called tangles, which may be viewed as
  highly connected regions in a connectivity system. We move on to
  studying canonical decompositions of connectivity systems into their
  maximal tangles. Last, but not least, we will discuss algorithmic
  aspect of the theory.
\end{abstract}

\section{Introduction}
Suppose we have some structure, maybe a graph, a hypergraph, or maybe
something entirely different like a set of vectors in Euclidean
space. Let ${\U}$ be the \emph{universe} of our structure. We want to study
partitions, or \emph{separations}, as we prefer to call them, of
${\U}$  (see Figure~\ref{fig:sep}). A \emph{connectivity function} assigns to each separation a nonnegative
integer, which we call the \emph{order} of the separation. For
example, ${\U}$ may be the vertex set of a graph and the order of a
separation (or \emph{cut}) $(X,\bar X)$ could be the number of edges
from $X$ to $\bar X$. This is what is known as ``edge connectivity''
in a graph. Or ${\U}$ could be the edge set of a graph, and the order of a
separation $(X,\bar X)$ the number of vertices incident with an edge
in $X$ and an edge in $\bar X$. We will give precise definitions as
well as many more examples in Section~\ref{sec:cf}.

\begin{figure}
  \centering
  \begin{tikzpicture}
  \begin{scope}
  \clip (0,0) ellipse (3cm and 2cm);
  \draw[blau,thick,fill=blue!10] (0,0) ellipse (3cm and 2cm);
  \draw[very thick] (-5.5,0) circle (5cm);
  \end{scope}
  \path (-3,2) node[anchor=north west] {${\U}$}
        (-1.75,0) node {$X$}
        (1.25,0) node {$\bar X={\U}\setminus X$}
        ;
  \path (0,-2.5) node {$\kappa(X)=\kappa(\bar X)=$ order of the separation};
\end{tikzpicture}

  \caption{A separation $(X,\bar X)$ of ${\U}$}\label{fig:sep}
\end{figure}

The guiding questions in this survey are the following.
\begin{description}
\item[Question 1:] How can we decompose a connectivity system along low order
  separations?
\item[Question 2:] What are the highly connected regions of a connectivity system?
\end{description}
Obviously, the two questions are complementary: highly connected regions
should be precisely those regions that have no low order
separations. We will see that there is a precise technical duality
that captures this intuition (the Duality Theorem~\ref{theo:duality}).

While it is relatively straightforward to give a satisfactory
definition of decomposition---branch decomposition (see
Section~\ref{sec:dec})---it is less obvious what a ``highly connected
region'' is supposed to be. The fact that we use the unspecific term
``region'' instead of something specific such as ``$k$-connected
component'' already indicates this. Indeed, it is an old and
well-known problem in graph theory, going back to Tutte, to find decompositions of graphs
into $k$-connected components, for any $k\ge 4$. Satisfactory
decompositions of graphs into $k$-connected components only exist for
$k\le 3$. Even for $k=3$ the decomposition is starting to get
elusive, because the 3-connected components of a graph are not
subgraphs: they may contain so-called ``virtual edges'' that are not
present in the graph. The problem (and also a solution to this
problem) can be illustrated on a hexagonal grid (see
Figure~\ref{fig:grid}). To avoid irregularities at the boundary, it is
best to think of the grid as being embedded on a torus. Clearly, such
a grid is not $4$-connected: the three neighbours of any vertex form a
vertex-separator of order $3$ (see
Figure~\ref{fig:hexgrid-sep}(a)). But there is no obvious notion of
``4-connected component'' of such a grid, because the separations of
order $4$ may overlap (see
Figure~\ref{fig:hexgrid-sep}(b)). 

\begin{figure}
  \centering
  \includegraphics[height=5cm]{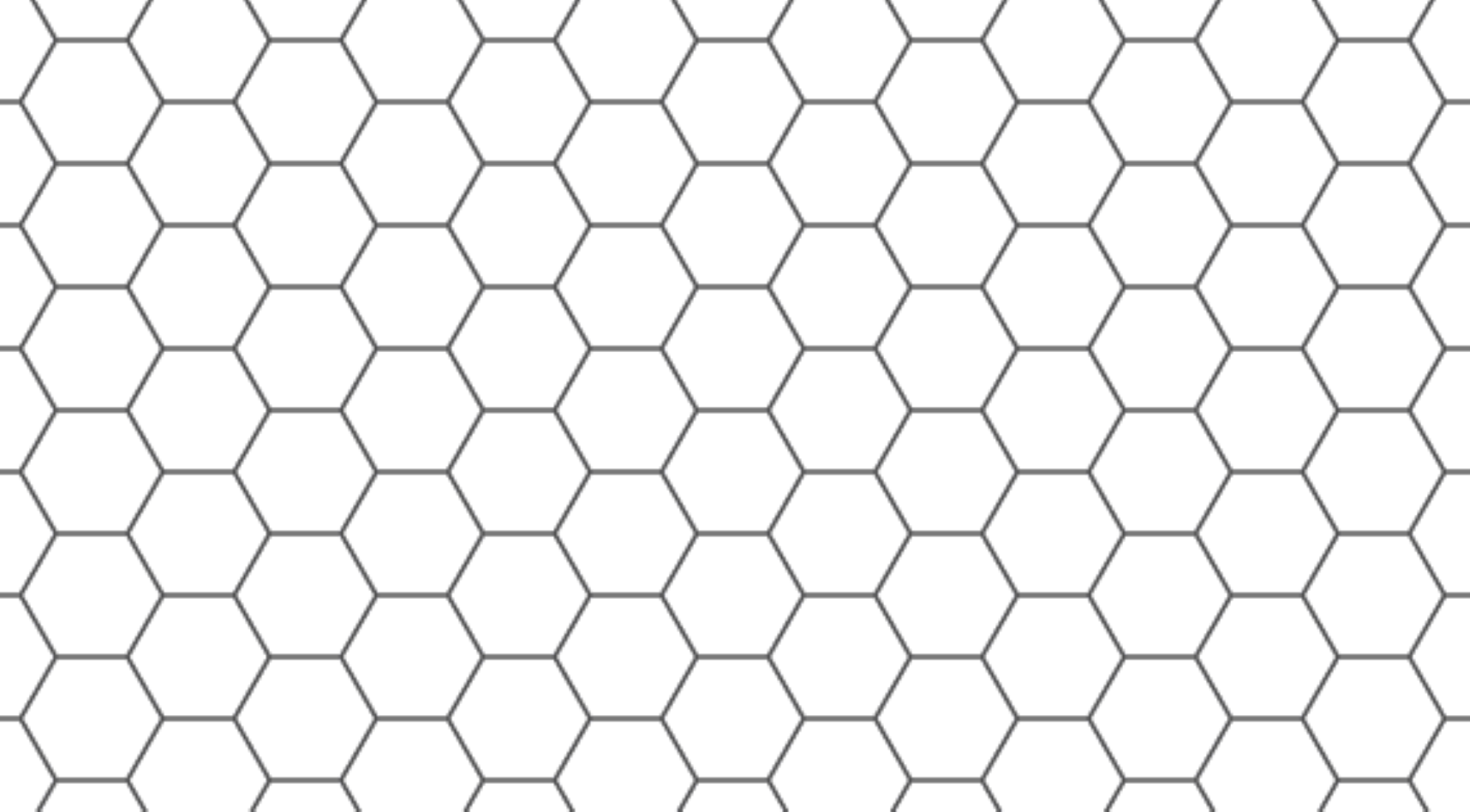}
  \caption{A hexagonal grid}\label{fig:grid}
\end{figure}

\begin{figure}
  \centering
  \input{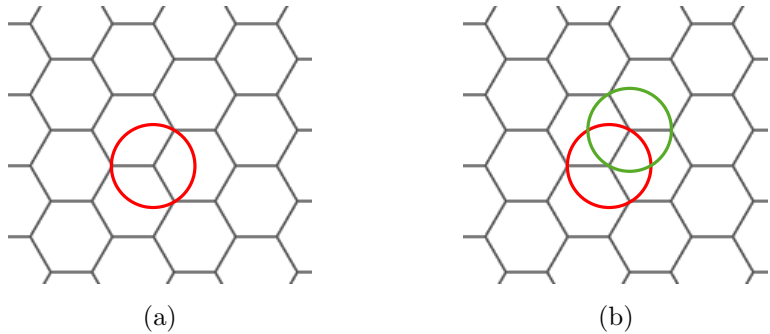}
  \caption{Separations of a hexagonal grid}\label{fig:hexgrid-sep}
\end{figure}

Observe, however, that in every separation of order $3$ of the grid,
one side of separation only consists of a single vertex. If we ignore
separations that are so extremely unbalanced and only look at
separations where both sides have, say, a constant fraction of the
vertices, then the grid suddenly becomes highly connected: if the grid
is square then the smallest balanced separation has order square root
of the number of vertices. This is a well-studied notion of ``high
connectivity''. For example, expander graphs are highly connected in
this sense. It is also the notion of ``high connectivity''
we will be interested in here. It will lead us to \emph{well-linked
  sets} (see Section~\ref{sec:well-linked}) and ultimately \emph{tangles} (see
Section~\ref{sec:tangles}), which describe the highly connected regions in a
connectivity system.  

After establishing the basic duality between decompositions and
tangles (see Section~\ref{sec:duality}), we shall prove that every connectivity
system has a canonical decomposition into its maximal tangles, which
may be viewed as an analogue of the decomposition into $k$-connected
components for $k>3$  (see Section~\ref{sec:can}) . 

The final Section~\ref{sec:alg} is devoted to algorithmic aspects of
the theory.

The goal of this survey is to lay out the basic theory sketched
above. It is not my intention (and beyond my abilities) to
comprehensively cover all results on connectivity functions, 
decompositions, and tangles. %
\section{Connectivity Functions}
\label{sec:cf}

We start by discussing a few basic properties of set functions. Let
${\U}$ be
a finite set, our \emph{universe}. Whenever the
universe ${\U}$ is clear from the context (which it will be most of the
time), we denote the complement ${\U}\setminus X$ of a
set $X\subseteq {\U}$ by $\bar X$. Let $\phi:2^{\U}\to\ZZ$ be an
integer-valued function defined on the subsets of ${\U}$. 
\begin{itemize}
\item $\phi$ is \emph{symmetric} if $\phi(X)=\phi(\bar X)$ for all $X\subseteq {\U}$.
\item $\phi$ is \emph{monotone} if $\phi(X)\subseteq \phi(Y)$ for all $X\subseteq
  Y\subseteq {\U}$.
\item $\phi$ is \emph{submodular} if 
\begin{equation}
  \label{eq:submod}
  \phi(X)+\phi(Y)\ge \phi(X\cap Y)+\phi(X\cup Y)
\end{equation}
for all $X,Y\subseteq {\U}$.
\item $\phi$ is \emph{posimodular} if
\begin{equation}
  \label{eq:posimod}
  \phi(X)+\phi(Y)\ge \phi(X\setminus Y)+\phi(Y\setminus X)
\end{equation}
for all $X,Y\subseteq {\U}$.
\item $\phi$ is \emph{normalised} if $\phi(0)=0$.
\item $\phi$ is \emph{nontrivial} if $\phi(X)\neq 0$ for some
  $X\subseteq {\U}$.
\end{itemize}
Note that for every integer $c$ the function
$\phi-c:X\mapsto\phi(X)-c$ is symmetric, monotone, submodular, posimodular, respectively, if and
only if $\phi$ is. In particular, this is the case for the normalised function
$\phi_0:=\phi-\phi(\emptyset)$. For this reason, we can usually assume,
without loss of generality, that our set functions are normalised.
The \emph{valence} of $\phi$ is 
\[
\val(\phi):=\max\{|\phi(\{u\})-\phi(0)|\mid u\in {\U}\}
\]
if ${\U}\neq\emptyset$ and $\val(\phi):=0$ if ${\U}=\emptyset$.
Of course if $\phi$ is normalised and ${\U}$ is nonempty, $\val(\phi)$ is just the maximum of
the singleton values of $\phi$. We call $\phi$ \emph{univalent} if
$\val(\phi)=1$.

\begin{defn}
  A \emph{connectivity function} on ${\U}$ is a normalised, symmetric, and
  submodular set function $\kappa:2^{\U}\to\ZZ$.
\end{defn}

If $\kappa$ is a connectivity function on ${\U}$, we call the pair
$({\U},\kappa)$ a \emph{connectivity system}. Before we give examples,
we collect a few basic properties of connectivity functions in the
following lemma.

\begin{lem}\label{lem:bp}
  Let $\kappa$ be a connectivity function on ${\U}$.
  \begin{enumerate}
  \item $\kappa$ is posimodular.
  \item $\kappa$ is nonnegative, that is, $\kappa(X)\ge 0$ for all $X\subseteq {\U}$.
  \item $|\kappa(X)-\kappa(Y)|\le\sum_{x\in X\triangle Y}\kappa(x)\le\val(\kappa)\cdot|X\triangle Y|$, for all
    $X,Y\subseteq {\U}$.
  \end{enumerate}
\end{lem}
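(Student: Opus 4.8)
I would prove the three parts in order, bootstrapping each from the previous ones. For \emph{posimodularity}, the trick is that symmetry converts a cut on $Y$ into one on $\bar Y$: applying submodularity~\eqref{eq:submod} to $X$ and $\bar Y$ gives $\kappa(X)+\kappa(\bar Y)\ge\kappa(X\cap\bar Y)+\kappa(X\cup\bar Y)$, and since $X\cap\bar Y=X\setminus Y$ and $X\cup\bar Y=\overline{\bar X\cap Y}=\overline{Y\setminus X}$, symmetry rewrites this as $\kappa(X)+\kappa(Y)\ge\kappa(X\setminus Y)+\kappa(Y\setminus X)$, which is~\eqref{eq:posimod}. For \emph{nonnegativity}, I would plug $Y:=X$ into the posimodular inequality just established: since $X\setminus X=\emptyset$ this reads $2\kappa(X)\ge2\kappa(\emptyset)=0$, using that $\kappa$ is normalised, hence $\kappa(X)\ge0$.

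The heart of part~(3) is the single-element estimate $|\kappa(W\cup\{u\})-\kappa(W)|\le\kappa(\{u\})$ for every $W\subseteq\U$ and $u\in\U\setminus W$. One inequality comes from submodularity~\eqref{eq:submod} applied to $W$ and $\{u\}$ (using $W\cap\{u\}=\emptyset$ and $\kappa(\emptyset)=0$ it gives $\kappa(W)+\kappa(\{u\})\ge\kappa(W\cup\{u\})$); the other comes from posimodularity~\eqref{eq:posimod} applied to $W\cup\{u\}$ and $\{u\}$ (it gives $\kappa(W\cup\{u\})+\kappa(\{u\})\ge\kappa(W)$). Given this, write $X\triangle Y=\{x_1,\dots,x_m\}$ and set $Z_0:=X$ and $Z_i:=Z_{i-1}\triangle\{x_i\}$, so $Z_m=X\triangle(X\triangle Y)=Y$ and each $Z_i$ arises from $Z_{i-1}$ by inserting or deleting the single element $x_i$. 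The single-element estimate, applied with $W=Z_{i-1}\cap Z_i$ and $u=x_i$, bounds $|\kappa(Z_i)-\kappa(Z_{i-1})|$ by $\kappa(\{x_i\})$, and telescoping together with the triangle inequality yields $|\kappa(X)-\kappa(Y)|\le\sum_{i=1}^m\kappa(\{x_i\})=\sum_{x\in X\triangle Y}\kappa(x)$.

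The second inequality in~(3) is then immediate: by normalisation and part~(2) we have $0\le\kappa(\{x\})=|\kappa(\{x\})-\kappa(\emptyset)|\le\val(\kappa)$ for every $x\in\U$, so $\sum_{x\in X\triangle Y}\kappa(x)\le\val(\kappa)\cdot|X\triangle Y|$. None of the steps is deep; the only place that needs genuine care is the single-element estimate, where one must reach for submodularity in one direction and posimodularity in the other, and keep straight which of $Z_{i-1}$ and $Z_i$ contains $x_i$ so that the estimate is applied in the right orientation. The rest is complement-chasing, a one-line induction, and the triangle inequality.
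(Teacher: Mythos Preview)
Your proof is correct and follows essentially the same route as the paper's. The only cosmetic differences are that for~(2) you specialise posimodularity to $Y=X$ whereas the paper applies submodularity directly to $X$ and $\bar X$, and for the lower half of the single-element estimate in~(3) you invoke posimodularity on $W\cup\{u\}$ and $\{u\}$ whereas the paper applies submodularity to $X\cup\{x\}$ and $\overline{\{x\}}$ together with symmetry; since posimodularity was derived from submodularity plus symmetry in part~(1), these are the same argument in different packaging.
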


\noindent
By $X\triangle Y$ we denote the symmetric difference of $X$ and $Y$.

\begin{proof}[Proof of Lemma~\ref{lem:bp}]
  To prove (1), let $X,Y\subseteq {\U}$. Then
  \begin{align*}
    \kappa(X)+\kappa(Y)&=\kappa(X)+\kappa(\bar Y)&\text{by symmetry}\\
    &\ge\kappa(X\cap\bar Y)+\kappa(X\cup\bar Y)&\text{by
                                               submodularity}\\
    &\ge\kappa(X\cap\bar Y)+\kappa(\bar X\cap Y)&\text{by
                                               symmetry}\\
                       &=\kappa(X\setminus Y)+\kappa(Y\setminus X).
  \end{align*}
  To prove (2), let $X\subseteq {\U}$. Then
  \[
  2\cdot\kappa(X)=\kappa(X)+\kappa(\bar
  X)\ge\kappa(\emptyset)+\kappa({\U})=2\cdot\kappa(\emptyset)=0.
  \]
  To prove (3), it clearly suffices to prove that for every $X\subseteq {\U}$ and
  $x\in {\U}\setminus X$ we have 
  \[
  \kappa(X)-\kappa(\{x\})\le\kappa(X\cup\{x\})\le \kappa(X)+\kappa(\{x\}).
  \]
  Indeed, 
  \[
  \kappa(X)+\kappa(\{x\})\ge
  \kappa(\emptyset)+\kappa(X\cup\{x\})=\kappa(X\cup\{x\}),
  \]
  which implies the second inequality, and
  \[
  \kappa(X\cup\{x\})+\kappa(\{x\})=\kappa(X\cup\{x\})+\kappa(\bar{\{x\}})\ge\kappa(X)+\kappa({\U})=\kappa(X),
  \]
  which implies the first inequality.
\end{proof}

\begin{cor}
   A connectivity function $\kappa$ is nontrivial if and only if
   $\val(\kappa)\ge 1$.
\end{cor}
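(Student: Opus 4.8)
The plan is to unwind both directions directly from the definitions, using normalisation ($\kappa(\emptyset)=0$), symmetry (which gives $\kappa({\U})=\kappa(\emptyset)=0$), and parts (2) and (3) of Lemma~\ref{lem:bp}. Recall that since $\kappa$ is integer-valued and normalised, the condition $\val(\kappa)\ge 1$ is equivalent to saying that ${\U}\neq\emptyset$ and $\kappa(\{u\})\neq 0$ for some $u\in {\U}$.

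For the easy implication, suppose $\val(\kappa)\ge 1$. Then by definition of valence ${\U}\neq\emptyset$ and there is some $u\in {\U}$ with $|\kappa(\{u\})|=|\kappa(\{u\})-\kappa(\emptyset)|\ge 1$, so $\kappa(\{u\})\neq 0$, witnessing that $\kappa$ is nontrivial.

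For the converse, suppose $\kappa$ is nontrivial, say $\kappa(X)\neq 0$ for some $X\subseteq {\U}$. Since $\kappa(\emptyset)=\kappa({\U})=0$ we have $X\notin\{\emptyset,{\U}\}$, and in particular ${\U}\neq\emptyset$. Now apply Lemma~\ref{lem:bp}(3) with $Y=\emptyset$: this yields
\[
0<|\kappa(X)|=|\kappa(X)-\kappa(\emptyset)|\le\sum_{x\in X}\kappa(\{x\}).
\]
By Lemma~\ref{lem:bp}(2) every summand $\kappa(\{x\})$ is nonnegative, so the sum being strictly positive forces $\kappa(\{x\})\ge 1$ for at least one $x\in X$ (here we use that $\kappa$ is integer-valued). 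Hence $\val(\kappa)\ge\kappa(\{x\})\ge 1$.

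There is essentially no obstacle here; the only point requiring a moment's care is the bookkeeping around the empty universe and the fact that integrality upgrades "$\kappa(\{x\})>0$" to "$\kappa(\{x\})\ge 1$". Both directions are then immediate from Lemma~\ref{lem:bp}.
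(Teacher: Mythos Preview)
Your proof is correct and follows the intended approach: the paper states this as an immediate corollary of Lemma~\ref{lem:bp} without giving a separate proof, and you have filled in exactly the expected details using parts (2) and (3) of that lemma. If anything, you could shorten the nontrivial $\Rightarrow$ $\val(\kappa)\ge 1$ direction by invoking the second inequality in Lemma~\ref{lem:bp}(3) directly, namely $|\kappa(X)|\le\val(\kappa)\cdot|X|$, which immediately forces $\val(\kappa)>0$ whenever $\kappa(X)\neq 0$; but your route via the sum is equally valid.
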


It may also be worth noting that the trivial function $X\mapsto 0$ is
the only connectivity function on a set ${\U}$ of cardinality
$|{\U}|\le 1$. We denote the trivial connectivity function on the empty
set by $\kappa_{\emptyset}$.

Let $\kappa$ be a connectivity function on a set ${\U}$, and let
$X,Y\subseteq {\U}$ be disjoint. An \emph{$(X,Y)$-separation} is a set $Z$ such
that $X\subseteq Z\subseteq\bar Y$. Observe that if $Z$ is an
$(X,Y)$-separation then $\bar Z$ is a $(Y,X)$-separation. An
$(X,Y)$-separation $Z$ is \emph{minimum} if its order $\kappa(Z)$ is minimal.

The following lemma gives a first indication of the value of
submodularity in this context.

\begin{lem}
  Let $\kappa$ be a connectivity function on a set ${\U}$, and let
  $X,Y\subseteq {\U}$ be disjoint. Then there is a (unique) minimum
  $(X,Y)$-separation $Z$ such that
  $Z\subseteq Z'$ for all minimum $(X,Y)$-separations
  $Z'$.

  We call $Z$ the \emph{leftmost minimum $(X,Y)$-separation}.
\end{lem}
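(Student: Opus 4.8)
The plan is to show that the family of minimum $(X,Y)$-separations is closed under intersection; since $\U$ is finite this family is finite, so the intersection $Z$ of all its members is again a minimum $(X,Y)$-separation, and by construction $Z\subseteq Z'$ for every minimum $(X,Y)$-separation $Z'$. Uniqueness is then immediate: if $Z''$ also has this property, then $Z\subseteq Z''$ and $Z''\subseteq Z$, so $Z=Z''$.

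First I would note that the family of $(X,Y)$-separations is nonempty: since $X\cap Y=\emptyset$ we have $X\subseteq\bar Y\subseteq\bar Y$, so $\bar Y$ itself is an $(X,Y)$-separation. Hence $k:=\min\{\kappa(W)\mid W\text{ is an }(X,Y)\text{-separation}\}$ is well defined. Next I would record the elementary closure property: if $X\subseteq Z_1,Z_2\subseteq\bar Y$, then $X\subseteq Z_1\cap Z_2$ and $Z_1\cup Z_2\subseteq\bar Y$, so both $Z_1\cap Z_2$ and $Z_1\cup Z_2$ are again $(X,Y)$-separations, and therefore each has order at least $k$.

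The crucial step is submodularity of $\kappa$: for two minimum $(X,Y)$-separations $Z_1,Z_2$ we have
\[
2k=\kappa(Z_1)+\kappa(Z_2)\ge\kappa(Z_1\cap Z_2)+\kappa(Z_1\cup Z_2)\ge 2k,
\]
so equality holds throughout; in particular $\kappa(Z_1\cap Z_2)=k$, i.e.\ $Z_1\cap Z_2$ is again a minimum $(X,Y)$-separation. Letting $\CF$ denote the (finite, nonempty) set of all minimum $(X,Y)$-separations and intersecting its members two at a time, we conclude that $Z:=\bigcap_{W\in\CF}W$ lies in $\CF$, which is exactly the desired leftmost minimum separation.

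I do not anticipate a real obstacle; the only points requiring a little care are checking that $\CF\neq\emptyset$ (handled by $\bar Y$ above) and that the pointwise intersection of finitely many sets in $\CF$ can be built up by repeated binary intersections so that the closure property applies. Note that the argument uses only submodularity of $\kappa$, neither symmetry nor normalisation, so it holds verbatim for arbitrary submodular set functions.
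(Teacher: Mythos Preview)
Your proof is correct and uses essentially the same approach as the paper: both arguments hinge on the observation that if $Z_1,Z_2$ are minimum $(X,Y)$-separations then so are $Z_1\cap Z_2$ and $Z_1\cup Z_2$, whence submodularity forces $\kappa(Z_1\cap Z_2)=k$. The paper then selects $Z$ as a minimum separation of minimum cardinality $|Z|$ and concludes $Z\subseteq Z'$ directly from $|Z|\le|Z\cap Z'|$, whereas you take $Z$ to be the intersection of all minimum separations; this is a cosmetic difference only.
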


\begin{proof}
  Let $Z$ be a minimum $(X,Y)$ separation of minimum cardinality
  $|Z|$, and let $Z'$ be another minimum $(X,Y)$-separation. Then both
  $Z\cap Z'$ and $Z\cup Z'$ are $(X,Y)$-separations, and thus
  $\kappa(Z\cap Z'), \kappa(Z\cup Z')\ge\kappa(Z)=\kappa(Z')$. By
  submodularity, this implies $\kappa(Z\cap Z')=\kappa(Z\cup
  Z')=\kappa(Z)=\kappa(Z')$. By the minimality of $|Z|$, we have
  $|Z|\le|Z\cap Z'|$, and this implies $Z\subseteq Z'$.
\end{proof}

\subsection{Examples}
Before we move on with the theory, we consider a number of examples of
connectivity functions from different domains. We discuss these
examples in great detail; in particular, we often give full (and
tedious) proofs of submodularity. The reader should feel free to skip
these proofs. To get a feeling for how these proofs go, I do recommend
to look at the proof in Example~\ref{exa:ec}, which is the simplest.

Our first two examples capture precisely what is known as
edge-connectivity and vertex-connectivity in a graph.

\begin{exa}[Edge Connectivity]\label{exa:ec}
  Let $G$ be a graph. For all sets $X,Y\subseteq V(G)$ we let
  $E(X,Y)$ be the set of all edges with one endvertex in $X$ and one
  endvertex in $Y$. We define the \emph{edge-connectivity function}
  $\nu_G$ on $V(G)$ by
  \[
  \nu_G(X):=|E(X,\bar X)|.
  \]
  We claim that $\nu_G$ is a connectivity function. We obviously have
  $\nu_G(\emptyset)=0$. The function $\nu_G$ is symmetric, because
  $E(X,\bar X)=E(\bar X,X)$. To see that it is submodular, let
  $X,Y\subseteq V(G)$. We have
  \begin{align*}
    \nu_G(X)=\,&|E(X\cap Y,\bar X\cap Y)|+|E(X\cap Y,\bar X\cap \bar
               Y)|\\&+|E(X\cap \bar Y,\bar X\cap Y)|+|E(X\cap \bar Y,\bar
               X\cap \bar Y)|,\\
    \nu_G(Y)=\,&|E(X\cap Y,X\cap \bar Y)|+|E(X\cap Y,\bar X\cap \bar
               Y)|\\&+|E(\bar X\cap Y,X\cap \bar Y)|+|E(\bar X\cap Y,\bar
               X\cap \bar Y)|,\\
    \nu_G(X\cap Y)=\,&|E(X\cap Y,\bar X\cap Y)|+|E(X\cap Y,\bar X\cap \bar
               Y)|+|E(X\cap Y,X\cap \bar Y)|,\\
    \nu_G(X\cup Y)=\,&|E(X\cap Y,\bar X\cap \bar Y)|+|E(\bar X\cap Y,\bar X\cap \bar
               Y)|+|E(X\cap \bar Y,\bar X\cap \bar Y)|
  \end{align*}                 
  (see Figure~\ref{fig:submod}).
  Comparing the sums of the first two and the last two equations yields the submodularity inequality
  \[
  \nu_G(X)+\nu_G(Y)\ge\nu_G(X\cap Y)+\nu_G(X\cup Y).
  \]
  Hence $\nu_G$ is a connectivity function. Note that $\val(\nu_G)$ is
  the maximum degree of $G$.
  \uend
\end{exa}

\begin{figure}
  \centering
  \begin{tikzpicture}
    \draw (-2,-2) rectangle (2,2) (-2,0)--(2,0) (0,-2)--(0,2);
    \fill[black!50] (0,0) rectangle (-2,2);
    \fill[black!25] (0,0) rectangle (2,2) (0,0) rectangle (-2,-2);

    \path (-2.3,1) node {$X$} (-2.3,-1) node {$\bar X$};
    \path (-1,2.3) node {$Y$} (1,2.3) node {$\bar Y$};
    \path (-1,1) node {$\color{white}X\cap Y$} (-1,-1) node {$\bar X\cap Y$} 
          (1,-1) node {$\bar X\cap\bar Y$} (1,1) node {$X\cap \bar Y$};
  \end{tikzpicture}
  \caption{Crossing separations}
  \label{fig:submod}
\end{figure}
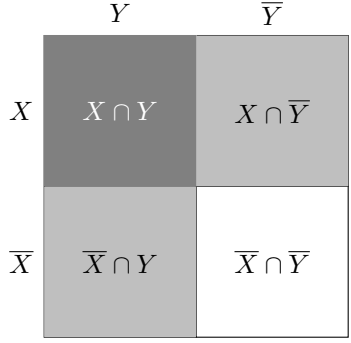

\begin{exa}[Vertex Connectivity, \cite{gm10}]\label{exa:conn1}
  Let $G$ be a graph. We define the \emph{boundary} $\partial(X)$ of an edge set
  $X\subseteq E(G)$ to be the set of vertices incident with both an edge in $X$
  and an edge in $E(G)\setminus X$. We define the
  \emph{vertex-connectivity function} $\kappa_G$ on the edge set
  $E(G)$ by 
  \[
  \kappa_G(X):=|\partial(X)|.
  \]
  for all $X\subseteq E(G)$. 

  We claim that $\kappa_G$ is a connectivity function. Obviously,
  $\kappa_G(\emptyset)=\emptyset$ and $\kappa_G$ is symmetric. To
  prove that it is submodular, let $X,Y\subseteq E(G)$. We need to prove
  \begin{equation}
    \label{eq:submod1}
  \kappa_G(X)+\kappa_G(Y)\ge \kappa_G(X\cap Y)+\kappa_G(X\cup Y).
  \end{equation}
  On the right-hand side of the
  inequality \eqref{eq:submod1},
  \begin{eroman}
  \item all vertices incident with edges in both $X\cap Y$, $\bar
    X\cap \bar Y$ are counted twice, and 
  \item of the remaining vertices all vertices incident with edges in
    both $X\cap Y$, $\bar X\cap Y$ or both $X\cap Y$, $X\cap \bar Y$
    are counted once in $\kappa_G(X\cap Y)$, and all vertices incident with edges in
    both $\bar X\cap Y$, $\bar X\cap \bar Y$ or both $X\cap \bar Y$,
    $\bar X\cap \bar Y$
    are counted once in $\kappa_G(X\cup Y)$
  \end{eroman}
  (again, Figure~\ref{fig:submod} may be helpful).

  On the left-hand side, the vertices in (i) are counted twice as
  well, and the vertices in (ii) are counted at least once, those
  incident with edges in both $X\cap Y$, $\bar X\cap Y$ and those
  incident with edges in both $X\cap \bar Y$, $\bar X\cap \bar Y$ in $\kappa_G(X)$
  and those incident with edges in both $X\cap Y$, $X\cap \bar Y$ and those
  incident with edges in both $\bar X\cap Y$, $\bar X\cap \bar Y$ in
  $\kappa_G(Y)$. This proves the inequality.

  Note that $\val(\kappa_G)\le 2$, where equality holds if and only
  if $G$ contains a triangle or a path of length $3$.
\uend
\end{exa}

\begin{exa}[Hypergraph Connectivity]\label{exa:hc}
  We can easily generalise the edge-con\-nectivity and
  vertex-connectivity functions from graphs to hypergraphs. Let $H$ be a
  hypergraph with vertex set $V(H)$ and edge set $E(H)\subseteq
  2^{V(H)}$. We define the \emph{edge-connectivity} function
  $\nu_H:2^{V(H)}\to\NN$ by 
  \[
  \nu_H(X):=\big|\{e\in E(H)\bigmid e\cap X\neq\emptyset\text{ and
  }e\cap\bar X\neq\emptyset\}\big|
  \]
  and the \emph{vertex-connectivity} function
  $\kappa_H:2^{E(H)}\to\NN$ by 
  \[
  \kappa_H(Y):=|\partial(Y)|=\big|\{v\in V(H)\bigmid \exists e\in Y,e'\in \bar
  Y:\;v\in e\cap e'\}\big|.
  \]
  We leave it as an exercise to the reader to verify that these are
  indeed connectivity functions. 

  Note the duality between the
  two functions: if by $\tilde H$ we denote the dual hypergraph with
  vertex set $E(H)$ and edges $e_v:=\{e\in E(H)\mid v\in e\}$ for all
  $v\in V$, which is actually a multi-hypergraph, we have $\nu_{\tilde
    H}=\kappa_H$ and, identifying each
  vertex $v\in V(H)$ with the edge $e_v\in E(\tilde H)$, also
  $\kappa_{\tilde H}=\nu_{H}$.
  \uend
\end{exa}

\begin{exa}[Matching Connectivity, \cite{vat12} (also see \cite{jeohortel15})]\label{exa:mc}
  There is an alternative connectivity function capturing
  vertex connectivity in a graph $G$. As opposed to the function
  $\kappa_G$, it is defined on the vertex
  set of $G$. For disjoint subsets $X,Y\subseteq V(G)$, an
  \emph{$(X,Y)$-matching} is a set $M\subseteq E(G)$ of mutually
  disjoint edges that all
  have an endvertex in $X$ and an endvertex in $Y$. Here we call two
  edges \emph{disjoint} if they do not have an endvertex in common. A
  \emph{maximum} $(X,Y)$-matching is an $(X,Y)$-matching of maximum
  cardinality. By König's theorem, the maximum cardinality of an
  $(X,Y)$-matching is equal to the minimum cardinality of a vertex cover for
  the set $E(X,Y)$ of edges from $X$ to $Y$, where a \emph{vertex cover} for a
  set $F$ of edges is a set $S$ of vertices such that each edge in $F$
  has at least one endvertex in $S$.

  We define the \emph{matching connectivity function} $\mu_G$ to be the set
  function on $V(G)$ defined by
  \[
  \mu_G(X)=\text{maximum cardinality of an $(X,\bar X)$-matching.}
  \]
  We claim that $\mu_G$ is a connectivity function. Obviously,
  $\mu_G(\emptyset)=\emptyset$ and $\mu_G$ is symmetric. To prove that
  it is submodular, let $X,Y\subseteq V(G)$.  
  Let $M_{\cap}\subseteq E(G)$
  be a maximum $(X\cap Y,\bar{X\cap Y})$-matching, and let $M_{\cup}\subseteq E(G)$
  be a maximum $(X\cup Y,\bar{X\cup Y})$-matching. We define subsets
  $M_1,\ldots, M_6$ of $M_\cap\cup M_\cup$ as 
  follows.
  \begin{itemize}
  \item $M_1$ is the set of all edges in $M_{\cap}$ from $X\cap Y$ to
    $\bar X\cap Y$.
  \item $M_2$ is the set of all edges in $M_{\cap}$ from $X\cap Y$ to
    $X\cap \bar Y$.
  \item $M_3$ is the set of all edges in $M_{\cup}$ from $\bar X\cap Y$ to
    $\bar X\cap \bar Y$.
  \item $M_4$ is the set of all edges in $M_{\cup}$ from $X\cap \bar Y$ to
    $\bar X\cap \bar Y$.
  \item $M_5$ is the set of all edges in $M_\cap\cup M_\cup$ from
    $X\cap Y$ to $\bar X\cap\bar Y$ that are disjoint from all edges in $M_1\cup M_4$
  \item $M_6$ is the set of all edges in $M_\cap\cup M_\cup$ from
    $X\cap Y$ to $\bar X\cap\bar Y$ that are disjoint from all edges in $M_2\cup M_3$
  \end{itemize}
  We claim that $M_1\cup\ldots\cup M_6=M_\cap\cup M_\cup$. An easy
  inspection (Figure~\ref{fig:submod} may help again) shows that it
  suffices to prove that all edges in $M_\cap\cup M_\cup$ from
  $X\cap Y$ to $\bar X\cap\bar Y$ are either in $M_5$ or in
  $M_6$. Suppose for contradiction that $e=vw\in M_\cap\cup M_\cup$ is
  an edge with $v\in X\cap Y$ and $w\in\bar X\cap\bar Y$ that is
  neither in $M_5$ nor $M_6$. If $e\in M_{\cap}$, then $e$ is disjoint
  from all edges in $M_1\cup M_2\subseteq M_{\cap}$ and thus there are
  edges $e'=v'w'\in M_4$ and $e''=v''w''\in M_3$ that share an
  endvertex with $e$. Say, $v'\in X\cap\bar Y$ and
  $w'\in \bar X\cap \bar Y$ and $v''\in \bar X\cap Y$ and
  $w''\in \bar X\cap \bar Y$. As $v\in X\cap Y$ and thus
  $v\neq v',v''$, we have $w=w'=w''$. However, as
  $e',e''\in M_{\cup}$, this contradicts $M_{\cup}$ being a
  matching. The case $e\in M_{\cup}$ is symmetric. This proves
  $M_1\cup\ldots\cup M_6=M_\cap\cup M_\cup$.

  Observe next that $M_\cap\cap M_\cup\subseteq M_5\cap M_6$. To see
  this, let $e\in M_\cap\cap M_\cup$. Then $e$ is an edge from 
  $X\cap Y$ to $\bar X\cap\bar Y$. As $e\in M_\cap$, it is disjoint from
  all edges in $M_1\cup M_2\subseteq M_\cap$, and as $e\in M_\cup$, it is disjoint from
  all edges in $M_3\cup M_4\subseteq M_\cup$. Thus $e\in M_5\cap M_6$.

  Finally, observe that 
  $M_1\cup M_4\cup M_5$ is
  an $(X,\bar X)$-matching and and $M_2\cup M_3\cup M_6$ is a $(Y,\bar
  Y)$-matching. Thus we have
  \begin{align*}
    \mu_G(X)+\mu_G(Y)&\ge |M_1\cup M_4\cup M_5|+|M_2\cup M_3\cup
                       M_6|\\
    &\ge|M_1\cup\ldots\cup M_6|+|M_5\cap M_6|\\
    &\ge|M_\cap\cup M_\cup|+|M_\cap\cap M_\cup|\\
    &=|M_\cap|+|M_\cup|\\
    &=\mu_G(X\cap Y)+\mu_G(X\cup Y).
  \end{align*}
  Note that $\mu_G$ is either trivial (if $E(G)=\emptyset$) or univalent.\uend
\end{exa}

Let us show that $\mu_G$ is closely related to $\kappa_G$ and hence
that it also captures vertex-connectivity in $G$. For a set
$Y\subseteq E(G)$ of edges, we let $V(Y)$ be the set of all
endvertices of edges in $Y$. Then $\partial(Y)=V(Y)\cap V(\bar
Y)$. For every $X\subseteq V(G)$, we let $E(X):=E(X,X)$ be the set of
all edges with both endvertices in $X$. Then $E(G)=E(X)\cup E(\bar
X)\cup E(X,\bar X)$.

\begin{lem}\label{lem:mu-vs-kappa}
  Let $G$ be a graph.
  \begin{enumerate}
  \item Let $Y\subseteq E(G)$ and $X\subseteq V(G)$ such
    that 
    $
    V(Y)\setminus\partial(Y)\subseteq X\subseteq V(Y).
    $
    Then $\mu_G(X)\le\kappa_G(Y)$.
  \item Let $X\subseteq V(G)$. Then there is a
  $Y\subseteq E(G)$ such that $E(X)\subseteq Y \subseteq E(X)\cup
  E(X,\bar X)$ and $\kappa_G(Y)\le\mu_G(X)$. 
  \end{enumerate}
\end{lem}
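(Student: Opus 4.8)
The plan is to use K\"onig's theorem --- already invoked in Example~\ref{exa:mc} --- as the bridge between the two functions: for a set $X\subseteq V(G)$, the number $\mu_G(X)$ equals the minimum size of a vertex cover of the edge set $E(X,\bar X)$. Thus part~(1) reduces to exhibiting a vertex cover of $E(X,\bar X)$ of size $\kappa_G(Y)=|\partial(Y)|$, and part~(2) reduces to building, from a minimum vertex cover $S$ of $E(X,\bar X)$, an admissible edge set $Y$ with $\partial(Y)\subseteq S$.

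For part~(1) I would show that $\partial(Y)$ is itself a vertex cover of $E(X,\bar X)$. Let $e=vw$ be an edge with $v\in X$ and $w\in\bar X$. From $X\subseteq V(Y)$ we get $v\in V(Y)$. Since $w\notin X$ and $V(Y)\setminus\partial(Y)\subseteq X$, either $w\in\partial(Y)$, in which case $e$ is covered, or $w\notin V(Y)$; in the latter case $e\notin Y$, so $e\in\bar Y$ and hence $v\in V(\bar Y)$, which together with $v\in V(Y)$ gives $v\in\partial(Y)$, so again $e$ is covered. Thus $\partial(Y)$ covers $E(X,\bar X)$, and $\mu_G(X)\le|\partial(Y)|=\kappa_G(Y)$.

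For part~(2), pick a minimum vertex cover $S$ of $E(X,\bar X)$, so that $|S|=\mu_G(X)$, and set
\[
Y:=E(X)\cup E(X,\bar X\cap S).
\]
This obviously satisfies $E(X)\subseteq Y\subseteq E(X)\cup E(X,\bar X)$. Since $E(G)=E(X)\cup E(\bar X)\cup E(X,\bar X)$ with the three parts pairwise disjoint, we get $\bar Y=E(\bar X)\cup E(X,\bar X\setminus S)$, and because $S$ covers $E(X,\bar X)$, every edge of $E(X,\bar X\setminus S)$ has its endpoint in $X$ lying in $S$. Consequently $V(Y)\subseteq X\cup S$ and $V(\bar Y)\subseteq\bar X\cup S$, so
\[
\partial(Y)=V(Y)\cap V(\bar Y)\subseteq(X\cup S)\cap(\bar X\cup S)=S,
\]
using $X\cap\bar X=\emptyset$. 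Hence $\kappa_G(Y)=|\partial(Y)|\le|S|=\mu_G(X)$.

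The only point that really needs care --- and the step I would spell out most explicitly --- is the bookkeeping in part~(2): verifying that $V(Y)$ and $V(\bar Y)$ are confined to $X\cup S$ and $\bar X\cup S$ respectively, since this is precisely where the vertex-cover property of $S$ is used. Everything else is routine set manipulation together with the single application of K\"onig's theorem.
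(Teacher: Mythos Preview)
Your proof is correct and follows essentially the same approach as the paper: in both parts you use K\"onig's theorem to reduce to vertex covers, show in (1) that $\partial(Y)$ covers $E(X,\bar X)$, and in (2) build $Y$ from a minimum vertex cover $S$ so that $\partial(Y)\subseteq S$. If anything, your construction in part~(2) is more explicit than the paper's --- the paper only specifies $Y$ by the constraints that edges meeting $X\setminus S$ go into $Y$ and edges meeting $\bar X\setminus S$ go into $\bar Y$, leaving the verification of $E(X)\subseteq Y\subseteq E(X)\cup E(X,\bar X)$ implicit, whereas your concrete choice $Y=E(X)\cup E(X,\bar X\cap S)$ makes this immediate.
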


\begin{proof}
  To prove (1), note that every edge in $E(X,\bar X)$ has at least one
  endvertex in $\partial(Y)$. In other words: $\partial(Y)$ is a
  vertex cover of $E(X,\bar X)$. Thus
  $\mu_G(X)\le|\partial(Y)|=\kappa_G(Y)$.

  To prove (2), let $S$ be a minimum vertex cover of $E(X,\bar X)$. 
  Then every edge in $E(G)$ either has
  both endvertices in $X\cup S$ or both endvertices in $\bar X\cup
  S$. Let $Y\subseteq E(G)$ such that all edges with one endvertex in
  $X\setminus S$ are in $Y$ and all edges with one endvertex in
  $\bar X\setminus S$ are in $\bar Y$. Then $\partial(Y)\subseteq S$
  and thus $\kappa_G(Y)\le |S|=\mu_G(X)$.
\end{proof}

While the connectivity functions of the previous examples capture
natural notions of connectivity, the function defined in the next
example, introduces an entirely different notion of ``connectivity''
on graphs. Instead of the ``flow'' that can be send across a separation, it
measures how ``complicated'' a separation is. 

\begin{figure}
  \centering
    \begin{tikzpicture}[
      vertex/.style={draw,circle,minimum size=6mm,inner sep=0pt},
      every edge/.style={draw,thick},
      ]

       \draw[gruen, fill=gruen!20] (-0.5,0) ellipse (1cm and 2cm);
      \draw[rot, fill=rot!20] (2,0) ellipse (10mm and 2.8cm);

      \node[vertex] (a) at (-0.5,1.5) {$a$};
      \node[vertex] (b) at (0,0) {$b$};
      \node[vertex] (c) at (-0.5,-1.5) {$c$};
      \node[vertex] (d) at (2,2.25) {$d$};
     \node[vertex] (e) at (2.5,0.75) {$e$};
     \node[vertex] (f) at (2.5,-0.75) {$f$};
     \node[vertex] (g) at (2,-2.25) {$g$};
 
     \path (a) edge (b) edge[bend right] (c) (b) edge (c)
               (e) edge (f)
               (a) edge (d) edge (e) edge (f)
               (b) edge (d) edge (g)
               (c) edge (e) edge (f) edge (g);

     \path (-0.5,-2.6) node[anchor=south,gruen] {$X$};          
     \path (2,-3.4) node[anchor=south,rot] {$\bar X$};      

     \path (5,1.6) node[anchor=north west] {$
        M(X,\bar X)=\begin{pmatrix}
         1&1&1&0\\
         1&0&0&1\\
         0&1&1&1
       \end{pmatrix}
       $
     };
     \path (7,-0.2) node [anchor=west] {rows labelled $a,b,c$,};
     \path (7,-0.6) node [anchor=west] {columns labelled $d,e,f,g$};
     \path (5,-0.9) node[anchor=north west] {$\rho_G(X)=2$};
  \end{tikzpicture}

  \caption{The cut-rank function}
  \label{fig:cutrank}
\end{figure}
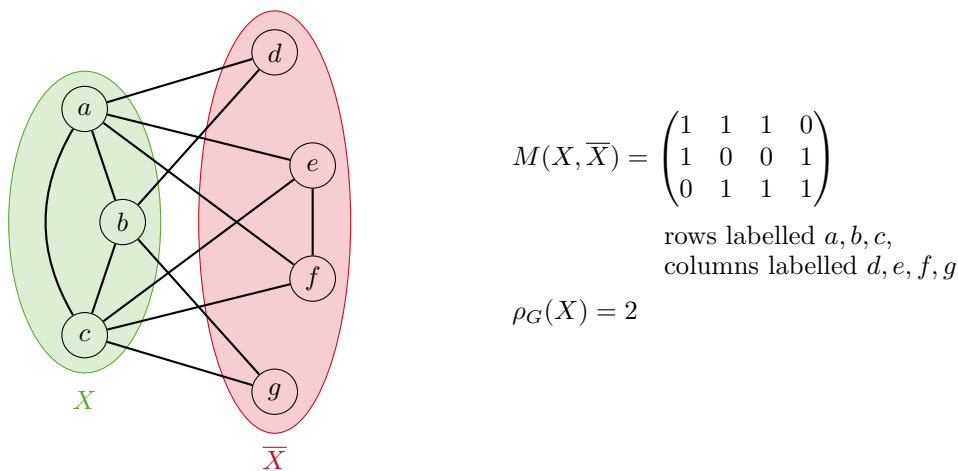

\begin{exa}[Cut Rank, \cite{oumsey06}]\label{exa:cr}
  Let $G$ be a graph. For all subsets $X,Y\subseteq V(G)$, we let
  $M=M(X,Y)$ be the $X\times Y$-matrix with entries 
  \[
  M_{xy}:=
  \begin{cases}
    1&\text{if }xy\in E(G),\\
    0&\text{otherwise}
  \end{cases}
  \]
  for all $x\in X,y\in Y$. That is, $M(X,Y)$ is the submatrix of the
  adjacency matrix of $G$ with rows indexed by vertices in $X$ and
  columns indexed by vertices in $Y$.
  We view $M$ as a matrix over the 2-element field $\mathbb F_2$ and
  denote its row rank by $\rk_2(M)$. 

  We define the \emph{cut rank function} of $G$ to be the set function
  $\rho_G$ on $V(G)$ defined by
  \[
  \rho_G(X):=\rk_2(M_G(X,\bar X)).
  \]
  Figure~\ref{fig:cutrank} shows an example.

  We claim that $\rho_G$ is a connectivity function. We have
  $\rho_G(\emptyset)=0$ because, by definition, the empty matrix has
  rank $0$. The function $\rho_G$ is symmetric,
  because the row rank and the column rank of a matrix coincide. We
  prove that $\rho_G$ is submodular by induction on $|V(G)|$.

  The base step $|V(G)|=0$ is trivial. For the inductive step, suppose that
  $|V(G)|>0$. Let $X,Y\subseteq V(G)$ and
  \begin{align*}
    M_X&:=M(X,\bar X),&M_Y&:= M(Y,\bar Y),\\
    M_{X\cap Y}&:=M(X\cap Y,\bar{X\cap Y}),&M_{X\cup Y}&:=M(X\cup Y,\bar{X\cup Y})
  \end{align*}
  We shall prove that
  \begin{equation}
    \label{eq:submod2}
    \rk_2(M_X)+\rk_2(M_Y)\ge \rk_2(M_{X\cap Y})+\rk_2(M_{X\cup Y}).
  \end{equation}
  If $X\subseteq Y$, then $X\cap Y=X$ and $X\cup Y=Y$, and
  \eqref{eq:submod2} holds trivially. So suppose that $X\not\subseteq
  Y$ and let $x\in X\setminus Y$. We let $X':=X\setminus\{x\}$ and
  $G':=G\setminus\{x\}$. We define further matrices:
  \begin{align*}
    A&:=M(X\cap Y,\bar X\cap Y),&B&:=M(X\cap Y,\bar X\cap\bar
    Y),\\
    C&:=M(X\cap \bar Y,\bar X\cap Y),&D&:=M(X\cap \bar Y,\bar X\cap\bar
    Y),\\
    E&:=M(X\cap Y,X\cap \bar Y),&F&:=M(\bar X\cap Y,\bar X\cap\bar
    Y).
  \end{align*}
  (Figure~\ref{fig:submod} may be helpful again to sort out the sets.)
  Then
  \begin{equation}\label{eq:submod2z}
  \begin{array}{r@{\,}l@{\hspace{2cm}}r@{\,}l}
 M_X&=
  \begin{pmatrix}
    A&B\\C&D
  \end{pmatrix},&
 M_Y&=
  \begin{pmatrix}
    E&B\\C^t&F
  \end{pmatrix},\\[4ex]
  M_{X\cap Y}&=
  \begin{pmatrix}
    A&E&B
  \end{pmatrix},&
  M_{X\cup Y}&=
  \begin{pmatrix}
    B\\F\\D
  \end{pmatrix}
  \end{array}
  \end{equation}
  For all these matrices $M$, we denote the corresponding matrix in
  the graph $G'=G\setminus\{x\}$ by $M'$. For example, $M_X'=M_{G'}(X',\bar
  X)$ and $A'=M_{G'}(X'\cap
  Y,\bar X\cap Y)$. Note that the matrix equalities
  \eqref{eq:submod2z} also hold for the
  ``primed'' versions of the matrices, and that for all the matrices
  $M$ we have 
  \begin{equation}
    \label{eq:submod2a}
\rk_2(M)-1\le\rk_2(M')\le \rk_2(M).
\end{equation}
By the
  inductive hypothesis, we have
  \begin{equation}
    \label{eq:submod2b}
    \rk_2(M'_X)+\rk_2(M'_Y)\ge \rk_2(M'_{X\cap Y})+\rk_2(M'_{X\cup Y}).
  \end{equation}
  Observe next that
  \begin{equation}\label{eq:submod2c}
  \rk_2(M_X)=\rk_2(M'_X)\quad\implies\quad \rk_2(M_{X\cup
    Y})=\rk_2(M'_{X\cup Y}).
  \end{equation}
  Indeed, if $\rk_2(M_X)=\rk_2(M'_X)$, then the $x$-row
  of $M_X$, which is a row of $(C\;\;D)$, is a linear combination of the
  other rows of $M_X$. This implies that the $x$-row of $D$ is a
  linear combination of the remaining rows of $B$ and $D$. Thus the
  $x$-row of $M_{X\cup Y}$, which is a row of $D$, is a linear
  combination of the remaining rows of $M_{X\cup Y}$. Hence $rk_2(M_{X\cup
    Y})=\rk_2(M'_{X\cup Y})$.

  Similarly, arguing with columns instead of rows, we see that
  \begin{equation}\label{eq:submod2d}
  \rk_2(M_Y)=\rk_2(M'_Y)\quad\implies\quad \rk_2(M_{X\cap
    Y})=\rk_2(M'_{X\cap Y}).
  \end{equation}
  Clearly, \eqref{eq:submod2a}--\eqref{eq:submod2d} imply
  \eqref{eq:submod2}. This completes the proof that $\rho_G$ is a
  connectivity function.
    \uend
\end{exa}

  Let us briefly discuss how the cut-rank function $\rho_G$ relates to
  the edge-connectivity function $\nu_G$ and the matching connectivity
  function $\mu_G$, which are also defined on the vertex set of a
  graph $G$. As the number of $1$ entries of a matrix over $\mathbb
  F_2$ is always an upper bound for its row rank, we have
  $\rho_G(X)\le\nu_G(X)$ for all $X\subseteq V(G)$. We also have
  $\rho_G(X)\le\mu_G(X)$, because in a matrix of row rank $k$ we can
  always find $k$ distinct rows $i_1,\ldots,i_k$ and $k$ distinct
  columns $j_1,\ldots,j_k$ such that for all $p$ the entry in row
  $i_p$ and columns $j_p$ is $1$. (This can be proved by induction on $k$.)
  In the matrix $M(X,\bar X)$, the edges corresponding to these entries
  form an $(X,\bar X)$-matching of order $k$. As we trivially have
  $\mu_G(X)\le\nu_G(X)$ for all $X$, altogether we get
  \[
  \rho_G(X)\le\mu_G(X)\le\nu_G(X).
  \]
  In general, all the equations can be strict, and the in fact the gaps
  can be arbitrarily large.
 For example, for the complete graph $K_n$ we
  have 
  \begin{align*}
    \rho_{K_n}(X)&=1,\\
    \mu_{K_n}(X)&=\min\{|X|,|\bar X|\},\\
    \nu_{K_n}(X)&=|X|\cdot|\bar X|.
  \end{align*}
  for all $X\subseteq V(K_n)$. 

It is also worth noting that
  $\kappa_G,\nu_G,\mu_G$ are all \emph{subgraph monotone}. For
  example, for $\nu_G$ this means that is, for all
  $G'\subseteq G$ and $X\subseteq V(G')$ we have $\nu_{G'}(X)\le
  \nu_G(X)$. The cut-rank function $\rho_G$ is not subgraph
  monotone. It is only \emph{induced-subgraph monotone}.

  \medskip
Let us now turn to examples from a different domain: vector spaces and matroids.

\begin{exa}[Vector Spaces]\label{exa:vsc}
  Let $\mathbb V$ be a vector space. For set $X\subseteq\mathbb V$, by
  $\angles X$ we denote the subspace of $\mathbb V$ generated by $X$,
  and for a subspace $\mathbb W\subseteq\mathbb V$, by $\dim(\mathbb
  W)$ we denote its dimension.

  Now let ${\U}$ be a finite subset of $\mathbb V$. We define a set
  function $\kappa_{\mathbb V,{\U}}$ on ${\U}$ by 
  \[
  \kappa_{\mathbb V,{\U}}:=\dim(\angles X\cap\angles{\bar X}).
  \]
  We leave it to the reader to verify that $\lambda_A$ is a connectivity
  function. In view of the following example, observe that 
  \[
  \kappa_{\mathbb V,{\U}}(X)=\dim(\angles X)+\dim(\angles{\bar
    X})-\dim(\angles {\U}).\uende
  \]
\end{exa}

For our next example, which is a direct generalisation of the previous
one, we review a few basics of
matroid theory. A \emph{matroid} is a pair $\CM=({\U},\CI)$, where ${\U}$ is a
finite set and $\CI\subseteq 2^{\U}$ a nonempty set that is closed under
taking subsets and has the following \emph{augmentation property}: if
$I,J\in\CI$ such that $|I|<|J|$, then there is an $u\in J$ such that
$I\cup\{u\}\in\CI$. The elements of $\CI$ are called \emph{independent
  sets}. We define a set function $\rho_{\CM}$ on ${\U}$ by letting
$\rho_{\CM}(X)$ be the maximum cardinality of an independent set
$I\subseteq X$. It is easy to see that $\rho_{\CM}$ is normalised, monotone, 
submodular, and univalent. We call $\rho_{\CM}$ the
\emph{rank function} of the matroid $\CM$. The \emph{rank} of the
matroid $\CM$ is $\rho_{\CM}({\U})$, that is, the maximum cardinality of
an independent set.

It can be shown that if $\rho:2^{\U}\to\NN$ is a normalised, monotone,
submodular, and univalent, then there is a matroid
$\CM_\rho$ on ${\U}$ with rank function $\rho$. The independent sets of
this matroid are the sets $I\subseteq {\U}$ with $\rho(I)=|I|$.

\begin{exa}[Matroid Connectivity]\label{exa:conn6}
  Let $\CM=({\U},\CI)$ be a matroid. Then the set function
  $\kappa_{\CM}$ on ${\U}$ defined by
  \[
  \kappa_{\CM}(X):=\rho_{\CM}(X)+\rho_{\CM}(\bar X)-\rho_{\CM}({\U})
  \]
  is a connectivity function, known as the \emph{connectivity function
    of the matroid $\CM$}. It is obviously symmetric, and we have
  $\kappa_\CM(\emptyset)=\rho_\CM(\emptyset)+\rho_{\CM}({\U})-\rho_{\CM}({\U})=0$. The
  submodularity follows directly from the submodularity of
  $\rho_{\CM}$:
  \begin{align*}
    \kappa_{\CM}(X)+\kappa_{\CM}(Y)=\,&\rho_{\CM}(X)+\rho_{\CM}(Y)\\
    &+\rho_{\CM}(\bar X)+\rho_{\CM}(\bar Y)-2\rho_{\CM}({\U})\\
    \ge\,&\rho_{\CM}(X\cap Y)+\rho_{\CM}(X\cup Y)\\
    &+\rho_{\CM}(\bar X\cap\bar Y)+\rho_{\CM}(\bar X\cup \bar Y)-2\rho_{\CM}({\U})\\
    =\,&\rho_{\CM}(X\cap Y)+\rho_{\CM}(\bar{X\cap Y})-\rho_{\CM}({\U})\\
    &+\rho_{\CM}(X\cup Y)+\rho_{\CM}(\bar{X\cup Y})-\rho_{\CM}({\U})\\
    =\,&\kappa_{\CM}(X\cap Y)+\kappa_{\CM}(X\cup Y).
  \end{align*}
  The connectivity functions of the vector space
  Example~\ref{exa:vsc} is a special case; in fact, it is precisely
  the case of a \emph{representable matroid}. Let $\mathbb V$ be a
  vector space and ${\U}\subseteq \mathbb V$ a finite subset. We define a
  matroid $\CM_{\U}$ on ${\U}$ by letting $I\subseteq {\U}$ be independent in
  $\CM_{\U}$ if $I$ is a linearly independent set of vectors. It is easy
  to verify that this is indeed a matroid and that its rank function
  of is $\rho_{\CM_{\U}}$ is defined by $\rho_{\CM_{\U}}(X)=\dim(\angles X)$. 
  \uend
\end{exa}

\begin{exa}[Integer Polymatroids]
  Observing that we never used the univalence of the
  rank function of a matroid in the previous example, we give a
  further generalisation by simply dropping the condition that the
  rank function be univalent. An
  \emph{integer polymatroid} is a normalised monotone and submodular set function
  $\pi$ on a finite set ${\U}$. With
  each such $\pi$ we can associate a connectivity function $\kappa_\pi$
  on ${\U}$ defined by 
    \[
  \kappa_\pi(X):=\pi(X)+\pi(\bar X)-\pi({\U}).
  \]
  \uend
\end{exa}

In fact, the previous example is about as general as it gets. Jowett,
Mo, and Whittle~\cite{jowmowhi15} have observed that up to a
factor of 2 every connectivity function is the connectivity function
of a polymatroid. To see this, let $\kappa$ be a connectivity function
on ${\U}$. We define a set function $\pi$ on ${\U}$ by
\[
\pi(X):=\kappa(X)+\sum_{x\in X}\kappa(\{x\}).
\]
It is easy to see that $\pi$ is an integer
polymatroid. Furthermore, the connectivity function $\kappa_{\pi}$
associated with $\pi$ is equal to $2\kappa$. Indeed,
\begin{align*}
  \kappa_\pi(X)&=\kappa(X)+\sum_{x\in X}\kappa(\{x\})
                  +\kappa(\bar X)+\sum_{x\in \bar X}\kappa(\{x\})
                  -\kappa({\U})-\sum_{x\in {\U}}\kappa(\{x\})\\
                &=\kappa(X)+\kappa(\bar X)-\kappa({\U})=2\kappa(X).
\end{align*}
Another way of phrasing this result is that every connectivity
function is the connectivity function of a half-integral
polymatroid. Jowett et al.~\cite{jowmowhi15} actually proved a
stronger result characterising the polymatroids associated with
connectivity functions this way as \emph{self-dual} (see
\cite{jowmowhi15} for details).

\medskip
We close this section with a ``non-example''. Somewhat surprisingly, a
natural extension of the matching connectivity function to hypergraphs
is not submodular.

  \begin{figure}
    \centering
        \begin{tikzpicture}
      \path (-2,0) node {$1$} (0,0) node {$4$} (2,0) node {$2$}
      (-2,-1.5) node {$3$} (2,-1.5) node {$5$};

      \draw (-2,-1.5) circle (3mm)
            (0,0) circle (3mm)
            (2,-1.5) circle (3mm)
            (0,0) ellipse (25mm and 7mm)
            (-2,-0.75) ellipse (6mm and 14mm)
            (2,-0.75) ellipse (6mm and 14mm);

      \path (0,-2.8) node {(a)};   

      \begin{scope}
      [
      xshift=5cm,
      ]
      \node (a) at (0,0) {$a$};
      \node (b) at (2,0) {$b$};
      \node (c) at (4,0) {$c$};
      \node (d) at (0,-2) {$d$};
      \node (e) at (2,-2) {$e$};
      \node (f) at (4,-2) {$f$};

      \draw (1,0) ellipse (14mm and 5mm)
            (3,0) ellipse (14mm and 5mm)
            (0,-1) ellipse (5mm and 14mm) 
            (2,-1) ellipse (5mm and 14mm) 
            (4,-1) ellipse (5mm and 14mm);

      \path (2,-2.8) node {(b)};      
      \end{scope}
 
      
    \end{tikzpicture}

    \caption{The hypergraph $H_0$ of Example~\ref{exa:htw} and its dual
    $\tilde H_0$}
    \label{fig:hw}
  \end{figure}
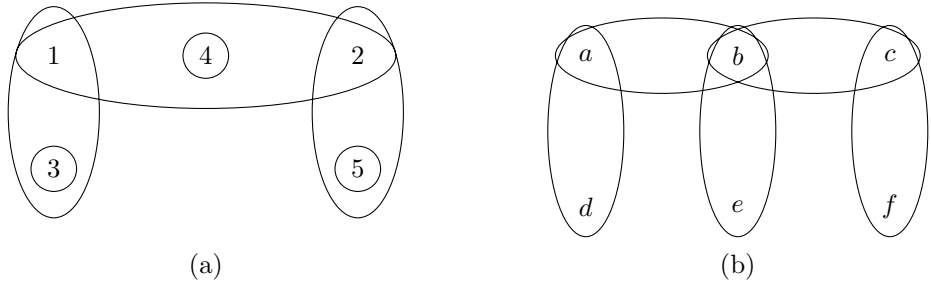

\begin{exa}\label{exa:htw}
  Let $H$ be a hypergraph. For subsets $X,Y\subseteq V(H)$, we define
  $E(X,Y)$ to be the set of all $e\in E(H)$ such that $e\cap
  X\neq\emptyset$ and $e\cap Y\neq\emptyset$. A \emph{vertex cover}
  for a set $F\subseteq E(H)$ is a set $S$ of vertices such that
  $S\cap e\neq\emptyset$ for all $e\in F$. 

  We define $\mu_H: 2^{V(H)}\to\ZZ$ by letting $\mu_H(X)$ be the
  minimum cardinality of a vertex cover of $E(X,\bar X)$. This
  function is obviously normalised and symmetric, but it is \emph{not}
  always submodular.

  As an example, consider the hypergraph $H_0$ with
  \begin{align*}
    V(H_0)&:=\{1,2,3,4,5\},\\
    E(H_0)&:=\big\{\underbrace{\{1,3\}}_{=:a},\underbrace{\{1,2,4\}}_{=:b},\underbrace{\{2,5\}}_{=:c},\underbrace{\{3\}}_{=:d},\underbrace{\{4\}}_{=:e},\underbrace{\{5\}}_{=:f}\big\}
  \end{align*}
  (see Figure~\ref{fig:hw}(a)). Let
  $
  X:=\{1,4\}
  $
  and $Y:=\{2,4\}$. Then $\mu_{H_0}(X)=1$, because $E(X,\bar X)=\{a,b\}$ and
  vertex $1$ covers the edges $a$ and $b$. Similarly, $\mu_{H_0}(Y)=1$ and
  $\mu_{H_0}(X\cap Y)=1$ and $\mu_{H_0}(X\cup Y)=2$. This contradicts
  submodularity. 

  Recall (from Example~\ref{exa:hc}) the definition of the dual
  $\tilde H$ of a hypergraph $H$. We define the ``dual'' set function $\tilde\mu_H:2^{E(H)}\to\ZZ$
  by 
  \[
  \tilde\mu_H(Y):=\mu_{\tilde H}(Y).
  \]
  For a set $Y$ of hyperedges, it measures the minimum cardinality of
  an edge cover of the boundary $\partial(Y)$.
  
  As $\mu_H$, the dual function $\tilde\mu_H$ is normalised and
  symmetric, but not submodular, not even on simple graphs. The dual hypergraph $\tilde H_0$,
  shown in Figure~\ref{fig:hw}(b), witnesses the latter. In this dual
  form, the example is due to \cite{adlgotgro07}.

  Remarkably, the function $\tilde\mu_H$ has been used in \cite{adlgotgro07} 
  to define the \emph{hyper branch width} of a
  hypergraph in the same way as we shall define the branch width of
  connectivity function in the next section.  Hyper branch width is
  a constant factor approximation of the more familiar hyper tree width
  (see, for example, \cite{gotleosca02,gotleosca03,ggmss05}). However,
  as $\tilde\mu_H$ is not a connectivity function, the general theory we
  shall develop in the following sections does not apply to it, and
  the nice results that nevertheless hold for $\tilde\mu_H$ have to be
  proved in an ad-hoc fashion.

  It is an interesting open
  question if there is a connectivity function whose branch width also
  approximates hyper tree width to a constant factor.
  \uend  
\end{exa}

\subsection{Connectivity Functions on Separation Systems}
It is sometimes useful to define connectivity systems in an even more
abstract setting of \emph{separation systems}, which are lattices with
``complementation''. The following example may serve as motivation.

\begin{exa}\label{exa:vertex-separation}
  When thinking about vertex connectivity, it is sometimes more convenient to define
  separations of a graph on the vertex set rather than on the edge set (as we did in
  Example~\ref{exa:conn1}). 

Let $G$ be a graph. A \emph{vertex separation} of $G$ is a pair
  $(Y,Z)$ of subsets of $V(G)$ such that $Y\cup Z=V(G)$ and there
  is no edge from $Y\setminus Z$ to $Z\setminus Y$. Alternatively, we
  may view a vertex separation as a partition
  $(Y',S,Z')$ of $V(G)$ (with possibly empty parts) such that there is
  no edge from $Y'$ to $Z'$. The \emph{order} of a
  vertex separation $(Y,Z)$ is $\kappa(Y,Z):=|Y\cap Z|$. Intuitively,
  $\kappa$ is a connectivity function on the set of all vertex
  separations, which is closely related to the
  vertex-connectivity function $\kappa_G$ of
  Example~\ref{exa:conn1}. But $\kappa$ does not fit into our
  framework, because it is not defined on the power set on some set
  ${\U}$, but on the set $\CS$ of all vertex separations of $G$.

  Note, however, that $\CS$ has a natural lattice structure. We define
  the \emph{join} and \emph{meet} of two separations $(Y,Z)$ and
  $(Y',Z')$ to be the vertex separation
  $(Y,Z)\vee(Y',Z'):=(Y\cap Y',Z\cup Z')$ and
  $(Y,Z)\wedge(Y',Z'):=(Y\cup Y',Z\cap Z')$. It is easy to check that
  $\kappa$ is submodular with respect to these lattice
  operations. Furthermore, $\kappa$ is symmetric with respect to the
  natural \emph{complementation} $\bar{(Y,Z)}:=(Z,Y)$.
  \uend
\end{exa}

Recall that a lattice is a set $\CL$ with two binary operations
\emph{join} $\vee$ and
\emph{meet} $\wedge$ satisfying the commutative laws, the associative
laws, the idempotent laws ($x\vee x=x\wedge x=x$), and the absorption
laws ($x\vee(x\wedge y)=x$ and $x\wedge(x\vee y)=x$). Associated with
the lattice is a partial order defined by $x\le y:\Leftrightarrow
x=x\wedge y$. The join and meet operation correspond to supremum and
infimum of two elements with respect to this partial order. In fact,
lattices are in one-to-one correspondence to partial orders in which
any two elements have a unique supremum and infimum.

A \emph{separation system} \cite{diehunlem16} is is a tuple
$(\CS,\vee,\wedge,\comp)$, where $(\CS,\vee,\wedge)$ is a lattice and
$\comp:\CS\to\CS$ is an order-reversing involution, that is, $x\le
y\implies \bar y\le\bar x$ and $\bar{\bar x}=x$. We call $\comp$
\emph{complementation}. We call a function $\phi:\CS\to\ZZ$
\emph{symmetric}, \emph{monotone}, \emph{submodular},
\emph{normalised}, \emph{nontrivial} by modifying the respective standard
definitions for set function $\phi$ in the obvious way, replacing
$\cup$ by $\vee$, $\cap$ by $\wedge$, $\subseteq$ by $\le$, and
$x\setminus y$ by $x\wedge\bar y$.

A \emph{connectivity function} on a separation system
$(\CS,\vee,\wedge,\comp)$ is a function $\kappa:\CS\to\ZZ$ that is
normalised, symmetric, and submodular. Most of the theory of
connectivity functions we shall develop in the following sections can
be generalised to connectivity functions on separation systems without
much effort.

\begin{exa}\label{exa:vertex-separation2}
  The set of vertex separations of a graph together with the meet, join,
  and complementation operations defined in
  Example~\ref{exa:vertex-separation} is a separation systems, and the
  function $\kappa$ defined by $\kappa(X,Y):=|X\cap Y|$ is a
  connectivity function on this separation system.
  \uend
\end{exa}

\begin{exa}
  For every set ${\U}$, the \emph{natural separation system on $2^{\U}$} is
  the system we obtain by interpreting meet as union, join as
  intersection, and complementation as complementation in ${\U}$. Of
  course $\kappa$ is a connectivity function on ${\U}$ in the usual sense
  if and only it is a connectivity function on this separation system.
  \uend
\end{exa}

\begin{exa}\label{exa:contraction}
  Let $\kappa$ be a connectivity function on a set ${\U}$, and let
  $\CA\subseteq 2^{\U}$. 
  Let
  \[
  \CS:=\{X\subseteq {\U}\mid X\cap A=\emptyset\text{
    or }A\subseteq X\text{ for all }A\in\CA\}
  \]
  and note that $\CS$ is closed under union, intersection, and
  complementation. Hence if we let $\vee,\wedge,\comp$ be the
  restrictions of union, intersection, and complementation to $\CS$,
  then we obtain a separation system $(\CS,\vee,\wedge,\comp)$. We
  may think of this separation system as the system we obtain from the
  natural separation system on $2^{\U}$ if we
  declare the sets in $\CA$ to be \emph{inseparable}, or \emph{atoms}.

  We usually assume the atoms to be mutually disjoint, because if
  $A,A'$ are atoms with a nonempty intersection, then their union
  $A\cup A'$ becomes inseparable as well, and replacing $A,A'$ by $A\cup
  A'$ in $\CA$ yields the same separation system. If the atoms in
  $\CA$ are
  mutually disjoint, we can think of $(\CS,\vee,\wedge,\comp)$ as the
  connectivity system we obtain by \emph{contracting} the atoms to
  single points. For every set $A\in\CA$ we introduce a fresh element
  $a$, and we let
  \[
  {\U}\contract_\CA:=\Big({\U}\setminus\bigcup_{A\in\CA}A\Big)\cup\{a\mid
  A\in\CA\}.
  \]
  Then our separation system $(\CS,\vee,\wedge,\comp)$ is isomorphic to the
  natural separation system on $2^{{\U}\contract_\CA}$. The image of the
  connectivity function $\kappa$, restricted to $\CS$, under the
  natural isomorphism is the connectivity function
  $\kappa\contract_\CA:2^{{\U}\contract_\CA}\to\ZZ$ defined by 
  \[
  \kappa\contract_\CA(X)=\kappa(X\expand_\CA),
  \]
  where $X\expand_\CA:={\U}\setminus\bigcup_{\substack{A\in\CA\\a\not\in
    X}}A$.
\uend
\end{exa}

\section{Branch Decompositions and Branch Width}
\label{sec:dec}

In this section we define decompositions of connectivity
functions. Our decompositions are based on the branch decompositions
introduced in \cite{gm10}, but it will be useful to introduce several
generalisations and variants of branch decompositions as well.

We start with some  terminology and notation. As usual, a \emph{tree} is a
connected acyclic graph.
A \emph{directed tree} is obtained from a tree by orienting all edges
away from a distinguished \emph{root}. The \emph{leaves} of a directed
tree $T$ are the nodes of out-degree $0$; we denote the set of all
leaves by $L(T)$. We call all  non-leaf nodes \emph{internal
  nodes}. In a directed tree, we can
speak of the \emph{children} of an internal node and the \emph{parent}
of a non-root node.  We also speak of \emph{descendants}, that is,
children, children of the children, et cetera, and \emph{ancestors}
of a node. A directed tree is \emph{binary} if every
internal node has exactly two children. 

We typically denote tree nodes by $s,t,u$ and elements
of the universe ${\U}$ of a connectivity function by $x,y,z$.

\subsection{Directed Decompositions}

Directed decompositions of connectivity systems are defined in a most
straightforward way: we split the universe into two disjoint parts,
possibly split each of these parts again, and so on. This gives us a
decomposition naturally structured as a binary tree, and with the pieces
of the decomposition labelling the leaves of the tree. We call the
decomposition \emph{complete} if all these pieces (at the leaves) are
just single elements. The \emph{width} of the decomposition is the
maximum order of the separations appearing at any stage of the
decomposition (with respect to the connectivity function we
decompose). For technical reasons, it will be convenient to also introduce
a more relaxed form of decomposition, which we call
\emph{pre-decomposition}. 
Later, we will also introduce an
undirected version of our decompositions, which will be based on cubic
(undirected) trees rather than binary directed trees.

\begin{defn}
  Let ${\U}$ be a finite set.
  \begin{enumerate}
  \item A \emph{directed pre-decomposition} of ${\U}$ is a pair $(T,\gamma)$
    consisting of a binary directed tree $T$ and a mapping
    $\gamma:V(T)\to 2^{\U}$ such that $\gamma(r)={\U}$ for the root $r$ of $T$ and
    $\gamma(t)\subseteq\gamma(u_1)\cup\gamma(u_2)$ for all internal nodes $t$
    with children $u_1,u_2$.
  \item A directed pre-decomposition $(T,\gamma)$ is \emph{complete} if
    $|\gamma(t)|=1$ for all leaves $t\in L(T)$.
  \item A directed pre-decomposition $(T,\gamma)$ is \emph{exact} at a
    node $t\in V(T)$ with children $u_1,u_2$ if
    $\gamma(t)=\gamma(u_1)\cup\gamma(u_2)$ and
    $\gamma(u_1)\cap\gamma(u_2)=\emptyset$.
  \item A \emph{directed decomposition} is a directed
    pre-decomposition that is exact at all internal nodes.
  \item A \emph{directed branch decomposition} is a complete directed
    decomposition.
  \end{enumerate}
  If $\kappa$ is a connectivity function on ${\U}$, then a
\emph{(complete) directed (branch, pre-)decomposition} of $\kappa$ is a
(complete) directed (branch, pre-)decomposition of ${\U}$.
\uend
\end{defn}

Let $(T,\gamma)$ be a directed pre-decomposition.
We call the sets $\gamma(t)$ the \emph{cones} 
and the cones $\gamma(t)$ for the leaves $t\in L(T)$ the \emph{atoms}
of the pre-decompositions. We denote the set of all atoms of
$(T,\gamma)$ by $\At(T,\gamma)$. In view of
Example~\ref{exa:contraction}, it is worth noting that a
decomposition $(T,\gamma)$ of $\kappa$ may be viewed as a branch
decomposition of $\kappa\contract_{\At(T,\gamma)}$.

Observe that if $(T,\gamma)$ is a directed decomposition (not just
a pre-decomposition), then the restriction
of $\gamma$ to the leaves, determines $\gamma$: for an internal node $s$, 
the cone $\gamma(s)$ is the union of the atoms $\gamma(t)$
for all leaves $t$ that are
descendants of $s$ in $T$. In a complete directed
pre-decomposition, the mapping $\gamma$ specifies a mapping
from leaves of $T$ onto ${\U}$: each leaf $t$ is mapped to the unique element
of $\gamma(t)$. In a directed branch decomposition, this mapping is bijective.

We call the cones $\gamma(t)$ and their complements $\bar{\gamma(t)}$
for non-root nodes $t\in V(T)\setminus\{r\}$ the \emph{separations} of
the decomposition and denote the set of all separations of
$(T,\gamma)$ by $\Sep(T,\gamma)$.

Let us call a directed pre-decomposition \emph{proper} if $|V(T)|>1$ and
\emph{non-degenerate} if all atoms are nonempty.  Observe that if
$(T,\gamma)$ is a proper and nondegenerate directed decomposition then
the atoms are the inclusionwise minimal separations. Thus
$\At(T,\gamma)$ is determined by $\Sep(T,\gamma)$.

\begin{defn}
  Let $\kappa$ be a connectivity function on ${\U}$.
  \begin{enumerate}
  \item The \emph{width} of a directed pre-decomposition $(T,\gamma)$ of
    $\kappa$ is
    \[
    \width(T,\gamma):=\max\big\{\kappa(\gamma(t))\bigmid t\in V(T)\big\}.
    \]
  \item The \emph{branch width} of $\kappa$ is
    \[
    \bw(\kappa):=\min\big\{\width(T,\gamma)\bigmid(T,\gamma)\text{ directed
      branch decomposition of }\kappa\big\}.\uende
    \]
  \end{enumerate}
\end{defn}

Note that the unique connectivity function $\kappa_\emptyset$ on the empty
universe has no branch decomposition. Nevertheless, its is convenient
to define the branch width of $\kappa_\emptyset$ to be $0$.

\begin{exa}\label{exa:conn-bd1}
  Let
    \[
  {\U}=\left\{
    \begin{pmatrix}
      1\\0\\0\\0
    \end{pmatrix},
    \begin{pmatrix}
      0\\1\\0\\0
    \end{pmatrix},
    \begin{pmatrix}
      0\\0\\1\\0
    \end{pmatrix},
    \begin{pmatrix}
      0\\0\\0\\1
    \end{pmatrix},
    \begin{pmatrix}
      1\\1\\0\\0
    \end{pmatrix},
    \begin{pmatrix}
      1\\1\\1\\1
    \end{pmatrix}    
  \right\}
  \subseteq\mathbb R^4,
  \]
  and let $\kappa:=\kappa_{\RR^4,{\U}}$ be the connectivity function
  define on ${\U}$ as in Example~\ref{exa:vsc}.
  
  Figure~\ref{fig:dbranchdec1} shows two directed branch decompositions
  of $\kappa$. As discussed above, to specify the mapping $\gamma$ in a
  directed decomposition $(T,\gamma)$, we only need to specify the
  values $\gamma(t)$ for the leaves $t$. In the Figure, we do this by
  displaying the unique element of $\gamma(t)$ at every leaf $t$.

  The width of the first decomposition is
  $1$, and the width of the second decomposition is $2$. To verify
  this, we have to compute the values $\kappa(\gamma(s))$ for all nodes
  $s$. For example, for the node $t$ in the first decomposition
  (Figure~\ref{fig:dbranchdec1}(a)) we have
  \[
  \gamma(t)=\left\{\begin{pmatrix}
      1\\0\\0\\0
    \end{pmatrix},
    \begin{pmatrix}
      0\\1\\0\\0
    \end{pmatrix},
    \begin{pmatrix}
      1\\1\\0\\0
    \end{pmatrix}
  \right\}.
  \]
  Thus
  \[
  \angles{\gamma(t)}
  =
  \angles{\begin{pmatrix}
      1\\0\\0\\0
    \end{pmatrix},
    \begin{pmatrix}
      1\\1\\0\\0
    \end{pmatrix}
  }.
  \]
  Furthermore, 
  \[
  \angles{\bar{\gamma(t)}}=\angles{\begin{pmatrix}
      1\\1\\1\\1
    \end{pmatrix},
    \begin{pmatrix}
      0\\0\\1\\0
    \end{pmatrix},
    \begin{pmatrix}
      0\\0\\0\\1
    \end{pmatrix}
  }=  \angles{\begin{pmatrix}
      1\\1\\0\\0
    \end{pmatrix},
    \begin{pmatrix}
      0\\0\\1\\0
    \end{pmatrix},
    \begin{pmatrix}
      0\\0\\0\\1
    \end{pmatrix}
  }.
  \]
  Thus
  \[
  \Big\langle\gamma(t)\Big\rangle\cap \angles{\bar{\gamma(t)}}=\angles{\begin{pmatrix}
      1\\1\\0\\0
    \end{pmatrix}}.
  \]
  and hence
  \[
  \kappa(\gamma(t))=\dim\Big(\Big\langle\gamma(t)\Big\rangle\cap
  \angles{\bar{\gamma(t)}}\Big)=1.
  \]
  Finally, observe that $\bw(\kappa)=1$. The first of our
  decompositions witnesses $\bw(\kappa)\le 1$. The following
  Exercise implies that $\bw(\kappa)\ge 1$.
  \uend
\end{exa}

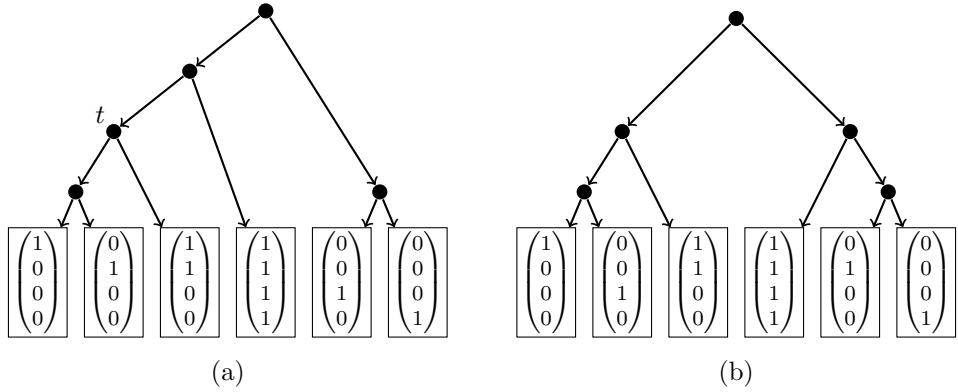
\begin{figure}
  \centering
    \begin{tikzpicture}[
      leaf/.style={draw,inner sep=0.5mm},
      inner/.style={circle,fill=black,inner sep=0mm,minimum size=2mm},
      every edge/.style={draw,thick,->},
    ]

    \begin{scope}
    \footnotesize
    \node[leaf] (L1) at (0,0) {$\begin{pmatrix}
        1\\0\\0\\0
      \end{pmatrix}$};
    \node[leaf] (L2) at (1,0) {$\begin{pmatrix}
        0\\1\\0\\0
      \end{pmatrix}$};
    \node[leaf] (L3) at (2,0) {$\begin{pmatrix}
        1\\1\\0\\0
      \end{pmatrix}$};
    \node[leaf] (L4) at (3,0) {$\begin{pmatrix}
      1\\1\\1\\1
    \end{pmatrix}$};
    \node[leaf] (L5) at (4,0) {$\begin{pmatrix}
        0\\0\\1\\0
      \end{pmatrix}$};
    \node[leaf] (L6) at (5,0) {$\begin{pmatrix}
        0\\0\\0\\1
      \end{pmatrix}$};
    
    \small
    \node[inner] (I1) at (0.5,1.2) {};
    \draw[->] (I1) edge (L1) edge (L2);

    \node[inner] (I2) at (1,2) {};
    \draw[->] (I2) edge (I1) edge (L3);

    \node[inner] (I3) at (2,2.8) {};
    \draw[->] (I3) edge (I2) edge (L4);

    \node[inner] (I4) at (4.5,1.2) {};
    \draw[->] (I4) edge (L5) edge (L6);

    \node[inner] (I5) at (3,3.6) {};
    \draw[->] (I5) edge (I3) edge (I4);

    \node[anchor=south east] at (I2) {\normalsize$t$};

    \normalsize
    \node at (2.5,-1.2) {(a)};
    \end{scope}

    \begin{scope}[xshift=\textwidth/2]
    \footnotesize
    \node[leaf] (L7) at (0,0) {$\begin{pmatrix}
        1\\0\\0\\0
      \end{pmatrix}$};
    \node[leaf] (L8) at (1,0) {$\begin{pmatrix}
        0\\0\\1\\0
      \end{pmatrix}$};
    \node[leaf] (L9) at (2,0) {$\begin{pmatrix}
        1\\1\\0\\0
      \end{pmatrix}$};
    \node[leaf] (L10) at (3,0) {$\begin{pmatrix}
      1\\1\\1\\1
    \end{pmatrix}$};
    \node[leaf] (L11) at (4,0) {$\begin{pmatrix}
        0\\1\\0\\0
      \end{pmatrix}$};
    \node[leaf] (L12) at (5,0) {$\begin{pmatrix}
        0\\0\\0\\1
      \end{pmatrix}$};
    
    \small
    \node[inner] (I9) at (0.5,1.2) {};
    \draw (I9) edge (L7) edge (L8);

    \node[inner] (I6) at (1,2) {};
    \draw (I6) edge (I9) edge (L9);

    \node[inner] (I7) at (4.5,1.2) {};
    \draw (I7) edge (L11) edge (L12);

     \node[inner] (I8) at (4,2) {};
    \draw (I8) edge (L10) edge (I7);
   
     \node[inner] (I10) at (2.5,3.5) {};
     \draw[->] (I10) edge (I6) edge (I8);

     \normalsize
    \node at (2.5,-1.2) {(b)};
    \end{scope}
  \end{tikzpicture}

  \caption{Two directed branch decompositions of the connectivity function of a
    set of vectors}
  \label{fig:dbranchdec1}
\end{figure}

\begin{exe}
  Let ${\U}$ be a finite subset of some vector space
  $\mathbb V$. Prove that $\bw(\kappa_{\mathbb V,{\U}})=0$ if and only if the
  vectors in ${\U}$ are linearly independent.
  \uend
\end{exe}

\begin{figure}
  \centering
    \begin{tikzpicture}[
      vertex/.style={draw,circle,minimum size=6mm,inner sep=0pt},
      leaf/.style={draw,minimum size=6mm,inner sep=0pt},
      inner/.style={fill=black,circle,minimum size=2mm,inner sep=0pt},
      every edge/.style={draw,thick},
      ]

      \begin{scope}
      \node[vertex] (a) at (-0.5,1.5) {$a$};
      \node[vertex] (b) at (0,0) {$b$};
      \node[vertex] (c) at (-0.5,-1.5) {$c$};
      \node[vertex] (d) at (2,2.25) {$d$};
     \node[vertex] (e) at (2.5,0.75) {$e$};
     \node[vertex] (f) at (2.5,-0.75) {$f$};
     \node[vertex] (g) at (2,-2.25) {$g$};
 
     \path (a) edge (b) edge[bend right] (c) (b) edge (c)
               (e) edge (f)
               (a) edge (d) edge (e) edge (f)
               (b) edge (d) edge (g)
               (c) edge (e) edge (f) edge (g);

     \path (0.75,-3) node {(a) a graph $G$};
     \end{scope}

     \begin{scope}[xshift=5cm,yshift=-1.5cm]
       \node[leaf] (Lb) at (0,0) {$b$};
       \node[leaf] (La) at (1,0) {$a$};
       \node[leaf] (Lc) at (2,0) {$c$};
       \node[leaf] (Ld) at (3,0) {$d$};
       \node[leaf] (Lg) at (4,0) {$g$};
       \node[leaf] (Le) at (5,0) {$e$};
       \node[leaf] (Lf) at (6,0) {$f$};

       \node[inner] (I1) at (1.5,1) {};
       \draw[->] (I1) edge (La) edge (Lc);

       \node[inner] (I2) at (1,2) {};
       \draw[->] (I2) edge (Lb) edge (I1);

      \node[inner] (I3) at (3.5,1) {};
       \draw[->] (I3) edge (Ld) edge (Lg);

      \node[inner] (I4) at (5.5,1) {};
       \draw[->] (I4) edge (Le) edge (Lf);

      \node[inner] (I5) at (4.5,2) {};
       \draw[->] (I5) edge (I3) edge (I4);

      \node[inner] (I6) at (2.75,3) {};
       \draw[->] (I6) edge (I2) edge (I5);

     \path (3,-1.5) node {(b) a branch decomposition of $V(G)$};

       \end{scope}
   \end{tikzpicture}

  \caption{A directed branch decomposition of the vertex set of a
    graph}
  \label{fig:rankdec}
\end{figure}
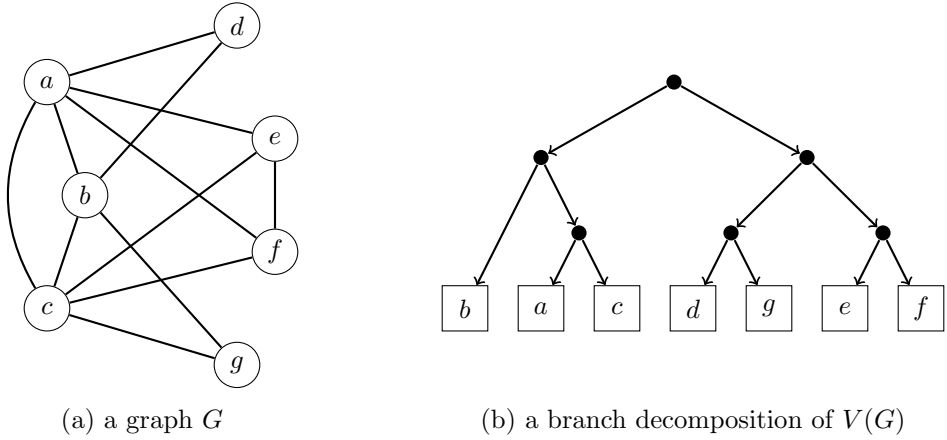

\begin{exa}
  Let $G$ be the graph shown in
  Figure~\ref{fig:rankdec}(a). Figure~\ref{fig:rankdec}(b) shows a
  branch decomposition of $V(G)$.
  The width of this decomposition is
  \begin{itemize}
  \item $8$ if it is viewed as a decomposition of the
    edge-connectivity function $\nu_G$;
  \item $3$ if it is viewed as a decomposition of the
    matching connectivity function $\mu_G$;
 \item $2$ if it is viewed as a decomposition of the
    cut-rank function $\rho_G$.\uend
   \end{itemize}
 \end{exa}

We now prove a first nontrivial result about our decompositions,
showing that every pre-decomposition can be turned into a
decomposition of the the same width. The proof is a nice
application of submodularity.

\begin{lem}[Exactness Lemma \cite{gm10}]\label{lem:exact}
  Let $\kappa$ be a connectivity function on ${\U}$ and $(T,\gamma)$ a
  directed pre-decomposition of $\kappa$. Then there is a function
  $\gamma':V(T)\to 2^{\U}$ such that $(T,\gamma')$ is a directed
  decomposition of $\kappa$ satisfying
  $\kappa(\gamma'(t))\le \kappa(\gamma(t))$ for all nodes $t\in V(T)$ and
  $\gamma'(t)\subseteq\gamma(t)$ for all leaves $t\in L(T)$.
\end{lem}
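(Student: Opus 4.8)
The plan is to build $\gamma'$ top-down along $T$, maintaining for every node $t$ already treated the two invariants $\kappa(\gamma'(t))\le\kappa(\gamma(t))$ and $\gamma'(t)\subseteq W_t$, where $W_t:=\bigcup\{\gamma(\ell)\mid \ell\in L(T)\text{ a descendant of }t\}$ is the union of the atoms of $(T,\gamma)$ below $t$. The second invariant is exactly what is needed to be able to keep going, and at a leaf it specialises to the required $\gamma'(\ell)\subseteq\gamma(\ell)$. At the root we put $\gamma'(r):=\U=\gamma(r)$. Now suppose $\gamma'(t)$ is defined for an internal node $t$ with children $u_1,u_2$, and write $W_i:=W_{u_i}$, so that the second invariant reads $\gamma'(t)\subseteq W_1\cup W_2$. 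I split $\gamma'(t)$ into two disjoint parts $\gamma'(u_1),\gamma'(u_2)$ with $\gamma'(u_1)\cup\gamma'(u_2)=\gamma'(t)$; doing this at every internal node makes $(T,\gamma')$ exact everywhere, i.e.\ a directed decomposition. In the split, the elements of $\gamma'(t)\setminus W_2$ are forced into $\gamma'(u_1)$ and those of $\gamma'(t)\setminus W_1$ into $\gamma'(u_2)$, and for any choice with $\gamma'(t)\setminus W_2\subseteq\gamma'(u_1)\subseteq\gamma'(t)\cap W_1$ (and $\gamma'(u_2):=\gamma'(t)\setminus\gamma'(u_1)$) one automatically has $\gamma'(u_1)\subseteq W_1$ and $\gamma'(u_2)\subseteq W_2$, so the second invariant is preserved at both children and the recursion continues into both subtrees.

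The choice of the split is where submodularity enters, and it must deliver the first invariant, i.e.\ $\kappa(\gamma'(u_i))\le\kappa(\gamma(u_i))$ for $i=1,2$. The sets $\gamma'(t)\setminus W_2$ and $\overline{\gamma'(t)\cap W_1}$ are disjoint (by the invariant $\gamma'(t)\subseteq W_1\cup W_2$), so I let $\gamma'(u_1)$ be the leftmost minimum $(\gamma'(t)\setminus W_2,\ \overline{\gamma'(t)\cap W_1})$-separation provided by the leftmost-minimum-separation lemma, and set $\gamma'(u_2):=\gamma'(t)\setminus\gamma'(u_1)$; this lies in the admissible interval $[\gamma'(t)\setminus W_2,\ \gamma'(t)\cap W_1]$. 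To bound its order I would exhibit a competitor in that interval built from the original child cone, e.g.\ $C_1:=(\gamma(u_1)\cap\gamma'(t))\cup(\gamma'(t)\setminus W_2)=\gamma'(t)\cap(\gamma(u_1)\cup\overline{W_2})$, which lies in the interval because $\gamma(u_1)\subseteq W_1$; then $\kappa(\gamma'(u_1))\le\kappa(C_1)$, and submodularity of $\kappa$ applied to $\gamma'(t)$ and $\gamma(u_1)\cup\overline{W_2}$, together with $\kappa(\gamma'(t))\le\kappa(\gamma(t))$ (the first invariant) and the pre-decomposition inclusion $\gamma(t)\subseteq\gamma(u_1)\cup\gamma(u_2)$, should bring $\kappa(C_1)$ down to $\kappa(\gamma(u_1))$. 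The complementary estimate $\kappa(\gamma'(u_2))\le\kappa(\gamma(u_2))$ is obtained by the symmetric argument, taking complements inside $\gamma'(t)$ and uncrossing against $\gamma(u_2)$, invoking posimodularity (Lemma~\ref{lem:bp}(1)) where convenient. Once both invariants reach all leaves, $(T,\gamma')$ is a decomposition with $\kappa(\gamma'(t))\le\kappa(\gamma(t))$ for all $t$ and $\gamma'(\ell)\subseteq\gamma(\ell)$ for all leaves, which is the assertion.

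The step I expect to be the real obstacle is exactly this order bound. A minimum separation is intrinsically one-sided, so a single choice of $\gamma'(u_1)$ has to control $\kappa(\gamma'(u_2))=\kappa(\gamma'(t)\setminus\gamma'(u_1))$ as well; the point is to exploit that $\gamma'(u_1)$ was chosen \emph{leftmost} minimal (hence its complement in $\gamma'(t)$ is, in the appropriate sense, maximal among complements of minimum separations), and to use the sublattice structure of the set of minimum separations to push the complementary estimate through. A secondary difficulty is that the original cone $\gamma(u_i)$ may be a proper subset of $W_i=\bigcup_{\ell\text{ below }u_i}\gamma(\ell)$, which is why the competitor has to be trimmed by intersecting with $\gamma'(t)$, and one must verify that the error terms produced by submodularity remain nonnegative.
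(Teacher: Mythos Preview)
Your top-down single-pass strategy is quite different from the paper's proof, and the step you yourself flag as ``the real obstacle'' is a genuine gap that is not closed by the ingredients you list. Concretely: your competitor $C_1=\gamma'(t)\cap(\gamma(u_1)\cup\overline{W_2})$ does lie in the interval $[\gamma'(t)\setminus W_2,\gamma'(t)\cap W_1]$, but the bound $\kappa(C_1)\le\kappa(\gamma(u_1))$ does not follow from submodularity applied to $\gamma'(t)$ and $\gamma(u_1)\cup\overline{W_2}$; that application produces a term $\kappa\big(\gamma'(t)\cup\gamma(u_1)\cup\overline{W_2}\big)$ you have no handle on. Your invariants $\gamma'(t)\subseteq W_t$ and $\kappa(\gamma'(t))\le\kappa(\gamma(t))$ relate $\gamma'(t)$ to the leaf-union $W_t$, not to the original child cones $\gamma(u_i)$, so there is no obvious competitor in the interval with value at most $\kappa(\gamma(u_1))$. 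And even granting the first inequality, leftmost-ness buys you inclusion-minimality among \emph{minimum} separations, not a bound on $\kappa(\gamma'(t)\setminus\gamma'(u_1))$ against the unrelated target $\kappa(\gamma(u_2))$; these are two independent constraints on a single choice, and your argument does not couple them.

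The paper sidesteps this by working iteratively and locally rather than in a single top-down pass. It repeatedly picks any inexact node $s$ with children $t_1,t_2$ (cones $X,Y_1,Y_2$) and performs one of three local moves: if $X\subset Y_1\cup Y_2$ and both $\kappa(X\cap Y_p)\le\kappa(Y_p)$, shrink the children to $X\cap Y_p$; if instead some $\kappa(X\cap Y_p)>\kappa(Y_p)$, submodularity forces $\kappa(X\cup Y_q)<\kappa(X)$ for some $q$, and the move \emph{enlarges the parent} to $X\cup Y_q$; if $X=Y_1\cup Y_2$ but $Y_1\cap Y_2\neq\emptyset$, posimodularity lets one replace some $Y_p$ by $Y_p\setminus Y_{3-p}$. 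Termination is by the potential (total $\kappa$-weight, then total size). The flexibility your scheme lacks is precisely the parent-enlarging move: sometimes no admissible split of the current parent exists with both order bounds, and one must improve the parent first. Since your pass freezes $\gamma'(t)$ before visiting its children, it cannot do this.
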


\begin{proof}
  We will
  iteratively construct a sequence $\gamma_1,\ldots,\gamma_m$ of mappings
  from $V(T)$ to $2^{\U}$ such that
  $(T,\gamma_1),\ldots,(T,\gamma_m)$ are pre-decompositions satisfying the following invariants for all $i\in[m-1]$ and nodes $t\in V(T)$:
  \begin{eroman}
  \item $\kappa(\gamma_{i+1}(t))\le \kappa(\gamma_{i}(t))$;
  \item either $\gamma_{i+1}(t)\subseteq\gamma_{i}(t)$ or
    $\kappa(\gamma_{i+1}(t))< \kappa(\gamma_{i}(t))$;
  \item if $t\in L(T)$ then
    $\gamma_{i+1}(t)\subseteq\gamma_{i}(t)$.
  \end{eroman}
  Furthermore, $(T,\gamma_m)$ will be a decomposition, that is, exact at
  all internal nodes. Clearly, this will prove the lemma.
  
  We let $\gamma_1:=\gamma$.  In the inductive step, we assume that we have
  defined $\gamma_i$. If $(T,\gamma_i)$ is exact at all internal nodes, we
  let $m:=i$ and stop the construction. Otherwise, we pick an
  arbitrary node $s\in V(T)$ with children $t_1,t_2$ such that
  $(T,\gamma_i)$ is not exact at $s$. Then either
  $\gamma_i(s)\subset\gamma_i(t_1)\cup\gamma_i(t_2)$ or
  $\gamma_i(t_1)\cap\gamma_i(t_2)\neq\emptyset$ (possibly both). We let
  $X:=\gamma_i(s)$ and $Y_p:=\gamma_i(t_p)$ for $p=1,2$.

  In each of
  the following cases, we only modify $\gamma_i$ at the nodes
  $s,t_1,t_2$ and let $\gamma_{i+1}(u):=\gamma_i(u)$ for all
  $u\in V(T)\setminus\{s,t_1,t_2\}$.
  \begin{cs}
    \case1
    $X\subset Y_1\cup Y_2$.
    \begin{cs}
      \case{1a}
      $\kappa(X\cap Y_p)\le\kappa(Y_p)$ for $p=1,2$.\\
      We let $\gamma_{i+1}(s):=\gamma_i(s)$ and $\gamma_{i+1}(t_p):=X\cap Y_p$
      for $p=1,2$.

      Note that in this case we have
      $\kappa(\gamma_{i+1}(u))\le \kappa(\gamma_{i}(u))$ and
      $\gamma_{i+1}(u)\subseteq\gamma_{i}(u)$ for all nodes $u$ and either
      $\gamma_{i+1}(t_1)\subset\gamma_{i}(t_1)$ or
      $\gamma_{i+1}(t_2)\subset\gamma_{i}(t_2)$.

      \case{1b}
      $\kappa(X\cap Y_p)>\kappa(Y_p)$ for some $p\in\{1,2\}$.\\
      For $p=1,2$, we let $\gamma_{i+1}(t_p):=Y_p$.

      By submodularity we have $\kappa(X\cup Y_p)<\kappa(X)$ for some
      $p\in\{1,2\}$.  If $\kappa(X\cup Y_1)<\kappa(X)$ we let
      $\gamma_{i+1}(s):=X\cup Y_1$, and otherwise we let
      $\gamma_{i+1}(s):=X\cup Y_2$.

      Note that in this case we have
      $\kappa(\gamma_{i+1}(u))\le \kappa(\gamma_{i}(u))$ for all nodes $u$
      and $\kappa( \gamma_{i+1}(s))<\kappa( \gamma_{i}(s))$ and
      $\gamma_{i+1}(u)=\gamma_{i}(u)$ for all nodes $u\neq s$.

      Also note that invariant (iii) is preserved, because $s$ is not
      a leaf of the tree.
    \end{cs}

    \case2
    $X= Y_1\cup Y_2$ and $Y_1\cap Y_2\neq\emptyset$.\\
    We let $\gamma_{i+1}(s):=\gamma_i(s)$.

    By posimodularity, either $\kappa(Y_1\setminus Y_2)\le\kappa(Y_1)$
    or $\kappa(Y_2\setminus Y_1)\le\kappa(Y_2)$. If
    $\kappa(Y_1\setminus Y_2)\le\kappa(Y_1)$, we let
    $\gamma_{i+1}(t_1):=Y_1\setminus Y_2$ and
    $\gamma_{i+1}(t_2):=Y_2$. Otherwise, we let $\gamma_{i+1}(t_1):=Y_1$ and
    $\gamma_{i+1}(t_2):=Y_2\setminus Y_1$.
  
    Note that in this case we have
    $\kappa(\gamma_{i+1}(u))\le \kappa(\gamma_{i}(u))$ and
    $\gamma_{i+1}(u)\subseteq\gamma_{i}(u)$ for all nodes $u$ and either
    $\gamma_{i+1}(t_1)\subset\gamma_{i}(t_1)$ or
    $\gamma_{i+1}(t_2)\subset\gamma_{i}(t_2)$.
  \end{cs}
  This completes the description of the construction. To see that it
  terminates, we say that the \emph{total weight} of $\gamma_i$ is
  $\sum_{t\in V(T)}\kappa(\gamma_i(t))$ and the \emph{total size} of
  $\gamma_i$ is $\sum_{t\in V(T)}|\gamma_i(t)|$.  Now observe that in each
  step of the construction either the total weight decreases or the
  total weight stays the same and the total size decreases. This
  proves termination.

  To see that $(T,\gamma_i)$ is indeed a pre-decomposition, observe
  first that $\gamma_i(r)={\U}$ for the root $r$ of $T$, because the root
  can only occur as the parent node $s$ in the construction above, and
  the set at the parent node either stays the same (in Cases~1a and 2)
  or increases (in Case~1b). Moreover, it is easy to check that for
  all nodes $s'$ with children $t_1',t_2'$ we have
  $\gamma_i(s')\subseteq\gamma_i(t_1')\cup\gamma_i(t_2')$. This follows
  immediately from the construction if $s=s'$. If $s'$ is the parent
  of $s=t_i'$, it follows because the set at $s$ can only increase. If
  $s'=t_i$, it follows because the set at $t_i$ can only
  decrease. Otherwise, all the sets at $s',t_1',t_2'$ remain
  unchanged. Note that the invariant (iii) is preserved, because
  leaves can only occur as the child nodes $t_i$ in the construction
  above, and the sets at the child nodes either decrease (in Cases~1a
  and 2) or stay the same (in Case~1b).
\end{proof}

The Exactness Lemma may yield a degenerate decomposition where some atoms are
empty. While not ruled out by the definitions, empty  atoms are not
making much sense in a decomposition. The following lemma shows that
we can easily get rid of them.

\begin{lem}\label{lem:empty-leaves}
  Suppose that ${\U}\neq\emptyset$.
  Let $(T,\gamma)$ be a directed decomposition of a set ${\U}$. Then there is a
  directed decomposition $(T',\gamma')$ of ${\U}$ such that 
  \begin{eroman}
  \item $T'$ is a subtree of $T$ (with the same root),
  \item $\gamma'(t)=\gamma(t)\neq\emptyset$ for all $t\in V(T')$ and
    $\gamma(t)=\emptyset$ for all $t\in V(T)\setminus V(T')$.
  \end{eroman}
\end{lem}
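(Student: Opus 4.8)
The plan is to read off $T'$ and $\gamma'$ directly from where $\gamma$ is nonempty. Set $V':=\{t\in V(T)\mid\gamma(t)\neq\emptyset\}$ and $\gamma':=\gamma$ restricted to $V'$; the statement will follow once we show that $T[V']$ is a subtree of $T$ containing the root, that the nodes outside carry no information, and that $(T[V'],\gamma')$ is again a decomposition. Everything rests on two one-line consequences of exactness: since $\gamma(s)=\gamma(u_1)\cup\gamma(u_2)$ at every internal node $s$ with children $u_1,u_2$, we have $\gamma(s)\supseteq\gamma(u_i)$; and since moreover $\gamma(u_1)\cap\gamma(u_2)=\emptyset$, the union $\gamma(s)$ is empty if and only if both $\gamma(u_1)$ and $\gamma(u_2)$ are.

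First I would use $\gamma(s)\supseteq\gamma(u_i)$ to see that $V'$ is closed under taking parents, hence under taking ancestors; since $\gamma(r)={\U}\neq\emptyset$ we get $r\in V'$, so $V'$ is an up-closed node set and $T':=T[V']$ is a connected subgraph of $T$ with root $r$, which is (i). Dually, the second observation shows that whenever $\gamma(s)=\emptyset$ the entire subtree of $T$ rooted at $s$ maps to $\emptyset$ (induct on its depth); hence $V(T)\setminus V'$ is a disjoint union of such all-empty subtrees, which gives (ii) and also says that $T'$ is exactly $T$ with these empty subtrees pruned away.

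Finally I would check that $(T',\gamma')$ is a directed decomposition. An internal node $s$ of $T'$ has a child in $T'$, hence two children $u_1,u_2$ in $T$, and $\gamma(s)\neq\emptyset$ forces at least one $u_i\in V'$. If both lie in $V'$, then exactness at $s$ in $(T',\gamma')$ is verbatim the exactness at $s$ in $(T,\gamma)$, since none of the relevant values changed. If only one child $u$ lies in $V'$, then after the pruning $s$ is left with the single child $u$ and $\gamma(s)=\gamma(u)$; here one suppresses $s$ (equivalently, contracts the edge $su$). Along any maximal chain of such out-degree-one nodes $\gamma$ is constant, so the suppression is harmless: it alters no $\gamma$-value, leaves exactness at the retained branching nodes untouched, and restores the requirement that every internal node have two children. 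The resulting tree is (identified with) a subtree of $T$ with root $r$, carrying a directed decomposition of $\kappa$ all of whose atoms are nonempty.

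The only delicate point is this last step: noticing that pruning the empty subtrees can leave out-degree-one nodes, and that these must be suppressed to stay within the class of directed decompositions (whose underlying trees are binary); once the two consequences of exactness above are in hand, the rest — the up-closedness of $V'$, the description of $V(T)\setminus V'$, and the inheritance of exactness at unchanged nodes — is automatic.
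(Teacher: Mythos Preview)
Your argument is correct and takes essentially the same route as the paper's one-sentence proof: delete the nodes with $\gamma(t)=\emptyset$, and then repair the binary-tree property using the observation that $\gamma(s)=\gamma(t')$ whenever the other child $t$ of $s$ has $\gamma(t)=\emptyset$. The paper phrases the repair as ``delete the sibling $t'$'' (so $s$ takes over $t'$'s role), while you phrase it as ``suppress $s$'' (so $t'$ survives); these are the same edge contraction viewed from opposite endpoints, and you are right to flag it as the one step that is not entirely automatic.
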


\begin{proof}
  We simply delete all nodes $t$ with $\gamma(t)=\emptyset$ and all
  their siblings. This works because if $t,t'$ are children of a node
  $s$ and $\gamma(t)=\emptyset$ then $\gamma(s)=\gamma(t')$ by the exactness of
  the decomposition at $s$.
\end{proof}

\subsection{Undirected Decompositions}

We now introduce an undirected version of our decompositions and show
that it is ``equivalent'' to the directed version. More precisely, we
shall give simple constructions turning a directed decomposition into
an undirected one and vice-versa. Despite this equivalence, it will be
convenient to have both versions, because in some proofs it is easier
to work with the directed version (for example, the proof of the
Exactness Lemma) and in some proofs it is easier to work with the undirected
version (for example, the proof of Lemma~\ref{lem:bw-upperbound} in
this section and, more importantly, the proof of the Duality Theorem
in Section~\ref{sec:duality}).

Trees and graphs are undirected by default, so we omit the qualifier
``undirected'' in the following. We denote the set of all neighbours
of a node $t$ of a tree or graph $T$ by $N^T(t)$ or just $N(t)$ if $T$
is clear from the context.
We denote the set of leaves of a
tree $T$, that is, nodes of degree at most $1$, by $L(T)$ and call all
non-leaf nodes \emph{internal nodes}. A tree is
\emph{cubic} if all internal nodes have degree $3$. We refer to pairs
$(t,u)$ where $tu\in E(T)$ as \emph{oriented edges} and denote the set
    of all oriented edges of $T$ by $\vec E(T)$.

\begin{defn}
  Let ${\U}$ be a finite set.
  \begin{enumerate}
  \item A \emph{pre-decomposition} of ${\U}$ is a pair $(T,\gamma)$
    consisting of a cubic tree $T$ and a mapping
    $\gamma:\vec E(T)\to 2^{\U}$ such that 
    \begin{eroman}
      \item
        $\gamma(t,u)=\bar{\gamma(u,t)}$ for all $(t,u)\in\vec
        E(T)$,
      \item
        $\gamma(t,u_1)\cup\gamma(t,u_2)\cup\gamma(t,u_3)={\U}$
        for all internal nodes $t\in V(T)$ with $N(t)=\{u_1,u_2,u_3\}$.
    \end{eroman}
  \item A pre-decomposition $(T,\gamma)$ is \emph{complete} if
    $|\gamma(t,u)|=1$ for all leaves $u\in L(T)$ with $N(u)=\{t\}$.
  \item A pre-decomposition $(T,\gamma)$ is \emph{exact} at a
    node $t\in V(T)$ with $N(t)=\{u_1,u_2,u_3\}$ if the sets
    $\gamma(t,u_i)$ are mutually disjoint.
  \item A \emph{decomposition} is a 
    pre-decomposition that is exact at all internal nodes.
  \item A \emph{branch decomposition} is a complete
    decomposition.
  \end{enumerate}
  If $\kappa$ is a connectivity function on ${\U}$, then a
\emph{(complete, branch, pre-)decomposition} of $\kappa$ is a
(complete, branch, pre-)decomposition of ${\U}$.
\begin{enumerate}[start=6]
\item The \emph{width} of a pre-decomposition $(T,\gamma)$ of
    $\kappa$ is
    \[
    \width(T,\gamma):=\max\big\{\kappa(\gamma(t,u))\bigmid
    (t,u)\in\vec E(T)\big\}.\uende
    \]
\end{enumerate}
\end{defn}

Let $(T,\gamma)$ be a pre-decomposition ${\U}$. For every
$(t,u)\in\vec E(T)$, we call $\gamma(t,u)$ the \emph{cone} of the
pre-decomposition at $(t,u)$. For undirected pre-decompositions, the
cones coincide with the \emph{separations}, and we let 
\[
\Sep(T,\gamma):=\big\{\gamma(t,u)\bigmid
(t,u)\in\vec E(T)\big\}.
\]
It will be convenient to let $\gamma(t):=\gamma(s,t)$ for leaves $t\in
L(T)$ with $N(t)=\{s\}$. We call the sets $\gamma(t)$ for $t\in L(T)$
the \emph{atoms} of the decomposition and denote the set of all atoms
by $\At(T,\gamma)$. If $T$ is a one-node
tree with $V(T)=\{t\}$, we let $\gamma(t):={\U}$. Note that if
$(T,\gamma)$ is a decomposition then the restriction of 
$\gamma$ to the leaves determines $\gamma$. 

\begin{lem}\label{lem:dir-vs-undir}
  \begin{enumerate}
  \item For every pre-decomposition $(T,\gamma)$ of ${\U}$
    there is a directed pre-decomposition $(T_{\to},\gamma_{\to})$ of $\kappa$ such
    that $\Sep(T,\gamma)=\Sep(T_{\to},\gamma_{\to})$ and $\At(T,\gamma)=\At(T_{\to},\gamma_{\to})$.
  \item For every directed pre-decomposition $(T_{\to},\gamma_{\to})$ of ${\U}$
    that is exact at the root of $T_{\to}$ there is a pre-decomposition $(T,\gamma)$ of $\kappa$ such
    that $\Sep(T,\gamma)=\Sep(T_{\to},\gamma_{\to})$ and $\At(T,\gamma)=\At(T_{\to},\gamma_{\to})$.
  \end{enumerate}
\end{lem}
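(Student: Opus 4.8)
Both directions come down to one observation: a binary directed tree is just a cubic tree together with a distinguished edge --- the edge ``straddled'' by the root --- so the plan is to pass between the two by subdividing an edge of $T$ (for part~(1)) or suppressing the root of $T_{\to}$ (for part~(2)), transporting $\gamma$ along the evident identification of cones. In both parts I would first dispose of the degenerate case of a one-node tree: there $\Sep=\emptyset$ and $\At=\{\U\}$, and the one-node tree on the other side (with $\gamma$ sending its unique node to $\U$) does the job.

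For part~(1), assuming $|V(T)|\ge 2$, I would fix any edge $\{a,b\}\in E(T)$, let $T_{\to}$ be obtained by subdividing $\{a,b\}$ with a new node $r$ and orienting every edge away from $r$, and note that because each internal node of $T$ has degree $3$ while $r$ has degree $2$, $T_{\to}$ is a binary directed tree with $L(T_{\to})=L(T)$. I would set $\gamma_{\to}(r):=\U$ and, for a non-root node $t$ with parent $p(t)$ in $T_{\to}$, put $\gamma_{\to}(t):=\gamma(p(t),t)$, reading this for the two nodes $t\in\{a,b\}$ (whose parent is $r$) as $\gamma_{\to}(a):=\gamma(b,a)$ and $\gamma_{\to}(b):=\gamma(a,b)$. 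Three checks remain. First, $(T_{\to},\gamma_{\to})$ is a directed pre-decomposition: at an internal node $t$ with parent $p(t)$ and children $u_1,u_2$, the required inclusion $\gamma_{\to}(t)\subseteq\gamma_{\to}(u_1)\cup\gamma_{\to}(u_2)$ unravels to $\overline{\gamma(t,p(t))}\subseteq\gamma(t,u_1)\cup\gamma(t,u_2)$, which is exactly the defining identity $\gamma(t,p(t))\cup\gamma(t,u_1)\cup\gamma(t,u_2)=\U$ of $(T,\gamma)$ at $t$, while at $t=r$ it is immediate since $\gamma_{\to}(a)\cup\gamma_{\to}(b)=\gamma(b,a)\cup\gamma(a,b)=\U$. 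Second, $\Sep(T_{\to},\gamma_{\to})=\Sep(T,\gamma)$: as $t$ runs over the non-root nodes, the pair $\{\gamma_{\to}(t),\overline{\gamma_{\to}(t)}\}$ runs over the pairs $\{\gamma(s,t),\gamma(t,s)\}$ for $\{s,t\}\in E(T)$. Third, $\At(T_{\to},\gamma_{\to})=\At(T,\gamma)$: a leaf $t$ with unique $T$-neighbour $s$ gets $\gamma_{\to}(t)=\gamma(s,t)=\gamma(t)$.

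For part~(2), assuming $|V(T_{\to})|\ge 2$, the root $r$ is internal with two children $c_1,c_2$, and exactness at $r$ together with $\gamma_{\to}(r)=\U$ yields $\gamma_{\to}(c_2)=\overline{\gamma_{\to}(c_1)}$. I would let $T$ be obtained from $T_{\to}$ by deleting $r$ and joining $c_1$ to $c_2$ (i.e.\ suppressing the degree-$2$ node $r$); this is a tree all of whose internal nodes have degree $3$, with $L(T)=L(T_{\to})$. On $\vec E(T)$ I would set $\gamma(p(t),t):=\gamma_{\to}(t)$ and $\gamma(t,p(t)):=\overline{\gamma_{\to}(t)}$ for each edge $\{p(t),t\}$ of $T_{\to}$ not incident to $r$, and $\gamma(c_1,c_2):=\gamma_{\to}(c_2)$, $\gamma(c_2,c_1):=\gamma_{\to}(c_1)$ on the new edge; the identity $\gamma_{\to}(c_2)=\overline{\gamma_{\to}(c_1)}$ makes this consistent with the complementation axiom. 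That $(T,\gamma)$ is a pre-decomposition, and that $\Sep$ and $\At$ agree with those of $(T_{\to},\gamma_{\to})$, is checked exactly as in part~(1); the one new point is that removing the two edges of $T_{\to}$ at $r$ and inserting $\{c_1,c_2\}$ neither loses nor gains a cone, precisely because $\gamma_{\to}(c_1)=\overline{\gamma_{\to}(c_2)}$.

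The only step needing real care --- and the only place the hypothesis ``exact at the root'' is used --- is the bookkeeping at the root. In part~(1) the two edges at the new root $r$ do not correspond to a single edge of $T$ but jointly to the subdivided edge $\{a,b\}$, so one must orient them correctly ($\gamma_{\to}(a)=\gamma(b,a)$, \emph{not} $\gamma(a,b)$); and in part~(2) the suppression of $r$ may neither destroy nor create a cone, which is exactly the content of $\gamma_{\to}(c_1)=\overline{\gamma_{\to}(c_2)}$. Everything else --- in particular the one-line facts that subdividing an edge, respectively suppressing the root, leaves the leaf set unchanged, so that the atoms sitting at children of the root (which may themselves be leaves) are correctly accounted for --- is routine.
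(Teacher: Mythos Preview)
Your proposal is correct and follows essentially the same construction as the paper: subdivide an arbitrary edge of the cubic tree and root at the new node for part~(1), and suppress the degree-$2$ root for part~(2), transporting cones along the obvious bijection of oriented edges. Your treatment of the degenerate one-node case and your explicit verification of the pre-decomposition axioms (in particular the bookkeeping at the root using exactness) match the paper's argument in all essential respects.
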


Note that (2) applies to all directed decompositions $(T_\to,\gamma_\to)$, because directed
decompositions are exact at every node.

\begin{proof}
  To prove the implication (1), let $(T,\gamma)$ be a
  pre-decomposition of ${\U}$. If
  $E(T)=\emptyset$, we let $T_{\to}$ be the one-node directed tree and
  define $\gamma_{\to}$ by $\gamma(r):={\U}$. Then
  $\Sep(T,\gamma)=\Sep(T_{\to},\gamma_{\to})=\emptyset$ and
  $\At(T,\gamma)=\At(T_{\to},\gamma_{\to})=\{{\U}\}$. 

  Otherwise, let $e=s_0s_1\in E(T)$ be an arbitrary edge. To construct
  $T_{\to}$, we subdivide the edge $e$, inserting a new node $r$. Then we
  orient all edges away from $r$ and obtain a directed tree $T_{\to}$ with
  root $r$. We define $\gamma_{\to}:V(T_{\to})\to2^{\U}$ by
  \[
  \gamma_{\to}(t):=
  \begin{cases}
    {\U}&\text{if }t=r,\\
    \gamma(s_{1-i},s_i)&\text{if }t=s_i,\\
    \gamma(s,t)&\text{if }t\neq r,s_0,s_1\text{ and $s$ is the
      parent of $t$ in $T_{\to}$}.
  \end{cases}
  \]
  It is straightforward to verify that $(T_{\to},\gamma_{\to})$ is a directed
  decomposition with $\Sep(T,\gamma)=\Sep(T_{\to},\gamma_{\to})$ and
  $\At(T,\gamma)=\At(T_{\to},\gamma_{\to})$. 

  \medskip
  To prove (2), we simply revert this
  construction. Let $(T_{\to},\gamma_{\to})$ be a
  directed decomposition of ${\U}$. Without loss of generality we assume
  that $|V(T_{\to})|>1$. Let $r$ be the root of $T_{\to}$ and $s_0,s_1$ its
  children. Let $T$ be the tree obtained from the undirected tree
  underlying $T_{\to}$ suppressing $r$, that is, deleting $r$ and adding an
  edge $s_0s_1$. We define $\gamma:V(T)\to2^{\U}$ by
  \[
  \gamma(t,u):=
  \begin{cases}
    \gamma_{\to}(s_i)&\text{if }(t,u)=(s_{1-i},s_i),\\
    \gamma_{\to}(u)&\text{if }(t,u)\in E(T_{\to}),\\
    \gamma_{\to}(t)&\text{if }(u,t)\in E(T_{\to}).
  \end{cases}
  \]
  Again, it is straightforward to verify that this construction works.
\end{proof}

\begin{cor}
  Let $\kappa$ be a connectivity function on a set ${\U}$. Then
    \[
    \bw(\kappa)=\min\big\{\width(T,\gamma)\bigmid(T,\gamma)\text{ branch decomposition of }\kappa\big\}.\uende
    \]
\end{cor}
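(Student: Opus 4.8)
The plan is to derive the corollary from Lemma~\ref{lem:dir-vs-undir}, together with the simple observation that the width of a (pre\nobreakdash-)decomposition, directed or undirected, depends only on its set of separations.

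First I would make that observation precise. If $(T,\gamma)$ is a directed pre-decomposition with root $r$, then $\gamma(r)=\U$, and since $\kappa$ is normalised and symmetric we have $\kappa(\U)=\kappa(\emptyset)=0$; also $\kappa\ge 0$ by Lemma~\ref{lem:bp}(2). Every cone at a non-root node lies in $\Sep(T,\gamma)$, and $\kappa$ takes equal values on a set and its complement, so
\[
\width(T,\gamma)=\max\big(\{0\}\cup\{\kappa(Z)\mid Z\in\Sep(T,\gamma)\}\big),
\]
which equals $\max\{\kappa(Z)\mid Z\in\Sep(T,\gamma)\}$ whenever $\Sep(T,\gamma)\neq\emptyset$. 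For an undirected pre-decomposition the same identity is immediate from the definition of $\width$, since there $\Sep(T,\gamma)=\{\gamma(t,u)\mid (t,u)\in\vec E(T)\}$. Thus in both settings $\width(T,\gamma)$ is a function of $\Sep(T,\gamma)$ alone.

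Next I would push both inequalities through Lemma~\ref{lem:dir-vs-undir}. Given a directed branch decomposition $(T_\to,\gamma_\to)$ of $\kappa$: it is exact at every node, in particular at its root, so part~(2) of the lemma applies and yields a pre-decomposition $(T,\gamma)$ with $\Sep(T,\gamma)=\Sep(T_\to,\gamma_\to)$ and $\At(T,\gamma)=\At(T_\to,\gamma_\to)$. Inspecting the construction in the proof of that lemma, $(T,\gamma)$ is in fact a decomposition; and since being complete is equivalent to all atoms being singletons, and $\At$ is preserved, $(T,\gamma)$ is a branch decomposition, of the same width by the observation above. Hence the right-hand minimum is at most $\bw(\kappa)$. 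For the converse, given an undirected branch decomposition $(T,\gamma)$ of $\kappa$, part~(1) of the lemma produces --- again, after the same inspection of its proof --- a complete directed decomposition with the same set of separations and hence the same width, so $\bw(\kappa)\le\width(T,\gamma)$. Combining the two yields the corollary for $\U\neq\emptyset$; for $\U=\emptyset$ one reads both sides under the convention that the branch width of $\kappa_\emptyset$ is $0$.

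The only delicate point --- more a matter of care than a genuine obstacle --- is that Lemma~\ref{lem:dir-vs-undir} is stated only for pre-decompositions, so to conclude that the two translations send branch decompositions to branch decompositions one must appeal to the proof of that lemma, where it is verified that the constructions are exact at all internal nodes and preserve the family of atoms (whence completeness). Everything else is bookkeeping with $\Sep$, $\At$, and the nonnegativity and normalisation of $\kappa$.
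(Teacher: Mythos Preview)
Your argument is correct and follows exactly the route the paper intends: the corollary is stated immediately after Lemma~\ref{lem:dir-vs-undir} with no separate proof, and you have simply spelled out why the two constructions of that lemma carry branch decompositions to branch decompositions of the same width (via preservation of $\Sep$ and $\At$). Your extra care about $\kappa(\U)=0$ and the $\U=\emptyset$ convention is appropriate and matches the paper's conventions.
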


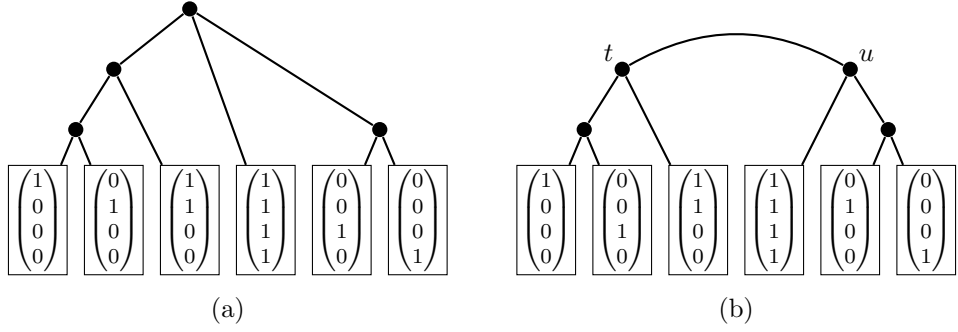
\begin{figure}
  \centering
    \begin{tikzpicture}[
      leaf/.style={draw,inner sep=0.5mm},
      inner/.style={circle,fill=black,inner sep=0mm,minimum size=2mm},
      every edge/.style={draw,thick},
    ]

    \begin{scope}
    \footnotesize
    \node[leaf] (L1) at (0,0) {$\begin{pmatrix}
        1\\0\\0\\0
      \end{pmatrix}$};
    \node[leaf] (L2) at (1,0) {$\begin{pmatrix}
        0\\1\\0\\0
      \end{pmatrix}$};
    \node[leaf] (L3) at (2,0) {$\begin{pmatrix}
        1\\1\\0\\0
      \end{pmatrix}$};
    \node[leaf] (L4) at (3,0) {$\begin{pmatrix}
      1\\1\\1\\1
    \end{pmatrix}$};
    \node[leaf] (L5) at (4,0) {$\begin{pmatrix}
        0\\0\\1\\0
      \end{pmatrix}$};
    \node[leaf] (L6) at (5,0) {$\begin{pmatrix}
        0\\0\\0\\1
      \end{pmatrix}$};
    
    \small
    \node[inner] (I1) at (0.5,1.2) {};
    \draw (I1) edge (L1) edge (L2);

    \node[inner] (I2) at (1,2) {};
    \draw (I2) edge (I1) edge (L3);

    \node[inner] (I3) at (2,2.8) {};
    \draw (I3) edge (I2) edge (L4);

    \node[inner] (I4) at (4.5,1.2) {};
    \draw (I4) edge (L5) edge (L6);

    \draw (I3) edge (I4);

    \normalsize
    \node at (2.5,-1.2) {(a)};
    \end{scope}

    \begin{scope}[xshift=\textwidth/2]
    \footnotesize
    \node[leaf] (L7) at (0,0) {$\begin{pmatrix}
        1\\0\\0\\0
      \end{pmatrix}$};
    \node[leaf] (L8) at (1,0) {$\begin{pmatrix}
        0\\0\\1\\0
      \end{pmatrix}$};
    \node[leaf] (L9) at (2,0) {$\begin{pmatrix}
        1\\1\\0\\0
      \end{pmatrix}$};
    \node[leaf] (L10) at (3,0) {$\begin{pmatrix}
      1\\1\\1\\1
    \end{pmatrix}$};
    \node[leaf] (L11) at (4,0) {$\begin{pmatrix}
        0\\1\\0\\0
      \end{pmatrix}$};
    \node[leaf] (L12) at (5,0) {$\begin{pmatrix}
        0\\0\\0\\1
      \end{pmatrix}$};
    
    \small
    \node[inner] (I9) at (0.5,1.2) {};
    \draw (I9) edge (L7) edge (L8);

    \node[inner] (I6) at (1,2) {};
    \draw (I6) edge (I9) edge (L9);

    \node[inner] (I7) at (4.5,1.2) {};
    \draw (I7) edge (L11) edge (L12);

     \node[inner] (I8) at (4,2) {};
    \draw (I8) edge (L10) edge (I7);
   
     \draw (I6) edge[bend left] (I8);

    \node[anchor=south east] at (I6) {\normalsize$t$};
    \node[anchor=south west] at (I8) {\normalsize$u$};

     \normalsize
    \node at (2.5,-1.2) {(b)};
    \end{scope}
  \end{tikzpicture}

  \caption{Two branch decompositions corresponding to the directed
    branch decompositions in Figure~\ref{fig:dbranchdec1}}
  \label{fig:branchdec1}
\end{figure}

\begin{exa}
  Figure~\ref{fig:branchdec1} shows the two branch decompositions
  obtained by applying the construction of the proof of
  Lemma~\ref{lem:dir-vs-undir} to the directed
    branch decompositions in Figure~\ref{fig:dbranchdec1}.

    The separation at the oriented edge $(t,u)$ of the second decomposition
    is
    \[
    \left\{\begin{pmatrix}
      1\\1\\1\\1
    \end{pmatrix},
\begin{pmatrix}
      0\\1\\0\\0
    \end{pmatrix},
\begin{pmatrix}
      0\\0\\0\\1
    \end{pmatrix}\right\}
    \]
    \uend
\end{exa}

\begin{rem}
  This is a good place to introduce the concept of \emph{canonicity}
  of a construction or algorithm. In general a construction (or
  algorithm) is \emph{canonical} if every isomorphism between its
  input objects commutes with an isomorphism between the output
  objects.  More formally, suppose we have a construction (or
  algorithm) $A$ that associates an output $A(I)$ with every input
  $I$. Then the construction is \emph{canonical} if for any two inputs
  $I_1$ and $I_2$ and every isomorphism $f$ from $I_1$ to $I_2$ there
  is an isomorphism $g$ from $A(I_1)$ to $A(I_2)$ such that
  $g(A(I_1))=A(I_2)$, that is, the following diagram commutes:
\[
\begin{tikzcd}
I_1 \arrow{r}{f} \arrow{d}{A} & I_2 \arrow{d}{A} \\
A(I_1)\arrow{r}{g} & A(I_2)
\end{tikzcd}.
\]
For example, the construction of a decomposition from a directed
decomposition in the proof of the implication (2)$\implies$(1) of
Lemma~\ref{lem:dir-vs-undir} is canonical, but the construction of a directed
decomposition from a
decomposition in the proof of the implication (1)$\implies$(2) is not, because
it depends on the choice of the edge $s_0s_1$ to be subdivided.
\uend
\end{rem}

The following corollary is an immediate consequence of
of Lemma~\ref{lem:dir-vs-undir} and the Exactness Lemma
(Lemma~\ref{lem:exact}).

\begin{cor}[Exactness Lemma for Undirected Decompositions]\label{cor:exact}
  Let $(T,\gamma)$
  a pre-decomposition of $\kappa$. Then there is a decomposition
  $(T',\gamma')$ of $\kappa$ such that
  \begin{eroman}
  \item $\width(T',\gamma')\le\width(T,\gamma)$;
  \item The atoms of $(T',\gamma')$ are subsets of the
    atoms of $(T,\gamma')$, that is, for every $t'\in
    L(T')$ there is a $t\in L(T)$ such that $\gamma'(t')\subseteq\gamma(t)$.
  \end{eroman}
\end{cor}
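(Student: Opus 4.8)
The plan is to pass to the directed setting, where the Exactness Lemma (Lemma~\ref{lem:exact}) lives, apply it there, and come back, using Lemma~\ref{lem:dir-vs-undir} for both transitions. Explicitly: given the undirected pre-decomposition $(T,\gamma)$ of $\kappa$, Lemma~\ref{lem:dir-vs-undir}(1) provides a directed pre-decomposition $(T_{\to},\gamma_{\to})$ of $\kappa$ with $\Sep(T_{\to},\gamma_{\to})=\Sep(T,\gamma)$ and $\At(T_{\to},\gamma_{\to})=\At(T,\gamma)$. Applying the Exactness Lemma to $(T_{\to},\gamma_{\to})$ yields a function $\gamma_{\to}'$ on the \emph{same} tree $T_{\to}$ such that $(T_{\to},\gamma_{\to}')$ is a directed decomposition, $\kappa(\gamma_{\to}'(t))\le\kappa(\gamma_{\to}(t))$ for all $t\in V(T_{\to})$, and $\gamma_{\to}'(t)\subseteq\gamma_{\to}(t)$ for every leaf $t$. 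Since a directed decomposition is exact at its root, Lemma~\ref{lem:dir-vs-undir}(2) then returns an undirected decomposition $(T',\gamma')$ of $\kappa$ with $\Sep(T',\gamma')=\Sep(T_{\to},\gamma_{\to}')$ and $\At(T',\gamma')=\At(T_{\to},\gamma_{\to}')$. I claim $(T',\gamma')$ works.

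For the width bound, recall that in a directed (pre-)decomposition $D$ the only cone that need not be a separation is the root cone $\U$, and $\kappa(\U)=\kappa(\emptyset)=0$; together with symmetry this gives $\width(D)=\max(\{0\}\cup\{\kappa(S):S\in\Sep(D)\})$, whereas $\width(T,\gamma)=\max\{\kappa(S):S\in\Sep(T,\gamma)\}$ directly (with the convention that this is $0$ when $T$ has no edge). Since $\kappa\ge 0$ by Lemma~\ref{lem:bp}(2), the spurious $0$ is harmless, so the two separation-set identities above give $\width(T_{\to},\gamma_{\to})=\width(T,\gamma)$ and $\width(T',\gamma')=\width(T_{\to},\gamma_{\to}')$, while the Exactness Lemma contributes $\width(T_{\to},\gamma_{\to}')\le\width(T_{\to},\gamma_{\to})$ in between. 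Chaining these three relations yields $\width(T',\gamma')\le\width(T,\gamma)$, which is (i). For (ii), the two applications of Lemma~\ref{lem:dir-vs-undir} preserve the atom set and, inspecting their proofs (subdivide an edge / suppress the root), act as the identity on the leaves and on the cone entering each leaf; hence $\At(T',\gamma')=\At(T_{\to},\gamma_{\to}')$ and $\At(T_{\to},\gamma_{\to})=\At(T,\gamma)$ come equipped with a leaf-preserving correspondence, and the leafwise containment $\gamma_{\to}'(t)\subseteq\gamma_{\to}(t)$ supplied by the Exactness Lemma transfers to the statement that for every $t'\in L(T')$ there is a $t\in L(T)$ with $\gamma'(t')\subseteq\gamma(t)$.

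There is no genuine obstacle here — the corollary really is a bookkeeping exercise, as the paper indicates by calling it an immediate consequence. The only points that need a moment's care are the mild mismatch between the node-indexed definition of $\width$ for directed decompositions and the oriented-edge-indexed one for undirected decompositions, which is absorbed by the observation above that the one extra directed cone is $\U$ with $\kappa$-value $0$, and the degenerate case $E(T)=\emptyset$ (a one-node tree), for which both constructions in Lemma~\ref{lem:dir-vs-undir} are already defined so that everything above remains literally true.
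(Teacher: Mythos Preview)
Your proof is correct and follows exactly the route the paper intends: convert to a directed pre-decomposition via Lemma~\ref{lem:dir-vs-undir}(1), apply the Exactness Lemma~\ref{lem:exact}, and convert back via Lemma~\ref{lem:dir-vs-undir}(2). The paper itself states the corollary is an immediate consequence of precisely these two lemmas and gives no further argument, so your careful handling of the width bookkeeping (the root cone contributing $\kappa(\U)=0$) and the leaf-preserving nature of the two conversions simply fills in details the paper leaves implicit.
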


\begin{lem}\label{lem:bw-upperbound}
  Let $\kappa$ be a connectivity function on a set ${\U}$. Then
  \begin{equation}\label{eq:bw1}
  \bw(\kappa)\le \val(\kappa)\cdot\ceil{\frac{|{\U}|}{3}}.
  \end{equation}
\end{lem}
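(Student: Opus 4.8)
The plan is to construct, for any connectivity function $\kappa$ on $\U$, a directed branch decomposition of width at most $\val(\kappa)\cdot\ceil{|\U|/3}$; since $\bw(\kappa)$ is by definition the minimum width over all directed branch decompositions, this suffices (the universe $\U=\emptyset$ being handled by the convention $\bw(\kappa_\emptyset)=0$). The only property of $\kappa$ I will need is a crude size estimate that drops out of Lemma~\ref{lem:bp}(2) and~(3): putting $Y=\emptyset$ in part~(3) gives $\kappa(X)\le\val(\kappa)\cdot|X|$, and combining this with symmetry ($\kappa(X)=\kappa(\bar X)$) yields
\[
\kappa(X)\le\val(\kappa)\cdot\min\{|X|,\,|\bar X|\}\qquad\text{for all }X\subseteq\U.
\]
So it is enough to build a directed branch decomposition $(T,\gamma)$ in which every cone $\gamma(t)$ satisfies $\min\{|\gamma(t)|,|\bar{\gamma(t)}|\}\le\ceil{|\U|/3}$, noting that the root cone $\gamma(r)=\U$ is harmless because $\kappa(\U)=\kappa(\bar\U)=\kappa(\emptyset)=0$. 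The degenerate small cases are disposed of by hand: if $|\U|\le 1$ then $\kappa$ is the trivial function, so $\val(\kappa)=0$ and both sides are $0$; if $|\U|=2$ then $\ceil{2/3}=1$ and the tree with a single internal node already has width $\le\val(\kappa)$.

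For the main case, assume $n:=|\U|\ge 3$ and set $m:=\ceil{n/3}$. Write $n=n_1+n_2+n_3$ with each $n_i\in\{\floor{n/3},\ceil{n/3}\}$ — possible by the obvious case distinction on $n\bmod 3$, and then automatically $1\le n_i\le m$ — and split $\U=\U_1\sqcup\U_2\sqcup\U_3$ accordingly. For $i=1,2,3$ choose an arbitrary binary directed tree $T_i$ with a bijection from its leaves onto $\U_i$. Now assemble $T$ by taking a fresh root $r$ whose first child is the root of $T_1$ and whose second child $b$ is a fresh node having the roots of $T_2$ and of $T_3$ as its two children; define $\gamma(t)$ to be the set of elements of $\U$ occurring at leaves below $t$ in $T$. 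Then $(T,\gamma)$ is exact at every node and complete, hence a directed branch decomposition of $\kappa$. The reason for the \emph{threefold} branching at the top of $T$, rather than an ordinary balanced binary tree, is precisely that a binary split would leave a cone of size about $n/2$, which is too large for the $\min$-estimate; with the three-way split, after peeling off $\U_1$ (size $\le m$) the remaining cone $\gamma(b)$ has \emph{complement} $\U_1$ of size $\le m$, and every node strictly below $b$ lies inside $T_2$ or $T_3$.

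It remains only to verify the width bound by a short case analysis on where a node $t\in V(T)$ sits. If $t=r$ then $\kappa(\gamma(t))=\kappa(\U)=0$. If $t=b$ then $\bar{\gamma(t)}=\U_1$, so $|\bar{\gamma(t)}|=n_1\le m$. If $t\in V(T_i)$ for some $i$ then $\gamma(t)\subseteq\U_i$, so $|\gamma(t)|\le n_i\le m$; this in particular covers all leaves. In every case the displayed size estimate gives $\kappa(\gamma(t))\le\val(\kappa)\cdot m=\val(\kappa)\cdot\ceil{|\U|/3}$, whence $\width(T,\gamma)\le\val(\kappa)\cdot\ceil{|\U|/3}$ and therefore $\bw(\kappa)\le\val(\kappa)\cdot\ceil{|\U|/3}$. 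I do not expect any genuine obstacle in carrying this out: the one thing that has to be \emph{seen} is that the decomposition tree must fan out three ways at the root (so that both children have a small complement), after which everything is the bookkeeping just sketched.
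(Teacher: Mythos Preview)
Your proof is correct and follows essentially the same approach as the paper: partition $\U$ into three parts of size at most $\ceil{|\U|/3}$, build an arbitrary branch decomposition on each part, and join the three subtrees at the top, then use the estimate $\kappa(X)\le\val(\kappa)\cdot\min\{|X|,|\bar X|\}$ from Lemma~\ref{lem:bp}(3). The only cosmetic difference is that the paper works with undirected cubic trees (joining the three subtrees at a single degree-$3$ node $s$), whereas you work with directed binary trees (using the root $r$ and the auxiliary node $b$ to simulate the three-way branching); this is immaterial given Lemma~\ref{lem:dir-vs-undir}.
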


\begin{proof}
  We may assume without loss of generality that $|{\U}|\ge
  3$, because if $|{\U}|\le 2$ then $\bw(\kappa)=\val(\kappa)$. We partition ${\U}$
  into three nonempty sets ${\U}_1,{\U}_2,{\U}_3$ of size
  \begin{equation}
    \label{eq:1}
  |{\U}_i|\le\ceil{\frac{|{\U}|}{3}}.
  \end{equation}
  For $i=1,2,3$, let $T_i$ be a cubic tree with $|{\U}_i|$
  leaves and $f_i$ a bijection from $L(T_i)$ to ${\U}_i$. Without loss of
  generality we assume that the trees $T_1,T_2,T_3$ are mutually node-disjoint.
  We form a new tree $T$ by joining the $T_i$
  at a
  new node $s$. More precisely, for
  $i=1,2,3$ we pick an arbitrary edge $e_i\in E(T_i)$ and subdivide
  it, inserting a new node $s_i$. If $E(T_i)=\emptyset$, we let $s_i$
  be the unique node of $T_i$. Then we add an edge between $s_i$ and
  $t$. This yields a new cubic tree $T$ with $L(T)=L(T_1)\cup
  L(T_2)\cup L(T_3)$. We define a bijection $f:L(T)\to {\U}$ by
  $f(t):=f_i(t)$ for all $t\in L(T_i)$. We define $\gamma:\vec E(T)\to
  2^{\U}$ by letting $\gamma(s,t)$ to be the set of all $f(u)$ for
  leaves $u\in L(T)$ in the connected component of $T-st$ that
  contains $t$. Clearly, $(T,\gamma)$ is a branch decomposition.

  It remains to verify that this branch decomposition has width at
  most $\val(\kappa)\cdot\ceil{|{\U}|/3}$. Let $(t,u)\in\vec E(T)$. If
  $(t,u)=(s,s_i)$ then $\gamma(t,u)={\U}_i$ and thus, by
  Lemma~\ref{lem:bp}(3) and \eqref{eq:1},
  $\kappa(\gamma(t,u))\le \val(\kappa)\cdot\ceil{|{\U}|/3}$.
If
  $(t,u)=(s_i,s)$ we just use the fact that
  $\gamma(t,u)=\gamma(u,t)$. Otherwise, $(t,u)\in \vec E(T_i)$
  for some $i$. Without loss of generality we assume that $(t,u)$ is
  pointing away from $s$. Then $\gamma(t,u)\subseteq {\U}_i$, and
  again by
  Lemma~\ref{lem:bp}(3) and \eqref{eq:1},
  $\kappa(\gamma(t,u))\le \val(\kappa)\cdot\ceil{|{\U}|/3}$.
\end{proof}

\subsection{Branch Decomposition of Graphs}
\label{sec:graphdec}

In this section, we study branch decompositions and branch width of the
four connectivity functions 
that we defined for graphs $G$:
\begin{itemize}
\item the edge-connectivity function $\nu_G$ defined on $V(G)$,
\item the vertex-connectivity function $\kappa_G$ defined on $E(G)$,
\item the matching connectivity function $\mu_G$ defined on $V(G)$,
\item the cut-rank function $\rho_G$ defined on $V(G)$.
\end{itemize}
Branch decompositions of $\kappa_G$ are usually called \emph{branch
decompositions of $G$}, and the branch width of $\kappa_G$ is known as
the \emph{branch width of $G$}. Branch
decompositions of $\rho_G$ are called \emph{rank decompositions} of $G$, and the branch width of $\rho_G$ is known as
the \emph{rank width of $G$}.

As for all graphs $G$ we have $\val(\rho_G),\val(\mu_G)\le 1$, it is an
immediate consequence of
Lemma~\ref{lem:bw-upperbound} that
\begin{equation}
  \label{eq:2}
   \bw(\mu_G)\le\ceil{\frac{|V(G)|}{3}}
  \quad\text{and}\quad
   \bw(\rho_G)\le\ceil{\frac{|V(G)|}{3}}.
\end{equation}
\begin{exa}
  It follows from \eqref{eq:2} that for the complete graph
  $K_n$ we have \[\bw(\mu_{K_n}),\bw(\rho_{K_n})\le\ceil{n/3}.\] This
  bound is tight for $\mu_{K_n}$, but not for $\rho_{K_n}$. As a
  matter of fact, we have
  \[
  \bw(\rho_{K_n})\le1,
  \]
  with equality for every $n\ge 2$. This follows from the simple
  observation that for every $X\subseteq V(K_n)$ the matrix
  $M_{K_n}(X,\bar X)$ has only $1$-entries and hence row rank
  $1$. Thus every branch decomposition of $\rho_{K_n}$ has width $1$.
  \uend
\end{exa}

As $\val(\kappa_G)\le 2$, by Lemma~\ref{lem:bw-upperbound} we also get
\[
  \bw(\kappa_G)\le\ceil{\frac{2\cdot|E(G)|}{3}}.
\]
The following exercise shows that we get the
same bound in terms of $|V(G)|$.

\begin{exe}
  Prove that 
  \begin{equation}
    \label{eq:3}
    \bw(\kappa_G)\le\ceil{\frac{2|V(G)|}{3}}
  \end{equation}
  for every graph
  $G$.

  \textit{Hint:} Partition $V(G)$ into three sets
  $V_1,V_2,V_3$ of the same size. Then partition $E(G)$ into
  sets $E_1,E_2,E_3$, where $E_i$ contains all edges with both
  endvertices in $V_i$ and all edges with one endvertex in $E_i$ and
  the other endvertex in $E_j$ for $j=i+1\bmod3$.
\end{exe}

\begin{exe}
  Prove that $\bw(\kappa_G),\bw(\mu_G),\bw(\rho_G)\le 1$ for
  all forests $G$. Furthermore, prove that $\bw(\kappa_G)\le 1$ if any
  only if $G$ is a forest, and give an example of a graph $G$ that is
  not a forest, but still satisfies $\bw(\mu_G)=\bw(\rho_G)=1$.
\end{exe}

Recall that for the edge-connectivity function $\nu_G$ we have
$\val(\nu_G)=\Delta(G)$, where $\Delta(G)$ denotes the maximum degree of $G$. Thus
\[
\Delta(G)\le \bw(\nu_G)\le \Delta(G)\cdot\ceil{\frac{|G|}{3}}.
\]
The lower bound
is trivial and the upper bound follows from Lemma~\ref{lem:bw-upperbound}.

\begin{theo}\label{theo:mu_vs_kappa}
  Let $G$ be a graph with at least one vertex of degree $2$. Then
  $\bw(\mu_G)\le\bw(\kappa_G)\le 2\bw(\mu_G)$.
\end{theo}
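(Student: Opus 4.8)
The plan is to prove the two inequalities separately, in each case turning a branch decomposition of one of the two functions into a branch decomposition of the other, with the two parts of Lemma~\ref{lem:mu-vs-kappa} serving as the bridge. Two preliminary observations are used throughout. First, the degree-$2$ hypothesis produces an edge $e_0$ with $\kappa_G(\{e_0\})=|\partial(\{e_0\})|\ge 1$ and a vertex $v_0$ with $\mu_G(\{v_0\})\ge 1$; since the singletons $\{e_0\}$ and $\{v_0\}$ occur as atoms in every branch decomposition of $\kappa_G$, resp.\ of $\mu_G$, we get $\bw(\kappa_G)\ge 1$ and $\bw(\mu_G)\ge 1$. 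Second, $\mu_G$ is univalent (Example~\ref{exa:mc}), so $\mu_G(\{a\})\le 1$ for every vertex $a$. We may also assume that $G$ has no isolated vertices: these do not affect $\kappa_G$ at all, and can be appended as extra leaves (with singleton atoms of $\mu_G$-value $0$) to any branch decomposition of $\mu_G$ at no cost.

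For the inequality $\bw(\mu_G)\le\bw(\kappa_G)$, let $(T,\gamma)$ be a directed branch decomposition of $\kappa_G$ of width $k:=\bw(\kappa_G)$, with leaves in bijection with $E(G)$, and put $\gamma'(t):=V(\gamma(t))$, the set of endvertices of the edges in $\gamma(t)$. Since $G$ has no isolated vertices, $\gamma'(r)=V(E(G))=V(G)$ for the root $r$, and since $(T,\gamma)$ is exact, $\gamma'(t)=V(\gamma(u_1)\cup\gamma(u_2))=\gamma'(u_1)\cup\gamma'(u_2)$ at every internal node $t$ with children $u_1,u_2$; hence $(T,\gamma')$ is a directed pre-decomposition of $\mu_G$. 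By Lemma~\ref{lem:mu-vs-kappa}(1), applied with $Y=\gamma(t)$ and $X=V(Y)$, we have $\mu_G(\gamma'(t))\le\kappa_G(\gamma(t))\le k$ for all $t$. Now apply the Exactness Lemma (Lemma~\ref{lem:exact}) to obtain a directed decomposition $(T,\gamma'')$ of $\mu_G$ with $\mu_G(\gamma''(t))\le\mu_G(\gamma'(t))\le k$ for all $t$ and $\gamma''(t)\subseteq\gamma'(t)$ for all leaves $t$; in particular every leaf atom has size at most $|V(\gamma(t))|\le 2$. Delete the empty atoms using Lemma~\ref{lem:empty-leaves}, and split each remaining two-element atom $\{a,b\}$ by attaching two leaf-children with cones $\{a\}$ and $\{b\}$; by univalence and $k\ge 1$ these new cones have $\mu_G$-value at most $1\le k$, so the width stays $\le k$. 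The result is a directed branch decomposition of $\mu_G$ of width $\le k$, so $\bw(\mu_G)\le\bw(\kappa_G)$.

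For the inequality $\bw(\kappa_G)\le 2\bw(\mu_G)$, let $(T,\gamma)$ be a branch decomposition of $\mu_G$ of width $\ell:=\bw(\mu_G)\ge 1$, with leaves in bijection with $V(G)$. For each oriented edge $(t,u)$ of $T$ write $Z:=\gamma(t,u)$ and fix (by K\"onig's theorem) a minimum vertex cover $S_{(t,u)}=S_{(u,t)}$ of the crossing set $E(Z,\bar Z)$, so $|S_{(t,u)}|=\mu_G(Z)\le\ell$. The aim is a pre-decomposition $(T,\gamma')$ of $\kappa_G$ on the same tree with $\gamma'(t,u)=E(Z)\cup W_{(t,u)}$ for sets $W_{(t,u)}\subseteq E(Z,\bar Z)$ that partition $E(Z,\bar Z)$ into $W_{(t,u)}$ and $W_{(u,t)}$; then $\gamma'(t,u)=\bar{\gamma'(u,t)}$ automatically, and Lemma~\ref{lem:mu-vs-kappa}(2) tells us that, for each single oriented edge, such a set with $\kappa_G(\gamma'(t,u))\le\ell$ exists. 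To also get the union condition at internal nodes, I would assign to every edge $e=ab\in E(G)$ a node $\pi(e)$ on the $T$-path from $\ell_a$ to $\ell_b$ and put $e\in W_{(t,u)}$ whenever $e$ crosses $\{t,u\}$ and $\pi(e)$ lies in the component of $T-\{t,u\}$ containing $t$; the union condition at an internal node $t$ then holds because an edge whose path runs through $t$ is pushed, according to the side of $t$ containing $\pi(e)$, into exactly one of the two incident child cones. Here is where essentially all of the work—and the factor $2$—lies: one must choose the $\pi(e)$'s (together with the vertex covers $S_{(\cdot)}$) so that simultaneously every cone of $T$ has small boundary and the leaf atoms are small enough to be refined into singletons cheaply. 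A straightforward calculation shows $\partial(\gamma'(t,u))$ consists of endpoints of crossing edges that are ``exposed'' by the choice of $\pi$, and that choosing $\pi$ locally so that these endpoints land in $S_{(t,u)}$ would give width $\le\ell$; but the side-constraints imposed on a single $\pi(e)$ by the distinct vertex covers encountered along $e$'s path need not be compatible, and reconciling them forces one to settle for boundaries contained in a union of at most two of the sets $S$, i.e.\ $\kappa_G(\gamma'(t,u))\le 2\ell$. Once $(T,\gamma')$ is obtained, the Exactness Lemma (Corollary~\ref{cor:exact}) yields a decomposition of $\kappa_G$ of width $\le 2\ell$ whose atoms are sets of edges, each refinable into singletons by a caterpillar whose intermediate cones $F$ satisfy $\kappa_G(F)=|\partial(F)|\le 2|F|$; with $\ell\ge 1$ and a careful ordering of the caterpillars this stays $\le 2\ell$, giving a branch decomposition of $\kappa_G$ of width $\le 2\ell$ and finishing the proof. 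The main obstacle, then, is the global consistency of the edge-to-node assignment $\pi$ (equivalently, of the sets $W_{(\cdot)}$ coming from Lemma~\ref{lem:mu-vs-kappa}(2)) together with the bookkeeping of the final refinement; the first inequality, by contrast, is essentially immediate once the map $\gamma(t)\mapsto V(\gamma(t))$ is combined with the Exactness Lemma and univalence.
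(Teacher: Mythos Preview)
Your proof of the first inequality is correct and essentially identical to the paper's argument.

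For the second inequality, your direct approach---transporting a branch decomposition of $\mu_G$ to one of $\kappa_G$ via Lemma~\ref{lem:mu-vs-kappa}(2)---is genuinely incomplete, as you yourself acknowledge. The ``global consistency'' problem for the assignment $\pi$ is real: the vertex covers $S_{(t,u)}$ along different edges of the tree need not be coherent, and you have not actually shown that any choice of $\pi$ keeps every boundary inside a union of two $S$-sets; this is asserted, not proved. Moreover, the final caterpillar refinement is not under control. A leaf of the $\mu_G$-decomposition corresponds to a single vertex $v$, so the corresponding edge-atom consists of edges incident to $v$, potentially $\deg(v)$ many; a caterpillar cone $F$ inside such an atom can have $|\partial(F)|$ as large as $\deg(v)$, not $2\ell$. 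Your bound $\kappa_G(F)\le 2|F|$ is true but useless here, since $|F|$ is not bounded in terms of $\ell$.

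The paper sidesteps all of this by taking a completely different route: it invokes the Duality Theorem (Corollary~\ref{cor:duality}) and works on the tangle side. Given a $\kappa_G$-tangle $\CT$ of order $2k+1$, one builds a $\mu_G$-tangle $\CS$ of order $k+1$ by fixing, for each $X\subseteq V(G)$ with $\mu_G(X)\le k$, a minimum vertex cover $S_X=S_{\bar X}$ of $E(X,\bar X)$ and declaring $X\in\CS$ iff $E(X,X\cup S_X)\in\CT$. The factor~$2$ then enters cleanly in the verification of axiom~\ref{li:t2}: to show $X_0\cap X_1\cap X_2\neq\emptyset$ one passes from each $Y_i=E(X_i,X_i\cup S_{X_i})$ to $Y_i':=Y_i\setminus E(S_i,S_{i+1})$, whose boundary lies in $S_i\cup S_{i+1}$ and hence has $\kappa_G$-value at most $2k$; this is precisely why $\CT$ must have order $2k+1$. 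No global coordination along a tree and no refinement step are needed. What your approach would buy, if it could be made to work, is an explicit transformation of decompositions; the paper's argument is cleaner but existential via duality.
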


Note that if $G$ is a graph of maximum degree
$1$, then $\bw(\kappa_G)=0$ and $\bw(\mu_G)=1$. The theorem is a
variant of a theorem due to Vatshelle~\cite{vat12} asserting that
$\bw(\mu_G)$ is linearly bounded in terms of the tree width of $G$
(see below) and vice-versa.

\begin{proof}[Proof of the first inequality of Theorem~\ref{theo:mu_vs_kappa}]
  Let $G$ be a graph  with at least one vertex of degree $2$. Then
  $\bw(\kappa_G)\ge 1$. Without loss of generality we may assume that
  $G$ has no isolated vertices, because adding isolated vertices does
  not increase $\bw(\mu_G)$.
 Let $(T,\gamma)$ be a directed branch
  decomposition of $\kappa_G$. In a first step, we define a
  pre-decomposition $(T,\gamma')$ of $\mu_G$ of the same width. We define
  $\gamma':V(T)\to 2^{V(G)}$ by letting
  \[
  \gamma'(t):=V\big(\gamma(t)\big).
  \]
  It is easy to verify that $(T,\gamma')$ is indeed a pre-decomposition
  of $\mu_G$. In particular, $\gamma'(r)=V(\gamma(r))=V(E(G))=V(G)$ for the
  root $r$ of $T$ by our assumption that $G$ have no isolated
  vertices. It follows from Lemma~\ref{lem:mu-vs-kappa}(1) that
  $\mu_G(\gamma'(t))=\mu_G(V(\gamma(t)))\le\kappa_G(\gamma(t))$ for all
  $t\in V(T)$ and thus $\width(T,\gamma')\le\width(T,\gamma)$. We apply the
  Exactness Lemma to $(T,\gamma')$ and obtain a decomposition $(T,\gamma'')$
  of $\mu_G$ such that $\width(T,\gamma'')\le \width(T,\gamma')$ and
  $\gamma''(t)\subseteq\gamma'(t)$ for all leaves $t\in L(T)$. Note that
  $|\gamma''(t)|\le|\gamma'(t)|=|V(\gamma(t))|=2$ for all leaves $t\in L(T)$,
  because $\gamma(t)$ consists of a single edge. We define a binary
  directed tree $T'$ by
  attaching two new leaves to all $t$ of $T$ such
  that $|\gamma''(t)|=2$, and we define $\gamma''':V(T')\to 2^{V(G)}$ by
  $\gamma'''(t):=\gamma''(t)$ for all $t\in V(T)$ and $\gamma'''(u_i)=v_i$ if
  $u_1,u_2$ are the new children of a leaf $t\in L(T)$ with
  $\gamma''(t)=\{v_1,v_2\}$. Then $(T',\gamma''')$ is a decomposition
 of $\mu_G$ of width $\max\{1,\width(T,\gamma'')\}$. The only thing that keeps
 $(T',\gamma''')$ from being a branch decomposition is that there may be
 leaves $t\in L(T')$ with $\gamma'''(t)=\emptyset$. We can fix this with
 an application of Lemma~\ref{lem:empty-leaves}.
\end{proof}

The second inequality will be proved in Section~\ref{sec:duality}.

\begin{exe}
  Prove that both inequalities in Theorem~\ref{theo:mu_vs_kappa} are tight.
\end{exe}

Better known than branch decompositions and branch width are tree
decompositions and tree width of graphs. We will show that they are closely
related.  Most readers will be familiar with tree decompositions and
tree width, but
let us recall the definitions anyway.

A \emph{tree decomposition} of a graph $G$ is a pair
$(T,\beta)$, where $T$ is a tree and $\beta:V(T)\to 2^{V(G)}$ such
that for all $v\in V(G)$ the set $\{t\in V(T)\mid v\in\beta(t)\}$ is
connected in $T$ and for all $e=vw\in E(G)$ there is a $t\in V(T)$
such that $v,w\in\beta(t)$.

The \emph{width} of a tree decomposition $(T,\beta)$ is 
  \[
  \width(T,\beta):=\max\big\{|\beta(t)|\bigmid t\in V(T)\big\}-1.
  \]
The \emph{tree width} of $G$, denoted by $\tw(G)$, is the minimum of
the widths of all tree decompositions of $G$.

\begin{theo}[\cite{gm10}]\label{theo:bw_vs_tw}
  For every graph $G$, 
  \[
  \bw(\kappa_G)\le\tw(G)+1\le\max\left\{\frac{3}{2}\bw(\kappa_G),\;2\right\}.
  \]
\end{theo}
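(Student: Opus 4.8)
The statement has two inequalities, and they are proved by independent constructions: one converts a tree decomposition of $G$ into a branch decomposition of $\kappa_G$ with only a small loss, the other converts a branch decomposition of $\kappa_G$ into a tree decomposition.

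For the first inequality $\bw(\kappa_G)\le\tw(G)+1$, the plan is to take an optimal tree decomposition $(T,\beta)$ of $G$ and, after the usual normalisation, assume $T$ is a tree in which we may freely subdivide edges and add pendant structure. The key idea is to build a (pre-)decomposition $(T',\gamma)$ of $\kappa_G$ (which lives on $E(G)$) by distributing the edges of $G$ among the leaves: for each $e=vw\in E(G)$ choose a bag $\beta(t_e)$ containing both $v$ and $w$, attach a fresh leaf $\ell_e$ to $t_e$, and set the atom at $\ell_e$ to $\{e\}$. Turn $T$ into a cubic tree by the standard degree-reduction (subdividing and inserting degree-$3$ branch points), and let $\gamma$ be the induced edge-separation map, with $\gamma$ at each pendant edge equal to the set of hyperedges on the leaf side. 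One then checks that for every oriented edge $(t,u)$ of $T'$, the boundary $\partial(\gamma(t,u))$ is contained in a single bag $\beta(s)$ of the original decomposition — because the set of vertices appearing on both sides of the induced edge partition is ``captured'' by the bag separating the two parts in $T$. Hence $\kappa_G(\gamma(t,u))=|\partial(\gamma(t,u))|\le\max_s|\beta(s)|=\tw(G)+1$, and after applying the Exactness Lemma (Corollary~\ref{cor:exact}) and Lemma~\ref{lem:empty-leaves} to clean up, we get a genuine branch decomposition of the same width, giving $\bw(\kappa_G)\le\tw(G)+1$.

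For the second inequality $\tw(G)+1\le\max\{\tfrac32\bw(\kappa_G),2\}$, I would go the other direction: start from a branch decomposition $(T,\gamma)$ of $\kappa_G$ of width $k=\bw(\kappa_G)$, so the leaves of $T$ correspond bijectively to the edges of $G$ and every oriented edge $(t,u)$ of $T$ satisfies $|\partial(\gamma(t,u))|\le k$. For each node $t$ of $T$, define a bag $\beta(t)$ to be the set of vertices $v\in V(G)$ that are ``spanned'' by more than one subtree at $t$ — concretely, $v\in\beta(t)$ iff the edges of $G$ incident with $v$ do not all lie on the same side of $t$, i.e. $v$ lies in $\partial(\gamma(t,u_i))$ for at least two of the (two or three) oriented edges leaving $t$. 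One must verify the two tree-decomposition axioms: every edge $vw$ of $G$ lies in some bag (take the node $t$ adjacent to the leaf labelled $vw$, or rather the first node where $v$ and $w$ have been ``separated''), and the set of nodes whose bag contains a fixed vertex $v$ is connected in $T$ (this is the standard connectedness argument: the edges at $v$ occupy a connected subtree of leaves, and $v$ sits in exactly the bags on the ``Steiner tree'' of those leaves). The width bound comes from a counting argument at an internal node $t$ with three incident separations $A_1,A_2,A_3$: each vertex of $\beta(t)$ lies in at least two of the sets $\partial(A_i)$, so $|\beta(t)|\le\tfrac12\sum_i|\partial(A_i)|\le\tfrac32 k$; hence $\tw(G)\le\tfrac32 k-1$, i.e. $\tw(G)+1\le\tfrac32\bw(\kappa_G)$. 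The extra term $\max\{\cdot,2\}$ handles the degenerate low-width cases (e.g.\ forests, where $\bw(\kappa_G)\le1$ but $\tw(G)$ can still be $1$), which must be treated separately.

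The main obstacle, I expect, is the second direction — specifically, getting the width bound down to $\tfrac32\bw(\kappa_G)$ rather than $2\bw(\kappa_G)$, and verifying the connectedness axiom for the constructed bags. The naive construction (put $\partial(\gamma(t,u))$ into the bag for each oriented edge at $t$) only yields $|\beta(t)|\le\sum_i|\partial(A_i)|$, which is too weak by a factor of roughly $2$; the refinement is to keep only the vertices that appear in at least \emph{two} of the three boundaries, and then one has to re-check that edges are still covered and that the ``$v$-bags'' still form a subtree. Once the definition is set up carefully, both checks are routine submodularity/bookkeeping arguments of the kind already rehearsed in the examples, so the real work is choosing the right bag definition.
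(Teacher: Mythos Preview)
Your plan is essentially the same as the paper's on both inequalities, but your analysis of the ``main obstacle'' in the second direction rests on a misconception. The paper simply takes, at each internal node $s$ with neighbours $t_1,t_2,t_3$, the \emph{union}
\[
\beta(s)=\partial(\gamma(s,t_1))\cup\partial(\gamma(s,t_2))\cup\partial(\gamma(s,t_3)),
\]
and this already equals your ``refined'' set of vertices lying in at least two boundaries. The reason is structural, not a clever trick: if $v\in\partial(\gamma(s,t_1))$ then $v$ is incident with an edge in $\gamma(s,t_1)$ and with an edge in $\bar{\gamma(s,t_1)}=\gamma(s,t_2)\cup\gamma(s,t_3)$, say in $\gamma(s,t_2)$; since $v$ also sees an edge in $\gamma(s,t_1)\subseteq\bar{\gamma(s,t_2)}$, we get $v\in\partial(\gamma(s,t_2))$ as well. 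So the double-counting bound $3k\ge 2|\beta(s)|$ comes for free from the union definition, and there is nothing to ``refine''.

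There is also a small gap in your leaf handling. With your bag definition, a leaf $t$ has only one incident oriented edge, so no vertex can lie in ``at least two'' boundaries there and $\beta(t)=\emptyset$. Then an edge $vw$ whose endvertex $v$ has degree~$1$ is never covered: at the internal neighbour $s$ of the leaf for $vw$, the vertex $v$ lies in only the single boundary $\partial(\gamma(s,t))$ and hence $v\notin\beta(s)$ either. The paper avoids this by decreeing $\beta(t):=\{v,w\}$ at each leaf; this is also what accounts for the $\max\{\cdot,2\}$ term. Once you add that, your argument and the paper's coincide.
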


\begin{proof}
  To prove the first inequality, let $(T,\beta)$ be a tree
  decomposition of $G$ of width $k$.

  In a first step, we transform $(T,\beta)$ into a tree decomposition
  $(T_1,\beta_1)$ of $G$ of width $k$ such that for every edge
  $e=vw\in E(G)$ there is a leaf $t_e\in L(T_1)$ with
  $\beta_1(t_e)=\{v,w\}$. To form $T_1$, for every $e=vw\in E(G)$, we
  pick a node $t\in V(T)$ such that $v,w\in\beta(t)$ and attach
  a new leaf $t_e$ to it. Then we let $\beta_1(t_e):=\{v,w\}$ and
  $\beta_1(t):=\beta(t)$ for all $t\in V(T)$.

  In a second step, we transform $(T_1,\beta_1)$ into a tree decomposition
  $(T_2,\beta_2)$ of $G$ of width $k$ such that for every leaf $t\in
  L(T_2)$ there is an edge
  $e=vw\in E(G)$ such that $\beta_2(t)=\{v,w\}$ and for every edge
  $e=vw\in E(G)$ there is exactly one leaf $t\in L(T_2)$ such that  $\beta_2(t)=\{v,w\}$. To form $T_2$, we
  repeatedly delete leaves $t$ such that there is no edge $e=vw\in
  E(G)$ with $\beta_1(t)=\{v,w\}$. If an edge appears at
  several leaves, we delete all but one of these leaves, and then
  possibly repeat the construction if the deletions generate new leaves.

   In a third step, we transform $(T_2,\beta_2)$ into a tree decomposition
  $(T_3,\beta_3)$ of $G$ of width $k$ such that $T_3$ is a tree of
  maximum degree at most $3$. We replace every node $t$ with $\ell>3$
  neighbours by a cubic tree $S_t$ with exactly $\ell$ leaves, identifying
  the leaves of $S_t$ with the neighbours of $t$ in $T_2$, and let
  $\beta_3(s):=\beta_2(t)$ for every internal node $s$ of $S_t$. 

  Let $T_4$ be the cubic tree obtained from $T_3$ by suppressing all nodes
  of degree $2$. (\emph{Suppressing} an node $v$ of degree $2$ means
  deleting the node and adding an edge between its two
  neighbours.) Then $L(T_4)=L(T_3)$. Let $\gamma:L(T_4)\to 2^{E(G)}$
  be defined by $\gamma(t):=\{e\}$ for the unique edge $e=vw$ such that
  $\beta_3(t)=\{v,w\}$. Now we define $\gamma:\vec
  E(T_4)\to 2^{E(G)}$ in the usual way: we let $\gamma(s,t)$ be the
  union of all $\gamma(u)$ for leaves $u$ in the connected component of
  $T-\{st\}$ that contains $t$. Then $(T_4,\gamma)$ is a branch
  decomposition of $G$. 

  To see that $\width(T_4,\gamma)\le k+1$, we observe that
  $\partial_G(\gamma(s,t))\subseteq\beta_3(t)$ for all $(s,t)\in\vec
  E(T_4)$. Indeed, if $v\in \partial_G(\gamma(s,t))$ there are
  leafs $u,u'$ in different components of $T_4-\{st\}$ and edges
  $e=vw,e'=vw'\in E(G)$ such that $\beta_3(u)=\{v,w\}$ and
  $\beta_3(u')=\{v,w'\}$. As $t$ appears on the path from $u$ to $u'$
  in $T_3$, we have $v\in\beta_3(t)$.

  Hence $(T_4,\gamma)$ is a branch
  decomposition of $G$ of width at most $k+1$. This proves the first
  inequality.

  \medskip
  To prove the second inequality, let $(T,\gamma)$ be a branch
  decomposition of $G$ of width $k$. We define $\beta:V(T)\to
  2^{V(G)}$ as follows:
  \begin{itemize}
  \item For all leaves $t\in L(T)$, we let $\beta(t)$ be the set of
    endvertices of the unique edge $e$ such that $\gamma(t)=\{e\}$.
  \item For all internal nodes $s$ with neighbours $t_1,t_2,t_3$ we
    let
    \[
    \beta(s)=\partial(\gamma(s,t_1))\cup \partial(\gamma(s,t_2))\cup \partial(\gamma(s,t_2)).
    \]
  \end{itemize}
  We leave it to the reader to prove that $(T,\beta)$ is a tree
  decomposition of $G$. The size of the bags at the leaves is $2$, and
  the size of the bags at the internal nodes is at most $(3/2)k$, because if $s$ is an internal node of
  $T$ with neighbours $t_1,t_2,t_3$, then every vertex $v\in \beta(s)$
  is contained in a at least $2$ of the sets $\gamma(s,t_i)$. Thus
  \[
  3k\ge
  |\partial(\gamma(s,t_1))|+|\partial(\gamma(s,t_2))|+|\partial(\gamma(s,t_2))|\ge
  2|\beta(s)|.\qedhere
  \]
\end{proof}

\section{Well-Linked Sets}
\label{sec:well-linked}

In the previous section, we have seen several examples establishing
upper bounds on the branch width of a connectivity function. In this
and the next section, we will be concerned with lower bounds. We will develop
\emph{obstructions} to small branch width.

Intuitively, branch width is a measure for the ``global connectivity''
of a connectivity system: if $\bw(\kappa)$ is small then ${\U}$
can be decomposed along separations of small order, and
therefore the ``global connectivity'' of $\kappa$
may be viewed as being low. Thus the most obvious obstruction to small branch
width is a ``highly connected set.'' There are various views on what
might constitute a highly connected set with respect to a connectivity
function; the following is quite natural.

\begin{defn}
  Let $\kappa$ be a connectivity function on a set ${\U}$. A set
  $W\subseteq {\U}$ is \emph{well-linked} if for every $X\subseteq {\U}$,
  \[
  \kappa(X)\ge\min\{|W\cap X|,|W\setminus X|\}.
  \uende
  \]
\end{defn}

\begin{exa}
  Let $G$ be a a graph and $C\subseteq G$ a cycle of length at least
  $4$. Then any set $W\subseteq V(C)$ of size $|W|\le 4$ is
  well-linked with respect to $\nu_G$ and $\mu_G$.
  If $C$ is chordless, that is, an induced
  subgraph of $G$, then  $W$ is also
  well-linked with respect to $\rho_G$. Furthermore, every set $X\subseteq E(C)$ of size $|X|\le 4$ is
  well-linked with respect to $\kappa_G$.
  \uend
\end{exa}

In the next example, we characterise the well-linked sets of the
matching connectivity function of a graph.

\begin{exa}
  Let $G$ be a graph. We claim that a set $W\subseteq V(G)$ is
  well-linked with respect to $\mu_G$ if and only if for all
  disjoint sets $Y,Z\subseteq W$ of the same size $|Y|=|Z|=:\ell$ there is a
  family of $\ell$ mutually disjoint paths from $Y$ to $Z$.

  To prove the forward direction of this claim, let $W\subseteq V(G)$
  be well-linked with respect to $\mu_G$. Let $Y,Z\subseteq W$ be
  disjoint sets of the same size $\ell$. Suppose for contradiction
  that there are no $\ell$ mutually disjoint paths from $Y$ to $Z$. By
  Menger's Theorem, there is a set $S\subseteq V(G)$ of size
  $|S|<\ell$ that separates $Y$ from $Z$. Let $X$ be the union of
  $S\cap Y$ and the vertex sets of all connected components of
  $G\setminus S$ that contain a vertex of $Y$. Then $S$ is a vertex
  cover for $E(X,\bar X)$, and thus $\mu_G(X)\le|S|<\ell$. However, we
  have $|W\cap X|\ge|Y|=\ell$ and $|W\setminus X|\ge|Z|=\ell$. This
  contradicts $W$ being well-linked for $\mu_G$. 

  To prove the backward direction, suppose that for all disjoint sets
  $Y,Z\subseteq W$ of the same size $|Y|=|Z|=:\ell$ there is a family
  of $\ell$ mutually disjoint paths from $Y$ to $Z$. Let
  $X\subseteq V(G)$, and let $S$ be a minimum vertex cover of
  $E(X,\bar X)$. Without loss of generality we assume that $|W\cap
  X|\le|W\setminus X|$. Let $Y:=W\cap X$ and $Z\subseteq W\setminus X$ with
  $|Y|=|Z|=:\ell$. Then there are $\ell$ mutually vertex disjoint
  paths from $Y$ to $Z$. Each of these paths must contain an edge in
  $E(X,\bar X)$ and thus a vertex of $S$. Hence
  $\mu_G(X)=|S|\ge\ell=\min\{|W\cap X|,|W\setminus X|\}.$ This
  proves that $W$ is well-linked with respect to $\mu_G$.
  \uend
\end{exa}

The following theorem relates well-linkedness and branch width.

\begin{theo}\label{theo:conn-wl}
  Let $\kappa$ be a nontrivial connectivity function and $k\ge 1$.
  \begin{enumerate}
  \item If $\bw(\kappa)\le k$, then there is no well-linked set of size
    greater than $3k$.
  \item If there is no well-linked set of size greater than $\frac{k}{\val(\kappa)}-1$ then
    $\bw(\kappa)\le k$.
  \end{enumerate}
\end{theo}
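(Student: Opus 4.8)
The two inequalities call for rather different arguments; the first is a short "balanced sink" argument, the second is the substantial one.

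For (1) I would work with an undirected branch decomposition $(T,\gamma)$ of $\kappa$ of width $\le k$ and a well-linked set $W$. For each edge $e=tu$ of $T$ we have $|W\cap\gamma(t,u)|+|W\cap\gamma(u,t)|=|W|$, so we may orient $e$ towards an endpoint on whose side at least half of $W$ lies. A finite tree with an edge-orientation has a sink $v$, i.e. a node all of whose incident edges point towards it. If $v$ is a leaf, its cone $\gamma(v)$ is a singleton meeting $W$ in at least $|W|/2$ elements, so $|W|\le 2\le 3k$. If $v$ is internal with neighbours $t_1,t_2,t_3$, then $\gamma(v,t_1),\gamma(v,t_2),\gamma(v,t_3)$ partition ${\U}$ (the decomposition is exact at $v$), each of these cones contains at most $|W|/2$ elements of $W$ (the edge at $v$ points in), hence $|W\setminus\gamma(v,t_i)|\ge|W\cap\gamma(v,t_i)|$; well-linkedness applied to $X=\gamma(v,t_i)$ then gives $|W\cap\gamma(v,t_i)|\le\kappa(\gamma(v,t_i))\le k$, and summing over $i=1,2,3$ yields $|W|\le 3k$.

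For (2) the plan is to construct a directed branch decomposition of $\kappa$ of width $\le k$ top-down, which reduces the whole statement to a Splitting Lemma: every $C\subseteq{\U}$ with $\kappa(C)\le k$ and $|C|\ge 2$ admits a partition into nonempty $C_1,C_2$ with $\kappa(C_1),\kappa(C_2)\le k$. Granting this, one starts from the root cone ${\U}$ (order $0$) and repeatedly splits any cone of size $\ge 2$; since each child cone is a proper nonempty subset of its parent, the recursion terminates at singleton atoms, and every cone has order $\le k$, so we get a branch decomposition of width $\le k$. Put $\lambda:=\val(\kappa)$ and let $h$ be the maximum size of a well-linked set, so the hypothesis reads $\lambda(h+1)\le k$. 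When $|C|\le 2(h+1)$ the lemma is immediate: split $C$ into two pieces of size $\le h+1$ and bound their orders by Lemma~\ref{lem:bp}(3), $\kappa(C_i)\le\lambda|C_i|\le\lambda(h+1)\le k$.

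The case $|C|>2(h+1)$ is the heart of the matter, and the part I expect to be the main obstacle. Here $|C|>h$, so $C$ is not well-linked and there is a separation $X$ with $\kappa(X)<\min\{|C\cap X|,|C\setminus X|\}$; choose one with $\kappa(X)$ as small as possible. Minimality keeps $\kappa(X)$ under control: if some $W\subseteq C$ with $|W|\le 2\kappa(X)$ were not well-linked, a witness against $W$ would (since $W\subseteq C$) be a witness of strictly smaller order against $C$; so every such $W$ is well-linked, whence $2\kappa(X)\le h$. One is tempted to take $C_1=C\cap X$, $C_2=C\setminus X$, but submodularity only gives $\kappa(C_i)\le\kappa(C)+\kappa(X)$, which can exceed $k$ when $\kappa(C)$ is near $k$. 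Closing this gap is the crux: I would look for $X$ with an additional extremal property (for instance inclusion-minimal among the minimum-order separations unbalanced on $C$, or with $|C\cap X|$ minimised) forcing $\kappa(C\cup X)\ge\kappa(X)$ and $\kappa(C\cup\bar X)\ge\kappa(X)$, so that the sharper estimates $\kappa(C\cap X)\le\kappa(C)+\kappa(X)-\kappa(C\cup X)\le\kappa(C)\le k$ and $\kappa(C\setminus X)\le\kappa(C)+\kappa(X)-\kappa(C\cup\bar X)\le\kappa(C)\le k$ go through. If this direct route proves too delicate, the alternative is to pass through the Duality Theorem of Section~\ref{sec:duality}: from $\bw(\kappa)>k$ one obtains a tangle of order $k+1$, and from such a tangle one extracts a well-linked set of size $>k/\lambda-1$.
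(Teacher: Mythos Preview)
Your argument for part~(1) is correct and is essentially the paper's argument; in fact your version (bounding each of the three pieces by $k$ via well-linkedness and summing) is marginally cleaner than the paper's (which instead assumes $|W|>3k$ and exhibits a single cone witnessing that $W$ is not well-linked).

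For part~(2), your overall plan via a Splitting Lemma is exactly what the paper does, and your small-$|C|$ case is fine. The gap is in your large-$|C|$ case. The extremal argument you sketch cannot close it: choosing the witness $X$ to minimise $\kappa(X)$, or to be inclusion-minimal, or to minimise $|C\cap X|$, only constrains how $X$ meets $C$; it gives you no handle on the ``complementary corners'' $\bar C\cap X$ and $\bar C\cap\bar X$, which is precisely what you need to force $\kappa(C\cup X)\ge\kappa(X)$ and $\kappa(C\cup\bar X)\ge\kappa(X)$. (For instance, if $\kappa(C\cup X)<\kappa(X)$, neither $C\cup X$ nor $X\setminus C$ is a witness against $C$, so there is nothing to contradict.) The paper resolves this by changing sides: it splits on the size of $\kappa(C)$ rather than of $|C|$, and in the hard case $\kappa(C)>k-\val(\kappa)$ it works inside~$\bar C$. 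Using the auxiliary polymatroid $\pi_{\bar C}(Y)=\min\{\kappa(Y'):Y\subseteq Y'\subseteq\bar C\}$, it extracts a \emph{free} set $Y\subseteq\bar C$ (meaning $|Y'|\le\pi_{\bar C}(Y')$ for all $Y'\subseteq Y$) with $|Y|\ge\kappa(C)/\val(\kappa)>k/\val(\kappa)-1$; then $Y$ is not well-linked, giving a witness $Z$ with $\kappa(Z)<\min\{|Y\cap Z|,|Y\setminus Z|\}$. Freeness of $Y$ now supplies exactly the missing lower bound: $\kappa(C\cup Z)=\kappa(\bar C\setminus Z)\ge\pi_{\bar C}(Y\setminus Z)\ge|Y\setminus Z|>\kappa(Z)$, and submodularity then yields $\kappa(C\cap Z)<\kappa(C)\le k$ (and symmetrically for $C\setminus Z$). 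So the idea you were groping for is there, but it lives on the $\bar C$-side, and ``free in $\bar C$'' is the extremal property that makes it work. Your fallback via the Duality Theorem and minimum covers does give a valid proof, but in the paper's organisation that material (Sections~\ref{sec:tangles} and~\ref{sec:duality}) comes after Theorem~\ref{theo:conn-wl}, and in any case the cover argument (Theorem~\ref{theo:cover-wl}) itself rests on the same free-set lemma.
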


In the proof of part (1) of this theorem, we use the following
observation for the first time. Despite its simplicity, I find it
worthwhile to highlight this, because it is a standard argument that
we shall apply several more times.

\begin{obs}
  Let $T$ be an (undirected) tree, and let $\omega$ be an arbitrary
  orientation of the edges of $T$, that is, $\omega:E(T)\to\vec E(T)$
  such that $\omega(st)=(s,t)$ or $\omega(st)=(t,s)$ for all $st\in
  E(T)$. Then there is a node $s\in V(T)$ such that all edges incident
  with $s$ are oriented towards $s$, that is, $\omega(st)=(t,s)$ for
  all $t\in N^T(s)$.
\end{obs}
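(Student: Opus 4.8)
The plan is to argue by a simple potential/descent argument on the tree $T$. Given the orientation $\omega$, define for each oriented edge the usual notion: $\omega(st)=(t,s)$ means the edge $st$ points \emph{towards} $s$. I want to find a node all of whose incident edges point towards it (a ``sink'' of the orientation, in the terminology of the paper).

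First I would dispose of the degenerate case $|V(T)|=1$: then $s$ is the unique node and there are no incident edges, so the claim holds vacuously. So assume $T$ has at least one edge. The key step is to walk through the tree following the orientation: start at an arbitrary node $s_0$. If every edge at $s_0$ points towards $s_0$, we are done. Otherwise there is an edge $s_0s_1$ with $\omega(s_0s_1)=(s_0,s_1)$, i.e.\ pointing away from $s_0$ towards $s_1$; move to $s_1$. Repeat: from $s_i$, if all incident edges point towards $s_i$ we stop and output $s=s_i$; otherwise there is an edge $s_is_{i+1}$ pointing away from $s_i$, and we move to $s_{i+1}$.

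The main (and only) thing to check is that this walk terminates, i.e.\ that it does not cycle. This is where acyclicity of $T$ enters. I claim the walk never revisits a node: the step from $s_i$ to $s_{i+1}$ is along an edge \emph{not equal} to the edge $s_{i-1}s_i$ by which we entered $s_i$, because that edge points towards $s_i$ (that is exactly the condition $\omega(s_{i-1}s_i)=(s_{i-1},s_i)$ read from $s_i$'s side, which is ``towards $s_i$''), whereas the edge we leave along points away from $s_i$. Hence the sequence $s_0,s_1,s_2,\dots$ is a walk in $T$ that never immediately backtracks; in a tree (an acyclic graph) such a walk cannot repeat a vertex, so it must be a simple path, and since $T$ is finite the walk is finite. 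Its last node $s$ has the property that every incident edge points towards it, since otherwise the walk could be extended. This $s$ is the desired node, completing the proof.

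I do not expect any genuine obstacle here; the only subtlety worth stating carefully is the ``no immediate backtracking'' observation, from which termination follows because a non-backtracking walk in a tree is automatically a simple path. (Equivalently, one could phrase the whole argument as: orient $T$ by $\omega$ and take a node of out-degree $0$, which exists because a finite directed graph whose underlying graph is a tree is acyclic and therefore has a sink.)
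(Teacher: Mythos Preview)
Your proof is correct and follows essentially the same approach as the paper: start at an arbitrary node and walk along outgoing edges until you reach a node with none, which must happen since $T$ is finite and acyclic. You have merely spelled out more carefully why the walk cannot revisit a vertex (the non-backtracking observation), which the paper leaves implicit in its one-line appeal to ``there are no cycles in $T$.''
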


To see this, we just follow an oriented path in the tree starting at
an arbitrary node until we
reach a node with no outgoing edges. This will happen eventually,
because there are no cycles in $T$.

\begin{proof}[Proof of Theorem~\ref{theo:conn-wl}(1)]
  As $\kappa$ is nontrivial, we have $\val(\kappa)\ge 1$.  Let ${\U}$ be
  the universe of $\kappa$ and $W\subseteq {\U}$ such that $|W|>3k$. We
  shall prove that $W$ is not well-linked. Let $(T,\gamma)$ be a branch
  decomposition of $\kappa$ of width at most $k$. We orient the edges
  of $T$ towards the bigger part of $W$. That is, we orient $st\in E(T)$
  towards $t$ if
  $|\gamma(s,t)\cap W|> |\gamma(t,s)\cap W|$ and towards $s$ if
  $|\gamma(s,t)\cap W|< |\gamma(t,s)\cap W|$. We break ties
  arbitrarily.

  Then there is a node $s\in V(T)$ such that all edges incident with
  $s$ are oriented towards $s$. As $|W|\ge 3$ and
  $|\gamma(t,s)|=1$ if $s$ is a leaf and $t$ its
  neighbour, the node $s$ is not a leaf. Thus $s$ has three
  neighbours, say, $t_1,t_2,t_3$. Let $W_i:=\gamma(s,t_i)\cap
  W$. Then $|W_i|\le |W|/2$, and $W_1,W_2,W_3$ form a partition of
  $W$.  Without loss of generality we assume that
  $|W_1|\ge|W_2|\ge|W_3|$. Then $|W_1|\ge|W|/3$ and $|W_2\cup W_3|\ge
  |W|/2$. Let $X:=\gamma(s,t_1)$. Then $\kappa(X)\le k$, because
  the width of the decomposition $(T,\gamma)$ is at most $k$, and $|W\cap
  X|=|W_1|\ge|W|/3>k$ and $|W\setminus X|=|W_2\cup
  W_3|\ge|W|/2>k$. Thus $W$ is not well-linked.
\end{proof}

The proof of part (2) requires more preparation. Let $X\subseteq
{\U}$. We define a function $\pi_X:2^X\to\ZZ$ by 
\[
\pi_X(Y):=\min\{ \kappa(Y')\mid Y\subseteq Y'\subseteq X\}.
\]
It is easy to verify that $\pi_X$ is an integer polymatroid on $X$, that is, normalised,
monotone, and submodular. Let us call a set $Y\subseteq X$ \emph{free}
in $X$ if $|Y'|\le \pi_X(Y')$ for all $Y'\subseteq Y$. It can be shown
that the free subsets of $X$ are the independent sets of a matroid on
$X$ of rank at least $\kappa(X)/\val(\kappa)$ (see \cite[Chapter
12]{oxl11}), but we do not use this here. %

\begin{lem}\label{lem:wl1}
  There is a set $Y\subseteq X$ such that 
  $Y$ is free in $X$ and $|Y|\ge \kappa(X)/\val(\kappa)$ and $\pi_X(Y)=\kappa(X)$.
\end{lem}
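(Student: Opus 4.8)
The plan is to let $Y$ be an inclusion-maximal free subset of $X$ and verify it has all three properties. First I would record the elementary facts about $\pi_X$ that make the easy bound automatic once the main identity is known. Since $\kappa\ge 0$ and $\kappa(\emptyset)=0$ by Lemma~\ref{lem:bp}(2), the function $\pi_X$ is normalised, and being normalised and monotone it is nonnegative; being nonnegative and submodular it is subadditive over singletons. Moreover $\pi_X(\{x\})=\min\{\kappa(Z)\mid x\in Z\subseteq X\}\le\kappa(\{x\})\le\val(\kappa)$, so $\val(\pi_X)\le\val(\kappa)$. If $\kappa(X)=0$ then $Y:=\emptyset$ already works (it is free, and $\pi_X(\emptyset)=0=\kappa(X)$), so I may assume $\kappa(X)\ge 1$ and hence $\val(\kappa)\ge 1$. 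A maximal free set $Y$ exists because $\emptyset$ is free and $X$ is finite. Granting for the moment that $\pi_X(Y)=\kappa(X)$, subadditivity gives $\kappa(X)=\pi_X(Y)\le\sum_{y\in Y}\pi_X(\{y\})\le|Y|\cdot\val(\kappa)$, which is the desired lower bound on $|Y|$.

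The substance is the identity $\pi_X(Y)=\kappa(X)=\pi_X(X)$. As $\pi_X$ is monotone, $\pi_X(Y)\le\pi_X(X)$, so it suffices to prove $\pi_X(X)\le\pi_X(Y)$, and I will get this from the stronger claim that $\pi_X(Y\cup S)=\pi_X(Y)$ for every $S\subseteq X\setminus Y$ (then take $S=X\setminus Y$). The first and only nontrivial step is to show $\pi_X(Y\cup\{x\})=\pi_X(Y)$ for each single $x\in X\setminus Y$. By maximality of $Y$ the set $Y\cup\{x\}$ is not free, so there is $Z\subseteq Y\cup\{x\}$ with $|Z|>\pi_X(Z)$; since $Y$ itself is free this forces $x\in Z$, so with $Z_x:=Z\setminus\{x\}\subseteq Y$ we get $\pi_X(Z_x\cup\{x\})\le|Z_x|\le\pi_X(Z_x)\le\pi_X(Z_x\cup\{x\})$, the last two inequalities by freeness of $Z_x$ and monotonicity; hence $\pi_X(Z_x\cup\{x\})=\pi_X(Z_x)$. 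Now apply submodularity to $Z_x\cup\{x\}$ and $Y$: since $Z_x\subseteq Y$ and $x\notin Y$, the meet is $Z_x$ and the join is $Y\cup\{x\}$, so $\pi_X(Z_x\cup\{x\})+\pi_X(Y)\ge\pi_X(Y\cup\{x\})+\pi_X(Z_x)$, and combining with the previous equality and monotonicity yields $\pi_X(Y\cup\{x\})=\pi_X(Y)$.

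Finally I would lift this from single elements to arbitrary $S$ by induction on $|S|$: writing $S=S'\cup\{x\}$ with $x\notin S'$ and assuming $\pi_X(Y\cup S')=\pi_X(Y)$, apply submodularity to $Y\cup S'$ and $Y\cup\{x\}$, whose meet is $Y$ (as $x\notin S'$) and whose join is $Y\cup S$; this gives $2\pi_X(Y)\ge\pi_X(Y\cup S)+\pi_X(Y)$, and monotonicity closes the induction. The step I expect to require care is the single-element case: the witness $Z_x$ depends on $x$ and may be a proper subset of $Y$, so one genuinely needs the submodularity trick above—crucially using $Z_x\subseteq Y$—to promote ``$x$ is spanned by $Z_x$'' to ``$x$ is spanned by $Y$''; everything before and after that is bookkeeping. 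Note that this argument is self-contained and does not invoke the matroid structure of the free sets.
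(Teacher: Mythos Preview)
Your proof is correct and follows essentially the same route as the paper's: take an inclusion-maximal free set $Y$, use the non-freeness of $Y\cup\{x\}$ to find a witness $Z_x\subseteq Y$ with $\pi_X(Z_x\cup\{x\})=\pi_X(Z_x)$, apply submodularity to promote this to $\pi_X(Y\cup\{x\})=\pi_X(Y)$, and then induct to reach $\pi_X(Y)=\pi_X(X)=\kappa(X)$. The only cosmetic difference is in deriving the size bound: the paper uses $\val(\kappa)\cdot|Y|\ge\kappa(Y)\ge\pi_X(Y)$ via Lemma~\ref{lem:bp}(3), whereas you go through subadditivity of $\pi_X$ and the bound $\pi_X(\{y\})\le\val(\kappa)$, which amounts to the same thing.
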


\begin{proof}
  Let $Y\subseteq X$ be an inclusionwise maximal free set. Then for
  all $x\in
  X\setminus Y$ the set $Y\cup\{x\}$ is not free in $X$. Thus there is
  a $Y_x\subseteq Y$ such that $|Y_x|+1>\pi_X(Y_x\cup\{x\})$. By
  the monotonicity of $\pi_X$ and since $Y$ is free, we have
  \[
  |Y_x|+1>\pi_X(Y_x\cup\{x\})\ge\pi_X(Y_x)\ge|Y_x|,
  \]
  which implies $\pi_X(Y_x\cup\{x\})=\pi_X(Y_x)$. Thus
  \[
  \pi_X(Y_x)+\pi_X(Y)=\pi_X(Y_x\cup\{x\})+\pi_X(Y)\ge\pi_X(Y_x)+\pi_X(Y\cup\{x\}),
  \]
  where the second inequality holds by submodularity. Hence
  \[
  \pi_X(Y)=\pi_X(Y\cup\{x\}). 
  \]
  As this holds for all $x\in X\setminus Y$, an easy induction based
  on the submodularity and monotonicity of $\pi_X$ implies
  $\pi_X(Y)=\pi_X(Y\cup (X\setminus Y))=\pi_X(X)$.

  By Lemma~\ref{lem:bp}(3) and the definition of $\pi_X$, we have
  \[
  \val(\kappa)\cdot|Y|\ge\kappa(Y)\ge\pi_X(Y)=\pi_X(X)=\kappa(X),
  \]
  which implies the lemma.
\end{proof}

To understand the following examples, we observe that if
$\val(\kappa)=1$ then a set $Y$ is free in a set $X$ if and only if
$\pi_X(Y)=|Y|$. This follows from the fact that $|Y|\ge\kappa(Y)$ for
all $Y$ if $\val(\kappa)=1$.

\begin{exa}
  Let $G$ be a graph and $X\subseteq V(G)$. A set $Y\subseteq X$ is
  free in $X$ with respect to $\mu_G$ if and only if there is a a minimum vertex
  cover $S$ for $E(X,\bar X)$ and a family of $|Y|$ mutually
  disjoint paths from $Y$ to $S$. 
  \uend
\end{exa}

\begin{exa}
  Let $G$ be a graph and $X\subseteq V(G)$. Let $Y\subseteq X$ such
  that the rows in the matrix
  $M(X,\bar X)$ corresponding to
  the elements of $Y$ are linearly independent. Then $Y$ is free in
  $X$. (The converse does not necessarily hold.)
\uend
\end{exa}

\begin{lem}\label{lem:wl2}
  Let $\kappa$ be a connectivity function on ${\U}$. Let $X\subseteq {\U}$ and $Y$ free in $\bar X$, and let
  $Z\subseteq {\U}$ such that
  \begin{equation}
    \label{eq:10}
    \kappa(Z)<\min\{|Y\cap Z|,|Y\setminus Z|\}. 
  \end{equation}
  Then $X\cap Z$ and $X\setminus Z$ are both nonempty with
  $\kappa(X\cap Z),\kappa(X\setminus Z)<\kappa(X)$.
\end{lem}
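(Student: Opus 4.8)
The plan is to combine submodularity and posimodularity of $\kappa$ with the defining property of ``free''. Since $Y$ is free in $\bar X$, for every $Y'\subseteq Y$ and every set $W$ with $Y'\subseteq W\subseteq\bar X$ we have $\kappa(W)\ge\pi_{\bar X}(Y')\ge|Y'|$; note also that $Y\subseteq\bar X$, so $Y$ is disjoint from $X$. First I would extract two lower bounds from freeness. Because $Y\subseteq\bar X$, we have $Y\setminus Z=Y\cap\bar Z\subseteq\bar X\cap\bar Z=\bar{X\cup Z}\subseteq\bar X$; applying freeness with $Y':=Y\setminus Z$ and $W:=\bar{X\cup Z}$ and using symmetry of $\kappa$ gives $\kappa(X\cup Z)=\kappa(\bar{X\cup Z})\ge|Y\setminus Z|$. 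Similarly $Y\cap Z\subseteq\bar X\cap Z=Z\setminus X\subseteq\bar X$, so freeness with $Y':=Y\cap Z$ and $W:=Z\setminus X$ gives $\kappa(Z\setminus X)\ge|Y\cap Z|$. By \eqref{eq:10}, both $\kappa(X\cup Z)$ and $\kappa(Z\setminus X)$ are then strictly larger than $\kappa(Z)$.

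Next I would feed these into the two modularity inequalities. Submodularity \eqref{eq:submod} applied to $X$ and $Z$ yields
\[
\kappa(X)+\kappa(Z)\ge\kappa(X\cap Z)+\kappa(X\cup Z)>\kappa(X\cap Z)+\kappa(Z),
\]
hence $\kappa(X\cap Z)<\kappa(X)$. Posimodularity \eqref{eq:posimod} (available by Lemma~\ref{lem:bp}(1)) applied to $X$ and $Z$ yields
\[
\kappa(X)+\kappa(Z)\ge\kappa(X\setminus Z)+\kappa(Z\setminus X)>\kappa(X\setminus Z)+\kappa(Z),
\]
hence $\kappa(X\setminus Z)<\kappa(X)$. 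So the two stated upper bounds follow at once, and they hold irrespective of whether the sets in question are empty.

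Finally, nonemptiness has to be argued separately, since an empty set has $\kappa$-value $0$ and so by itself would not contradict the bounds just established. If $X\cap Z=\emptyset$ then $Z\subseteq\bar X$, so freeness with $Y':=Y\cap Z$ and $W:=Z$ gives $\kappa(Z)\ge|Y\cap Z|$, contradicting \eqref{eq:10}. If $X\setminus Z=\emptyset$ then $X\subseteq Z$, hence $\bar Z\subseteq\bar X$, and freeness with $Y':=Y\setminus Z$ and $W:=\bar Z$ together with symmetry gives $\kappa(Z)=\kappa(\bar Z)\ge|Y\setminus Z|$, again contradicting \eqref{eq:10}. The argument is short; the only point requiring attention is the bookkeeping in the freeness applications — one must always choose the auxiliary set $W$ so that it is sandwiched between the selected subset of $Y$ and $\bar X$, invoking symmetry of $\kappa$ whenever $W$ is most naturally written as a complement. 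I do not anticipate any genuine obstacle.
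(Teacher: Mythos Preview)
Your proof is correct and essentially identical to the paper's. The only cosmetic difference is that the paper invokes the symmetry $Z\leftrightarrow\bar Z$ to reduce to a single case handled by submodularity, whereas you treat the two cases separately via submodularity and posimodularity---but since posimodularity is precisely submodularity composed with this symmetry (Lemma~\ref{lem:bp}(1)), the two presentations amount to the same computation.
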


\begin{proof}
  Because of the symmetry between $Z$ and $\bar Z$, we only have to prove $X\cap
  Z\neq\emptyset$ and $\kappa(X\cap Z)<\kappa(X)$. 

  If $X\cap Z=\emptyset$ then $Y\cap Z\subseteq Z\subseteq\bar X$ and
  $\pi_{\bar X}(Y\cap Z)\le\kappa(Z)<|Y\cap Z|$, which contradicts $Y$ being free in
  $\bar X$. Thus $X\cap
  Z\neq\emptyset$.

  Furthermore, we have $|Y\setminus Z|\le\pi_{\bar X}(Y\setminus Z)\le
  \kappa(\bar X\setminus Z)$, because $Y$ is free in $\bar X$ and by
  the definition of $\pi_{\bar X}$. Thus
  \begin{align*}
    \kappa(X\cap Z)&\le\kappa(X\cap Z)+\kappa(\bar X\setminus
                     Z)-|Y\setminus Z|\\
    &=\kappa(X\cap Z)+\kappa(X\cup Z)-|Y\setminus Z|&\text{by symmetry}\\
    &\le\kappa(X)+\kappa(Z)-|Y\setminus Z|&\text{by submodularity}\\
    &<\kappa(X)&\text{by  \eqref{eq:10}.}
  \end{align*}
\end{proof}

\begin{proof}[Proof of Theorem~\ref{theo:conn-wl}(2)]
 We assume that there is no well-linked set of
  cardinality greater than $\frac{k}{\val(\kappa)}-1$. As every set of cardinality $1$ is well-linked, we
  have $k\ge \val(\kappa)$. We shall construct a directed branch decomposition
  $(T,\gamma)$ of $\kappa$ of width at most $k$. The construction is
  iterative: we define a sequence $(T_1,\gamma_1),\ldots,(T_m,\gamma_m)$ of
  directed decompositions of $\kappa$ of width at most $k$ such that
  $(T,\gamma):=(T_m,\gamma_m)$ is complete.

  We let $T_1$ be the one-node directed tree only consisting of the
  root $r$ and let $\gamma_1(r):={\U}$. Now suppose that $(T_i,\gamma_i)$ is
  defined. If it is complete, we let $m=i$ and stop the
  construction. Otherwise, there is a leaf $t\in L(T_i)$ such that
  $|\gamma_i(t)|\ge 2$. The tree $T_{i+1}$ is obtained from $T_i$ by
  attaching two new children $u_1,u_2$ to $t$. For all nodes $s\in
  V(T_i)$, we let $\gamma_{i+1}(s):=\gamma_i(s)$. It remains to define
  $\gamma_{i+1}(u_1)$ and $\gamma_{i+1}(u_2)$.

  Let $X:=\gamma(t)$. We shall define nonempty disjoint sets $X_1,X_2$ such
  that $X_1\cup X_2=X$ and $\kappa(X_i)\le k$. Then we let
  $\gamma_{i+1}(u_i):=X_i$. If $\kappa(X)\le k-\val(\kappa)$, we pick an
  arbitrary $x\in X$ and let $X_1:=\{x\}$ and
  $X_2:=X\setminus\{x\}$. Then $\kappa(X_1)\le\val(\kappa)\le k$ and
  $\kappa(X_2)\le\kappa(X)+\val(\kappa)\le k$. 

  So suppose that $\kappa(X)>k-\val(\kappa)$. By Lemma~\ref{lem:wl1}, there is a set
  $Y\subseteq\bar X$ that is free in $\bar X$ and of cardinality
  \[
  |Y|\ge\frac{\kappa(\bar
  X)}{\val(\kappa)}>\frac{k-\val(\kappa)}{\val(\kappa)}=\frac{k}{\val(\kappa)}-1.
  \]
  Thus $Y$ is not well-linked, and there is a set $Z\subseteq {\U}$
  such that $\kappa(Z)<\min\{|Y\cap Z|,|Y\setminus Z|\}$. We let
  $X_1:=X\cap Z$ and $X_2:=X\setminus Z$. By
  Lemma~\ref{lem:wl2}, for $i=1,2$ the set $X_i$ is nonempty with $\kappa(X_i)<\kappa(X)$.
\end{proof}

\begin{exe}\label{exe:k-linked}
  Let $\kappa$ be a connectivity function on ${\U}$. A set $V\subseteq {\U}$
  is \emph{$k$-linked} if $|V|\ge 2k$ and for all disjoint
  sets $Y,Z\subseteq V$ of the same cardinality $|Y|=|Z|\le k$ there is no
  $X\subseteq {\U}$ such that $\kappa(X)<|Y|$ and $Y\subseteq X$ and
  $Z\subseteq\bar X$.

  Prove the following:
  \begin{enumerate}[label=(\alph*)]
  \item Let $V$ we a $k$-linked set and $W\subseteq V$ of cardinality $|W|\le
    2k+1$. Then $W$ is well-linked.
  \item Let $W$ be well-linked. Then $W$ is $\floor{|W|/2}$-linked.
  \item If $\bw(\kappa)\le k$, then there is no $(k+1)$-linked set of
    cardinality greater than $3k$.
  \end{enumerate}
\end{exe}

\section{Tangles}
\label{sec:tangles}

Similarly to well-linked sets, tangles describe highly connected ``regions''
of a connectivity system. However, tangles are more
elusive than well-linked sets. The region described by a tangle
may not be a subset of the universe ${\U}$. The tangle only describes the
region in a dual way, by ``pointing to it''.

\begin{figure}
  \centering
  \input{tangleidea}
  \caption{The idea of a tangle (of order $k$)}
  \label{fig:tangleidea}
\end{figure}

To understand the idea, consider Figure~\ref{fig:tangleidea}. Part (a)
shows the universe of a connectivity function, and in part (b) we
highlight what might intuitively be a ``$k$-connected region'' (for
some parameter $k$ that is irrelevant here). We make no effort to fit
this region exactly, because this would be futile anyway.\footnote{Not only
because of my limited latex-drawing abilities.} Part (c) explains why:
we display all separations of order less than $k$. Our region may be
viewed as ``$k$-connected'', because none of these separations splits the
region in a substantial way, only small parts at the boundary may be
sliced off. The region is approximately maximal with this property. It
would be hard, however, to fix the boundary of the region in a
definite way because of the ``crossing'' separations we see at the
``north-east exit'' and ``north west exit'' of the region. There is no
good way of deciding which of these separations we should take to
determine the boundary of the region. Instead, for each of these
separations we can say on which side (most of) the region is (see part
(d) of the figure). This way, we describe our region unambiguously and
without making arbitrary choices at the boundary; we simply leave the
precise boundary unspecified. A \emph{tangle} (of order $k$) does
precisely this: it gives an orientation to the separations of order
less than $k$. It is convenient to think of the part of a
separation the tangle points to, that is, the side where the presumed
``region'' described by the tangle is supposed to be, as the ``big
side''. Part (e) of the figure gives another example of a tangle of
the same order $k$ that describes a different region. Of course, in order to actually describe a region the
orientation a tangle assigns to the separations has to be
``consistent''. In particular, the intersection of two big sides should
not be empty (see part (f) of the figure).

Formally, a tangle of order $k$ is a set system (intuitively
consisting of the ``big sides'' of the separations of order less than
$k$) satisfying four axioms, of which the first and second ensure that the
tangle is indeed an orientation of the separations of order less than $k$, the
third ensures consistency, and the fourth
rules out ``trivial tangles''.

\begin{defn}
  Let $\kappa$ be a connectivity function on a set ${\U}$.  A
  \emph{$\kappa$-tangle} of order $k\ge0$ is a set $\CT\subseteq 2^{\U}$
  satisfying the following conditions.\footnote{Our definition of
    tangle differs from the one mostly found in the literature
    (e.g.~\cite{geegerwhi09,hun11}). In our definition, the ``big side''
    of a separation belongs to the tangle, which seems natural if one
    thinks of a tangle as ``pointing to a region'' (as described
    above), whereas in the definition of \cite{geegerwhi09,hun11} the
    ``small side'' of a separation belongs to the tangle. But of course both definitions yield equivalent
    theories.}
  \begin{nlist}{T}
  \setcounter{nlistcounter}{-1}
  \item\label{li:t0}
    $\kappa(X)<k$ for all $X\in\CT$, 
  \item\label{li:t1}
    For all $X\subseteq {\U}$ with $\kappa(X)<k$, either $X\in\CT$ or
    $\bar X\in\CT$.
  \item\label{li:t2}
    $X_1\cap X_2\cap X_3\neq\emptyset$ for all $X_1,X_2,X_3\in\CT$.
  \item\label{li:t3}
    $\CT$ does not contain any singletons, that is, $\{x\}\not\in\CT$ for all $x\in {\U}$.\uend
\end{nlist}
\end{defn}

We denote the order of a $\kappa$-tangle $\CT$ by $\ord(\CT)$. 

\begin{exa}\label{exa:tangle1}
  Let $G$ be a graph and $C\subseteq G$ a cycle of length at least $4$. We let
  \[
  \CT_C:=\{X\subseteq V(G)\mid \mu_G(X)<2,|V(C)\setminus X|\le 1\}.
  \]
  Then $\CT_C$ is a $\mu_G$-tangle of order $2$. 

  To see this, note that $\CT_C$ trivially satisfies \ref{li:t0}. It
  satisfies \ref{li:t1}, because for every $X\subseteq V(G)$ with
  $\mu_G(X)<2$, either $|V(C)\cap X|\le 1$ or
  $|V(C)\setminus X|\le 1$. It satisfies \ref{li:t2}, because if
  $X_1,X_2,X_3\in\CT_C$ then
  $|V(C)\setminus(X_1\cup X_2\cup X_3)|\le 3$ and thus
  $X_1\cap X_2\cap X_3\supseteq V(C)\cap X_1\cap X_2\cap
  X_3\neq\emptyset$.
  Finally, it satisfies \ref{li:t3}, because if $X\in\CT_C$ then
  $|X|\ge |V(C)\cap X|\ge 3$.

  Essentially the same argument shows that $\CT_C$ is a $\nu_G$-tangle
  (even if the length of $C$ is $3$), and if $C$ is an induced cycle
  then $\CT_C$ is a $\rho_G$-tangle of order $2$. 
  \uend
\end{exa}

\begin{exa}\label{exa:tangle2}
  Let $G$ be a graph and $H\subseteq G$ a 2-connected subgraph. That
  is, $|H|\ge 3$ and for every vertex $v\in V(H)$ the graph
  $H\setminus\{v\}$ is connected.

  Let 
  \[
  \CT_H:=\{Y\subseteq E(G)\mid\kappa_G(Y)<2,E(H)\subseteq Y\}.
  \]
  Then $\CT_H$ is a $\kappa_G$-tangle of order $2$. 
  The crucial observation to prove this is that for every
  $Y\subseteq E(G)$, if $E(H)\cap Y\neq\emptyset$ and
  $E(H)\cap\bar Y\neq\emptyset$ then
  $\partial_G(Y)\ge\partial_H(Y)\ge 2$.

  It is worth noting that the set 
  \[
  \{X\subseteq V(G)\mid\mu_G(X)<2,|V(H)\setminus X|\le 1\}
  \]
  is not necessarily a $\mu_G$ tangle. (Why?)
  \uend
\end{exa}

\begin{exa}[Robertson and Seymour~\cite{gm10}]\label{exa:4}
  Let $G$ be a graph and $H\subseteq G$ a $(k\times k)$-grid. Let
  $\CT$ be the set of all $X\subseteq E(G)$ such that $\kappa_G(X)<k$
  and $X$ contains all edges of some row of
  the grid. Then $\CT$ is a $\kappa_G$-tangle of order $k$. We omit
  the proof, which is not entirely trivial.
  \uend
\end{exa}

\begin{exa}\label{exa:wl-tangle}
  Let $\kappa$ be a connectivity function ${\U}$. Let 
  $W\subseteq
  {\U}$ be a well-linked set of cardinality $|W|\ge 2$. Then 
  \[
  \CT_W:=\Big\{X\subseteq {\U}\Bigmid \kappa(X)<|W|/3,|W\cap X|>(2/3)|W|\Big\}
  \]
  is a tangle of order $\ceil{|W|/3}$.

  To prove this, note that $\CT(W)$ trivially satisfies \ref{li:t0}. To see that it satisfies
  \ref{li:t2}, let $X_1,X_2,X_3\in\CT_W$. Then $|W\setminus
  X_i|<|W|/3$ and thus $W\not\subseteq\bar X_1\cup\bar
  X_2\cup\bar X_3$, which implies $W\cap X_1\cap X_2\cap X_3\neq\emptyset$.
  Furthermore, $\CT(W)$ 
  satisfies \ref{li:t3}, because $|W|\ge 2$ and thus
  $|\{x\}|\le(2/3)|W|$ for all $x\in {\U}$.

  To see that $\CT(W)$ satisfies \ref{li:t1}, let $X\subseteq {\U}$ with
  $\kappa(X)<|W|/3$. Since $W$ is well-linked, we have
  $\kappa(X)\ge\min\{|W\cap X|,|W\setminus X|\}$. Thus either $|W\cap
  X|<|W|/3$ or $|W\setminus X|< |W|/3$. This implies
  $\bar X\in\CT_W$ or $X\in\CT_W$.
  \uend
\end{exa}

\begin{lem}\label{lem:tangle-closure}
  Let $\CT$ be a $\kappa$-tangle of order $k$.
  \begin{enumerate}
  \item For all $X\in\CT$ and $Y\supseteq X$, if $\kappa(Y)<k$ then
    $Y\in\CT$.
   \item For all $X,Y\in\CT$, if $\kappa(X\cap Y)<k$ then
    $X\cap Y\in\CT$.
  \end{enumerate}
\end{lem}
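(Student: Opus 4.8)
The plan is to read off both claims from the tangle axioms \ref{li:t1} and \ref{li:t2} alone; the only substance is choosing the right triple to feed into the consistency axiom \ref{li:t2}. As a preliminary observation I note that, taking two of the three arguments in \ref{li:t2} to be equal, we get $A\cap B\neq\emptyset$ for all $A,B\in\CT$.

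For part (1): given $X\in\CT$ and $Y\supseteq X$ with $\kappa(Y)<k$, axiom \ref{li:t1} applied to $Y$ gives $Y\in\CT$ or $\bar Y\in\CT$. In the second case I would reach a contradiction from $X\cap\bar Y=X\setminus Y=\emptyset$ --- which holds precisely because $X\subseteq Y$ --- together with the preliminary observation that members of $\CT$ pairwise intersect. Hence $Y\in\CT$.

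For part (2): given $X,Y\in\CT$ with $\kappa(X\cap Y)<k$, axiom \ref{li:t1} applied to the set $X\cap Y$ gives $X\cap Y\in\CT$ or $\bar{X\cap Y}\in\CT$. In the second case I would use the De Morgan identity $\bar{X\cap Y}=\bar X\cup\bar Y$ and apply \ref{li:t2} to the three tangle members $X$, $Y$, and $\bar X\cup\bar Y$, whose common intersection is $(X\cap Y\cap\bar X)\cup(X\cap Y\cap\bar Y)=\emptyset$; this contradicts \ref{li:t2}, so $X\cap Y\in\CT$.

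I do not expect any genuine obstacle. The two points that need a little care are to orient the set $X\cap Y$ directly through \ref{li:t1} (rather than trying to recover it from the orientations already fixed for $X$ and $Y$), and to pick the right three-element family $\{X,Y,\bar X\cup\bar Y\}$ whose empty triple intersection is forbidden by consistency.
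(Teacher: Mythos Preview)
Your proof is correct and follows essentially the same route as the paper: for (1), orient $Y$ via \ref{li:t1} and rule out $\bar Y\in\CT$ using \ref{li:t2} with the triple $X,X,\bar Y$; for (2), orient $X\cap Y$ via \ref{li:t1} and rule out $\bar{X\cap Y}\in\CT$ using \ref{li:t2} with the triple $X,Y,\bar{X\cap Y}$. The only cosmetic difference is that you state the pairwise-intersection consequence of \ref{li:t2} as a separate preliminary, whereas the paper applies \ref{li:t2} directly with two equal arguments.
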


\begin{proof}
 To prove (1), just note that if $Y\not\in\CT$ then $\bar Y\in\CT$ by
 \ref{li:t1}. But as $X\cap \bar Y=\emptyset$, this contradicts
 \ref{li:t2} (with $X_1=X_2=X$ and $X_3=\bar Y$).

 To prove (2), note that if $X\cap Y\not\in\CT$ then $\bar{X\cap
   Y}\in\CT$, and again this contradicts \ref{li:t2}, because
 \[
 X\cap Y\cap \bar{(X\cap Y)}=\emptyset.
 \]
\end{proof}

\begin{rem}
  The reader may wonder why in \ref{li:t2} we require the intersection
  of \emph{three} sets in $\CT$ to be nonempty. Why not the
  intersection of seventeen sets or just two sets in $\CT$?

  We need three sets to guarantee the important property of
  Lemma~\ref{lem:tangle-closure}(2), which may be viewed as a weak
  form of closure of a tangle under intersections.

  However, requiring the intersection
  of three sets in $\CT$ to be nonempty is sufficient for all
  arguments, so there is no reason to require more.
  \uend
\end{rem}

\begin{rem}
  There is a certain similarity between tangles and
  \emph{ultrafilters}, that is, families $\CF$ of nonempty subsets of a
  (usually infinite) set ${\U}$ that are closed under extensions and
  finite intersections. Certainly, this is a fairly superficial
  similarity. But I do feel that the way we view tangles as describing
  ``regions'' of a connectivity system, that is, new somewhat blurry structures
  derived from from the original system, is reminiscent of the
  construction of ultrapowers in model theory.

  Incidentally, tangle axiom \ref{li:t3} corresponds to the
  ultrafilters being \emph{non-principal}, that is, not just families
  of all sets that contain one specific element of the universe.
  \uend
\end{rem}

Let us now give an example that shows how to prove the non-existence
of tangles (of a certain order).

\begin{exa}\label{exa:cycle}
  Let $G$ be a cycle of length $n$. We claim that there is
  no $\mu_G$-tangle of order greater than $2$.  

  To prove this, suppose for contradiction that $\CT$ is a
  $\mu_G$-tangle of order at least $3$.  We call a subset
  $X\subseteq V(C)$ a \emph{segment} if it induces a path in
  $C$. Observe that for every segment $X$ we have $\mu_G(X)\le 2$ and
  thus either $X\in\CT$ or $\bar X\in\CT$. Let us assume that we have
  fixed an orientation of the cycle $C$ and orient the segments
  accordingly. The \emph{first half} and the \emph{second half} of a
  segment $X$ of cardinality $|X|\ge 2$ are the subsegments of cardinalities
  $\ceil{X/2}$ and $\floor{X/2}$, respectively, defined in the obvious
  way. We shall define a sequence $X_1,X_2,\ldots$ of segments such
  that for all $i$ we have $X_i\in\CT$, and $X_{i+1}$ is either the
  first or the second half of $X_i$. We continue the construction
  until $|X_i|=1$. Then $X_i\in\CT$ contradicts \ref{li:t3}.

  To start the construction, we let $X$ be an arbitrary segment of
  cardinality $\ceil{n/2}$. If $X\in\CT$ we let $X_1:=X$ and otherwise we let
  $X_1:=\bar X$. Now suppose we have defined $X_1,\ldots,X_i$ and
  $|X_i|\ge 2$. We let $Y$ be the first half of $X_i$. If $Y\in\CT$ we
  let $X_{i+1}:=Y$. Otherwise, $\bar Y\in\CT$, and we let
  $X_{i+1}:=X_i\cap\bar Y$, that is, $X_{i+1}$ is the second half of
  $X_i$. By Lemma~\ref{lem:tangle-closure}(2) we have $X_{i+1}\in\CT$.

  Again, essentially the same arguments show that $G$ has no
  $\rho_G$-tangle or $\kappa_G$-tangle of order greater than $2$.
  \uend
\end{exa}

\begin{lem}\label{lem:smallset}
  Let $\CT$ be a $\kappa$-tangle of order $k$, and let $X\subseteq {\U}$
  such that $\kappa(X')<k$ for all $X'\subseteq X$. Then $\bar X\in\CT$. 
\end{lem}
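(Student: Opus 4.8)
The plan is to argue by contradiction, combined with the same "shrinking a set down to a singleton" idea used in Example~\ref{exa:cycle}. We may assume $k\ge 1$, since otherwise no set $X$ satisfies the hypothesis (every $X'\subseteq X$ would need $\kappa(X')<0$, impossible as $\kappa\ge 0$). Suppose, for contradiction, that $\bar X\notin\CT$. Applying the hypothesis with $X'=X$ gives $\kappa(X)<k$, so by \ref{li:t1} we must have $X\in\CT$.

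First I would record the auxiliary observation that $\bar{\{x\}}\in\CT$ for every $x\in X$: the set $\{x\}$ is a singleton, hence $\{x\}\notin\CT$ by \ref{li:t3}, while $\kappa(\{x\})<k$ by the hypothesis (taking $X'=\{x\}\subseteq X$), so \ref{li:t1} forces $\bar{\{x\}}\in\CT$. Now enumerate $X=\{x_1,\dots,x_n\}$ and set $X_0:=X$ and $X_i:=X\setminus\{x_1,\dots,x_i\}$ for $1\le i\le n$. Each $X_i$ is a subset of $X$, so $\kappa(X_i)<k$ by hypothesis. I claim $X_i\in\CT$ for all $i$, by induction on $i$: the case $i=0$ is $X\in\CT$, established above; and if $X_{i-1}\in\CT$, then since $\bar{\{x_i\}}\in\CT$, since $X_i=X_{i-1}\cap\bar{\{x_i\}}$, and since $\kappa(X_i)<k$, Lemma~\ref{lem:tangle-closure}(2) yields $X_i\in\CT$. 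Taking $i=n$ gives $\emptyset=X_n\in\CT$, which contradicts \ref{li:t2} applied with $X_1=X_2=X_3=\emptyset$ (equivalently, taking $i=n-1$ gives the singleton $X_{n-1}=\{x_n\}\in\CT$, contradicting \ref{li:t3}, provided $n\ge1$; the $n=0$ case is the $X=\emptyset$ contradiction just given). Hence $\bar X\in\CT$.

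I do not expect any real obstacle here: the argument is routine once Lemma~\ref{lem:tangle-closure}(2) is available. The only point requiring a little care is the bookkeeping that every intermediate set $X_i$ has order less than $k$, so that the closure lemma applies at each step — and this is precisely what the hypothesis "$\kappa(X')<k$ for all $X'\subseteq X$" is there to guarantee.
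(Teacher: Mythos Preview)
Your proof is correct and essentially the same as the paper's: both peel off singletons one at a time using Lemma~\ref{lem:tangle-closure}(2), with the hypothesis ensuring every intermediate set has order $<k$. The only cosmetic difference is that the paper runs the induction directly (proving $\bar{X'}\in\CT$ for all $X'\subseteq X$, building $X'$ up from $\emptyset$), whereas you frame it as a contradiction and shrink $X$ down to $\emptyset$; the underlying mechanism is identical.
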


\begin{proof}
  We prove that $\bar X'\in\CT$ for all $X'\subseteq X$ by induction
  on $|X'|$. The assumption $\kappa(X')<k$ implies that either
  $X'\in\CT$ or $\bar X'\in\CT$.

  For the base step,  note that if $|X'|=0$ then $\bar X'\in\CT$ by
  \ref{li:t2}.

  For the inductive step, let $|X'|>0$ and $x\in X'$. Then
  $\bar{X'\setminus\{x\}}\in\CT$ by the induction hypothesis and
  $\bar{\{x\}}\in\CT$ by \ref{li:t3}. Thus $\bar X'=\bar{X\setminus\{x\}}\cap \bar{\{x\}}\in\CT$ by Lemma~\ref{lem:tangle-closure}(2).
\end{proof}

\begin{cor}\label{cor:smallset1}
   Let $\CT$ be a $\kappa$-tangle of order $k$ and $X\subseteq {\U}$ such
   that $|X|<k/\val(\kappa)$. Then $\bar X\in\CT$.
\end{cor}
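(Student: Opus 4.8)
The statement is a direct corollary of Lemma~\ref{lem:smallset}, so the plan is simply to verify the hypothesis of that lemma for the set $X$ in question: namely, that $\kappa(X')<k$ for \emph{every} subset $X'\subseteq X$. Once this is established, Lemma~\ref{lem:smallset} immediately gives $\bar X\in\CT$, and we are done.

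\textbf{Key step.} To bound $\kappa(X')$ for an arbitrary $X'\subseteq X$, I would invoke Lemma~\ref{lem:bp}(3) with the pair $X'$ and $\emptyset$. Since $\kappa$ is normalised we have $\kappa(\emptyset)=0$, so that lemma yields
\[
\kappa(X')=|\kappa(X')-\kappa(\emptyset)|\le\val(\kappa)\cdot|X'\triangle\emptyset|=\val(\kappa)\cdot|X'|.
\]
Because $X'\subseteq X$ we have $|X'|\le|X|<k/\val(\kappa)$, and therefore $\kappa(X')\le\val(\kappa)\cdot|X'|<\val(\kappa)\cdot\bigl(k/\val(\kappa)\bigr)=k$. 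As $X'\subseteq X$ was arbitrary, the hypothesis of Lemma~\ref{lem:smallset} is satisfied, and that lemma delivers $\bar X\in\CT$.

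\textbf{Expected obstacle.} There is essentially no obstacle here: this is a routine consequence of the valence estimate in Lemma~\ref{lem:bp}(3) combined with Lemma~\ref{lem:smallset}. The only point worth a remark is the degenerate case $\val(\kappa)=0$, in which the quantity $k/\val(\kappa)$ should be read as $+\infty$ (equivalently, $\kappa\equiv 0$ and every set has order $0<k$ as soon as $k\ge 1$, while for $k=0$ the hypothesis $|X|<0$ is vacuous), so the conclusion still holds.
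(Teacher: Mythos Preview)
Your proposal is correct and is exactly the intended argument: the paper states this corollary immediately after Lemma~\ref{lem:smallset} without proof, and the verification that $\kappa(X')<k$ for all $X'\subseteq X$ via Lemma~\ref{lem:bp}(3) is precisely what is implicit there.
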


\begin{cor}\label{cor:smallset2}
   Let $G$ be a graph and $\CT$ be a $\kappa_G$-tangle of order
   $k$. Let $X\subseteq E(G)$ such that $|V(X)|<k$. Then $\bar X\in\CT$.
\end{cor}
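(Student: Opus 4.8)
The plan is to deduce this directly from Lemma~\ref{lem:smallset}. That lemma says that if $X\subseteq{\U}$ satisfies $\kappa(X')<k$ for \emph{every} subset $X'\subseteq X$, then $\bar X\in\CT$. So here, with ${\U}=E(G)$ and $\kappa=\kappa_G$, it suffices to verify the hypothesis: for every $X'\subseteq X$ we have $\kappa_G(X')<k$.

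The key observation is that the boundary of a subset of edges only ever uses vertices incident to those edges, so it shrinks under taking subsets of $X$ in the crude sense of being bounded by $|V(X)|$. Concretely, recall $\partial_G(X')=V(X')\cap V(\bar{X'})\subseteq V(X')$, and since $X'\subseteq X$ we have $V(X')\subseteq V(X)$. Hence
\[
\kappa_G(X')=|\partial_G(X')|\le|V(X')|\le|V(X)|<k
\]
for every $X'\subseteq X$. This is exactly the hypothesis of Lemma~\ref{lem:smallset}, so we conclude $\bar X\in\CT$.

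There is really no obstacle here; the whole content is the containment $\partial_G(X')\subseteq V(X)$, which makes $X$ a ``small set'' in the sense required by Lemma~\ref{lem:smallset}. (One could alternatively note that this is the $\kappa_G$-analogue of Corollary~\ref{cor:smallset1}: for $\kappa_G$ the relevant ``size'' of an edge set $X$ controlling the orders of all its subsets is not $|X|$ but $|V(X)|$, since $\val(\kappa_G)\le 2$ would only give the weaker bound $2|X'|$, whereas $|V(X')|$ is the honest bound.)
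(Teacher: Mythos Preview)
Your proof is correct and is exactly the intended argument: the paper states this as an immediate corollary of Lemma~\ref{lem:smallset}, and the verification that $\kappa_G(X')\le|V(X')|\le|V(X)|<k$ for all $X'\subseteq X$ via $\partial_G(X')\subseteq V(X')\subseteq V(X)$ is precisely what is needed.
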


\subsection{Extensions, Truncations, and Separations}

Let
$\CT,\CT'$ be $\kappa$-tangles. If $\CT'\subseteq\CT$, we say that
$\CT$ is an \emph{extension} of $\CT'$ and $\CT'$ a \emph{truncation}
of $\CT$. The tangles $\CT$ and $\CT'$
are \emph{incomparable} (we write $\CT\bot\CT'$) if neither is an
extension of the other. 
A tangle is \emph{maximal} if it has no proper
extension. 
The \emph{truncation} 
of $\CT$ to order $k\le\ord(\CT)$ is the set
$
\{X\in\CT\mid\kappa(X)<k\},
$
which is obviously a tangle of order $k$. Observe that if $\CT$ is
an extension of $\CT'$, then $\ord(\CT')\le\ord(\CT)$, and $\CT'$ is
the truncation of $\CT$ to order $\ord(\CT')$. 

\begin{rem}\label{rem:tangle-order1}
There is a small technical issue that one needs to be aware of, but
that never causes any real problems: if we view tangles as families of
sets, then their order is not always well-defined. Indeed, if there is
no set $X$ of order $\kappa(X)=k-1$, then a tangle of order $k$
contains exactly the same sets as its truncation to order $k-1$. In such a situation, we
have to explicitly annotate a tangle with its order, formally viewing a
tangle as a pair $(\CT,k)$ where $\CT\subseteq 2^{\U}$ and $k\ge 0$. We always view a
tangle of order $k$ and its truncation to order $k-1$ as distinct
tangles, even if they contain exactly the same sets.
\uend
\end{rem}

Let $\kappa$ be a connectivity function on a set ${\U}$, and let
$\CT,\CT'$ be $\kappa$-tangles. A \emph{$(\CT,\CT')$-separation} is a set $X\subseteq {\U}$ such that
$X\in\CT$ and $\bar X\in\CT'$. Obviously, if $X$ is a
$(\CT,\CT')$-separation then $\bar X$ is a
$(\CT',\CT)$-separation. Observe that there is a
$(\CT,\CT')$-separation if and only if $\CT$ and $\CT'$ are
incomparable. The \emph{order} of a $(\CT,\CT')$-separation $X$ is
$\kappa(X)$. A $(\CT,\CT')$-separation is \emph{minimum} if its
order is minimum.

\begin{lem}\label{lem:lm-tansep}
  Let $\kappa$ be a connectivity function on a set ${\U}$, and let
  $\CT,\CT'$ be incomparable tangles. Then there is a (unique) minimum
  $(\CT,\CT')$-separation $X$ such that
  $X\subseteq X'$ for all minimum $(\CT,\CT')$-separations
  $X'$.

  We call $X$ the \emph{leftmost minimum $(\CT,\CT')$-separation}.
\end{lem}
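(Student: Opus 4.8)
The plan is to mimic the proof of the corresponding statement about the leftmost minimum $(X,Y)$-separation, replacing the elementary set-inclusion facts used there by the weak closure properties of tangles recorded in Lemma~\ref{lem:tangle-closure}. Write $k:=\ord(\CT)$ and $k':=\ord(\CT')$, and let $m$ be the minimum order of a $(\CT,\CT')$-separation; such a separation exists because $\CT$ and $\CT'$ are incomparable. Choose $X$ to be a minimum $(\CT,\CT')$-separation of \emph{minimum cardinality} $|X|$, and let $X'$ be an arbitrary minimum $(\CT,\CT')$-separation. Note first that any $(\CT,\CT')$-separation $Z$ satisfies $\kappa(Z)=\kappa(\bar Z)<\min\{k,k'\}$ by \ref{li:t0} (applied to $\CT$ and to $\CT'$) and symmetry; in particular $\kappa(X)=\kappa(X')=m<\min\{k,k'\}$.

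The first step is to show $\kappa(X\cap X')\le m$. Suppose not. Then by submodularity
\[
\kappa(X\cup X')\le\kappa(X)+\kappa(X')-\kappa(X\cap X')<2m-m=m,
\]
so in particular $\kappa(X\cup X')<\min\{k,k'\}$. Since $X\subseteq X\cup X'$ and $X\in\CT$, Lemma~\ref{lem:tangle-closure}(1) gives $X\cup X'\in\CT$; and since $\bar{X\cup X'}=\bar X\cap\bar{X'}$ with $\bar X,\bar{X'}\in\CT'$ and $\kappa(\bar{X\cup X'})=\kappa(X\cup X')<k'$, Lemma~\ref{lem:tangle-closure}(2) gives $\bar{X\cup X'}\in\CT'$. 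Hence $X\cup X'$ is a $(\CT,\CT')$-separation of order $<m$, contradicting the choice of $m$. Therefore $\kappa(X\cap X')\le m$.

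The second step is to show that $X\cap X'$ is itself a minimum $(\CT,\CT')$-separation. Indeed $\kappa(X\cap X')\le m<\min\{k,k'\}$, so from $X,X'\in\CT$ and Lemma~\ref{lem:tangle-closure}(2) we get $X\cap X'\in\CT$, and from $\bar X\subseteq\bar{X\cap X'}$, $\bar X\in\CT'$, $\kappa(\bar{X\cap X'})=\kappa(X\cap X')<k'$ and Lemma~\ref{lem:tangle-closure}(1) we get $\bar{X\cap X'}\in\CT'$. Thus $X\cap X'$ is a $(\CT,\CT')$-separation, so $\kappa(X\cap X')\ge m$ and hence $\kappa(X\cap X')=m$. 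By the minimality of $|X|$ among minimum $(\CT,\CT')$-separations we get $|X|\le|X\cap X'|\le|X|$, so $X=X\cap X'$, i.e.\ $X\subseteq X'$. Since $X'$ was an arbitrary minimum $(\CT,\CT')$-separation, $X$ has the asserted property, and uniqueness is immediate because any two sets with this property are contained in one another.

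The only genuine subtlety—and the point I would be most careful about—is verifying that $X\cap X'$ and $X\cup X'$ really do lie in the appropriate tangles. This is exactly where the weak intersection-closure of Lemma~\ref{lem:tangle-closure}(2) (which itself relies on axiom \ref{li:t2}) is used, and where one must keep track of the two possibly different orders $k$ and $k'$. What makes it go through is that in each case the relevant order is bounded by $m<\min\{k,k'\}$: in the first step $\kappa(X\cup X')<m$ by submodularity under the contradictory hypothesis, and in the second step $\kappa(X\cap X')\le m$ by the conclusion of the first step, so both orders drop below $\min\{k,k'\}$ whenever they need to.
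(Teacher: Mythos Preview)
Your proof is correct and follows essentially the same approach as the paper: choose a minimum $(\CT,\CT')$-separation of minimum cardinality, use submodularity together with Lemma~\ref{lem:tangle-closure} to show that $X\cap X'$ is again a minimum $(\CT,\CT')$-separation, and conclude $X\subseteq X'$ from the cardinality minimality. The only cosmetic difference is the order of the contrapositive: the paper first argues directly that $\kappa(X\cup X')\ge m$ and then deduces $\kappa(X\cap X')\le m$ from submodularity, whereas you assume $\kappa(X\cap X')>m$ and derive the contradiction via $\kappa(X\cup X')<m$; the content is identical.
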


\begin{proof}
    Let $X$ be a minimum $(\CT,\CT')$-separation of minimum cardinality $|X|$,
  and let $X'$ be another minimum $(\CT,\CT')$-separation. We shall
  prove that $X\subseteq X'$. 

  Let $k:=\kappa(X)=\kappa(X')<\min\{\ord(\CT),\ord(\CT')\}$. We claim
  that
  \begin{equation}
    \label{eq:cantan1}
    \kappa(X\cup X')\ge k.
  \end{equation}
  Suppose for contradiction that $\kappa(X\cup X')<k$. Then $X\cup
  X'\in\CT$ by Lemma~\ref{lem:tangle-closure}(1). Furthermore,
  $\bar{X\cup X'}=\bar X\cap\bar X'\in\CT'$ by Lemma~\ref{lem:tangle-closure}(2).
  Thus $X\cup X'$ is a
  $(\CT,\CT')$-separation of order less than $k$. This
  contradicts the minimality of $k=\kappa(X)$ and proves \eqref{eq:cantan1}.

  By submodularity, 
  \begin{equation}
    \label{eq:cantan2}
    \kappa(X\cap X')\le k.
  \end{equation}
  Then $X\cap X'\in\CT$ by Lemma~\ref{lem:tangle-closure}(2) and $\bar
  X\cap\bar X'=\bar X\cup\bar X'\in\CT'$ by by
  Lemma~\ref{lem:tangle-closure}(1). Thus $X\cap X'$ is a
  $(\CT,\CT')$-separation. By the minimality of $k$, we have
  $\kappa(X\cap X')=k$, and by the minimality of $|X|$ we have
  $|X|\le|X\cap X'|$. This implies $X=X\cap X'$ and thus $X\subseteq X'$.
\end{proof}

\subsection{Covers}

A \emph{cover} of a $\kappa$-tangle $\CT$ is a set $S\subseteq {\U}$ such
that $S\cap X\neq\emptyset$ for all $X\in\CT$. 

\begin{lem}\label{lem:cover}%
  Every tangle of order $k$ has a cover of cardinality at most $k$.
\end{lem}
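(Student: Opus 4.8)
The plan is to build a cover greedily, reducing $k$ each time we add an element. First I would recall that a $\kappa$-tangle $\CT$ of order $k$ with $k=0$ is empty (by \ref{li:t0}), so the empty set is a cover and the claim is trivial; this handles the base case of an induction on $k$. So suppose $k\ge 1$ and let $\CT$ be a $\kappa$-tangle of order $k$. Since $k\ge 1$, every singleton $\{x\}$ has $\kappa(\{x\})<k$ only if $\val(\kappa)$ permits it, but regardless, by \ref{li:t1} applied to any set of order $<k$ we get an orientation; the key point is that $\CT$ is nonempty (e.g.\ $\bar{\{x\}}\in\CT$ for any $x$, or if $\val(\kappa)\ge k$ one argues directly, but in any case $\CT\ne\emptyset$ since $\emptyset$ has order $0<k$ and by \ref{li:t2} $\emptyset\notin\CT$, hence $\U\in\CT$).

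The main step: pick any $x\in\U$ (assume $\U\ne\emptyset$; if $\U=\emptyset$ then $\CT=\emptyset$ and the empty cover works). Consider $\CT':=\{X\setminus\{x\}\mid X\in\CT,\ x\in X\}$... actually a cleaner route is to restrict $\CT$ to a smaller connectivity function. Let me instead argue as follows. If some single element $x\in\U$ lies in every $X\in\CT$, then $\{x\}$ is a cover of size $1\le k$ and we are done. Otherwise, I want to find an element $x$ and a "shifted" tangle of order $k-1$ on $\U\setminus\{x\}$ — but the cleanest is: take any $x$, let $\CT_x:=\{X\in\CT\mid x\notin X\}$. These sets all avoid $x$. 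I would show $\{X\cap(\U\setminus\{x\})\mid X\in\CT_x\}$, together with a suitable order, behaves enough like a tangle of order $k-1$ on the connectivity function $\kappa$ restricted via contraction/deletion of $x$, so that induction gives a cover $S'$ of size $\le k-1$ inside $\U\setminus\{x\}$; then $S:=S'\cup\{x\}$ has size $\le k$ and covers $\CT$, since every $X\in\CT$ either contains $x$ (covered by $\{x\}$) or lies in $\CT_x$ (covered by $S'$). The technical content is checking that deleting $x$ drops the order of the relevant separations by at most $1$ (Lemma~\ref{lem:bp}(3) with $\val$ — or more precisely $|\kappa(X)-\kappa(X\setminus\{x\})|\le\kappa(\{x\})$, and one needs $\kappa(\{x\})\le 1$, which is NOT generally true, so care is needed here) and that the tangle axioms \ref{li:t0}--\ref{li:t3} survive restriction.

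The hard part will be making the reduction precise when $\val(\kappa)>1$, since then removing one element can drop $\kappa$ by more than $1$ and "order $k-1$" is the wrong bookkeeping. I suspect the intended proof avoids this entirely: instead of induction on $k$, induct on $|\U|$, or — most likely — use Lemma~\ref{lem:smallset}/Corollary~\ref{cor:smallset1} together with a direct greedy argument: maintain a set $S$ with $|S|=j$ and a tangle-like family on the contraction $\U\contract$ (Example~\ref{exa:contraction}) obtained by declaring $\{x\}\cup\ldots$ inseparable, showing that as long as $j<k$ and $S$ is not yet a cover, there is a separation $X\in\CT$ with $S\cap X=\emptyset$, and then some element can be added to $S$ while keeping everything consistent; termination follows because the family of $X\in\CT$ disjoint from $S$ shrinks. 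I would present the clean version: induction on $k$, with the reduction "pick $x$, pass to $\CT_x$ restricted to $\U\setminus\{x\}$, which is a tangle of order $k-1$ of the connectivity function $\kappa'$ on $\U\setminus\{x\}$ defined by $\kappa'(Y):=\kappa(Y)$" — and verify that $\kappa'(Y)<k-1$... hmm, this still needs $\kappa(Y)<k-1$ which doesn't follow. So the genuinely correct statement must be subtler, and I expect the author uses the contraction construction of Example~\ref{exa:contraction} to keep the order exactly right; reconstructing that bookkeeping is the step I'd spend the most care on.
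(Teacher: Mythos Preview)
Your proposal has a genuine gap that you yourself identify: the reduction ``pick any $x$, pass to $\CT_x$'' does not produce a tangle of order $k-1$ for any natural connectivity function. Setting $\kappa'(Y):=\kappa(Y)$ on $\U\setminus\{x\}$ gives no reason why the sets in $\CT_x$ should have $\kappa'$-value below $k-1$; they only have $\kappa$-value below $k$. Contraction in the sense of Example~\ref{exa:contraction} does not help either, since contracting a singleton is trivial. The difficulty is real and not just bookkeeping: there is no clean ``drop the order by one'' move available for general $\kappa$.

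The paper's proof avoids this entirely by never leaving the original tangle or connectivity function. It builds the cover greedily \emph{inside} $({\U},\kappa)$, maintaining the invariant that after $i$ steps the partial cover $S_i$ meets every $X\in\CT$ with $\kappa(X)<i$. To extend $S_i$, one picks an uncovered $X\in\CT$ of minimum order (necessarily $\kappa(X)=i$) and, among those, of minimum cardinality, and adds any $x\in X$ to form $S_{i+1}$. The key point---and the use of submodularity you were missing---is this: if some $Y\in\CT$ with $\kappa(Y)\le i$ still avoids $S_{i+1}$, then either $\kappa(X\cap Y)\le\kappa(X)$, whence $X\cap Y\in\CT$ (Lemma~\ref{lem:tangle-closure}(2)) is a strictly smaller uncovered set contradicting the choice of $X$; or by submodularity $\kappa(X\cup Y)<\kappa(Y)\le i$, and $X\cup Y$ is uncovered by $S_i$, contradicting the invariant. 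After $k$ steps every $X\in\CT$ is covered. This argument is insensitive to $\val(\kappa)$ and needs no auxiliary connectivity function.
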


\begin{proof}
  Let $\CT$ be a $\kappa$-tangle of order $k$.
  By induction on $i\ge 0$ we construct sets $S_i$ of cardinality $|S_i|\le i$ such that for all
  $X\in\CT$, if $S_i\cap X=\emptyset$ then
  $\kappa(X)\ge i$. Then $S_k$ is a cover
  of $\CT$.

  We let $S_0:=\emptyset$. For the inductive step, suppose that $S_i$
  is defined. If $X\cap S_i\neq\emptyset$ for all $X\in\CT$ with
  $\kappa(X)< i+1$, we let $S_{i+1}:=S_i$. Otherwise, let $X\in\CT$  such that
  \begin{eroman}
  \item $S_i\cap X=\emptyset$;
  \item subject to (i), $\kappa(X)$ is minimum;
  \item subject to (i) and (ii), $|X|$ is minimum.
  \end{eroman}
  By the induction hypothesis and our assumption that there be some
  $X'\in\CT$ such that $\kappa(X')< i+1$ and $X'\cap S_i=\emptyset$, we
  have $\kappa(X)=i$.  Let $x\in X$ and $S_{i+1}:=S_i\cup\{x\}$. 

  Let $Y\in\CT$ with $Y\cap S_{i+1}=\emptyset$. Suppose for
  contradiction that $\kappa(Y)< i+1$. If
  $\kappa(X\cap Y)\le i=\kappa(X)$, then $X\cap Y\in\CT$, and as
  $X\cap Y\subseteq X\setminus\{x\}\subset X$, this contradicts (ii) or (iii). Thus
  $\kappa(X\cap Y)>\kappa(X)$ and, by submodularity,
  $\kappa(X\cup Y)<\kappa(Y)\le i$. However,
  $S_i\cap (X\cup Y)=\emptyset$, so this contradicts the induction hypothesis.
\end{proof}

\begin{lem}
  Let $S$ be a cover of a $\kappa$-tangle $\CT$ of order $k$. Then
  $|S|\ge k/\val(\kappa)$.
\end{lem}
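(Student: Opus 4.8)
The plan is to argue by contradiction, using Corollary~\ref{cor:smallset1} as the key tool. That corollary is exactly the ``dual'' statement we need: a set that is small relative to $k$ and $\val(\kappa)$ cannot be oriented away from a tangle of order $k$, i.e.\ its complement must lie in the tangle. A cover, on the other hand, is a set meeting every member of the tangle, so the two notions are in direct tension once the cover becomes too small.

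Concretely, suppose $S$ is a cover of the $\kappa$-tangle $\CT$ of order $k$ and that $|S|<k/\val(\kappa)$. Applying Corollary~\ref{cor:smallset1} with $X:=S$ yields $\bar S\in\CT$. But since $S$ is a cover we have $S\cap X\neq\emptyset$ for all $X\in\CT$, and taking $X=\bar S$ gives $S\cap\bar S\neq\emptyset$, which is absurd. Hence $|S|\ge k/\val(\kappa)$. That is the whole argument; no computation beyond invoking the corollary is required.

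The only point that needs a word of care --- and the closest thing to an obstacle here --- is the degenerate bookkeeping around $\val(\kappa)$, so that $k/\val(\kappa)$ is meaningful. If $k=0$ the claimed bound is just $|S|\ge 0$, which is trivial. If $k\ge 1$, then a tangle of order $k$ forces $\kappa$ to be nontrivial: axiom~\ref{li:t1} orients every singleton separation $\{x\}$ (as $\kappa(\{x\})=0<k$), while axiom~\ref{li:t3} forbids $\{x\}\in\CT$, so on a trivial connectivity function one could not also keep axiom~\ref{li:t2} consistent. Thus $\val(\kappa)\ge 1$ (as noted right after Lemma~\ref{lem:bp}), and the quantity $k/\val(\kappa)$ is well-defined and positive, so the contradiction above goes through.
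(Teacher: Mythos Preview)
Your argument is correct and is essentially identical to the paper's own proof: assume $|S|<k/\val(\kappa)$, invoke Corollary~\ref{cor:smallset1} to get $\bar S\in\CT$, and observe that $S\cap\bar S=\emptyset$ contradicts $S$ being a cover. The extra paragraph about degenerate cases is more than the paper bothers with (the paper, like Corollary~\ref{cor:smallset1}, tacitly assumes $\val(\kappa)\ge 1$), but it does no harm.
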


\begin{proof}
  If $|S|<k/\val(\kappa)$, then $\bar S\in\CT$ by
  Corollary~\ref{cor:smallset1}. As $S\cap\bar S=\emptyset$, this
  contradicts $S$ being a cover of $\CT$.
\end{proof}

The following theorem is a generalisation of a result for graphs due to \cite{ree97}.
A cover $S$ of a tangle $\CT$ is \emph{minimum} if its cardinality
$|S|$ is minimal.

\begin{theo}\label{theo:cover-wl}
  Let $\CT$ be a $\kappa$-tangle of order $k$, and let $S$ be a
  minimum cover of $\CT$. Then $S$ is well-linked. 
\end{theo}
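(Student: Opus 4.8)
The plan is to argue by contradiction: assume $S$ is a minimum cover of $\CT$ that is not well-linked. Call $W\subseteq\U$ a \emph{witness} if $\kappa(W)<\min\{|S\cap W|,|S\setminus W|\}$; by assumption there is at least one. Since the defining condition is invariant under $W\mapsto\bar W$ (as $\kappa(W)=\kappa(\bar W)$), and since any witness $W$ satisfies $\kappa(W)<|S\cap W|\le|S|\le k$ by Lemma~\ref{lem:cover} and is therefore oriented by $\CT$ via \ref{li:t1}, I may fix a witness $X$ lying in $\CT$ whose order $m:=\kappa(X)$ is minimum over all witnesses; thus $|S\cap X|\ge m+1$ and $|S\setminus X|\ge m+1$. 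The goal is then to produce a witness of order at most $m-1$, contradicting the choice of $X$.

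The construction exploits the minimality of the cover. First I would pick some $t\in S\setminus X$ and invoke minimality of $S$: since $S\setminus\{t\}$ is not a cover there is $Z\in\CT$ with $Z\cap S=\{t\}$. The crucial point is that $Z\cap X$ is disjoint from $S$ (because $(Z\cap X)\cap S=\{t\}\cap X=\emptyset$, since $t\notin X$), so $Z\cap X\notin\CT$ as $S$ is a cover; but $Z,X\in\CT$, so Lemma~\ref{lem:tangle-closure}(2) forces $\kappa(Z\cap X)\ge k$. Submodularity now yields $\kappa(Z\cup X)\le\kappa(Z)+\kappa(X)-\kappa(Z\cap X)\le(k-1)+m-k=m-1$, and in particular, using $\kappa\ge 0$ (Lemma~\ref{lem:bp}), that $m\ge1$; moreover $Z\cup X\in\CT$ by Lemma~\ref{lem:tangle-closure}(1) since $Z\cup X\supseteq X$.

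The final step is to check that $Z\cup X$ is indeed a witness. On one side, $|S\cap(Z\cup X)|\ge|S\cap X|\ge m+1>\kappa(Z\cup X)$. On the other, $S\setminus(Z\cup X)=(S\setminus X)\setminus Z=(S\setminus X)\setminus\{t\}$ (the last equality because $Z\cap S=\{t\}$ and $t\in S\setminus X$), so $|S\setminus(Z\cup X)|=|S\setminus X|-1\ge m>\kappa(Z\cup X)$. Hence $\kappa(Z\cup X)<\min\{|S\cap(Z\cup X)|,|S\setminus(Z\cup X)|\}$, contradicting the minimality of $m=\kappa(X)$. I expect the only points needing care to be the reduction in the first paragraph (so that the minimal-order witness may be taken inside $\CT$, which is exactly what makes Lemma~\ref{lem:tangle-closure}(2) applicable to $Z\cap X$) and the bookkeeping in the last step confirming that \emph{both} sides of $Z\cup X$ remain strictly larger than its reduced order; the heart of the argument is the one-line uncrossing $\kappa(Z\cup X)\le m-1$.
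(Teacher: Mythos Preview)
Your proof is correct and takes a genuinely different route from the paper's. The paper fixes any witness $X\in\CT$, invokes Lemma~\ref{lem:wl1} to find a set $Y\subseteq\bar X$ that is free in $\bar X$ with $\pi_{\bar X}(Y)=\kappa(X)$, and then shows directly that $(S\cap X)\cup Y$ is a cover of $\CT$ of size at most $|S\cap X|+\kappa(X)<|S|$, contradicting the minimality of $|S|$. The uncrossing step is the same as yours (if $X'\in\CT$ misses $S\cap X$ then $X\cap X'\notin\CT$, hence $\kappa(X\cap X')\ge k$, hence $\kappa(X\cup X')<\kappa(X)$), but the contradiction is reached by exhibiting a smaller cover rather than a smaller-order witness.

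Your argument is more elementary: it avoids the polymatroid $\pi_{\bar X}$ and the free-set machinery of Lemma~\ref{lem:wl1} entirely, replacing them with a minimal-counterexample descent on the order of the witness. The price is that your proof is slightly less constructive---you do not produce an explicit smaller cover---but for the purpose of this theorem that is irrelevant. The paper's approach has the advantage of tying the result back into the free-set framework that also drives the proof of Theorem~\ref{theo:conn-wl}(2), so it fits the surrounding narrative; yours has the advantage of being self-contained once one knows Lemma~\ref{lem:cover} and Lemma~\ref{lem:tangle-closure}.
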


\begin{proof}
  Suppose for contradiction that $S$ is not well linked. Then there is
  a set $X$ such that $\kappa(X)<\min\{|S\cap X|,|S\setminus
  X|\}$. As $\kappa(X)<|S|\le k$, either $X\in\CT$ or $\bar
  X\in\CT$. Without loss of generality we assume that $X\in\CT$.

  By Lemma~\ref{lem:wl1} there is a set $Y$ that is free in $\bar X$
  such that $\pi_X(Y)=\kappa(X)$. We claim that $(S\cap X)\cup Y$ is a cover of
  $\CT$. 
  To see this, let $X'\in\CT$ such that $S\cap X\cap X'=\emptyset$. Then $X\cap
  X'\not\in\CT$, because $S$ is a cover of $\CT$. By
  Lemma~\ref{lem:tangle-closure}(2), this implies
  $\kappa(X\cap X')\ge k>\kappa(X')$. By symmetry and submodularity, 
  \[
  \kappa(\bar X\cap\bar X')=\kappa(X\cup
  X')<\kappa(X).
  \]
  If $X'\cap Y=\emptyset$, then $Y\subseteq\bar X\cap\bar
  X'\subseteq\bar X$ and thus $\kappa(X)=\pi_X(Y)\le\kappa(\bar X\cap\bar
  X')<\kappa(X)$, which is a contradiction. Hence $X'\cap
  Y\neq\emptyset$. This proves that $(S\cap X)\cup Y$ is a cover of $\CT$.

  However, 
  \[
  |(S\cap X)\cup Y|=|S\cap X|+|Y|\le|S\cap X|+\kappa(X)<|S\cap
  X|+|S\setminus X|=|S|, 
  \]
  where $|Y|\le\pi_{\bar X}(Y)\le\kappa(\bar X)=\kappa(X)$ holds
  because $Y$ is free in $\bar X$. This contradicts the minimality of
  $|S|$.
\end{proof}

Combining this lemma with Example~\ref{exa:wl-tangle}, we thus have.

\begin{cor}
  \begin{enumerate}
  \item If $\kappa$ has a well-linked set of cardinality at least $3k$, then
    there is a $\kappa$-tangle of order $k$.
  \item If there is a $\kappa$-tangle of order $k$, then $\kappa$ has
    a well-linked set of cardinality at least $k/\val(\kappa)$.
  \end{enumerate}
 \end{cor}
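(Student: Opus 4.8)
The plan is to read off both parts of the corollary directly from the two constructions already established, so very little new work is needed.

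For part (1): suppose $\kappa$ has a well-linked set $W$ of cardinality $|W|\ge 3k$. I would like to exhibit a $\kappa$-tangle of order $k$. The natural candidate is the tangle $\CT_W$ of Example~\ref{exa:wl-tangle}, which has order $\ceil{|W|/3}$. Since $|W|\ge 3k$ we have $\ceil{|W|/3}\ge k$, so $\CT_W$ is a tangle of order at least $k$. Now invoke the truncation operation introduced in Section on Extensions, Truncations, and Separations: the truncation of $\CT_W$ to order $k$ is a $\kappa$-tangle of order exactly $k$. (One caveat worth a parenthetical remark: if $k=0$ the statement is vacuous/trivial since the empty family is a tangle of order $0$, and the hypothesis $|W|\ge 3k=0$ is automatic; for $k\ge 1$, note $|W|\ge 3k\ge 3\ge 2$, so Example~\ref{exa:wl-tangle} applies.) That finishes (1).

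For part (2): suppose there is a $\kappa$-tangle $\CT$ of order $k$. By Lemma~\ref{lem:cover} it has a cover $S$ with $|S|\le k$; pass to a minimum cover $S$, which then has $|S|\le k$ as well. By Theorem~\ref{theo:cover-wl}, $S$ is well-linked. It remains to bound $|S|$ from below: by the Lemma immediately preceding Theorem~\ref{theo:cover-wl} (stating that any cover $S$ of a tangle of order $k$ satisfies $|S|\ge k/\val(\kappa)$), we get $|S|\ge k/\val(\kappa)$. Hence $S$ is a well-linked set of cardinality at least $k/\val(\kappa)$, as required. (If $\val(\kappa)=0$ then $\kappa$ is trivial and has no tangle of positive order, and for $k=0$ the statement is again trivial, so we may assume $\val(\kappa)\ge 1$.)

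There is essentially no obstacle here: both directions are immediate bookkeeping once one assembles Example~\ref{exa:wl-tangle}, the truncation construction, Lemma~\ref{lem:cover}, the lower-bound lemma for covers, and Theorem~\ref{theo:cover-wl}. The only points requiring a word of care are the edge cases ($k=0$, and checking $|W|\ge 2$ so that Example~\ref{exa:wl-tangle} is applicable in part (1)), and remembering that a tangle of order $\ge k$ truncates to one of order exactly $k$. I would therefore write the proof as two short paragraphs, one per part, citing the relevant earlier results by name.

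\begin{proof}
  For (1), let $W$ be a well-linked set with $|W|\ge 3k$. If $k=0$ the empty family is a $\kappa$-tangle of order $0$, so assume $k\ge 1$; then $|W|\ge 3\ge 2$, so by Example~\ref{exa:wl-tangle} the set $\CT_W$ is a $\kappa$-tangle of order $\ceil{|W|/3}\ge k$. Its truncation to order $k$ is a $\kappa$-tangle of order $k$.

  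For (2), let $\CT$ be a $\kappa$-tangle of order $k$. If $k=0$ then the empty set is a well-linked set of cardinality $0=k/\val(\kappa)$ (interpreting the bound trivially), so assume $k\ge 1$, and note that then $\val(\kappa)\ge 1$ since otherwise $\kappa$ would be trivial and admit no tangle of positive order. Let $S$ be a minimum cover of $\CT$. By Lemma~\ref{lem:cover} we have $|S|\le k$, and by the lemma preceding Theorem~\ref{theo:cover-wl} we have $|S|\ge k/\val(\kappa)$. Finally, by Theorem~\ref{theo:cover-wl} the minimum cover $S$ is well-linked. Thus $S$ is a well-linked set of cardinality at least $k/\val(\kappa)$.
\end{proof}
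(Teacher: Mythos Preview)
Your proof is correct and follows essentially the same route as the paper, which simply states that the corollary follows by combining Theorem~\ref{theo:cover-wl} (and the cover-size lower bound lemma preceding it) with Example~\ref{exa:wl-tangle}. Your treatment is more explicit about the truncation step in part~(1) and about edge cases, but the substance is identical; the citation of Lemma~\ref{lem:cover} in part~(2) is harmless but not actually needed, since only the lower bound on $|S|$ is used.
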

 
 Lemma~\ref{lem:cover} has the following powerful generalisation,
 which sometimes allows us to control a tangle of order $k$ by a set
 of cardinality bounded in terms of $k$. A \emph{triple cover} of a
 $\kappa$-tangle $\CT$ is a set $S\subseteq {\U}$ such that
 $S\cap X_1\cap X_2\cap X_3\neq\emptyset$ for all $X_1,X_2,X_3\in\CT$.

\begin{theo}[\cite{groschwe15b}]\label{lem:triple-cover}
  There is a function $f:\NN\to\NN$ such that every tangle order $k$ has a triple
  cover of cardinality at most $f(k)$.
\end{theo}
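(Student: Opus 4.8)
The plan is to adapt the argument for Lemma~\ref{lem:cover}, building a bounded‑size set $S$ greedily, but now controlling not single tangle sets but triple intersections. We construct a nested sequence $S_0\subseteq S_1\subseteq\dots\subseteq S_m=S$ together with a decreasing ``potential'', where the potential at stage $i$ records, for the cheapest triple $(X_1,X_2,X_3)\in\CT^3$ currently not met by $S_i$ (i.e.\ with $S_i\cap X_1\cap X_2\cap X_3=\emptyset$), the triple $(\kappa(X_1),\kappa(X_2),\kappa(X_3))$ arranged as a nonincreasing tuple, ordered lexicographically; among triples achieving the minimum potential we take one minimising, say, $|X_1\cap X_2\cap X_3|$. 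At each step we add one element $x$ of the corresponding $X_1\cap X_2\cap X_3$ to $S_i$ to obtain $S_{i+1}$. Since each coordinate of the potential is a connectivity value strictly below $k$, the potential lives in a finite well‑ordered set of size at most $k^3$‑ish; the main task is to show that after each addition the potential strictly decreases, which bounds $m$ and hence $|S|$ by a function $f(k)$ depending only on $k$.

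The key step is the strict‑decrease claim, and this is where submodularity and Lemma~\ref{lem:tangle-closure} do the work. Suppose after adding $x\in X_1\cap X_2\cap X_3$ there is still an unmet triple $(Y_1,Y_2,Y_3)$, so $S_{i+1}\cap Y_1\cap Y_2\cap Y_3=\emptyset$; we must show its potential tuple is lexicographically smaller. Because $x\notin Y_1\cap Y_2\cap Y_3$, some $Y_j$, say $Y_1$, does not contain $x$; then $X_1\cap Y_1\subsetneq X_1$ (it misses $x$), and $X_1\cap Y_1\in\CT$ provided $\kappa(X_1\cap Y_1)<k$, by Lemma~\ref{lem:tangle-closure}(2). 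If $\kappa(X_1\cap Y_1)\le\kappa(X_1)$, replacing $X_1$ by $X_1\cap Y_1$ in the triple $(X_1\cap Y_1,X_2,X_3)$ gives a triple of no larger potential but strictly smaller intersection (it was still unmet by $S_i$, since $S_i\cap X_1\cap X_2\cap X_3=\emptyset$), contradicting the minimality that defined the stage‑$i$ potential — unless the stage‑$i$ triple was already that one, in which case $x$ was chosen from a strictly smaller set, impossible. Hence $\kappa(X_1\cap Y_1)>\kappa(X_1)$, and submodularity forces $\kappa(X_1\cup Y_1)<\kappa(Y_1)$. Now $X_1\cup Y_1\in\CT$ (extension of $X_1$, Lemma~\ref{lem:tangle-closure}(1), once we check its order $<k$), and crucially $S_i\cap(X_1\cup Y_1)\cap Y_2\cap Y_3=\emptyset$ because $S_i$ misses $X_1\cap X_2\cap X_3$ on one side and $Y_1\cap Y_2\cap Y_3$ on the other — here I need the cleaner intermediate: track instead the unmet triple after adding $x$ and compare it to the stage‑$i$ minimum. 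The upshot is a triple $(X_1\cup Y_1,Y_2,Y_3)$ unmet by $S_i$ with first coordinate $\kappa(X_1\cup Y_1)<\kappa(Y_1)$, which pushes its potential strictly below that of $(Y_1,Y_2,Y_3)$ and must therefore be $\ge$ the stage‑$i$ potential; chaining these inequalities yields that the stage‑$(i+1)$ minimum potential is strictly below the stage‑$i$ one.

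I expect the main obstacle to be bookkeeping the ``unmet by $S_i$'' condition through the replacements: when we swap $X_1$ for $X_1\cap Y_1$ or $X_1\cup Y_1$, we must verify the new triple is genuinely still disjoint from $S_i$ (not just $S_{i+1}$) on the full triple intersection, and this requires combining the two emptiness facts $S_i\cap X_1\cap X_2\cap X_3=\emptyset$ and $S_{i+1}\cap Y_1\cap Y_2\cap Y_3=\emptyset$ correctly — exactly the maneuver in the last paragraph of the proof of Lemma~\ref{lem:cover}, but with three‑fold intersections it is more delicate and may force the potential to be a $3$‑tuple (sorted) rather than a single number so that we have enough ``room'' to always decrease. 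A secondary technical point: one must confirm at each replacement that the order of the newly formed set is still $<k$ so that Lemma~\ref{lem:tangle-closure} applies; this follows because we only ever replace a set by one of no larger order (the intersection case) or by a set whose order we have just bounded strictly below that of some tangle member of order $<k$ (the union case). Once strict decrease is in hand, $m\le k^3$ (or a similar explicit bound), each step adds one element, so $|S|\le f(k)$ for an explicit $f$, completing the proof.
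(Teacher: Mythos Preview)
The paper does not prove this theorem; it cites \cite{groschwe15b} and writes ``We omit the proof.'' So there is no proof in the text to compare against. Evaluated on its own, though, your proposal has a real gap, exactly where you yourself flag the ``main obstacle''.

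In your Case~B you need the replacement triple $(X_1\cup Y_1, Y_2, Y_3)$ to be unmet by $S_i$, that is, $S_i\cap(X_1\cup Y_1)\cap Y_2\cap Y_3=\emptyset$. This splits into $S_i\cap Y_1\cap Y_2\cap Y_3=\emptyset$, which you have, and $S_i\cap X_1\cap Y_2\cap Y_3=\emptyset$, which you do not: the only hypothesis relating $S_i$ and $X_1$ is $S_i\cap X_1\cap X_2\cap X_3=\emptyset$, and $X_2\cap X_3$ bears no relation to $Y_2\cap Y_3$. In the proof of Lemma~\ref{lem:cover} the analogous step succeeds because $S_i\cap X=\emptyset$ and $S_i\cap Y=\emptyset$ trivially give $S_i\cap(X\cup Y)=\emptyset$; with three-fold intersections that closure simply fails. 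Switching to $(X_1\cup Y_1,X_2,X_3)$ runs into the symmetric obstruction (you would need $S_i\cap Y_1\cap X_2\cap X_3=\emptyset$). Your Case~A is fine, but Case~B does not go through, and I do not see a local repair that stays inside the sorted-triple potential framework; the difficulty is structural, not bookkeeping. (A minor aside: the conclusion has the direction reversed---what the argument would establish, were Case~B to work, is that the minimum potential strictly \emph{increases} from stage $i$ to stage $i+1$, which is indeed the direction that bounds $m$.) The published proof in \cite{groschwe15b} is substantially more involved than a direct lift of the cover argument.
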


We omit the proof.

\subsection{Tangles in Graphs}

Robertson and Seymour originally defined tangles for graphs rather
than general connectivity functions. We will see that a tangle of a
graph $G$, which we will call a \emph{$G$-tangle}, is almost the same as
$\kappa_G$-tangle, up to small issues regarding isolated
vertices, isolated edges, and pendant edges.

A \emph{separation} of a graph $G$ is a pair $(A,B)$ of subgraphs of
$G$ such that $A\cup B:=(V(A)\cup V(B),E(A)\cup E(B))=G$ and
$E(A)\cap E(B)=\emptyset$. The \emph{order} of the separation $(A,B)$
is $\ord(A,B):=|V(A)\cap V(B)|$. Note that a separation $(A,B)$ is essentially,
but not exactly, the same as the partition $(E(A),E(B))$ of $E(G)$. A
separation $(A,B)$ is \emph{trivial} if $A=G$ or $B=G$.

A \emph{$G$-tangle} of order $k$ is a family $\CS$ of separations of
$G$ satisfying the following conditions.
  \begin{nlist}{GT}
  \setcounter{nlistcounter}{-1}
  \item\label{li:gt0}
    The order of all separations $(A,B)\in\CS$ is less than $k$.
  \item\label{li:gt1}
    For all separations $(A,B)$ of $G$ of order less than $k$, either
    $(A,B)\in\CS$ or $(B,A)\in\CS$.
  \item\label{li:gt2}
    If $(A_1,B_1), (A_2,B_2),(A_3,B_3)\in\CS$ then $A_1\cup A_2\cup
    A_3\neq G$.
  \item\label{li:gt3}
    $V(A)\neq V(G)$ for all $(A,B)\in\CS$.
\end{nlist}

\begin{exa}\label{exa:graph-tangle}
  Every graph $G$ with $E(G)\neq\emptyset$ has a $G$-tangle of order
  $2$. 

  Indeed, let $e\in E(G)$ and 
  \[
  \CT_e:=\{(A,B)\mid(A,B)\text{ separation of $G$ of order less than
    $2$ with }e\in E(B)\}.
  \]
  We claim that $\CT$ is a tangle of order $2$. It trivially satisfies
  \ref{li:gt0}, \ref{li:gt1}, and \ref{li:gt2}. It satisfies
  \ref{li:gt3}, because if $(A,B)\in \CT_e$ then $|V(B)|\ge 2$ and
  $|V(A)\cap V(B)|\le 1$.

  This illustrates the difference between $G$-tangles and
  $\kappa_G$-tangle, because if $G$ is a path of length $1$ then it does not even have
  a $\kappa_G$ tangle of order $1$, and if $G$ is a path of length
  $2$, it has a
  $\kappa_G$ tangle of order $1$, but not a $\kappa_G$-tangle of order
  $2$.
  \uend
\end{exa}

We call an edge of a graph \emph{isolated} if both of its endvertices
have degree $1$. We call an edge \emph{pendant} if it
is not isolated and has one endvertex of degree $1$. 

\begin{prop}\label{prop:graph-tangle}
  Let $G$ be a graph and $k\ge 0$.
  \begin{enumerate}
  \item If $\CT$ is a $\kappa_G$-tangle of order $k$, then 
    \[
    \CS:=\big\{(A,B)\bigmid(A,B)\text{ separation of $G$ of order $<k$
      with
    }E(B)\in\CT\big\}
    \]
    is a $G$-tangle of order $k$.
  \item If $\CS$ is a $G$-tangle of order $k$, then
    \[
    \CT:=\big\{ E(B)\bigmid (A,B)\in\CS\big\}
    \]
    is a $\kappa_G$-tangle of order $k$, unless 
    \begin{eroman}
    \item either $k=1$ and there is an isolated vertex $v\in V(G)$ 
      such that $\CS$ is the set of all separations $(A,B)$ of order
      $0$ with with $v\in V(B)\setminus V(A)$,
    \item or $k=1$ and there is an isolated edge $e\in E(G)$ such that
      and $\CS$ is the set of all separations $(A,B)$ of order $0$
      with $e\in E(B)$,
    \item or $k=2$ and there is an isolated or pendant edge
      $e=vw\in E(G)$ and $\CS$ is the set of all separations $(A,B)$
      of order at most $1$ with $e\in E(B)$.
\end{eroman}
  \end{enumerate}
\end{prop}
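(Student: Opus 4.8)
The plan is to prove the two implications of Proposition~\ref{prop:graph-tangle} by checking the four tangle axioms in each direction, carefully isolating where the dictionary between graph-separations and edge-set-separations breaks down. The key structural observation underlying everything is the following: if $(A,B)$ is a separation of $G$, then $(E(A),E(B))$ is a partition of $E(G)$ with $\partial_G(E(B))\subseteq V(A)\cap V(B)$, and conversely, given a set $Y\subseteq E(G)$, there is a \emph{canonical} separation $(A_Y,B_Y)$ with $E(A_Y)=\bar Y$, $E(B_Y)=Y$, $V(B_Y)=V(Y)$, $V(A_Y)=V(\bar Y)$, so that $V(A_Y)\cap V(B_Y)=\partial_G(Y)$ and $\ord(A_Y,B_Y)=\kappa_G(Y)$. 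The subtlety is that several graph-separations $(A,B)$ can give the same $E(B)$: they differ only in how isolated vertices are distributed and whether the ``small side'' carries spurious low-order structure. So the map $(A,B)\mapsto E(B)$ is surjective onto edge-subsets but not injective, and this is precisely the source of the exceptional cases (i)--(iii).

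First I would prove (1). Given a $\kappa_G$-tangle $\CT$ of order $k$, I define $\CS$ as stated and verify \ref{li:gt0}--\ref{li:gt3}. Axiom \ref{li:gt0} is immediate from the definition of $\CS$. For \ref{li:gt1}: given a separation $(A,B)$ of order $<k$, we have $\kappa_G(E(B))\le\ord(A,B)<k$, so by \ref{li:t1} either $E(B)\in\CT$ or $E(A)=\bar{E(B)}\in\CT$, giving $(A,B)\in\CS$ or $(B,A)\in\CS$. For \ref{li:gt2}: if $(A_i,B_i)\in\CS$ for $i=1,2,3$ then $E(B_i)\in\CT$, so by \ref{li:t2} there is an edge $e\in E(B_1)\cap E(B_2)\cap E(B_3)$; then $e\notin E(A_i)$ for any $i$, and since every edge of $G$ lies in $E(A_i)\cup E(B_i)$, an endvertex argument shows $A_1\cup A_2\cup A_3\ne G$ --- more precisely, both endvertices of $e$ would have to appear in every $A_i$, but $e\notin E(A_i)$ forces at least one endvertex of $e$ to lie outside $V(A_i)\cap V(B_i)$, hence... here I need to be slightly careful: the cleanest route is that $e\in E(B_i)$ and $e\notin E(A_i)$ means $A_1\cup A_2\cup A_3$ omits the edge $e$, so it is not all of $G$. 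For \ref{li:gt3}: if $(A,B)\in\CS$ and $V(A)=V(G)$, then $E(B)$ contains no edge with an endvertex outside $V(A)$... instead, the direct argument is: $V(A)=V(G)$ means $V(A)\cap V(B)=V(B)$, so $\kappa_G(E(B))=|V(B)|$; combined with $|V(B)|\le$ (number of vertices touched by $E(B)$), and using $V(E(B))\subseteq V(B)$ together with tangle axiom \ref{li:t3} and Corollary~\ref{cor:smallset2} (if $|V(E(B))|<k$ then $\bar{E(B)}\in\CT$, contradiction), one derives a contradiction. I would streamline this using Lemma~\ref{lem:tangle-closure} and the fact that $E(B)\in\CT$ implies $E(B)\neq\emptyset$ and $E(B)$ is not a singleton.

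For (2), given a $G$-tangle $\CS$, set $\CT:=\{E(B)\mid(A,B)\in\CS\}$. Axiom \ref{li:t0} holds since $\kappa_G(E(B))\le\ord(A,B)<k$. Axiom \ref{li:t2} holds: given $E(B_i)\in\CT$, apply \ref{li:gt2} to get a vertex or edge witnessing $A_1\cup A_2\cup A_3\ne G$; since the $A_i$ cover all vertices unless... actually $A_1\cup A_2\cup A_3\ne G$ means some edge $e$ is missing, so $e\in E(B_i)$ for all $i$, giving $E(B_1)\cap E(B_2)\cap E(B_3)\ne\emptyset$. For \ref{li:t1}: given $Y\subseteq E(G)$ with $\kappa_G(Y)<k$, form the canonical separation $(A_Y,B_Y)$ of order $\kappa_G(Y)<k$; by \ref{li:gt1}, $(A_Y,B_Y)\in\CS$ or $(B_Y,A_Y)\in\CS$, so $Y=E(B_Y)\in\CT$ or $\bar Y=E(A_Y)\in\CT$. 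The delicate axiom is \ref{li:t3}, and this is where the exceptions arise. Suppose $\{e\}\in\CT$ for some edge $e=vw$; then some $(A,B)\in\CS$ has $E(B)=\{e\}$, so $B$ is the graph with edge $e$ and vertex set $V(B)\supseteq\{v,w\}$, and $V(A)\cap V(B)\subseteq\{v,w\}$ has size $<k$, while $A=G-e$ with possibly $v,w$ removed. If $\deg_G(v)=\deg_G(w)=1$ then $e$ is isolated and $\ord(A,B)$ can be $0$: this is case (ii) (for $k=1$); case (iii) captures $k=2$ when $e$ is isolated or pendant. Similarly $\emptyset\in\CT$ --- which also violates \ref{li:t3} vacuously? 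No: $\emptyset$ is not a singleton, but $\emptyset\in\CT$ would force (via \ref{li:t2} with $X_1=X_2=X_3=\emptyset$) a contradiction unless... actually if $(A,B)\in\CS$ with $E(B)=\emptyset$ then $B$ consists of isolated vertices only, and $V(A)\cap V(B)=V(B)$, so $\ord(A,B)=|V(B)|<k$; the minimal such case is a single isolated vertex with $k=1$: case (i). I would organize this by arguing: \emph{if none of (i)--(iii) holds, then $\CT$ satisfies \ref{li:t3}}, by a case analysis on the possible $(A,B)\in\CS$ producing a singleton or empty set in $\CT$, using \ref{li:gt3} to rule out $E(A)=\emptyset$-type degeneracies, and using that in the absence of isolated/pendant edges every order-$<2$ separation with $|E(B)|\le1$ is trivial on the $B$ side.

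**Main obstacle.** The routine direction is (1) and the axioms \ref{li:t0}--\ref{li:t2} of (2); these are bookkeeping. The real work is the exceptional-case analysis for axiom \ref{li:t3} in part (2): one must show that \emph{exactly} the three listed families of degenerate $G$-tangles --- and no others --- produce an $\CS$ whose edge-image $\CT$ contains a singleton or the empty set, and that outside these families $\CT$ is a genuine tangle. This requires a patient enumeration of small separations: separations $(A,B)$ of order $0$ with $E(B)\in\{\emptyset,\{e\}\}$, and separations of order $1$ with $E(B)=\{e\}$, together with checking which of these can actually belong to a $G$-tangle (using \ref{li:gt2} and \ref{li:gt3} to exclude the rest). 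I expect the bulk of the proof length to sit here, and the cleanest presentation is to first show \ref{li:t3} can only fail via such a degenerate separation, then classify those.
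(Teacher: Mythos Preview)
The paper explicitly omits this proof (``I omit the simple, but tedious proof''), so there is no argument to compare against; I can only assess correctness. Part~(1) is fine as you outline it; the \ref{li:gt3} step via Corollary~\ref{cor:smallset2} is the clean way to do it.

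There is, however, a genuine gap in your verification of \ref{li:t2} for part~(2). You write that $A_1\cup A_2\cup A_3\neq G$ ``means some edge $e$ is missing''. That is false: it may instead be that some \emph{vertex} $v$ is missing from $V(A_1)\cup V(A_2)\cup V(A_3)$ while every edge is covered, and such a $v$ is necessarily isolated in $G$. In that situation you obtain no edge in $E(B_1)\cap E(B_2)\cap E(B_3)$, and your argument stalls. This is not a dismissible corner case: it is exactly the mechanism behind exception~(i), which is a failure of \ref{li:t2} (via $\emptyset\in\CT$), not of \ref{li:t3}. So your framing that ``\ref{li:t0}--\ref{li:t2} are bookkeeping and all the work is in \ref{li:t3}'' is off; exception~(i) lives in \ref{li:t2}.

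The repair for $k\ge 2$ is a replacement argument. Given $(A_i,B_i)\in\CS$ with $E(B_i)=X_i$ and an isolated $v\notin V(A_1)$, form $(A_1',B_1')$ by moving $v$ from $B_1$ to $A_1$; the order is unchanged. Since $k\ge 2$, the order-$1$ separation $(Q,P):=((\{v\},\emptyset),G)$ is forced into $\CS$ by \ref{li:gt3}. If $(B_1',A_1')\in\CS$ then $A_1\cup B_1'\cup Q=G$, contradicting \ref{li:gt2}; hence $(A_1',B_1')\in\CS$. Iterating over all missing isolated vertices, you may assume $V(A_1)\cup V(A_2)\cup V(A_3)=V(G)$, and then $X_1\cap X_2\cap X_3=\emptyset$ genuinely gives $A_1\cup A_2\cup A_3=G$, contradicting \ref{li:gt2}. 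For $k=1$ this trick is unavailable, and you must argue directly that a $G$-tangle of order~$1$ not of type~(i) or~(ii) points to a connected component with at least two edges, which then yields a $\kappa_G$-tangle. Your plan for \ref{li:t3} and exceptions~(ii), (iii) is otherwise on the right track.
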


I omit the simple, but tedious proof.

A \emph{star} is a connected graph in which at most $1$ vertex has
degree greater than $1$. Note that we admit degenerate stars
consisting of a single vertex or a single edge.

\begin{cor}\label{cor:graph-tangle}
  Let $G$ be a graph that has a $G$-tangle of order $k$. Then $G$ has
 a $\kappa_G$-tangle of order $k$, unless $k=1$ and $G$ only has isolated edges
  or $k=2$ and all connected components of $G$ are stars.
\end{cor}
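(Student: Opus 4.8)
The plan is to feed the given $G$-tangle into Proposition~\ref{prop:graph-tangle}(2) and to patch by hand the two small orders at which its conversion can fail. Let $\CS$ be the given $G$-tangle of order $k$. If $k\notin\{1,2\}$ (in particular if $k=0$, where $\CS=\emptyset$ by \ref{li:gt0}), then Proposition~\ref{prop:graph-tangle}(2) already gives that $\{E(B)\mid(A,B)\in\CS\}$ is a $\kappa_G$-tangle of order $k$, because none of the three exceptional cases of that proposition can occur --- each of them forces $k=1$ or $k=2$. So it remains to treat $k\in\{1,2\}$; by the hypothesis of the corollary we then know that $G$ has a non-isolated edge (if $k=1$), respectively that some connected component of $G$ is not a star (if $k=2$).

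In these two cases I would \emph{not} attempt to repair $\CS$: it may genuinely be one of the families excluded in Proposition~\ref{prop:graph-tangle}(2), such as the one pinned to a pendant edge, and such a family exists for many graphs that are far from being unions of stars. Instead I build a $\kappa_G$-tangle directly from a single well-chosen edge. The device is: for any edge $e\in E(G)$ with $\kappa_G(\{e\})=|\partial_G(\{e\})|\ge k$, the family
\[
\CT_e:=\bigl\{Y\subseteq E(G)\bigmid \kappa_G(Y)<k,\ e\in Y\bigr\}
\]
is a $\kappa_G$-tangle of order $k$: axiom \ref{li:t0} is immediate; \ref{li:t1} holds because any $Y$ with $\kappa_G(Y)<k$ has $e\in Y$ or $e\in\bar Y$; \ref{li:t2} holds because every member of $\CT_e$ contains $e$, so every triple intersection contains $e$; and \ref{li:t3} holds because $\{x\}\in\CT_e$ would force $x=e$ and $\kappa_G(\{e\})<k$, against the choice of $e$. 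Finally I locate such an edge. For $k=1$: a non-isolated edge $e=vw$ has an endvertex, say $v$, of degree $\ge2$, hence $v\in\partial_G(\{e\})$ and $\kappa_G(\{e\})\ge1$. For $k=2$: choose a component $C$ that is not a star; since $C$ is connected and not a star, it contains an edge $e=vw$ with $\deg_C v\ge2$ and $\deg_C w\ge2$ (otherwise every edge of $C$ has an endvertex of degree $1$, which forces $C$ to be a star --- e.g.\ by looking at the first edge of a shortest path between two vertices of degree $\ge2$), and then $\partial_G(\{e\})=\{v,w\}$, so $\kappa_G(\{e\})=2$. In either case $\CT_e$ is the required $\kappa_G$-tangle of order $k$.

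The main obstacle is conceptual rather than computational: one must recognise that for $k\in\{1,2\}$ the output of Proposition~\ref{prop:graph-tangle}(2) on the supplied tangle carries no information, and that the correct move is to discard $\CS$ and build a fresh tangle anchored at one edge of sufficiently large $\kappa_G$-value. Once that is seen, both the verification of the tangle axioms for $\CT_e$ and the little graph-theoretic fact that a non-star connected graph has an edge with both endvertices of degree $\ge2$ are entirely routine.
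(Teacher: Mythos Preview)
Your argument is correct. The paper does not actually give a proof of this corollary; it is stated immediately after Proposition~\ref{prop:graph-tangle} (whose proof is also omitted as ``simple, but tedious'') and left to the reader. What you have written is precisely the natural way to fill in the gap: invoke part~(2) of the proposition for $k\notin\{1,2\}$, and for $k\in\{1,2\}$ abandon the given $G$-tangle and instead build a $\kappa_G$-tangle $\CT_e$ anchored at a single edge with $\kappa_G(\{e\})\ge k$. The verification of the tangle axioms for $\CT_e$ and the little lemma that a connected non-star contains an edge with both endvertices of degree at least~$2$ are both fine; your shortest-path justification of the latter is correct (the internal vertex after the first step already has degree~$\ge 2$ in $C$, so in fact the first edge works).

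One stylistic remark: the parenthetical ``otherwise every edge of $C$ has an endvertex of degree $1$, which forces $C$ to be a star'' is slightly tangled, since you then justify the \emph{contrapositive} rather than the stated direction. It would read more cleanly to say directly: ``since $C$ is not a star it has two vertices of degree $\ge 2$; the first edge on a shortest path between them has both endvertices of degree $\ge 2$.'' But this is cosmetic; the mathematics is sound.
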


We now turn to a different characterisation of graph tangles due to Reed~\cite{ree97}.
Let $G$ be a graph. We say that subgraphs $H_1,\ldots,H_m\subseteq G$
\emph{touch} if there is a vertex $v\in\bigcap_{i=1}^m V(H_i)$ or an edge
$e\in E(G)$ such that each $H_i$ contains at least one endvertex of
$e$. A family $\CH$ of subgraphs of $G$ \emph{touches pairwise} if all
$H_1,H_2\in\CH$ touch, and it \emph{touches triplewise} if all
$H_1,H_2,H_3\in\CH$ touch. A \emph{vertex cover} (or \emph{hitting set})
for $\CH$ is a set $S\subseteq V(G)$ such that $S\cap
V(H)\neq\emptyset$ for all $H\in\CH$.

\begin{theo}[\cite{ree97}]\label{theo:reed}
  A graph $G$ has a $G$-tangle of order $k$ if and only if there is a
  family $\CH$ of connected subgraphs of $G$ that touches triplewise
  and has no vertex cover of cardinality less than $k$.
\end{theo}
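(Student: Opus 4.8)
The theorem is an equivalence, and I would prove it by two constructions that are essentially inverse to one another: a triplewise touching family with no small vertex cover ``orients'' the low-order separations of $G$ into a $G$-tangle, and conversely a $G$-tangle singles out the connected subgraphs that consistently sit on its ``large sides''. The first direction is clean; the second is the technical heart.

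\emph{From such a family to a $G$-tangle.} Let $\CH$ be a family of connected subgraphs of $G$ that touches triplewise and has no vertex cover of cardinality less than $k$; note every member of $\CH$ is then nonempty, since the empty subgraph touches nothing. Given any separation $(A,B)$ of $G$ of order less than $k$, put $S:=V(A)\cap V(B)$. As $|S|<k$, the set $S$ is not a vertex cover of $\CH$, so some $H\in\CH$ has $V(H)\cap S=\emptyset$. Since $H$ is connected, since every edge of $G$ lies in exactly one of $E(A),E(B)$, and since a vertex at which $H$ passed from an $E(A)$-edge to an $E(B)$-edge would lie in $S$, the subgraph $H$ sits entirely on one side: either $E(H)\subseteq E(A)$ and $V(H)\subseteq V(A)\setminus V(B)$, or the mirror statement (the case $E(H)=\emptyset$ being immediate). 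I would declare $(B,A)\in\CS$ in the first case and $(A,B)\in\CS$ in the second; informally, $\CS$ points each low-order separation toward a side on which some member of $\CH$ survives. This is well defined: if $H$ survived on the $A$-side and $H'$ on the $B$-side, then $V(H)\subseteq V(A)\setminus V(B)$ and $V(H')\subseteq V(B)\setminus V(A)$ are disjoint and no edge of $G$ can join them, so $H$ and $H'$ would not touch, whence the triple $H,H,H'$ does not touch, contradicting the hypothesis. Thus $\CS$ contains exactly one orientation of every separation of order less than $k$, which gives \ref{li:gt0} and \ref{li:gt1}. Axiom \ref{li:gt3} holds because the witness $H$ for a separation $(A,B)\in\CS$ is nonempty and disjoint from $V(A)$. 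For \ref{li:gt2}, given $(A_i,B_i)\in\CS$ with witnesses $H_i$ ($i=1,2,3$), triplewise touching produces either a vertex in all three $V(H_i)$, hence in no $V(A_i)$, or an edge having an endvertex in each $H_i$, hence, since an edge of $A_i$ has both endvertices in $V(A_i)$, lying in no $E(A_i)$; in either case $A_1\cup A_2\cup A_3\neq G$. So $\CS$ is a $G$-tangle of order $k$.

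\emph{From a $G$-tangle to such a family.} Given a $G$-tangle $\CS$ of order $k$, I would define $\CH$ to be a suitable family of connected subgraphs that ``live on the large sides'' of $\CS$, chosen so that applying the previous construction to $\CH$ recovers $\CS$. To check triplewise touching one assumes a non-touching triple $H_1,H_2,H_3\in\CH$ and tries to cut each $H_i$ off from the other two, producing separations $(A_i,B_i)\in\CS$; the fact that no vertex or edge meets all three $H_i$ is precisely what should force $A_1\cup A_2\cup A_3=G$, contradicting \ref{li:gt2}. Here one uses the closure property that enlarging the large side of a separation in $\CS$ keeps it in $\CS$, which follows from \ref{li:gt2} exactly as Lemma~\ref{lem:tangle-closure}(1). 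For the vertex-cover bound one supposes $S$ with $|S|<k$ covers $\CH$; then each component $D$ of $G-S$ must fail to lie in $\CH$, which forces the $\CS$-orientation of the (order at most $|S|$) separation that cuts $D$ off along the vertices of $S$ adjacent to $D$, and feeding three such separations, obtained after bundling the components of $G-S$ into at most three classes, into \ref{li:gt2} again yields three small sides covering $G$, the required contradiction. The main obstacle is pinning down $\CH$ so that \emph{both} requirements hold at once: the naive guess, ``all connected subgraphs not contained in the small side of any separation of $\CS$,'' is too coarse, since in highly connected examples such as grids the resulting family need not touch triplewise (distinct rows are an obstruction), so the right family must be more restrictive, and making it restrictive enough for triplewise touching while keeping it too rich to be covered by fewer than $k$ vertices is the delicate step. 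One also has to treat separately, as in Proposition~\ref{prop:graph-tangle}, the low-order anomalies caused by isolated vertices, isolated edges, and pendant edges.
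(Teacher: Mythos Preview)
Your backward direction (from a triplewise-touching family with no small vertex cover to a $G$-tangle) is correct and coincides with the paper's argument; the paper defines $\CT$ as the set of separations $(A,B)$ of order less than $k$ such that some $H\in\CH$ satisfies $H\subseteq B\setminus V(A)$, which is exactly your $\CS$ once your well-definedness argument is in place.

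The forward direction, however, has a genuine gap: you explicitly say that pinning down $\CH$ is ``the delicate step'' and that the naive choice fails, but you never supply a definition. The paper's construction is concrete. For every $S\subseteq V(G)$ with $|S|<k$ one shows, by an interval-halving argument on unions of components of $G\setminus S$ using \ref{li:gt2} and \ref{li:gt3}, that there is a unique connected component $C_S$ of $G\setminus S$ such that for every separation $(A,B)$ with $V(A)\cap V(B)\subseteq S$ one has $(A,B)\in\CS\iff C_S\subseteq B$. Then one sets
\[
\CH:=\big\{\,C_S\;\big|\;S\subseteq V(G),\ |S|<k\,\big\}.
\]
The vertex-cover bound is immediate, since $S\cap V(C_S)=\emptyset$ for each $S$. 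For triplewise touching, the separations $(A_i,B_i)$ that isolate each $C_{S_i}$ along $S_i$ lie in $\CS$, so \ref{li:gt2} gives a vertex or edge missed by $A_1\cup A_2\cup A_3$; a missing vertex lies in all three $C_{S_i}$, and a missing edge has an endvertex in each. Notice that your sketch of the vertex-cover argument (``bundling the components of $G-S$ into at most three classes'' and invoking \ref{li:gt2}) is essentially the interval-halving that \emph{produces} $C_S$; you deployed the right idea in the wrong place. It is the construction of $\CH$, not a verification step applied to an already-given $\CH$.

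Finally, your remark about handling isolated vertices, isolated edges, and pendant edges separately is a red herring here. Those anomalies concern the passage between $G$-tangles and $\kappa_G$-tangles (Proposition~\ref{prop:graph-tangle}); the present theorem is purely about $G$-tangles, and the paper's proof makes no such case distinction.
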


In fact, Reed~\cite{ree97} defines a tangle of a graph $G$ to be a family
$\CH$  of connected subgraphs of $G$ that touches triplewise and its
order to be the cardinality of a minimum vertex cover.

\begin{proof}
  For the forward direction, let $\CT$ be a $G$-tangle of order $k$.

  \begin{claim}
    For every set $S\subseteq V(G)$ of cardinality
    $|S|<k$ there is a (nonempty) connected component $C_S$ of $G\setminus S$ such
  that for all separations $(A,B)$ of $G$ with
  $V(A)\cap V(B)\subseteq S$ we have $(A,B)\in\CT\iff C_S\subseteq
  B$.

  \proof
  Let $C_1,\ldots,C_m$ be the set of all connected
  components of $G\setminus S$.  For every $I\subseteq [m]$, we define
  a separation $(A_I,B_I)$ of $G$ as follows:
 \begin{itemize}
 \item $V(B_I):=S\cup\bigcup_{i\in I}V(C_i)$ and
   $E(B_I):=\bigcup_{i\in I}E\big(V(C_i),V(C_i)\cup S\big)$;
 \item $V(A_I):=S\cup\bigcup_{i\in [m]\setminus I}V(C_i)$ and
   $E(A_I):=E(G)\setminus E(B_I)=\bigcup_{i\in[m]\setminus I}E\big(V(C_i)\cup S,V(C_i)\cup S\big)$.
 \end{itemize}
 Note that $V(A_I)\cap V(B_I)=S$ and thus $\ord(A_I,B_I)<k$. Thus for
 all $I$, either $(A_I,B_I)\in\CT$ or $(B_I,A_I)\in\CT$.
 By \ref{li:gt2}, if $(A_I,B_I),(A_J,B_J)\in\CT$ then
 $(A_{I\cap J},B_{I\cap J})\in\CT$, because 
 \[
 A_I\cup A_J\cup B_{I\cap J}=G.
 \]
 Moreover, by \ref{li:gt3} we have
 $(A_\emptyset,B_\emptyset)\not\in\CT$.
 
 Now a straightforward
 interval-halving argument (similar to the one used in
 Example~\ref{exa:cycle}) shows that there is an $i\in[m]$ such that
 $(A_{\{i\}},B_{\{i\}})\in\CT$. We let $C_S:=C_i$.
 \uend
 \end{claim}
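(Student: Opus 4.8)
The plan is to keep the setup already introduced — the components $C_1,\dots,C_m$ of $G\setminus S$ and the separations $(A_I,B_I)$ for $I\subseteq[m]$ — first to pin down the index $i$ for which $(A_{\{i\}},B_{\{i\}})\in\CT$, and then, having set $C_S:=C_i$, to verify the two directions of the equivalence. I would record the driving facts. Every $(A_I,B_I)$ has order $|V(A_I)\cap V(B_I)|=|S|<k$, so by \ref{li:gt1} one of $(A_I,B_I),(B_I,A_I)$ lies in $\CT$. By \ref{li:gt3}, $(A_\emptyset,B_\emptyset)\notin\CT$ and $(A_{[m]},B_{[m]})\in\CT$, since $V(A_\emptyset)=V(B_{[m]})=V(G)$; here I would also note $m\ge 1$, i.e.\ $G\setminus S\ne\emptyset$, because a $G$-tangle of order $k$ forces $|V(G)|\ge k>|S|$ (apply \ref{li:gt1} and \ref{li:gt3} to a separation of order $|V(G)|$). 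Finally the intersection property: $(A_I,B_I),(A_J,B_J)\in\CT$ implies $(A_{I\cap J},B_{I\cap J})\in\CT$, which holds because $A_I\cup A_J\cup B_{I\cap J}=G$ as graphs — on vertices this is clear, and on edges one uses that every edge of $G$ is incident with at most one component $C_j$, so $E(B_I)\cap E(B_J)=E(B_{I\cap J})$ — whence \ref{li:gt2} forbids $(B_{I\cap J},A_{I\cap J})\in\CT$.

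To produce $i$, I would choose, among all $I\subseteq[m]$ with $(A_I,B_I)\in\CT$ (a nonempty collection, since $[m]$ belongs to it), one of \emph{minimum cardinality}. If $|I|\ge 2$, split $I$ as a disjoint union $I=I_1\cup I_2$ of nonempty sets. Since $B_{I_1}\cup B_{I_2}=B_I$ we get $B_{I_1}\cup B_{I_2}\cup A_I=G$, so \ref{li:gt2} rules out $(B_{I_1},A_{I_1})$ and $(B_{I_2},A_{I_2})$ lying in $\CT$ simultaneously; hence by \ref{li:gt1} one of $(A_{I_1},B_{I_1}),(A_{I_2},B_{I_2})$ lies in $\CT$, contradicting minimality of $|I|$. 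Therefore $I=\{i\}$, and I set $C_S:=C_i$. (The intersection property also shows this $i$ is unique, but we will not need that.)

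For the equivalence, fix a separation $(A,B)$ with $V(A)\cap V(B)\subseteq S$. Because $C_S$ is connected, disjoint from $S\supseteq V(A)\cap V(B)$, and $G$ has no edge between $V(A)\setminus V(B)$ and $V(B)\setminus V(A)$, the subgraph $C_S$ is contained in $A$ or in $B$ and not both; moreover, if $C_S=C_i\subseteq A$ then $V(C_i)\subseteq V(A)\setminus V(B)$, so every edge of $G$ incident with $V(C_i)$ lies in $E(A)$ (an edge in $E(B)$ would put an endvertex of $V(C_i)$ into $V(B)$). Forward direction: if $(A,B)\in\CT$ and, for contradiction, $C_S\subseteq A$, then the edge observation gives $E(B_{\{i\}})\subseteq E(A)$, while $V(A)\supseteq V(C_i)$ and $V(A_{\{i\}})=V(G)\setminus V(C_i)$ give $V(A)\cup V(A_{\{i\}})=V(G)$; hence $A\cup A_{\{i\}}=G$, and applying \ref{li:gt2} to the triple $(A,B),(A,B),(A_{\{i\}},B_{\{i\}})\in\CT$ is a contradiction. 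So $C_S\subseteq B$. Backward direction: if $C_S\subseteq B$ but $(A,B)\notin\CT$, then $(B,A)\in\CT$ by \ref{li:gt1}, and the forward direction applied to $(B,A)$ gives $C_S\subseteq A$, so $V(C_S)\subseteq V(A)\cap V(B)\subseteq S$, contradicting that $C_S$ is a nonempty subgraph of $G\setminus S$. This completes the proof.

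I expect the only real friction to be the graph-theoretic bookkeeping behind the two identities $A_I\cup A_J\cup B_{I\cap J}=G$ and $A\cup A_{\{i\}}=G$ at the level of edges: both rest on the fact that each edge of $G$ lives inside a single component of $G\setminus S$ or joins a component to $S$, and that an edge with an endvertex strictly on one side of a separation belongs to that side. Once these are in hand, the argument is a routine unwinding of the $G$-tangle axioms \ref{li:gt1}, \ref{li:gt2}, \ref{li:gt3}.
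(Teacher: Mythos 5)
Your proof is correct and follows essentially the same route as the paper's: same family of separations $(A_I,B_I)$ indexed by sets of components, same intersection property via GT2 and the identity $A_I\cup A_J\cup B_{I\cap J}=G$, and same use of GT3 to rule out $I=\emptyset$. The one genuine variation is that you locate the index $i$ by taking a set $I$ of minimum cardinality with $(A_I,B_I)\in\CT$ and splitting it, rather than by the interval-halving recursion the paper invokes; both are clean packagings of the same combinatorial idea. Beyond that, you have supplied details the paper leaves implicit — in particular the verification that the chosen $C_S$ actually satisfies the stated equivalence $(A,B)\in\CT\iff C_S\subseteq B$ for arbitrary separations with $V(A)\cap V(B)\subseteq S$, which the paper's proof stops short of, and the preliminary check that $m\ge 1$. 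Your forward-direction argument (deriving $A\cup A_{\{i\}}=G$ from $C_S\subseteq A$, then invoking GT2 on the triple $(A,B),(A,B),(A_{\{i\}},B_{\{i\}})$) and the backward direction (apply the forward direction to $(B,A)$ and derive $V(C_S)\subseteq S$) are both sound, and the edge-level bookkeeping you flag at the end is handled correctly.
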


  We let
  \[
  \CH:=\{C_S\mid S\subseteq V(G)\text{ with }|S|<k\}.
  \]
  $\CH$ has no vertex cover of cardinality less than $k$, because if
  $S\subseteq V(G)$ with $|S|<k$ then $S\cap V(C_S)=\emptyset$. It
  remains to prove that $\CH$ touches triplewise. For $i=1,2,3$, let
  $H_i\in\CH$ and $S_i\subseteq V(G)$ with $|S_i|<k$ such that
  $H_i=C_{S_i}$. Let 
  \[
  B_i:=\big(V(C_i)\cup S_i, E(V(C_i),V(C_i)\cup S_i)\big)
  \]
  and $A_i:=\big(V(G)\setminus V(C_i), E(G)\setminus
  E(B_i)\big)$. By Claim~1, $(A_i,B_i)\in\CT$. Hence $A_1\cup A_2\cup
  A_3\neq G$ by \ref{li:gt2}. If $V(A_i)\cup V(A_2)\cup V(A_3)\neq
  V(G)$ then $V(C_1)\cap V(C_2)\cap V(C_3)\neq \emptyset$ and hence
  $C_1,C_2,C_3$ touch. Otherwise, $E(A_i)\cup E(A_2)\cup E(A_3)\neq
  E(G)$. Hence there is an edge $e\in E(B_1)\cap E(B_2)\cap
  E(B_3)$. As every edge in $E(B_i)$ has an endvertex in $V(C_i)$, this
  shows that $C_1,C_2,C_3$ touch.

  \medskip
  For the backward direction, let $\CH$ be a family of connected
  subgraphs of $G$ that touches triplewise
  and has no vertex cover of cardinality less than $k$.
  We let $\CT$ be the set of all separations of $G$ of order less
  than $k$ such that $H\subseteq B\setminus V(A)$ for some $H\in\CH$.
  
  $\CT$ trivially satisfies \ref{li:gt0}. To see that it satisfies
  \ref{li:gt1}, let $(A,B)$ be a separation of $G$ of order less than
  $k$ and $S:=V(A)\cap V(B)$. Then $S$ is no a vertex cover of $\CH$,
  and hence there is a $H\in\CH$ such that $S\cap V(H)=\emptyset$. As
  $H$ is connected, either $H\subseteq B\setminus V(A)$ or $H\subseteq
  A\setminus V(B)$ and thus
  either $(A,B)\in\CT$ or $(B,A)\in\CT$.

  To see that $\CT$ satisfies \ref{li:gt2}, let $(A_i,B_i)\in\CT$
  and $H_i\in\CH$ such that $H_i\subseteq B_i\setminus V(A_i)$, for
  $i=1,2,3$. If $V(H_1)\cap V(H_2)\cap V(H_3)\neq\emptyset$ then
  $V(A_1)\cup V(A_2)\cup V(A_3)\neq V(G)$. If there is an edge $e$
  that has an endvertex in $V(H_i)$ for $i=1,2,3$, then $e\in E(B_i)$
  and thus $E(A_1)\cup E(A_2)\cup E(A_3)\neq E(G)$.

  Finally, $\CT$ satisfies \ref{li:t3}, because if $(A,B)\in\CT$ and
  $H\in\CH$ with $H\subseteq B\setminus V(A)$ then $V(A)\subseteq
  V(G)\setminus V(H)\neq V(G)$.
\end{proof}

$G$-tangles of orders $1$, $2$, $3$ are in one-to-one correspondence
with the connected, biconnected, and triconnected components of a
graph. A proof of this fact can be found in \cite{gro16a}. Let us call
a graph $G$ \emph{quasi-4-connected} if it is 3-connected and for
every separation $(A,B)$ of order $3$, either
$|V(A)\setminus V(B)|\le 1$ or $|V(B)\setminus V(A)|\le 1$. In
\cite{gro16b}, I proved that every graph $G$ has a tree decomposition into
quasi-4-connected components and that these quasi-4-connected
components correspond to the $G$-tangles of order $4$.

Let me mention that recently Carmesin, Diestel, Hundertmark, and Stein
\cite{cardiehun+14} gave a decomposition of graphs into a different form of
$k$-connected regions (called $(k-1)$-blocks there), which is based on
$k$-linked sets (see Exercise~\ref{exe:k-linked}).

\section{The Duality Theorem}
\label{sec:duality}

Let $\kappa$ be a connectivity function on ${\U}$ and $\CA\subseteq 2^{\U}$.
A pre-decomposition $(T,\gamma)$ of $\kappa$ is \emph{over} $\CA$
if all its atoms are in $\CA$, that is, $\gamma(t)\in\CA$ for all
$t\in L(T)$.  A $\kappa$-tangle $\CT$ \emph{avoids} $\CA$ if
$\CT\cap\CA=\emptyset$. Note that, by \ref{li:t3}, every tangle avoids
the set 
\[
\Sing{{\U}}:=\big\{\{x\}\bigmid x\in {\U}\big\}
\] 
of all singletons. %

\begin{theo}[Duality Theorem, \cite{gm10}]\label{theo:duality}
  Let $\kappa$ be a connectivity function on ${\U}$. Let
  $\CA\subseteq 2^{\U}$ such that $\CA$ is closed under taking subsets
  and $\Sing\U\subseteq\CA$.  Then there is a decomposition of width
  less than $k$ over $\CA$ if and only if there is no $\kappa$-tangle
  of order $k$ that avoids $\CA$.
\end{theo}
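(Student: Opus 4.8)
The easy direction is the ``only if''. Suppose $(T,\gamma)$ is a decomposition of width less than $k$ over $\CA$, and suppose towards a contradiction that $\CT$ is a $\kappa$-tangle of order $k$ avoiding $\CA$. Every separation $\gamma(t,u)$ of $(T,\gamma)$ has order less than $k$, so by tangle axiom \ref{li:t1} the tangle orients each edge of $T$: orient $tu\in E(T)$ towards $u$ if $\gamma(t,u)\in\CT$. By the orientation observation used in Section~\ref{sec:well-linked}, there is a node $s$ all of whose incident edges point towards $s$. If $s$ is internal with neighbours $t_1,t_2,t_3$, then $\gamma(s,t_i)\in\CT$ for all $i$, but these three sets are pairwise disjoint (exactness) and cover ${\U}$, so $\gamma(s,t_1)\cap\gamma(s,t_2)\cap\gamma(s,t_3)=\emptyset$, contradicting \ref{li:t2}. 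If $s$ is a leaf with neighbour $t$, then $\gamma(t,s)=\gamma(s)\in\At(T,\gamma)\subseteq\CA$ is in $\CT$, contradicting that $\CT$ avoids $\CA$. So no such tangle exists.

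The substantial direction is the ``if'': assuming there is no $\kappa$-tangle of order $k$ avoiding $\CA$, construct a decomposition of width $<k$ over $\CA$. The plan is to work with directed pre-decompositions and do induction / a potential argument. I would call a set $X\subseteq{\U}$ with $\kappa(X)<k$ \emph{good} if $\bar X$ is an atom of some directed pre-decomposition of width $<k$ over $\CA$ of the ``cone below $X$'', i.e.\ there is a width-$<k$ decomposition-over-$\CA$ of the separation system obtained by restricting to subsets of $X$ (equivalently of $\kappa\contract$ after contracting $\bar X$ to a point). The goal is to show ${\U}$ itself is decomposable; equivalently, starting from the root cone ${\U}$, one can always split a not-yet-atomic cone $X\notin\CA$ into $X_1\cup X_2=X$ with $\kappa(X_i)<k$ and recurse. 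Define $\CT$ to be the set of all $X$ with $\kappa(X)<k$ such that $\bar X$ is \emph{not} good in this sense (``$X$ is the big side because the small side resists decomposition''); more precisely one shows the family of sets whose complement admits no width-$<k$ decomposition over $\CA$ satisfies the tangle axioms unless everything decomposes. Axiom \ref{li:t0} is immediate; \ref{li:t1} holds because for $\kappa(X)<k$ at least one of $X,\bar X$ must be the ``undecomposable side''; \ref{li:t3} holds since singletons lie in $\CA\supseteq\Sing{\U}$, so their complements are trivially decomposable (a single split) — hence a singleton is never in $\CT$; and \ref{li:t2} is the crux: given $X_1,X_2,X_3\in\CT$ with $X_1\cap X_2\cap X_3=\emptyset$, submodularity/posimodularity lets one combine the (hypothetical) decompositions of $\bar X_1,\bar X_2,\bar X_3$ — which together cover ${\U}$ — into a decomposition of ${\U}$ over $\CA$ of width $<k$, so ${\U}$ is good and we are done, otherwise $\CT$ really is a tangle avoiding $\CA$, contradicting the hypothesis.

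More concretely, for \ref{li:t2} I would argue: if $X_1\cap X_2\cap X_3=\emptyset$ then $\bar X_1\cup\bar X_2\cup\bar X_3={\U}$; using the Exactness Lemma (Lemma~\ref{lem:exact}) together with repeated use of submodularity to ``uncross'' the three covering sets into a laminar family all of whose members still have order $<k$, glue the width-$<k$ decompositions-over-$\CA$ of the pieces at a common root. The closure of $\CA$ under subsets guarantees the atoms produced by the gluing/uncrossing remain in $\CA$. This is the step I expect to be the main obstacle: making the ``glue three decompositions covering ${\U}$ into one'' operation precise while controlling that every cone created has order $<k$ and every atom stays in $\CA$. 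The right bookkeeping device is an induction on $|{\U}|$ (as in Example~\ref{exa:cr}) or on a potential like $\sum_t\kappa(\gamma(t))$, picking a single element $x$, contracting, and showing the tangle descends to the smaller universe unless a split is available; Robertson--Seymour's original argument proceeds this way, and I would follow it, using posimodularity exactly where the Exactness Lemma's Case 2 used it.
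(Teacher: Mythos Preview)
Your forward direction is the paper's argument (with a harmless slip: at the sink $s$ the cones in $\CT$ are the $\gamma(t_i,s)$, not the $\gamma(s,t_i)$; their intersection is still empty since the $\gamma(s,t_i)$ cover ${\U}$, so the contradiction with \ref{li:t2} survives).

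The backward direction has a genuine gap at \ref{li:t2}. With your definition, $X_i\in\CT$ means that $\bar X_i$ is \emph{not} good, i.e.\ admits no width-$<k$ partial decomposition over $\CA$. So when you propose to ``combine the (hypothetical) decompositions of $\bar X_1,\bar X_2,\bar X_3$'' there is nothing to combine: those decompositions do not exist by hypothesis. Flipping the convention (put $X$ into $\CT$ when $\bar X$ \emph{is} good) rescues \ref{li:t2} exactly as you describe via a three-leaf pre-decomposition and the Exactness Lemma, but then \ref{li:t1} breaks: you would need that for every $X$ of order $<k$ at least one of $X,\bar X$ is already decomposable, and that is essentially the statement you are trying to prove. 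Neither orientation of the definition delivers all four axioms by this direct route.

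The paper avoids the circularity by a different induction: not on $|{\U}|$ or a potential, but on the number of sets $X$ with $\kappa(X)<k$ and $X,\bar X\notin\CA$. In the inductive step one picks such an $X$ of minimum $|X|$ and applies the inductive hypothesis separately to $\CA\cup 2^X$ and to $\CA\cup 2^{\bar X}$ (any tangle avoiding either of these would also avoid $\CA$, so neither admits one). This yields two pre-decompositions of width $<k$; exactness together with the minimality of $|X|$ pins the unique bad leaf of the first to have atom exactly $X$, and one then grafts copies of the first onto every $\bar X$-leaf of the second. The Exactness Lemma is invoked only afterwards, to turn the glued pre-decomposition into a genuine decomposition, with closure of $\CA$ under subsets keeping all atoms in $\CA$.
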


\begin{proof}
  For the forward direction, let $(T,\gamma)$ be decomposition of
  $\kappa$ over $\CA$ of width less than $k$. Suppose for
  contradiction that $\CT$ is a $\kappa$-tangle of order $k$ that
  avoids $\CA$. For every edge $st\in E(T)$, we orient $st$ towards
  $t$ if $\gamma(s,t)\in\CT$ and towards $s$ if
  $\bar{\gamma(s,t)}=\gamma(t,s)\in\CT$. As $\CT$ is a tangle of
  order $k$ and $\kappa(\gamma(s,t))<k$ for all
  $(s,t)\in\vec E(T)$, every edge gets an orientation. As $T$ is a
  tree, there is a node $s\in V(T)$ such that all edges incident with
  $s$ are oriented towards $s$. If $s$ is a leaf, then $\gamma(s)\in\CT$
  and thus $\gamma(s)\not\in\CA$, because
  $\CT$ avoids $\CA$. This contradicts $(T,\gamma)$ being a
  decomposition over $\CA$. Thus $s$ is an internal node, say,
  with neighbours $t_1,t_2,t_3$. Then $\gamma(t_i,s)\in\CT$ and
  thus
  $\gamma(t_1,s)\cap \gamma(t_2,s)\cap
  \gamma(t_3,s)\neq\emptyset$.
  This implies
  $\gamma(s,t_1)\cup \gamma(s,t_2)\cup \gamma(s,t_3)\neq {\U}$,
  which contradicts $(T,\gamma)$ being a decomposition.

  \medskip For the proof of the backward direction, suppose that there
  is no $\kappa$-tangle $\CT$ of order $k$ that avoids $\CA$. We shall
  prove that there is a pre-decomposition of $\kappa$ over $\CA$ of
  width less than $k$. By the Exactness Lemma (for Undirected
  Decompositions, Corollary~\ref{cor:exact}), and since $\CA$ is
  closed under taking subsets, we obtain a decomposition of $\kappa$
  over $\CA$ of width at less than $k$.

  The proof is by induction on the
  number of sets $X\subseteq {\U}$ with $\kappa(X)<k$ such that
  neither $X\in\CA$ nor $\bar X\in\CA$.

  For the base step, let us assume that for all
  $X\subseteq {\U}$ with $\kappa(X)<k$ either $X\in\CA$ or $\bar
  X\in\CA$. Let
  \[
  \CY=\{\bar X\mid X\in\CA\text{ with }\kappa(X)<k\}.
  \]
  Then
  $\CY$ trivially satisfies the tangle axiom \ref{li:t0}. It satisfies
  \ref{li:t1} by our assumption that either $X\in\CA$ or $\bar
  X\in\CA$ for all $X\subseteq {\U}$ with
  $\kappa(X)<k$.

  If
  $\CY$ violates \ref{li:t2}, then there are sets $Y_1,Y_2,Y_3\in\CY$
  with $Y_1\cap Y_2\cap Y_3=\emptyset$. We let $T$ be the tree
  with vertex set $V(T)=\{s,t_1,t_2,t_3\}$ and edge set
  $\{st_1,st_2,st_3\}$, and we define $\gamma(t_i,s):=Y_i$ and
  $\gamma(s,t_i):=\bar Y_i\in\CA$. Then $(T,\gamma)$ is a
  pre-decomposition of $\kappa$ over $\CA$ of width less than $k$.
  
  So let us assume that $\CY$ satisfies \ref{li:t2}. Then it must
  violate \ref{li:t3}, because there is no
  tangle of order $k$ that avoids $\CA$. Thus for some
  $x\in {\U}$ we have $\{x\}\in\CY$ and thus $\bar{\{x\}}\in\CA$ and
  $\kappa(\bar{\{x\}})=\kappa(\{x\})<k$. Note that $\{x\}\in\CA$,
  because $\Sing\U\subseteq\CA$. We
  let $T$ be the tree consisting of just one edge $st$, and we define
  $\gamma$ by $\gamma(s,t)=\{x\}$,
  $\gamma(s,t)=\bar{\{x\}}$. Then $(T,\gamma)$ is a
  pre-decomposition of $\kappa$ over $\CA$ of width less than $k$.

  \medskip For the inductive step, let $X\subseteq {\U}$ such that
  $\kappa(X)<k$ and neither $X\in\CA$ nor $\bar X\in\CA$ and such that
  $|X|$ is minimum subject to these conditions. Let
  $\CA^1:=\CA\cup 2^X$ and
  $\CA^2:=\CA\cup 2^{\bar X}$. Then by
  the inductive hypothesis, for $i=1,2$ there is a pre-decomposition
  $(T^i,\gamma^i)$ of $\kappa$ over $\CA^i$ of width less than $k$. 
  If there is no leaf $t^i$ of $T^i$ with $\gamma(t^i)\not\in\CA$, then
  $(T^i,\gamma^i)$ is a pre-decomposition of $\kappa$ over $\CA$ of
  width less than $k$, and we are done. So let us assume that for
  $i=1,2$ there is a leaf $t^i$ of $T^i$ with $\gamma(t^i)\not\in\CA$.

  Consider $(T^1,\gamma^1)$. By the Exactness Lemma and since
  $\CA^1$ is closed under taking subgraphs, we
  may assume that $(T^1,\gamma^1)$ is exact. This implies that the
  atoms $\gamma^1(t)$ for the leaves $t\in L(T^1)$ are mutually
  disjoint. Let $X':=\gamma^1(t^1)\not\in\CA$. Then $X'\subseteq X$ and $\bar
  X\subseteq\bar X'$, and as $\bar X\not\in\CA$ and $\CA$ is closed
  under taking subsets, it follows that $\bar
  X'\not\in\CA$. By the minimality of $|X|$, this implies
  $X'=X$. Furthermore, as the decomposition $(T^1,\gamma^1)$ is
  exact, $t^1$ is the only leaf of $T^1$ with
  $\gamma^1(t^1)=X$, and for all other leaves $t$ we have
  $\gamma^1(t)\in\CA$. Let $s^1$ be the neighbour of $t^1\in T^1$.

  Now consider $(T^2,\gamma^2)$. Let $t^2_1,\ldots,t^2_m$ be an
  enumeration of all leaves $t$ of $T^2$ with
  $\gamma^2(t)\not\in\CA$. Then $\gamma(t^2_i)\subseteq \bar X$ for all
  $i\in[m]$. Let $s^2_i$ be the neighbour of $t^2_i$ in $T^2$. Without
  loss of generality we may assume that
  $\gamma^2(t^2_i)=\gamma^2(s_i^2,t_i^2)=\bar X$ and
  $\gamma^2(t_i^2,s_i^2)=X$, because increasing a set $\gamma^2(t)$
  for a leaf $t$
  preserves the property of being a pre-decomposition. 

  To construct a pre-decomposition $(T,\gamma)$ of $\kappa$
  over $\CA$, we take $m$ disjoint copies 
  \[
  (T^1_1,\gamma^1_1),\ldots,(T^1_m,\gamma^1_m) 
  \]
  of
  $(T^1,\gamma^1)$. For each node $t\in V(T^1)$, we denote its copy
  in $T^1_i$ by $t_i$. Then for every edge $st\in E(T^1)$ we have
  $\gamma^1_i(s_i,t_i)=\gamma^1(s,t)$. In particular,
  $\gamma^1_i(s^1_i,t^1_i)=\gamma^1(s^1,t^1)=X$. We let $T$ be
  the tree obtained from the disjoint union of
  $T^1_1,\ldots,T^1_m,T^2$ by deleting the nodes $t^1_i,t^2_i$ and
  adding edges $s^1_is^2_i$ for all $i\in[m]$. We define
  $\gamma:V(T)\to 2^{\U}$ by
  \[
  \gamma(s,t):=
  \begin{cases}
    X&\text{if }(s,t)=(s^1_i,s^2_i)\text{ for some }i\in[m],\\
    \bar X&\text{if }(s,t)=(s^2_i,s^1_i)\text{ for some }i\in[m],\\
    \gamma^1_i(s,t)&\text{if }st\in E(T^1_i),\\
    \gamma^2(s,t)&\text{if }st\in E(T^2).
  \end{cases}
  \]
  It is easy to see that $(T,\gamma)$ is a pre-decomposition
  $\kappa$ over $\CA$ of
  width less than $k$.
\end{proof}

The following corollary states that the branch width of a connectivity
function $\kappa$ equals the maximum order of a $\kappa$-tangle.

\begin{cor}\label{cor:duality}
 There is a $\kappa$-tangle of order $k$ if and only if $\bw(\kappa)\ge k$.
\end{cor}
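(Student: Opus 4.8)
The plan is to derive Corollary~\ref{cor:duality} directly from the Duality Theorem (Theorem~\ref{theo:duality}) by choosing the collection $\CA$ appropriately. The natural choice is $\CA=\Sing{\U}$, the set of all singletons of ${\U}$, together with the empty set $\emptyset$. This $\CA$ is closed under taking subsets (the only subsets of a singleton are the singleton itself and $\emptyset$), and it trivially contains $\Sing\U$. So the hypotheses of Theorem~\ref{theo:duality} are met.

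With this choice, I would observe two things. First, a decomposition of $\kappa$ over $\CA$ is precisely a \emph{branch} decomposition of $\kappa$: the condition $\gamma(t)\in\CA$ for all leaves $t$ means every atom is a singleton (or empty, but empty atoms can be removed via Lemma~\ref{lem:empty-leaves}, or simply excluded since a branch decomposition of a nonempty universe has singleton atoms). Hence ``there is a decomposition of width less than $k$ over $\CA$'' is equivalent to ``$\bw(\kappa)\le k-1$'', i.e.\ ``$\bw(\kappa)<k$''. Second, by tangle axiom \ref{li:t3}, every $\kappa$-tangle avoids $\Sing\U$, and therefore avoids $\CA=\Sing\U\cup\{\emptyset\}$ as well (no tangle contains $\emptyset$ either, since $\emptyset\cap\emptyset\cap\emptyset=\emptyset$ would violate \ref{li:t2}, or directly because $\kappa(\emptyset)=0<k$ would force $\bar\emptyset={\U}\in\CT$ and then $\emptyset\in\CT$ would contradict \ref{li:t2}). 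So ``there is a $\kappa$-tangle of order $k$ avoiding $\CA$'' is equivalent to ``there is a $\kappa$-tangle of order $k$''.

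Putting these equivalences into Theorem~\ref{theo:duality}: there is a branch decomposition of width less than $k$ if and only if there is no $\kappa$-tangle of order $k$. Contrapositively, $\bw(\kappa)\ge k$ if and only if there is a $\kappa$-tangle of order $k$, which is exactly the claim. I would also handle the degenerate edge case ${\U}=\emptyset$ separately: there $\bw(\kappa_\emptyset)=0$ by convention, and there is a (trivial, empty) $\kappa_\emptyset$-tangle of order $0$ but none of order $k\ge1$, so the equivalence holds there too; likewise $|{\U}|\le 1$ gives $\bw=0$ and only the order-$0$ tangle.

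I do not anticipate a genuine obstacle here, since this is a packaging corollary. The one point requiring a little care is the match between ``decomposition over $\Sing\U$'' and ``branch decomposition'': a decomposition over $\CA$ is only guaranteed to be a \emph{pre}-decomposition with atoms in $\CA$ before invoking exactness, and after making it exact the atoms are subsets of the original atoms (Corollary~\ref{cor:exact}), hence still singletons or empty; deleting empty atoms via Lemma~\ref{lem:empty-leaves} then yields an honest branch decomposition of no greater width. Conversely a branch decomposition is literally a decomposition over $\Sing\U$. I would spell out this identification in one or two sentences and then simply quote Theorem~\ref{theo:duality}.
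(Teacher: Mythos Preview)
Your proposal is correct and follows essentially the same approach as the paper: apply Theorem~\ref{theo:duality} with $\CA=\Sing\U\cup\{\emptyset\}$, use Lemma~\ref{lem:empty-leaves} to identify decompositions over $\CA$ with branch decompositions, and observe via \ref{li:t2} and \ref{li:t3} that every tangle avoids $\CA$. Your write-up is more detailed than the paper's, but the argument is the same.
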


\begin{proof}
  We apply the Duality Theorem with
  $\CA:=\Sing{{\U}}\cup\{\emptyset\}$. As we can eliminate ``empty leaves''
  in a decomposition by Lemma~\ref{lem:empty-leaves}, $\kappa$ has a
  branch decomposition of width $k$ if and only if it has a
  decomposition over $\CA$ of width $k$. Furthermore, by \ref{li:t2}
  and \ref{li:t3}, every
  $\kappa$-tangle avoids $\CA$.
\end{proof}

Recall that the \emph{branch width} of a graph $G$ is $\bw(\kappa_G)$.

\begin{cor}\label{cor:graph-duality}
  Let $k\ge 3$. Then for every graph $G$, there is a $G$-tangle of
  order $k$ if and only if the branch width of $G$ is at least $k$.
\end{cor}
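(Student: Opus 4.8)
The plan is to reduce the statement entirely to Corollary~\ref{cor:duality}, which already provides the analogue for $\kappa_G$-tangles, by exploiting the translation between $G$-tangles and $\kappa_G$-tangles set up in Proposition~\ref{prop:graph-tangle} and summarised in Corollary~\ref{cor:graph-tangle}.

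First I would observe that for $k\ge 3$ none of the exceptional cases in Proposition~\ref{prop:graph-tangle} or Corollary~\ref{cor:graph-tangle} can arise, since every one of them requires $k\in\{1,2\}$. Hence, for $k\ge 3$: if $\CS$ is a $G$-tangle of order $k$, then $\{E(B)\mid (A,B)\in\CS\}$ is a $\kappa_G$-tangle of order $k$ by Proposition~\ref{prop:graph-tangle}(2) (equivalently, by Corollary~\ref{cor:graph-tangle}); and conversely, if $\CT$ is a $\kappa_G$-tangle of order $k$, then the family of all separations $(A,B)$ of $G$ of order less than $k$ with $E(B)\in\CT$ is a $G$-tangle of order $k$ by Proposition~\ref{prop:graph-tangle}(1). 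Therefore, for $k\ge 3$, the graph $G$ has a $G$-tangle of order $k$ if and only if it has a $\kappa_G$-tangle of order $k$.

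Next I would apply Corollary~\ref{cor:duality} with $\kappa:=\kappa_G$, which gives that there is a $\kappa_G$-tangle of order $k$ if and only if $\bw(\kappa_G)\ge k$. Since by definition the branch width of $G$ is $\bw(\kappa_G)$, combining this with the previous equivalence yields that $G$ has a $G$-tangle of order $k$ if and only if the branch width of $G$ is at least $k$, as claimed.

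There is essentially no hard step here; the only point requiring care is verifying that the lists of exceptions in Proposition~\ref{prop:graph-tangle}(2) and Corollary~\ref{cor:graph-tangle} are exhausted by the cases $k\le 2$, so that the hypothesis $k\ge 3$ genuinely eliminates all of them — this is immediate from their statements. It is also worth noting that the restriction $k\ge 3$ cannot be dropped: for instance, if $G$ is a single edge then $G$ has a $G$-tangle of order $2$ by Example~\ref{exa:graph-tangle}, while $\bw(\kappa_G)=0$, so the equivalence fails for $k=2$.
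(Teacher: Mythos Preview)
Your proposal is correct and follows exactly the route the paper indicates: reducing to Corollary~\ref{cor:duality} via the correspondence between $G$-tangles and $\kappa_G$-tangles from Proposition~\ref{prop:graph-tangle} and Corollary~\ref{cor:graph-tangle}, noting that the exceptional cases there all have $k\le 2$. Your write-up is simply a more explicit unpacking of the paper's one-line justification.
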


This follows from Corollary~\ref{cor:duality} and
Corollary~\ref{cor:graph-tangle}. Note that the assertion of the
corollary fails for $k\le 2$, because, by
Example~\ref{exa:graph-tangle}, every graph $G$ with at least one edge
has a $G$-tangle of order $2$, but its branch width may be $0$.

\begin{rem}
  Recall the characterisation of $G$-tangles by triplewise touching
  families of connected subgraphs of $G$
  (Theorem~\ref{theo:reed}). Phrased using this characterisation, the
  Duality Theorem (Corollary~\ref{cor:graph-duality}, to be precise)
  says that, the branch width of a graph $G$ is the
  maximum $k$ such that there is a family $\CH$ of connected subgraphs
  of $G$ that touches triplewise and has no vertex cover of
  cardinality less than $k$ (provided the branch width is at least $k$).

  There is a similar duality for tree width, due to Seymour and
  Thomas~\cite{seytho93} (also see \cite{ree97}). A \emph{bramble of
    order $k$} of
  a graph $G$ is family $\CH$ of connected subgraphs
  of $G$ that touches pairwise and has no vertex cover of
  cardinality less than $k$. Then the tree width
  of a graph is the maximum $k$ such that $G$ has a bramble of order
  $k+1$.

In \cite{dieoum14a,dieoum14b}, Diestel and Oum develop a general duality theory for
width parameters like branch width (in the previous theorem) and tree width.
  \uend
\end{rem}

Recall
Theorem~\ref{theo:mu_vs_kappa}, stating that for every graph $G$ with at least
one vertex of degree $2$ we have
\[
\bw(\mu_G)\le\bw(\kappa_G)\le 2\bw(\mu_G).
\]
We have already proved the first inequality in
Section~\ref{sec:graphdec}. We use the Duality Theorem to prove the
second inequality.

\begin{proof}[Proof of the second inequality of
  Theorem~\ref{theo:mu_vs_kappa}]
  Let $G$ be a graph and $k\ge 0$.
  By the Duality Theorem, it suffices to prove that if there is a
  $\kappa_G$-tangle of order $2k+1$ then there is a $\mu_G$-tangle of
  order $k+1$. So let $\CT$ be a $\kappa_G$-tangle of order $2k+1$.
  We shall define a $\mu_G$-tangle $\CS$ of order $k+1$. 

  For every $X\subseteq V(G)$, we fix a minimum vertex cover $S_X$ of $E(X,\bar X)$
  in such a way that $S_X=S_{\bar X}$. 
  We let
  \[
  \CS:=\big\{ X\subseteq V(G)\bigmid \mu_G(X)\le k,E(X,X\cup
  S_X)\in\CT\big\}.
  \]
  We shall prove that $\CS$ is a $\mu_G$-tangle of order $k+1$. It trivially
  satisfies \ref{li:t0}. To see that it satisfies \ref{li:t1}, let
  $X\subseteq V(G)$ with $\mu_G(X)\le k$. Let $Y:=E(X,X\cup S_X)$. Then $\partial(Y)\subseteq
  S_X$ and thus $\kappa_G(Y)\le k$.  Therefore, either $Y\in \CT$ or
  $\bar Y\in\CT$. If $Y\in\CT$ then $X\in\CS$. So suppose that $\bar
  Y\in\CT$.
  We have
  \[
  \bar Y=E(\bar X,\bar X\cup S_X)\setminus E(\bar X\cap S_X,X\cap S_X).
  \]
  It follows that $E(\bar X,\bar X\cup S_X)\in\CT$, because
  $\kappa_G(E(\bar X,\bar X\cup S_X))\le|S_X|\le k<\ord(\CT)$ and
  $\bar Y\subseteq E(\bar X,\bar X\cup S_X)$.
  Thus $\bar X\in\CS$.

  To prove that $\CS$ satisfies \ref{li:t2}, let $X_0,X_1,X_2\in\CS$.
  Let $S_i:=S_{X_i}$ and $Y_i:=E(X_i,X_i\cup S_i)$. Then $Y_i\in\CT$. 
  
  Let $Y_i':=Y_i\setminus E(S_i,S_{i+1})$, where the sum is taken
  modulo $3$. Then $\partial(Y_i')\subseteq S_i\cup S_{i+1}$ and thus
  $\kappa_G(Y_i')\le 2k$. Furthermore, $\bar{E(S_i,S_{i+1})}\in\CT$ by
  Corollary~\ref{cor:smallset2}. Hence 
  \[
  Y_i'=Y_i\cap \bar{E(S_i,S_{i+1})}\in\CT
  \]
  by Lemma~\ref{lem:tangle-closure}(2).

  By \ref{li:t2} there is an edge $e=vw\in Y_0'\cap Y_1'\cap
  Y_2'$. Suppose that neither $v\in X_0\cap X_1\cap X_2$ nor $w\in
  X_0\cap X_1\cap X_2$. Say, $v\not\in X_i$ and $w\not\in X_j$. Then
  $i\neq j$, because $e\in Y_i=E(X_i,X_i\cup S_i)$ has at least one
  endvertex in $X_i$. Without loss of generality
  we assume that $j=i+1\bmod3$. As $e\in E(X_i,X_i\cup S_i)$ and
  $v\not\in X_i$ we
  have $v\in S_i$. Similarly, $w\in S_j=S_{i+1}$. Thus $e\in
  E(S_i,S_{i+1})$ and therefore $e\not\in Y_i'$, which is a
  contradiction.

  Finally, to prove \ref{li:t3}, let $X\in\CS$. Suppose for
  contradiction that $|X|=1$, say, $X=\{x\}$. Let $Y:=E(X,X\cup
  S_X)$. Then $Y\in\CT$. Observe that either
  $S_X=\emptyset$ or $S_X=\{x\}$
  or $S_X$ consists of a single vertex in $x'\in V(G)\setminus\{x\}$. In the first
  two cases, we have $Y=\emptyset$ and in the third
  case we have $Y=\{xx'\}$ and hence $|Y|=1$. Either way, this contradicts $Y\in\CT$.
\end{proof}

\section{The Canonical Decomposition Theorem}
\label{sec:can}

In this section, we shall prove that every connectivity system can be
decomposed into its maximal tangles in a canonical way. In view of the
correspondence of between $G$-tangles of order $2$ and the biconnected
components of a graph $G$ and $G$-tangles of order $3$ and the triconnected
components of $G$, this may be seen as a generalisation of the
decompositions of a graph into its biconnected and triconnected
components. 

Recall that a
$\kappa$-tangle $\CT$ is \emph{maximal} if it has no proper extension
and that we call a construction that associates a decomposition with every
connectivity system \emph{canonical} if every isomorphism between two
connectivity systems extends to an isomorphism between the
corresponding decompositions. We will discuss the form of
decomposition we will use later. Let us start with an example that
illustrates some of the issues arising.

\begin{figure}
  \centering
      \begin{tikzpicture}[
    thick,
    ]
    

    \foreach \name/\a/\r in {
      a/0/1.5cm, b/90/1.5cm, c/180/1.5cm, d/270/1.5cm,
      e/45/1.7cm, f/45/2.6cm,
      g/135/1.7cm, h/135/2.6cm,
      i/225/1.7cm, j/225/2.6cm,
      k/315/1.7cm, l/315/2.6cm
    }
    \node[coordinate] (\name) at (\a:\r) {};  

    \node[coordinate] (s) at (0,0) {};

    \foreach \name/\a/\r in {
      t/270/1.5cm,
      m/240/2.7cm, n/252/2.7cm,
      o/264/2.7cm, p/276/2.7cm,
      q/288/2.7cm, r/300/2.7cm
    }
    \node[coordinate,yshift=-1cm] (\name) at (\a:\r) {};

    \fill[red!20] (a)--(b)--(c)--(d);
    \fill[blau!25] (a)--(b)--(f);
    \fill[blau!40] (b)--(c)--(h);
    \fill[blau!55] (c)--(d)--(j);
    \fill[blau!10] (d)--(a)--(l);
    \fill[gruen!20] (t)--(m)--(n);
    \fill[gruen!40] (t)--(o)--(p);
    \fill[gruen!60] (t)--(q)--(r);

    \fill[black!20] (0,-2) ellipse (3mm and 5mm);

    \foreach \v in {a,b,...,t}
    \fill (\v) circle (3pt);

    \foreach \v/\w in {
      a/b, b/c, c/d, d/a, 
      a/e, a/f, b/e, b/f, e/f,
      b/g, b/h, c/g, c/h, g/h,
      c/i, c/j, d/i, d/j, i/j,
      d/k, d/l, a/k, a/l, k/l,
      t/m, t/n, m/n,
      t/o, t/p, o/p,
      t/q, t/r, q/r,
      s/a, s/b, s/c, s/d,
      d/t%
    }
    \draw[thick] (\v)--(\w);
    
    \draw[thick] (d)--(t);

    \begin{scope}[
      xshift=6cm,
      tn/.style={circle,draw,thin,fill,inner sep=1.5mm},
      level/.style={sibling distance=8mm},
      ]
      \node[tn,fill=rot!20] {}
        child { node[tn,fill=blau!10] {} }
        child { node[tn,fill=blau!25] {} }
        child { node[tn,fill=blau!40] {} }
        child { node[tn,fill=blau!55] {} }
        child { node[tn,fill=black!20] {} 
          child { node[tn,fill=gruen!20] {} } 
          child { node[tn,fill=gruen!40] {} } 
          child { node[tn,fill=gruen!60] {} } 
        }
      ;
    \end{scope}
    
    \path (0,-4.5) node {(a)} (6,-4.5) node {(b)};
  \end{tikzpicture}

  \caption{Graph $G$ of Example~\ref{exa:dec} and a decomposition}
  \label{fig:exadec}
\end{figure}
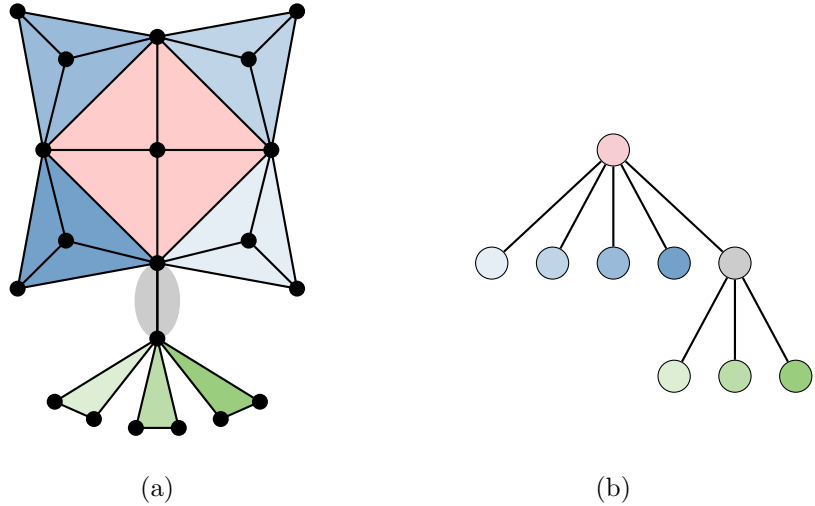

\begin{exa}\label{exa:dec}
  Consider the graph $G$ in Figure~\ref{fig:exadec}(a). The coloured
  regions correspond to the maximal $\kappa_{G}$-tangles. The order of
  of all four blue tangles is $4$, the order of the red tangle is $3$,
  the order of the three green tangles and the grey tangle is $2$.
  For example, the red tangle consist of all $Y\subseteq E(G)$ such
  that $\kappa_G(Y)\le 2$ and $V(Y)$ contains all vertices in the red region,
  including those on the boundary.

  Intuitively, a decomposition of $G$ ``into its maximal tangles''
  might look as indicated in  Figure~\ref{fig:exadec}(b), where the
  colour of a node indicates which tangle is associated with it.
  
  It is not clear how exactly this decomposition is defined. If it is
  supposed to partition the elements of the universe (that is,
  $E(G)$), what do we do with the edges on the boundary of two
  coloured regions describing the tangles? We will answer these
  questions soon. 

  Also note that the decomposition tree is not cubic,
  and if we want the decomposition tree to be canonical, there is no
  way to achieve this with a cubic tree. Specifically, in a canonical
  decomposition we cannot partition the three green tangles in any
  other way than into singletons, and for this we need a node of
  degree $4$: three outgoing edges to the nodes representing the green
  tangles, and one outgoing edge connecting it to the rest of the
  decomposition. 

  Now consider the graph $G'$ in Figure~\ref{fig:exadec2}(a), which is
  obtained from $G$ by deleting the middle vertex. Now there is no
  longer a tangle corresponding to the middle square, that is, the red
  tangle of $G$. (The other maximal tangles remain the same.)
  Nevertheless, the ``best'' decomposition, displayed in
  Figure~\ref{fig:exadec2}(b),  is still the same, except that the
  root node no longer corresponds to a maximal tangle. Indeed, it will
  be necessary to allow such nodes that do not correspond to any tangle
  in our decompositions. We will call such nodes ``hub nodes''.
 \uend
\end{exa}

\begin{figure}
  \centering
      \begin{tikzpicture}[
    thick,
    ]
    

    \foreach \name/\a/\r in {
      a/0/1.5cm, b/90/1.5cm, c/180/1.5cm, d/270/1.5cm,
      e/45/1.7cm, f/45/2.6cm,
      g/135/1.7cm, h/135/2.6cm,
      i/225/1.7cm, j/225/2.6cm,
      k/315/1.7cm, l/315/2.6cm
    }
    \node[coordinate] (\name) at (\a:\r) {};  


    \foreach \name/\a/\r in {
      t/270/1.5cm,
      m/240/2.7cm, n/252/2.7cm,
      o/264/2.7cm, p/276/2.7cm,
      q/288/2.7cm, r/300/2.7cm
    }
    \node[coordinate,yshift=-1cm] (\name) at (\a:\r) {};

    \fill[blau!25] (a)--(b)--(f);
    \fill[blau!40] (b)--(c)--(h);
    \fill[blau!55] (c)--(d)--(j);
    \fill[blau!10] (d)--(a)--(l);
    \fill[gruen!20] (t)--(m)--(n);
    \fill[gruen!40] (t)--(o)--(p);
    \fill[gruen!60] (t)--(q)--(r);

    \fill[black!20] (0,-2) ellipse (3mm and 5mm);

    \foreach \v in {a,b,...,r,t}
    \fill (\v) circle (3pt);

    \foreach \v/\w in {
      a/b, b/c, c/d, d/a, 
      a/e, a/f, b/e, b/f, e/f,
      b/g, b/h, c/g, c/h, g/h,
      c/i, c/j, d/i, d/j, i/j,
      d/k, d/l, a/k, a/l, k/l,
      t/m, t/n, m/n,
      t/o, t/p, o/p,
      t/q, t/r, q/r,
      d/t%
    }
    \draw[thick] (\v)--(\w);
    
    \draw[thick] (d)--(t);

    \begin{scope}[
      xshift=6cm,
      tn/.style={circle,draw,thin,fill,inner sep=1.5mm},
      level/.style={sibling distance=8mm},
      ]
      \node[tn,inner sep=1mm,fill=white] {}
        child { node[tn,fill=blau!10] {} }
        child { node[tn,fill=blau!25] {} }
        child { node[tn,fill=blau!40] {} }
        child { node[tn,fill=blau!55] {} }
        child { node[tn,fill=black!20] {} 
          child { node[tn,fill=gruen!20] {} } 
          child { node[tn,fill=gruen!40] {} } 
          child { node[tn,fill=gruen!60] {} } 
        }
      ;
    \end{scope}
    
    \path (0,-4.5) node {(a)} (6,-4.5) node {(b)};
  \end{tikzpicture}

  \caption{Graph $G'$ of Example~\ref{exa:dec} and a decomposition}
  \label{fig:exadec2}
\end{figure}
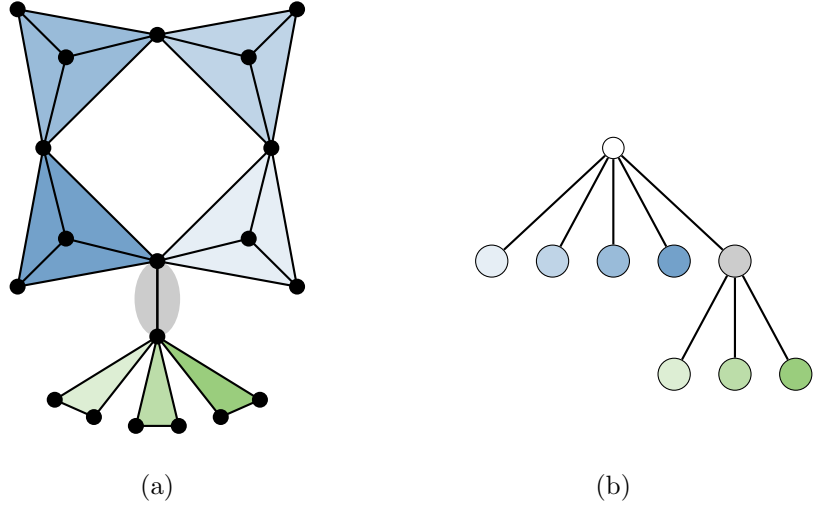

In \cite{gm10}, Robertson and Seymour proved that every graph has a
tree decomposition into parts corresponding to its maximal
tangles. Geelen, Gerards, and Whittle \cite{geegerwhi09} generalised
this to arbitrary connectivity systems. However, these decompositions
are not canonical.\footnote{Incidentally, Reed~\cite{ree97} calls
  Robertson and Seymour's decomposition ``canonical'', but he uses the
  term ``canonical'' with a different meaning. (It is not clear to me
  which.)} Carmesin, Diestel, Hamann, and
Hundertmark~\cite{cardiehamhun16} proved that every graph has a
canonical tree decomposition into parts corresponding to its maximal
tangles, and Hundertmark~\cite{hun11} (also see \cite{diehunlem16}) generalised this to arbitrary
connectivity systems. Our presentation of Hundermark's result
follows \cite{groschwe15a}.

\subsection{Tree Decomposition and Nested Separations}
\label{sec:tree1}

The type of decomposition we use here differs from the decompositions
introduced in Section~\ref{sec:dec} in two ways: the decomposition
trees are no longer cubic, and the pieces (or atoms) of the decomposition
are not only located at the leaves of the tree, but also at internal
nodes. Such decompositions are called \emph{tree decompositions},
which is a bit unfortunate, because tree decompositions of the connectivity function
$\kappa_G$ of a graph $G$ are not the same as the tree decompositions
of $G$ (introduced in Section~\ref{sec:graphdec}), whereas branch
decompositions of $\kappa_G$ are the same as branch decompositions of
$G$. But, in particular in lack of a better term, I do not want to
change the established terminology.

As usual, let $\kappa$ be a connectivity function on a set ${\U}$.

\begin{defn}
  A \emph{tree decomposition} of $\kappa$ is a pair $(T,\beta)$
  consisting of a tree $T$ and a function $\beta\colon V(T)\to 2^{\U}$
  such that the sets $\beta(t)$ for $t\in V(T)$ are mutually disjoint
  and their union is ${\U}$.
  \uend
\end{defn}

We introduce additional terminology and notation.  Let $(T,\beta)$ be
a tree decomposition of $\kappa$. We call the sets $\beta(t)$ the
\emph{bags} of the decomposition.
For every oriented edge $(s,t)\in\vec E(T)$ we
let $\gamma(s,t)$ be the union of the sets $\beta(t')$ for all
nodes $t'$ in the connected component of $T-st$ that contains
$t$. Note that $\gamma(s,t)=\bar{\gamma(t,s)}$. We call $\gamma(s,t)$
the \emph{cone} or the \emph{separation} of the decomposition at
$(s,t)$ and let
\[
\Sep(T,\beta):=\big\{\gamma(s,t)\bigmid (s,t)\in\vec E(T)\big\}.
\]
We always denote the cone mapping of a tree decomposition $(T,\beta)$
by $\gamma$, and we use implicit naming conventions such as denoting the
cone mapping of $(T',\beta')$ by $\gamma'$.

We could define tree decompositions based on their cones: let us
call a pair $(T,\gamma)$ consisting of a tree $T$ and a mapping
$\gamma:\vec E(T)\to 2^{\U}$ a \emph{weak decomposition} of $\kappa$
if $\gamma(s,t)=\bar{\gamma(t,s)}$ for all $(s,t)\in\vec E(T)$ and
$\gamma(s,t)\cap\gamma(s,t')=\emptyset$ for all $s\in V(T)$ and
$t,t'\in N(s)$. Then if we let 
\[
\beta(s):={\U}\setminus\bigcup_{t\in N(s)}\gamma(s,t),
\]
the pair $(T,\beta)$ is a tree decomposition of $\kappa$ with cone
mapping $\gamma$. Conversely, if
$(T,\beta)$ is a tree decomposition with cone mapping $\gamma$ then $(T,\gamma)$ is a weak
decomposition. 

In particular, every decomposition $(T,\gamma)$ is a weak
decomposition. If we define $\beta:V(T)\to 2^{\U}$ by
$\beta(t):=\gamma(s,t)$ for all leaves $t$ with $N(t)=\{s\}$ and $\beta(t):=\emptyset$ for
all internal nodes $t$, then $(T,\beta)$ is the tree decomposition
corresponding to the weak decomposition $(T,\gamma)$.
A weak decomposition is ``weaker'' than a
decomposition in two ways: the tree is not necessarily cubic, and the
union of the separations of the outgoing edges of a node is not
necessarily ${\U}$ as in a decomposition or pre-decomposition. However,
as opposed to a pre-decomposition, a weak decomposition is exact at
every node.

It is not necessary for us to define the width of a tree
decomposition. Nevertheless, the following exercise shows how it can
be done.

\begin{exe}
  Let $(T,\beta)$ be a tree decomposition of $\kappa$.
  \begin{enumerate}[label=(\alph*)]
  \item The a \emph{adhesion} of $(T,\beta)$ is
    \[
    \ad(T,\beta):=\max\big\{\kappa(\gamma(s,t))\bigmid
    (s,t)\in\vec E(T)\big\}
    \]
    if $E(T)\neq\emptyset$ and $\ad(T,\beta):= 0$ otherwise.

    The \emph{width} of $(T,\beta)$ \emph{at} a node $t\in V(T)$
    is
    \[
    \width(T,\beta,t):=\max_{X\subseteq\beta(t),\;{\U}\subseteq
      N(t)}\kappa\left(X\cup\bigcup_{u\in {\U}}\gamma(t,u)\right),
    \]
    and the \emph{width} of $(T,\beta)$ is
    $\width(T,\beta):=\max\{\width(T,\beta,t)\mid t\in V(T)\}$.

    Prove that if $(T,\gamma)$ is a decomposition (and not just a
    weak decomposition) then $\width(T,\beta)=\ad(T,\beta)$.
  \item Prove that the branch width of $\kappa$ is the minimum of the
    widths of its tree decompositions, that is, every branch
    decomposition can be transformed into a tree decomposition of at
    most the same width.\uend
  \end{enumerate}
\end{exe}

The following example shows that we cannot always transform a tree
decomposition into a decomposition with the same separations.

\begin{figure}
  \centering
      \begin{tikzpicture}[
    thick,
    vertex/.style={fill=black,circle,inner sep=0pt,minimum width=6pt},
    tn/.style={draw,minimum height=5mm,minimum width=5mm},
    ]
    
    \node[vertex] (v) at (60:5mm) {};
    \node[vertex] (w) at (240:5mm) {};
    \node[vertex] (x) at (-30:20mm) {};
    \node[vertex] (y) at (90:20mm) {};
    \node[vertex] (z) at (210:20mm) {};
   
    \draw[thick] (v) edge node[right] {$a$} (w);
    \draw[thick] (v) edge node[above] {$b$} (x);
    \draw[thick] (w) edge node[below] {$c$} (x);
    \draw[thick] (v) edge node[right] {$d$} (y);
    \draw[thick] (w) edge node[left] {$e$} (y);
    \draw[thick] (v) edge node[above] {$f$} (z);
    \draw[thick] (w) edge node[below] {$g$} (z);

    \begin{scope}[
      xshift=4.5cm,
      yshift=1cm,
      level/.style={sibling distance=10mm},
      ]
      \node[tn] {$a$}
        child { node[tn] {$b,c$} }
        child { node[tn] {$d,e$} }
        child { node[tn] {$f,g$} }
      ;
    \end{scope}

    \begin{scope}[
      xshift=8.5cm,
      yshift=1.8cm,
      level/.style={sibling distance=10mm},
      ]
      \node[tn] {}
        child { node[tn] {$a$} }
        child { node[tn] {$b,c$} }
        child {node [tn] {}
          child { node[tn] {$d,e$} }
          child { node[tn] {$f,g$} }
        }
      ;
    \end{scope}
    
    \path (0,-2) node {(a)} (4.5,-2) node {(b)} (9,-2) node {(c)};

  \end{tikzpicture}

  \caption{Graph $G$ of Example~\ref{exa:triangles}}
  \label{fig:triangles}
\end{figure}
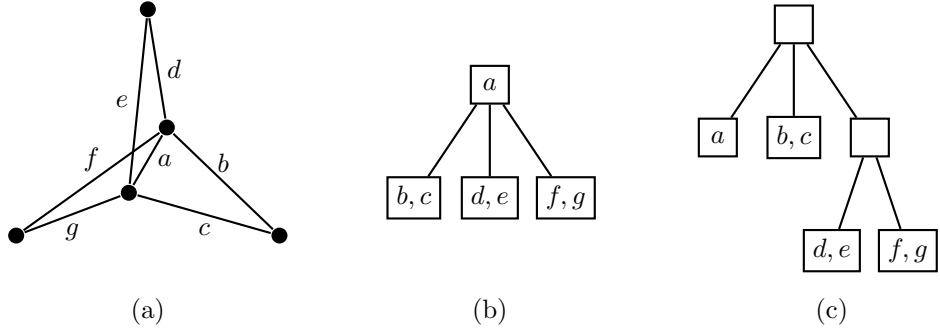

\begin{exa}\label{exa:triangles}
  Consider the graph $G$ shown in
  Figure~\ref{fig:triangles}(a). Figure~\ref{fig:triangles}(b) shows a
  tree decomposition of $\kappa_G$ and Figure~\ref{fig:triangles}(c) a
  somewhat similar decomposition. It is not hard to show that there is
  no decomposition that has the same separations as the tree
  decomposition in (b).

  Also note that the tree decomposition in (b) is invariant under
  automorphisms of the graph $G$, whereas the decomposition in (c) is
  not. In fact, if our goal is to decompose $G$ into the three
  triangles, as the tree decomposition in (b) does, then we cannot do
  this with a cubic tree. Intuitively, this should be clear; we can
  prove it by an exhaustive (and exhausting) case distinction.
  \uend
\end{exa}

\begin{rem}\label{rem:treedec}
  Let $(T,\beta)$ be a tree decomposition of a graph $G$ (in the usual
  sense defined in Section~\ref{sec:graphdec}). It yields a tree decomposition $(T,\beta')$ of $\kappa_G$ (in
  the sense defined above) as follows: for
  every edge $e\in E(G)$, we arbitrarily choose a node $t_e\in V(T)$
  that covers $e$. Then for every $t\in V(T)$ we let
  $\beta'(t):=\{e\in E(G)\mid t=t_e\}$.

  Conversely, if we have a tree decomposition $(T,\beta')$ of $\kappa_G$,
  then we can define a tree decomposition $(T,\beta)$ of $G$ as
  follows. For every node $v\in V(G)$ we let ${\U}_v$ be the set of all
  nodes $t\in V(T)$ such that $v$ is incident with an edge $e\in
  \beta'(t)$. We let $\hat {\U}_v$ be the union of ${\U}_v$ with all nodes
  $t\in V(T)$ appearing on a path between two nodes in ${\U}_v$. Now we
  let $\beta(t)=\{v\in V(G)\mid t\in\hat {\U}_v\}$. We call $(T,\beta)$
  the \emph{tree decomposition of $G$ corresponding to} $(T,\beta')$.

  Note that the construction of a tree decomposition of $\kappa_G$ from a
  tree decomposition of $G$ involves arbitrary choices, whereas the
  construction of a tree decomposition of $G$ from a
  tree decomposition of $\kappa_G$ is canonical. Thus the ``tree
  decomposition of a graph corresponding to a tree decomposition of
  its edge set'' is well-defined.
  \uend
\end{rem}

We will now characterise tree decompositions in terms of the structure
of their separations. Two sets $X,Y\subseteq {\U}$ are \emph{nested} if
either $X\subseteq Y$ or $X\subseteq\bar Y$ or $\bar X\subseteq Y$ or
$\bar X\subseteq\bar Y$; otherwise $X$ and $Y$ \emph{cross}. Note that
$X$ and $Y$ cross if and only if the four sets $X\cap Y$, $X\cap\bar
Y$, $\bar X\cap Y$, and $\bar X\cap\bar Y$ are all nonempty.  A family
$\CS\subseteq 2^{\U}$ is \emph{nested} if all $X,Y\in\CS$ are nested.
Observe that for every tree decomposition $(T,\beta)$ of $\kappa$ the set
$\Sep(T,\beta)$ is nested and closed under complementation.

The following converse of this observation goes back (at least) to \cite{gm10}.

\begin{lem}\label{lem:nested}
  Let $\CS\subseteq 2^{\U}$. Then $\CS=\Sep(T,\beta)$ for a tree
  decomposition $(T,\beta)$ of $\kappa$ if and only if $\CS$ is nested
  and closed under complementation.

  Furthermore, there is a canonical construction that associates with
  every set $\CS\subseteq 2^{\U}$ that is nested and closed under
  complementation a
  tree decomposition $(T_\CS,\beta_\CS)$ of $\kappa$ with $\Sep(T_\CS,\beta_\CS)=\CS$.
\end{lem}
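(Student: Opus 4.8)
The plan is to build the decomposition tree directly from the nested family $\CS$. First I would dispose of the ``only if'' direction, which is immediate: for any tree decomposition $(T,\beta)$, any two cones $\gamma(s,t)$ and $\gamma(s',t')$ are nested because deleting two edges from a tree splits it into three pieces that are linearly or branchingly ordered by inclusion of the corresponding vertex-sets, and $\Sep(T,\beta)$ is closed under complementation since $\gamma(s,t)=\bar{\gamma(t,s)}$.

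For the ``if'' direction I would give an explicit canonical construction of $(T_\CS,\beta_\CS)$. The standard approach is to let the nodes of $T_\CS$ be equivalence classes of \emph{orientations} of $\CS$. Call a map $o$ assigning to each unordered pair $\{X,\bar X\}$ (with $X\in\CS$) one of the two sets a \emph{consistent orientation} if for all $X,Y\in\CS$ we do not have $o$-chosen sets $X_0,Y_0$ with $X_0\subsetneq \bar Y_0$, i.e.\ the chosen sets pairwise intersect and are ``$\subseteq$-downward closed'' in the appropriate sense. Equivalently, following the classical tree-of-separations construction, define $x\le_o y$ sensibly; one then shows the consistent orientations, ordered by the separations they ``point away from'', form the node set of a tree, with an edge $st$ for each complementary pair $\{X,\bar X\}$ on which two adjacent orientations differ. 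I would define $V(T_\CS)$ as the set of consistent orientations, put an edge between $o$ and $o'$ whenever they disagree on exactly one pair $\{X,\bar X\}$, verify this graph is connected and acyclic (connectivity by flipping one separation at a time along a ``shortest disagreement'' path; acyclicity because each pair is flipped at most once on any path), and then for $(s,t)\in\vec E(T_\CS)$ corresponding to the pair $\{X,\bar X\}$, set $\gamma_\CS(s,t)$ to be whichever of $X,\bar X$ the orientation $t$ chooses. Finally define $\beta_\CS(t):={\U}\setminus\bigcup_{u\in N(t)}\gamma_\CS(t,u)$; the disjointness-and-union property of a tree decomposition follows because each $x\in{\U}$ lands in exactly one bag, namely the unique ``limit'' orientation pointing towards $x$ in every separation, and one checks $\gamma_\CS(s,t)$ is indeed the union of bags on $t$'s side, so $\Sep(T_\CS,\beta_\CS)=\CS$. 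Canonicity is then automatic: an isomorphism of connectivity systems carries $\CS$ to the image family and acts on orientations, hence on $T_\CS$, commuting with everything by construction.

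The main obstacle, and the part requiring genuine care rather than routine checking, is proving that the graph on consistent orientations is actually a tree (connected and acyclic) and that the edges biject with complementary pairs in $\CS$ in such a way that $\gamma_\CS(s,t)$ recovers $\CS$. The subtle points are: (i) showing every complementary pair in $\CS$ is ``realised'' by some edge (no pair is redundant) — this uses that if $X\in\CS$ then both $\{X_0 : X_0 \text{ on the }X\text{ side}\}$ and its complement give consistent orientations, which needs nestedness crucially; and (ii) showing there are no cycles, which amounts to: along any path of orientations, the set of separations that have been flipped is monotone, so no separation is flipped twice. Both are clean consequences of nestedness but need to be stated precisely. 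One technical nuisance is the possibility of ``parallel'' separations, i.e.\ $X\ne Y$ with $\{X,\bar X\}$ and $\{Y,\bar Y\}$ oriented identically by every consistent orientation — for a genuinely canonical construction one should simply allow multiple edges or, more cleanly, take $T_\CS$ to have one edge per \emph{nonempty} ``slice'' and note such parallel pairs correspond to the same edge with the bag between them empty; I would handle this by working with the equivalence $\sim$ on $\CS$ where $X\sim Y$ iff they are never separated, and indexing edges by $\sim$-classes, after first remarking that if distinct $X,Y$ satisfy $X\subsetneq Y$ with nothing of $\CS$ strictly between then the bag on the edge carries ${\U}$-elements of $Y\setminus X$ (possibly none). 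I would keep the write-up at the level of these structural lemmas and reference \cite{gm10} and \cite{diehunlem16} for the orientation-tree machinery rather than reprove it in full.
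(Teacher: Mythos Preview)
Your orientation-tree approach is sound and is a genuinely different route from the paper's. The paper argues by induction on $|\CS|$: list the inclusion-wise minimal sets $X_1,\ldots,X_m\in\CS$ (pairwise disjoint by nestedness), remove the pairs $\{X_i,\bar X_i\}$ to form $\CS'$, build a rooted tree decomposition $(T',\beta')$ for $\CS'$ by induction, and then attach one fresh leaf $u_i$ to the unique deepest node $t_i$ of $T'$ satisfying $X_i\subseteq\gamma'(s_i,t_i)$, setting $\beta(u_i):=X_i$. Canonicity is then read off step by step from the construction. Your construction is the more structural one familiar from the abstract-separation-system literature; it generalises verbatim to separation systems and makes the tree appear all at once, at the price of importing the orientation-tree machinery you plan to cite rather than prove.

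One small but real gap in your sketch: if $\emptyset\in\CS$ (equivalently ${\U}\in\CS$), the pair $\{\emptyset,{\U}\}$ is never realised by an edge of your orientation graph, since as soon as $\CS$ contains any other pair no consistent orientation can select $\emptyset$. You would end up with $\Sep(T_\CS,\beta_\CS)=\CS\setminus\{\emptyset,{\U}\}$. The paper handles precisely this case by hanging a leaf with empty bag off the root. By contrast, your worry about ``parallel'' pairs is unnecessary: for distinct pairs $\{X,\bar X\}\neq\{Y,\bar Y\}$ in $\CS$ with, say, $\emptyset\subsetneq X\subsetneq Y\subsetneq{\U}$, the partial choice $\{\bar X,Y\}$ is consistent and extends, so the two pairs always yield distinct edges and no quotient by your relation $\sim$ is needed.
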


Recall that a construction is canonical if every isomorphism between
two inputs of the construction commutes with an isomorphism between the
outputs.  Let us explain what this means for the construction of our
lemma. Let $\kappa$ be a connectivity function on on ${\U}$ and $\kappa'$
a connectivity function on ${\U}'$, and let $\CS\subseteq 2^{\U}$,
$\CS'\subseteq 2^{{\U}'}$ be nested and closed under complementation. 
Let
$f$ be an isomorphism $({\U},\kappa,\CS)$ to $({\U}',\kappa',\CS')$, that is, a
bijective $f:{\U}\to {\U}'$ such that $\kappa(X)=\kappa'(f(X))$ and
$X\in\CS\iff f(X)\in\CS'$ for all
$X\subseteq {\U}$. The canonicity of the construction means that for
every such $f$
there is an isomorphism $g$ from $T_{\CS}$ to $T_{\CS'}$
such that $f(\beta_{\CS}(t))=\beta_{\CS'}(g(t))$ for all $t\in
V(T_{\CS})$.

\begin{proof}[Proof of Lemma~\ref{lem:nested}]
  We have already noted that the set of separations of a tree
  decomposition is nested and closed under complementation. 

  To prove the backward direction, we describe the construction of a
  tree decomposition $(T,\beta)=(T_{\CS},\beta_{\CS})$ with
  $\Sep(T,\beta)=\CS$ from a set $\CS\subseteq 2^{\U}$ that is nested
  and closed under complementation. Canonicity will be obvious from
  the construction.

  By induction on $|\CS|$ we construct a rooted tree $(T,r)$ and a
   mapping $\beta:V(T)\to 2^{\U}$ such that $(T,\beta)$ is a tree
   decomposition with $\Sep(T,\beta)=\CS$.
   
   In the base step $\CS=\emptyset$, we let $T$ be a tree with one
   node $r$ and we define $\beta$ by $\beta(r):={\U}$.

   In the inductive step $\CS\neq\emptyset$, let $X_1,\ldots,X_m$ be a
   list of all inclusion-wise minimal elements of $\CS$. 
As $\CS$ is
   nested, for all $i\neq j$ we have $X_i\subseteq \bar X_j$. This
   implies that the sets $X_i$ are mutually disjoint. %
   Let
   \[
   \CS':=\CS\setminus\{X_i,\bar X_i\mid i\in[m]\}.
   \]
   By the induction hypothesis, there is a rooted tree $(T',r')$ and a
   mapping $\beta':V(T')\to 2^{\U}$ such that $(T',\beta')$ is a tree
   decomposition with $\Sep(T',\beta')=\CS'$. Let $r'$ be the root of
   $T'$. 

   If $X_1=\emptyset$ and $m=1$, we construct $T$ from $T'$ by adding a fresh
   child $t_0$ to the root $r'$. We let $r:=r'$ be the root of the new
   tree and define $\beta$ by $\beta(t'):=\beta'(t')$ for all $t'\in
   V(T')$ and $\beta(t_0)=\emptyset$.

   Otherwise, all the $X_i$ are nonempty. For every $i\in[m]$, let
   $t_i$ be a node of maximum
  depth such that $X_i\subseteq\gamma'(s_i,t_i)$ for the parent $s_i$ of $t_i$,
  or $t_i:=r'$ if no such node exists.\footnote{The \emph{depth} of a node in
  a rooted tree is its distance from the root.} Observe that there is only one
  such node $t_i$. Indeed, if $t\neq t_i$ has the same depth as
  $t_i$, then neither $t_i = r'$ nor $t=r'$. Let $s$ be the parent of
  $t$. Then the edges $(s_i,t_i)$ and
  $(s,t)$ are pointing away from each other and thus
  $\gamma'(s_i,t_i)\cap \gamma'(s,t)=\emptyset$. As
  $X_i\neq\emptyset$, this contradicts 
  $X_i\subseteq\gamma'(s_i,t_i)\cap\gamma'(s,t)$.

  We define a new tree $T$ from $T'$ by attaching a fresh leaf $u_i$
  to $t_i$ for every $i\in[m]$. We let $r:=r'$ be the root of $T$. We
  define $\beta\colon V(T)\to2^{\U}$ by
  \[
  \beta(t):=
  \begin{cases}
    X_i&\text{if }t=u_i,\\
    \beta'(t)\setminus\bigcup_{i=1}^mX_i&\text{if }t\in V(T').
  \end{cases}
  \]
  As the sets $X_i$ are mutually disjoint, $(T,\beta)$ is a tree
  decomposition of ${\U}$.  We need to prove that
  $\Sep(T,\beta)=\CS$.

  \begin{claim}[1]
    For all oriented edges $(s,t)\in\vec E(T')$ we have
    $\gamma(s,t)=\gamma'(s,t)$.

    \proof Let $(s,t)\in\vec E(T')$. Without loss of generality we
    assume that $t$ is a child of $s$. 
    We need to prove that 
    \[
    X_i\subseteq\gamma(s,t)\iff
    X_i\subseteq\gamma'(s,t)
    \]
    for all $i\in[m]$.

    If $X_i\subseteq\gamma(s,t)$, then $u_i$ is a descendant of
    $t$ in $T$ and
    thus $t=t_i$ or $t_i$ is a descendant of $t$ in $T'$. But as
    $X_i\subseteq\gamma'(s_i,t_i)$, this implies
    $X_i\subseteq\gamma'(s,t)$.

    For the backward direction, suppose that
    $X_i\subseteq\gamma'(s,t)$. Then $t_i=t$ or $t_i$ is a
    descendant of $t$ in $T'$, and thus $u_i$ is a descendant of $t$
    in $T$. This implies $X_i\subseteq\gamma(s,t)$.
    \uend
  \end{claim}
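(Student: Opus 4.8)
\textit{Proof proposal.} The plan is to reduce the claim to a statement about which of the newly attached pendant leaves $u_i$ lie on the $t$-side of the edge $st$, and to identify that side condition with the membership $X_i\subseteq\gamma'(s,t)$. Since $\gamma(s,t)=\bar{\gamma(t,s)}$, it suffices to treat one orientation, so I would assume $t$ is a child of $s$ in the rooted tree $(T',r')$. The degenerate subcase $m=1$, $X_1=\emptyset$ (where $T$ is $T'$ with one empty-bag child added at $r'$) is immediate, so assume all $X_i$ are nonempty and $T$ arises from $T'$ by attaching a fresh leaf $u_i$ at $t_i$ for each $i\in[m]$. Write $D_u$ for the set of descendants of a node $u$ of $T'$, including $u$ itself. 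Then the component of $T'-st$ containing $t$ has node set $D_t$, and the component of $T-st$ containing $t$ has node set $D_t\cup\{u_i\mid t_i\in D_t\}$; since bags of a tree decomposition are pairwise disjoint, $\beta(t')=\beta'(t')\setminus\bigcup_{j}X_j$ for $t'\in V(T')$, and $\beta(u_i)=X_i$, this yields
\[
\gamma(s,t)=\Big(\gamma'(s,t)\setminus\bigcup_{j\in[m]}X_j\Big)\;\cup\;\bigcup_{i\,:\,t_i\in D_t}X_i .
\]

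The heart of the argument is then the equivalence $t_i\in D_t\iff X_i\subseteq\gamma'(s,t)$, for each $i\in[m]$. For the forward implication: $t_i\in D_t$ forces $t_i\neq r'$ (as $t\neq r'$), so $X_i\subseteq\gamma'(s_i,t_i)$ for the parent $s_i$ of $t_i$; and $D_{t_i}\subseteq D_t$ gives $\gamma'(s_i,t_i)\subseteq\gamma'(s,t)$, hence $X_i\subseteq\gamma'(s,t)$. Conversely, assume $X_i\subseteq\gamma'(s,t)$. Then $t$ is itself a node witnessing the condition defining $t_i$, so $t_i\neq r'$ and $t_i$ is a node of \emph{maximal} depth with this property. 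If $t_i$ were a proper ancestor of $t$, the node $t$ would have strictly larger depth while still satisfying the condition, contradicting maximality; if $t_i$ were incomparable to $t$ in $T'$, the subtrees $D_{t_i}$ and $D_t$ would be disjoint, so $\gamma'(s_i,t_i)\cap\gamma'(s,t)=\emptyset$, contradicting $\emptyset\neq X_i\subseteq\gamma'(s_i,t_i)\cap\gamma'(s,t)$. Hence $t_i\in D_t$.

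It remains to feed this into the displayed identity. Each $X_j$ ($j\in[m]$) is an inclusion-minimal member of $\CS$, and $\gamma'(s,t)\in\Sep(T',\beta')=\CS'$, so in particular $\gamma'(s,t)\notin\{X_j,\bar X_j\}$. As $\CS$ is nested and closed under complementation, $X_j$ and $\gamma'(s,t)$ are nested; of the four alternatives, $\bar X_j\subseteq\gamma'(s,t)$ and $\bar X_j\subseteq\bar{\gamma'(s,t)}$ would force $\bar{\gamma'(s,t)}\subseteq X_j$ and $\gamma'(s,t)\subseteq X_j$ respectively, hence (by minimality of $X_j$ and closure of $\CS$ under complement) $X_j=\bar{\gamma'(s,t)}$ or $X_j=\gamma'(s,t)$, contradicting $\gamma'(s,t)\in\CS'$. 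So for every $j$ either $X_j\subseteq\gamma'(s,t)$ or $X_j\cap\gamma'(s,t)=\emptyset$. Consequently $\gamma'(s,t)\setminus\bigcup_{j}X_j=\gamma'(s,t)\setminus\bigcup_{j\,:\,X_j\subseteq\gamma'(s,t)}X_j$, and by the equivalence of the previous paragraph the sets added back in the display are exactly those $X_i\subseteq\gamma'(s,t)$; since each such $X_i$ lies inside $\gamma'(s,t)$, we recover $\gamma(s,t)=\gamma'(s,t)$.

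I expect the main obstacle to be the backward direction of the equivalence---ruling out that $t_i$ is an ancestor of, or incomparable with, $t$---since that is the one place where the maximal-depth choice of $t_i$, its uniqueness, and the nonemptiness of $X_i$ are all essential; once the containment-or-disjointness dichotomy for $X_j$ against $\gamma'(s,t)$ is available, the rest is bookkeeping.
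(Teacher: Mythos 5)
Your proof is correct and takes essentially the same route as the paper: reduce to determining which of the new leaves $u_i$ lie on the $t$-side of the edge $st$, and identify this with the condition $X_i\subseteq\gamma'(s,t)$ via the maximal-depth choice of $t_i$. You are somewhat more explicit than the paper---in particular you spell out the containment-or-disjointness dichotomy for each $X_j$ against $\gamma'(s,t)$, a step the paper leaves implicit but which is in fact needed to pass from the equivalence $X_i\subseteq\gamma(s,t)\iff X_i\subseteq\gamma'(s,t)$ to the claimed equality $\gamma(s,t)=\gamma'(s,t)$.
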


  To prove that $\Sep(T,\beta)\subseteq\CS$, let $X\in
  \Sep(T,\beta)$. Say, $X=\gamma(s,t)$ for some oriented edge
  $(s,t)\in\vec E(T)$.  If $(s,t)=(t_i,u_i)$ for some $i\in[m]$,
  then $X=X_i\in\CS$, and if $(s,t)=(u_i,t_i)$ then $X=\bar
  X_i\in\CS$, because $\CS$ is closed under
  complementation. Otherwise, $(s,t)\in\vec E(T')$. Then by Claim~1 we
  have $X=\gamma'(s,t)\in\CS'\subseteq\CS$.

  To prove the converse inclusion, let $X\in\CS$. If $X=X_i$ for some
  $i\in[m]$, then $X=\gamma(t_i,u_i)$, and if $X=\bar X_i$, then
  $X=\gamma(u_i,t_i)$. Otherwise, $X\in\CS'$, and thus by
  Claim~1, $X=\gamma'(s,t)=\gamma(s,t)$ for some
  $(s,t)\in\vec E(T')$.
\end{proof}

\subsection{Tangle Decompositions}

It is our goal to construct tree decompositions whose parts correspond
to tangles and whose separations separate these tangles. It will be
convenient to define such decompositions through their families of
separations.
Let $\KT$ be a
family of mutually incomparable $\kappa$-tangles. Then a \emph{nested
  set of separations} for $\KT$ is a set $\CS\subseteq 2^{\U}$ that is nested and closed
under complementation and satisfies the following two conditions.
\begin{nlist}{TN}
\item\label{li:tn1} For all $\CT,\CT'\in\KT$ with $\CT\bot\CT'$ there is a
  $Z\in\CS$ such that $Z$ is a minimum $(\CT,\CT')$-separation.
\item\label{li:tn2} For all $Z\in\CS$ there are tangles $\CT,\CT'\in\KT$ with
  $\CT\bot\CT'$ such that $Z$ is a minimum $(\CT,\CT')$-separation.
\end{nlist}
A \emph{nested set of separations} for an arbitrary (not necessarily
mutually incomparable) family $\KT$ of
$\kappa$-tangles is a nested set of separations for the family
$\KT_{\max}\subseteq\KT$ consisting of all inclusion-wise maximal tangles
in $\KT$.

The following theorem from \cite{groschwe15a} shows that nested sets
of separations for a family of tangles correspond to tree
decompositions ``displaying'' these tangles in a nice way.

\begin{theo}\label{theo:td}
  Let $\KT$ be a nonempty family of mutually incomparable
  $\kappa$-tangles.  Let $\CS$ be a nested set of separations for $\KT$. Then
  for every tree decomposition $(T,\beta)$ of $\kappa$ with
  $\Sep(T,\beta)=\CS$ there is a unique injective mapping $\tau:\KT\to V(T)$
  satisfying the following conditions.
  \begin{eroman}
  \item\label{li:td1} For all distinct $\CT,\CT'\in\KT$ there is an oriented edge
    $(t,t')\in\vec E(T)$ on the oriented path from $\tau(\CT)$ to
    $\tau(\CT')$ in $T$ such that $\gamma(t',t)$ is a minimum
    $(\CT,\CT')$-separation.
  \item\label{li:td2} For every oriented edge $(t,t')\in\vec E(T)$ there
    are tangles $\CT,\CT'\in\KT$ such that $(t,t')$ appears on the
    oriented path from $\tau(\CT)$ to $\tau(\CT')$ and $\gamma(t',t)$ is a
    minimum $(\CT,\CT')$-separation.
  \item\label{li:td3} For every tangle $\CT\in\KT$ and every neighbour
    $t'$ of $t:=\tau(\CT)$ in $T$ it holds that $\gamma(t',t)\in\CT$.
  \item\label{li:td4} For every tangle $\CT\in\KT$ and every oriented edge
    $(u',u)$ pointing towards $t:=\tau(\CT)$, if
    $\kappa(\gamma(u',u))<\ord(\CT)$ then $\gamma(u',u)\in\CT$.
  \item\label{li:td5} For all leaves $t\in L(T)$ there is a $\CT\in\KT$ such that $t=\tau(\CT)$.
  \end{eroman}
  Furthermore, there is a canonical construction that associates with
  $\CS$ a tree decomposition $(T,\beta)$ with
  $\Sep(T,\beta)=\CS$ and the unique injective mapping $\tau:\KT\to
  V(T)$ satisfying conditions (i)--(v).
\end{theo}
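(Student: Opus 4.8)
The plan is to produce the tree decomposition canonically from $\CS$ via Lemma~\ref{lem:nested}, to characterise the node $\tau(\CT)$ purely in terms of $\CS$ and $\CT$, and then to read off \ref{li:td1}--\ref{li:td5} from the two defining properties \ref{li:tn1} and \ref{li:tn2} of a nested set of separations. Throughout fix a tree decomposition $(T,\beta)$ with $\Sep(T,\beta)=\CS$ --- one exists by Lemma~\ref{lem:nested}, and that lemma moreover supplies a canonical one, which is the $(T,\beta)$ to be used in the end. Write $\gamma$ for its cone mapping, so that $\{\gamma(s,t)\mid(s,t)\in\vec E(T)\}=\CS$, and recall that by \ref{li:tn2} every cone $\gamma(s,t)$ is a minimum separation between two incomparable tangles of $\KT$. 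The degenerate case $|\KT|=1$ forces $\CS=\emptyset$ by \ref{li:tn2}, hence $T$ is a single node by Lemma~\ref{lem:nested} and everything is trivial; so assume $|\KT|\ge2$, which together with \ref{li:tn1} guarantees that every $\CT\in\KT$ orients at least one edge in the construction below.

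Fix $\CT\in\KT$ of order $k:=\ord(\CT)$. Call an edge $e\in E(T)$ \emph{visible to $\CT$} if $\kappa(\gamma(s,t))<k$ for one (equivalently, either) orientation $(s,t)$ of $e$; for such an edge exactly one of $\gamma(s,t),\gamma(t,s)$ lies in $\CT$ by \ref{li:t1}, and I orient $e$ towards that side. Tangle axiom \ref{li:t2} prevents two visible edges from being oriented away from each other: the two sides pointed to would then be disjoint subsets of $\U$ both in $\CT$, contradicting \ref{li:t2} with the third set equal to one of them. Hence the sides pointed to form a family of pairwise intersecting subtrees of $T$, so by the Helly property for subtrees of a tree they share a node $t_0$; every edge visible to $\CT$ and incident with $t_0$ is then oriented towards $t_0$, that is, its cone on the $t_0$-side lies in $\CT$. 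The crucial point --- and, I expect, the main obstacle of the whole proof --- is that $t_0$ is incident \emph{only} with edges visible to $\CT$, so that \emph{every} incident cone $\gamma(t',t_0)$ lies in $\CT$; we then set $\tau(\CT):=t_0$, which makes \ref{li:td3} hold by construction. To see this, suppose $e=t_0s$ has $\kappa(\gamma(t_0,s))\ge k$; by \ref{li:tn2} the cone $Z:=\gamma(t_0,s)$ is a minimum $(\CT_a,\CT_b)$-separation with $\CT_a,\CT_b\in\KT$ incomparable, so $\ord(\CT_a),\ord(\CT_b)>k$ and, since $\KT$ is mutually incomparable, $\CT$ is incomparable with both $\CT_a$ and $\CT_b$; \ref{li:tn1} then yields cones $W_a,W_b\in\CS$ of order less than $k$ with $W_a,W_b\in\CT$, $\bar W_a\in\CT_a$ and $\bar W_b\in\CT_b$. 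These edges are visible to $\CT$, so the sides they point to contain $t_0$; a (somewhat tedious) case analysis of where they sit relative to $e$, combined with \ref{li:t2} applied once inside $\CT_a$ (to $\bar W_a$ and $Z$) and once inside $\CT_b$ (to $\bar W_b$ and $\bar Z$), forces two disjoint sides into a common tangle, the desired contradiction. An alternative is to run this part by induction on $|\CS|$, mirroring the proof of Lemma~\ref{lem:nested}.

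Next I would record that $\tau$ is \emph{forced}, hence unique and injective. If $t\ne t'$ were two nodes each having all incident cones in $\CT$, then the cone on the $t$-side of the edge of the $t$--$t'$ path meeting $t$ and the cone on the $t'$-side of the edge meeting $t'$ would be two disjoint subsets of $\U$ both in $\CT$, contradicting \ref{li:t2}; so the node characterised by \ref{li:td3} is unique, and therefore any $\tau$ obeying \ref{li:td1}--\ref{li:td5} must send $\CT$ to it. For injectivity, if $\tau(\CT)=\tau(\CT')=t$ for distinct $\CT,\CT'\in\KT$, then by \ref{li:tn1} some cone $Z$ of an edge is a minimum $(\CT,\CT')$-separation, so $\kappa(Z)<k$ and the side of that edge pointed to by $\CT$ (the $Z$-side) contains $t$ while the side pointed to by $\CT'$ (the $\bar Z$-side) also contains $t$ --- impossible, as these are opposite sides of one edge.

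Finally I would verify \ref{li:td1}--\ref{li:td5} and canonicity. For \ref{li:td1}: given distinct $\CT,\CT'$, \ref{li:tn1} provides a minimum $(\CT,\CT')$-separation $Z\in\CS=\Sep(T,\beta)$, say $Z=\gamma(t',t)$; since $\tau(\CT)$ lies on the $t$-side (the side $\CT$ points to) and $\tau(\CT')$ on the $t'$-side, the oriented edge $(t,t')$ lies on the oriented path from $\tau(\CT)$ to $\tau(\CT')$ and $\gamma(t',t)=Z$ is the required minimum separation; \ref{li:td2} follows the same way, starting from the minimum separation witnessing a given cone via \ref{li:tn2}. Condition \ref{li:td4} is immediate: an edge pointing towards $\tau(\CT)$ whose cone has order less than $\ord(\CT)$ is visible to $\CT$, and the side it points to contains $\tau(\CT)$, so its cone lies in $\CT$. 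For \ref{li:td5}, a leaf $t$ with unique neighbour $s$ has $\gamma(s,t)=\beta(t)\in\CS$, which by \ref{li:tn2} is a minimum $(\CT_1,\CT_2)$-separation with $\beta(t)\in\CT_1$; as $t$'s only incident cone $\beta(t)$ lies in $\CT_1$, the characterisation of $\tau$ gives $\tau(\CT_1)=t$. Canonicity then follows from canonicity of the construction in Lemma~\ref{lem:nested} together with the fact that $\tau$ has been defined from $\CS$ and $\KT$ alone --- as the map sending each $\CT$ to the unique node all of whose incident cones lie in $\CT$ --- so any isomorphism of connectivity systems carrying $(\kappa,\CS,\KT)$ to $(\kappa',\CS',\KT')$ commutes with the tree isomorphism furnished by Lemma~\ref{lem:nested} and with $\tau$.
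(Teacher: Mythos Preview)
Your overall approach matches the paper's: orient the low-order edges by each tangle $\CT$, locate a ``centre'' for $\CT$ (the paper does this via the connected component $C_\CT$ of $T-E_k$ rather than the Helly property, but the resulting node set is identical), prove this centre is a single node, and then read off \ref{li:td1}--\ref{li:td5}. The only substantive step is exactly the one you flag --- showing your $t_0$ has no incident edge of order $\ge k$ --- and this is the paper's Claim~2.

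Your sketched mechanism for that step is not quite right, though. After you locate a minimum $(\CT,\CT_a)$-separation $W_a$ of order $<k$ and note that its $\CT$-side contains $t_0$, the case analysis on where the $W_a$-edge sits relative to $e=t_0s$ does \emph{not} always end with ``two disjoint sides in a common tangle''. If the $W_a$-edge lies on the $t_0$-side of $e$ you do get $\bar W_a\subseteq\bar Z$ and hence $\bar W_a\cap Z=\emptyset$ inside $\CT_a$, fine. But if it lies on the $s$-side, you obtain $\bar W_a\subseteq Z$ and $W_a\supseteq\bar Z$; then upward closure gives $W_a\in\CT_b$, so $\bar W_a$ is a $(\CT_a,\CT_b)$-separation of order $<k\le\kappa(Z)$ --- contradicting the \emph{minimality} of $Z$ granted by \ref{li:tn2}, not \ref{li:t2}. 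No configuration produces disjoint members of a single tangle in this branch. This is exactly how the paper finishes Claim~2, and observe that $W_b$ is never needed. So your plan is correct, but the punchline of the key step should invoke minimality of $Z$ rather than a tangle-axiom violation.
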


\begin{proof}
  Without loss of generality we assume that $|\KT|\ge 2$. Otherwise,
  by \ref{li:tn2} we have $\CS=\emptyset$ and the trivial one-node
  tree decomposition together with the unique mapping $\tau$ from
  $\KT$ to this one-node tree satisfies (i)--(v).

  By \ref{li:tn1}, $|\KT|\ge 2$ implies 
  $\CS\neq\emptyset$. Furthermore, $\ord(\CT)\ge 1$
  for all $\CT\in\KT$, because the unique tangle of order $0$ is the
  empty tangle, which is comparable with all other tangles. Let
  $(T,\beta)$ be a tree decomposition of $\kappa$ with
  $\Sep(T,\beta)=\CS$. 

  For every $k\ge 1$, we let $E_k$ be the set of all edges
  $e=tt'\in E(T)$ with $\kappa(\gamma(t,t'))<k$.

  For every tangle $\CT\in\KT$ of order $k$ we construct a connected subset
  $C_{\CT}\subseteq V(T)$ as follows: we orient all edges $e=tt'\in
  E_k$ in such a way that they
  point towards $\CT$, that is, if $\gamma(t,t')\in\CT$ then
  the orientation of $e$ is $(t,t')$ and otherwise the orientation is
  $(t',t)$. Then there is a unique connected component of $T-E_k$ (the
  forest obtained from $T$ by deleting all edges in $E_k$) such
  that all oriented edges point towards this component. We let
  $C_{\CT}$ be the node set of this connected component.

  It follows from \ref{li:tn1} that the sets $C_{\CT}$ are mutually
  vertex disjoint. To see this, consider distinct
  $\CT,\CT'\in\KT$. Let $Z\in\CS$ be a minimum $(\CT,\CT')$ separation
  and $(t,t')\in\vec E(T)$ such that $\gamma(t',t)=Z$. Such an edge
  exists, because $\Sep(T,\beta)=\CS$. Then
  $C_{\CT}$ is contained in the connected component of $T-tt'$ that
  contains $t$ and $C_{\CT'}$ is contained in the connected component
  of $T-tt'$ that contains $t'$. Hence $C_{\CT}\cap
  C_{\CT'}=\emptyset$.

  \begin{claim}[1]
    Let $\CT,\CT'\in\KT$ be distinct, and let $Z\in\CS$ be a minimum
    $(\CT,\CT')$-separation. Then every oriented edge $(t,t') \in \vec E(T)$ such that
    $\gamma(t',t)=Z$ appears on the oriented path from $C_{\CT}$
    to $C_{\CT'}$.

    \proof
    Let $(t,t')\in\vec E(T)$ such that
    $Z=\gamma(t',t)$. 
    As $Z\in\CT$
    the oriented edge $(t',t)$ points towards $C_{\CT}$, and as $\bar
    Z=\gamma(t,t')\in\CT'$ the oriented edge $(t,t')$ points
    towards $C_{\CT'}$. It follows that the oriented edge $(t,t')$
    appears on the oriented path $\vec P$ from $C_{\CT}$ to $C_{\CT'}$
    in $T$.
    \uend
  \end{claim}

  \begin{claim}[2]
    For all $\CT\in\KT$ it holds that $|C_{\CT}|=1$.

    \proof Suppose for contradiction that $|C_{\CT}|>1$ for some
    $\CT\in\KT$. Let $C:=C_{\CT}$. As $C$ is connected, there is an
    edge $e=t_1t_2\in E(T)$ with both endvertices in $C$. Then
    $e\not\in E_{\ord(\CT)}$ and thus
    $\kappa(\gamma(t_1,t_2))\ge\ord(\CT)$.

    Let $\CT_1,\CT_2\in\KT$ such that $Z:=\gamma(t_2,t_1)\in\CS$
    is a minimum $(\CT_1,\CT_2)$-separation. Such tangles exist by
    \ref{li:tn2}. For $i=1,2$, let $C_i:=C_{\CT_i}$. By Claim~1, the
    oriented edge $(t_1,t_2)$ appears on the oriented path $\vec P$
    from $C_1$ to $C_2$ in $T$.

  We have
  \[
  \ord(\CT)\le\kappa(\gamma(t_1,t_2))
  =\kappa(Z)<\min\{\ord(\CT_1),\ord(\CT_2)\}.
  \]
  Let $Z_1\in\CS$ be a minimum $(\CT_1,\CT)$-separation. Then
  $\kappa(Z_1)<\ord(\CT)\le\kappa(Z)$. Moreover, by Claim~1, there is an oriented edge $(u_1,u)$ on the oriented path
  $\vec Q$ from $C_1$ to $C$ such that
  $\gamma(u,u_1)=Z_1$. 

  We have $Z_1\in\CT_1$, because $Z_1$ is a
  $(\CT_1,\CT)$-separation. Since $t_1\in C$, the path $\vec Q$ is an
  initial segment of the path $\vec P$, and therefore $(u_1,u)$ is
  also an edge of $\vec P$. The edge $(u_1,u)$ occurs before $(t_1,t_2)$ on the path
  $\vec P$. Thus $\bar Z_1=\gamma(u_1,u)\supseteq\gamma(t_1,t_2)=\bar
  Z$, and as $\bar Z\in\CT_2$, this implies $\bar Z_1\in\CT_2$. Hence
  $Z_1$ is a $(\CT_1,\CT_2)$-separation. As $\kappa(Z_1)<\kappa(Z)$,
  this contradicts the minimality of $Z$. 
  \uend
  \end{claim}

  We define $\tau:\KT\to V(T)$ by letting $\tau(\CT)$ be the unique
  node in $C_{\CT}$, for all $\CT\in\KT$. This mapping is well-defined
  by Claim~2, and injective, because the sets $C_{\CT}$ are mutually
  disjoint.

  It follows from \ref{li:tn1} and Claim~1 that $(T,\beta,\tau)$
  satisfies \ref{li:td1}. It follows from \ref{li:tn2} and
  $\Sep(T,\beta)=\CS$ and Claim~1 that $(T,\beta,\tau)$ satisfies
  \ref{li:td2}. By the construction of $C_{\CT}$, for all oriented
  edges $(t',t)\in E(T)$ with $t\in C_{\CT}$ and $t'\not\in C_{\CT}$ it holds
  that $\gamma(t',t)\in\CT$. This implies
  that $(T,\beta,\tau)$ satisfies \ref{li:td3}. Furthermore, for all
  edges $(u',u)\in E(T)$ pointing towards $t:=\tau(\CT)$ there is a
  neighbour $t'$ of $t$ such that either $(u',u)=(t',t)$ or $(u',u)$
  points towards $t'$. In both cases,
  $\gamma(u',u)\supseteq\gamma(t',t)$. As $\gamma(t',t)\in\CT$, if
  $\kappa(\gamma(u',u'))<\ord(\CT)$, by Lemma~\ref{lem:tangle-closure}(1) we have
  $\gamma(u',u)\in\CT$. This proves \ref{li:td4}. Finally,
  \ref{li:td5} follows from \ref{li:td2}.

  To prove the uniqueness of $\tau$, suppose for contradiction that
  $\tau':\KT\to V(T)$ is another injective mapping satisfying
  \ref{li:td1}--\ref{li:td5}. Let
  $\CT\in\KT$ such that $t:=\tau(\CT)\neq\tau'(\CT)=:t'$. Let $u,u'$
  be the neighbours of $t,t'$, respectively, on the path from $t$ to
  $t'$ in $T$. Then by \ref{li:td3},
  $\gamma(t',t),\gamma(u',u)\in\CT$. However,
  $\gamma(t',t)\cap\gamma(u',u)=\emptyset$. This contradicts
  $\CT$ being a tangle.

  To construct a  tree decomposition $(T,\beta)$ with
  $\Sep(T,\beta)=\CS$, we apply Lemma~\ref{lem:nested}.
\end{proof}

\begin{cor}
  Let $\KT$ be a nonempty family of mutually incomparable
  $\kappa$-tangles. Let $(T,\beta)$ is a tree decomposition of
  $\kappa$ and $\tau:\KT\to V(T)$ a mapping satisfying conditions \ref{li:td1}
  and \ref{li:td2} of Theorem~\ref{theo:td}. Then $\tau$ is injective and satisfies \ref{li:td3}--\ref{li:td5}, and
  $\CS:=\Sep(T,\beta)$ is a nested set of separations for $\KT$.
\end{cor}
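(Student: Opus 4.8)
\medskip
\noindent\textit{Proof proposal.}
The plan is first to dispose of the easy assertions and then to concentrate on condition \ref{li:td3}, which is where the work lies. As observed just before Lemma~\ref{lem:nested}, the set $\CS:=\Sep(T,\beta)$ of any tree decomposition is automatically nested and closed under complementation. Condition \ref{li:td1} for $\tau$ says precisely that for incomparable $\CT,\CT'\in\KT$ the oriented edge it supplies carries a cone $\gamma(t',t)\in\CS$ that is a minimum $(\CT,\CT')$-separation, i.e.\ \ref{li:tn1} holds; and since every member of $\CS$ is a cone $\gamma(t',t)$ for a suitable oriented edge $(t,t')$, applying condition \ref{li:td2} to that edge is exactly \ref{li:tn2}. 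Hence $\CS$ is a nested set of separations for $\KT$. Injectivity of $\tau$ is immediate from \ref{li:td1}: if $\tau(\CT)=\tau(\CT')$ for distinct $\CT,\CT'$ the path between their images is a single vertex, carrying no oriented edge. And \ref{li:td5} follows from \ref{li:td2}: for a leaf $t$ with unique neighbour $s$, the path produced by \ref{li:td2} for the oriented edge $(s,t)$ enters $t$ through $(s,t)$ and, $t$ being a leaf, must terminate there, so $t$ lies in the image of $\tau$.

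For \ref{li:td3} let $\CT\in\KT$, put $t:=\tau(\CT)$, let $t'$ be a neighbour of $t$, and suppose $\gamma(t',t)\notin\CT$; I would choose such a counterexample with $|\gamma(t',t)|$ as small as possible. Write $A,B$ for the components of $T-tt'$ containing $t,t'$, so that $\gamma(t',t)$ is the union of the bags in $A$. Apply \ref{li:td2} to $(t,t')$ to obtain $\CT_1,\CT_2\in\KT$ with $\gamma(t',t)$ a minimum $(\CT_1,\CT_2)$-separation, $\gamma(t',t)\in\CT_1$, $\gamma(t,t')\in\CT_2$, $\tau(\CT_1)\in A$, and $\tau(\CT_2)\in B$. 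Then $\CT\ne\CT_1$ (else $\gamma(t',t)\in\CT$) and $\CT\ne\CT_2$ (by injectivity, since $t\in A$), so $\CT$ is incomparable with both. Apply \ref{li:td1} to $\CT,\CT_1$: the path from $\tau(\CT)=t$ to $\tau(\CT_1)$ stays inside $A$, so the oriented edge $(p,p')$ it supplies lies in $A$ with $t$ on the $p$-side of $pp'$; hence that side contains all of $B$, and the minimum $(\CT,\CT_1)$-separation $\gamma(p',p)$ satisfies $\gamma(p',p)\supseteq\gamma(t,t')$. Now distinguish two cases. If $\kappa(\gamma(t',t))\ge\ord(\CT)$, then $\kappa(\gamma(p',p))<\ord(\CT)\le\kappa(\gamma(t',t))<\ord(\CT_2)$, so $\gamma(p',p)\in\CT_2$ by Lemma~\ref{lem:tangle-closure}(1); consequently $\gamma(p,p')$ is a $(\CT_1,\CT_2)$-separation of order $\kappa(\gamma(p',p))<\kappa(\gamma(t',t))$, contradicting the minimality of $\gamma(t',t)$ as a $(\CT_1,\CT_2)$-separation.

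If instead $\kappa(\gamma(t',t))<\ord(\CT)$, then $\gamma(t,t')\in\CT$ by \ref{li:t1}, so $\gamma(t,t')$ is itself a $(\CT,\CT_1)$-separation. If it is not a minimum one, then $\kappa(\gamma(p',p))<\kappa(\gamma(t,t'))=\kappa(\gamma(t',t))<\ord(\CT_2)$ and the previous computation again yields a $(\CT_1,\CT_2)$-separation beating $\gamma(t',t)$. And if $\gamma(t,t')$ is a minimum $(\CT,\CT_1)$-separation, a closer look at the path from $\tau(\CT)$ to $\tau(\CT_2)$ supplied by \ref{li:td1} applied to $\CT,\CT_2$ shows that the minimum $(\CT,\CT_2)$-separation it yields either equals $\gamma(t',t)$ (impossible, since $\gamma(t',t)\notin\CT$) or is a cone, not in $\CT$, strictly smaller than $\gamma(t',t)$ and with $\tau(\CT)$ on its complement side; pushing this towards $\tau(\CT)$ contradicts the minimal choice of the counterexample. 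This last sub-case, together with the bookkeeping required when some bags are empty, is the delicate point; it runs along the lines of Claim~2 in the proof of Theorem~\ref{theo:td}.

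Finally \ref{li:td4} is a consequence of \ref{li:td3}: if $(u',u)$ points towards $t:=\tau(\CT)$ and $t'$ is the neighbour of $t$ on the path from $t$ towards $u$, then $\gamma(u',u)\supseteq\gamma(t',t)\in\CT$, so $\gamma(u',u)\in\CT$ by Lemma~\ref{lem:tangle-closure}(1) whenever $\kappa(\gamma(u',u))<\ord(\CT)$. Thus $\tau$ satisfies \ref{li:td1}--\ref{li:td5}; in particular, by the uniqueness statement of Theorem~\ref{theo:td}, $\tau$ coincides with the canonical mapping associated with $\CS$. The main obstacle is clearly the proof of \ref{li:td3} --- specifically the case in which $\gamma(t',t)$ has order below $\ord(\CT)$ and is already a minimum $(\CT,\CT_1)$-separation, and the degenerate situations with empty bags; everything else is either a direct unwinding of the definitions or a reuse of the minimality argument and of Lemma~\ref{lem:tangle-closure} already employed in the proof of Theorem~\ref{theo:td}.
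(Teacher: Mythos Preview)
Your bookkeeping for $\CS$ being a nested family of separations, for injectivity, for \ref{li:td5}, and for deducing \ref{li:td4} from \ref{li:td3} is all correct, and you are right that \ref{li:td3} is where the work lies. Your Case~1 and Sub-case~2a are fine. The gap is Sub-case~2b.

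There the edge $(q,q')$ you obtain from \ref{li:td1} applied to $\CT,\CT_2$ lies strictly inside $B$, and the cone $\gamma(q,q')\notin\CT$ you produce is strictly contained in $\gamma(t,t')$, not in $\gamma(t',t)$ (your ``strictly smaller than $\gamma(t',t)$'' is a slip). More importantly, $(q,q')$ is \emph{not} incident with $t=\tau(\CT)$, so it does not directly give a smaller counterexample of the required shape $(\CT,t,t'')$ with $t''\in N(t)$; the phrase ``pushing this towards $\tau(\CT)$'' hides a nontrivial step. Your minimality is on $|\gamma(t',t)|$, but the smaller cone you have found lives on the \emph{other} side of $tt'$, so there is no obvious descent. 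One can try to iterate (apply \ref{li:td2} at $(q,q')$, then \ref{li:td1} again, and so on) but making this terminate needs a different well-founded measure, and as written the argument is incomplete.

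The paper states the corollary without proof; the intended route is to feed back into the proof of Theorem~\ref{theo:td} rather than argue about $\tau$ in isolation. Once you have shown that $\CS=\Sep(T,\beta)$ is a nested set of separations for $\KT$, the construction of $C_\CT$ and Claims~1 and~2 in that proof go through verbatim and yield the mapping $\tau^*$ with $\tau^*(\CT)$ the unique node of $C_\CT$. The point is now to show $\tau=\tau^*$. Claim~1 says that every edge carrying a minimum $(\CT,\CT')$-separation lies, with the correct orientation, on the oriented path from $\tau^*(\CT)$ to $\tau^*(\CT')$. Combined with \ref{li:td1} for $\tau$, this means that for every pair $\CT\neq\CT'$ the edge supplied by \ref{li:td1} has $\tau(\CT),\tau^*(\CT)$ on one side and $\tau(\CT'),\tau^*(\CT')$ on the other. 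If $\tau(\CT_0)\neq\tau^*(\CT_0)$ for some $\CT_0$, take an edge $f$ on the path between them, apply \ref{li:td2} for $\tau$ at $f$ to get a pair $\CT_1,\CT_2$, and use Claim~1 at $f$ for this pair; the resulting constraints on which side of $f$ each of $\tau(\CT_i),\tau^*(\CT_i),\tau(\CT_0),\tau^*(\CT_0)$ lies are what one has to make inconsistent. This is essentially the same difficulty you hit in Sub-case~2b, but organising it as a comparison with $\tau^*$ makes Claim~1 available and keeps the geometry under control.
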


We call a triple $(T,\beta,\tau)$ where $(T,\beta)$ is a tree
decomposition for $\kappa$ and $\tau:\KT\to V(T)$ a mapping satisfying conditions \ref{li:td1}
  and \ref{li:td2} (and hence \ref{li:td3}--\ref{li:td5}) of
  Theorem~\ref{theo:td} a \emph{tree
  decomposition for $\KT$}.
Nodes $t\in \tau(\KT)$ are called \emph{tangle nodes} and the
remaining nodes $t\in V(T)\setminus\tau(\KT)$ are called \emph{hub
  nodes}.

\subsection{Decomposing Coherent Families}
\label{sec:tree2}

Let us call a family $\KT$ of $\kappa$-tangles of order $k+1$
\emph{coherent} if all elements of $\KT$ have the same truncation to order
$k$. Observe that this condition implies, and is in fact equivalent to, the
condition that for distinct $\CT,\CT'\in\KT$ the order of a minimum
$(\CT,\CT')$-separation is $k$.
The main result of this section, Lemma~\ref{lem:coherent}, shows how to
compute  a tree decomposition for a
coherent family
of tangles of order $k+1$.

We call set $Z\subseteq {\U}$ a \emph{$\KT$-separation}
if there are $\CT,\CT'\in\KT$ such that $Z$ is a
$(\CT,\CT')$-separation. $Z$ is a \emph{minimum $\KT$-separation} if
it is a $\KT$-separation of minimal order.

\begin{lem}\label{lem:coherent0}
  Let $\KT$ be a coherent family of $\kappa$-tangles of order $k+1$, and
  let $Z_0$ be an inclusion-wise minimal minimum $\KT$-separation. Then for all
  minimum $\KT$-separations $Z$, either $Z_0\subseteq Z$ or $Z_0\subseteq \bar Z$.
\end{lem}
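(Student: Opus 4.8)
The plan is to prove the stronger statement that $Z_0$ is nested with every minimum $\KT$-separation $Z$. The lemma then follows: being nested, one of $Z_0\subseteq Z$, $Z_0\subseteq\bar Z$, $Z\subseteq Z_0$, $\bar Z\subseteq Z_0$ holds, and since $Z$ and $\bar Z$ are both minimum $\KT$-separations (complements of $\KT$-separations are $\KT$-separations and $\kappa(\bar Z)=\kappa(Z)$), a proper inclusion $Z\subsetneq Z_0$ or $\bar Z\subsetneq Z_0$ would contradict the inclusion-minimality of $Z_0$, leaving $Z_0\subseteq Z$ or $Z_0\subseteq\bar Z$. Write $k:=\kappa(Z_0)$; by the remark preceding the lemma, coherence implies that every minimum $(\CT,\CT')$-separation, and hence $Z_0$ and $Z$, has order exactly $k$, while every tangle in $\KT$ has order $k+1$ and thus orients every separation of order $\le k$.

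Suppose for contradiction that $Z_0$ and $Z$ cross, so $a:=Z_0\cap Z$, $b:=Z_0\cap\bar Z$, $c:=\bar Z_0\cap Z$ and $d:=\bar Z_0\cap\bar Z$ are all nonempty. Fix $\CT_1,\CT_2,\CT_3,\CT_4\in\KT$ (not necessarily distinct) with $Z_0\in\CT_1$, $\bar Z_0\in\CT_2$, $Z\in\CT_3$, $\bar Z\in\CT_4$; these exist because $Z_0$ and $Z$ are $\KT$-separations. The workhorse is Lemma~\ref{lem:tangle-closure}: a tangle contains every superset and every pairwise intersection of its members that still has order below the tangle's order. I would record two consequences. \textbf{(I)} By submodularity $\kappa(a)+\kappa(d)\le 2k$ and $\kappa(b)+\kappa(c)\le 2k$; in particular at most one of $a,d$, and at most one of $b,c$, has order $\ge k+1$. \textbf{(II)} If a corner $e\in\{a,b,c,d\}$ has $\kappa(e)<k$, then $\bar e$ contains $Z_0$ or $\bar Z_0$ (for instance $\bar a=\bar Z_0\cup\bar Z\supseteq\bar Z_0$), so $\bar e\in\CT_1$ or $\bar e\in\CT_2$ by Lemma~\ref{lem:tangle-closure}(1); since $\kappa(e)<k$ and $\KT$ is coherent, $\bar e$ then lies in every tangle of $\KT$, so $e$ lies in none.

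The contradiction now comes from a four-way case analysis on whether $\kappa(a)$ and $\kappa(b)$ are $\le k$ or $\ge k+1$. The recurring move is: if some $\CT\in\KT$ has $a\in\CT$ with $\kappa(a)\le k$ (or likewise $b\in\CT$ with $\kappa(b)\le k$), then since $\bar a\supseteq\bar Z_0\in\CT_2$ forces $\bar a\in\CT_2$ by Lemma~\ref{lem:tangle-closure}(1) (and likewise $\bar b\in\CT_2$), the set $a$ (resp.\ $b$) is a $\KT$-separation, it is a proper subset of $Z_0$ because the complementary corner is nonempty, and its order is $\le k$ — contradicting the minimality of the order $k$ if that order is $<k$, and the inclusion-minimality of $Z_0$ if it equals $k$. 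When $\kappa(a)\le k$ and $\kappa(b)\le k$, the membership $Z_0=a\cup b\in\CT_1$ forces $a\in\CT_1$ or $b\in\CT_1$ (otherwise $\bar a,\bar b\in\CT_1$ and Lemma~\ref{lem:tangle-closure}(2) yields $\bar Z_0\in\CT_1$, contradicting \ref{li:t2}), and we are done. When exactly one is $\ge k+1$, say $\kappa(a)\ge k+1$ and $\kappa(b)\le k$ (the other case is symmetric under $Z\leftrightarrow\bar Z$), fact (I) gives $\kappa(d)<k$, so $d$ is in no tangle by (II); examining how $\CT_1$ orients $Z$ and, if $Z\in\CT_1$, how $\CT_4$ orients $Z_0$, one either reaches the recurring move with $b$ or derives $d\in\CT_4$ from $\bar Z,\bar Z_0\in\CT_4$, contradicting (II). When both are $\ge k+1$, fact (I) gives $\kappa(c)<k$ and $\kappa(d)<k$, so by (II) both $\bar c$ and $\bar d$ lie in every tangle, in particular in $\CT_2$; but $\bar c\cap\bar d=(Z_0\cup\bar Z)\cap(Z_0\cup Z)=Z_0$, so Lemma~\ref{lem:tangle-closure}(2) gives $Z_0\in\CT_2$, contradicting $\bar Z_0\in\CT_2$ via \ref{li:t2}.

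The step I expect to require the most care is the bookkeeping of the middle cases: the corners $a,b$ sit inside $Z_0$ and may be played against its inclusion-minimality, whereas $c,d$ sit inside $\bar Z_0$ and become usable only once (I) pushes their order below $k$, where (II) applies; keeping track of which part of Lemma~\ref{lem:tangle-closure} is invoked and checking the order bounds ($<k$, $\le k$, or $<k+1$) at each step is where an error could slip in. Beyond this there is no serious obstacle.
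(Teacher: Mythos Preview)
Your proof is correct. The approach, however, differs from the paper's. The paper fixes tangles $\CT_0,\CT_0'$ with $Z_0$ a minimum $(\CT_0,\CT_0')$-separation, then uses the symmetry between $Z$ and $\bar Z$ to assume $Z\in\CT_0$, reducing to a clean two-case split on whether $\kappa(Z_0\cap Z)\le k$. If so, $Z_0\cap Z$ is itself a minimum $(\CT_0,\CT_0')$-separation, and inclusion-minimality of $Z_0$ gives $Z_0\subseteq Z$ directly. If not, submodularity forces $\kappa(Z_0\cup Z)<k$; coherence then yields $Z_0\in\CT'$ and $Z\in\CT_0'$, so that both $Z_0$ and $\bar Z$ are minimum $(\CT',\CT_0')$-separations, and an appeal to the leftmost-minimum-separation lemma (Lemma~\ref{lem:lm-tansep}) gives $Z_0\subseteq\bar Z$.

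Your argument instead assumes crossing and runs a four-way case analysis on the orders of the two corners $a,b$ inside $Z_0$, handling each with tangle-closure moves and your facts (I) and (II). This is more laborious --- your middle case in particular could be shortened by looking only at how $\CT_4$ orients $Z_0$, skipping the detour through $\CT_1$ --- but it has the virtue of being self-contained: you never invoke Lemma~\ref{lem:lm-tansep}, relying only on Lemma~\ref{lem:tangle-closure} and the tangle axioms. The paper's proof buys brevity by front-loading the WLOG symmetry and leaning on the leftmost lemma; yours trades elegance for a more elementary, systematic crossing-diagram argument.
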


\begin{proof}
  Let $\CT_0,\CT_0'\in\KT$ such
  that $Z_0$ is a minimum $(\CT_0,\CT_0')$-separation. Moreover, let $\CT,\CT'\in\KT$ be
  distinct, and let $Z$ be a minimum $(\CT,\CT')$-separation. Then $\kappa(Z_0)=\kappa(Z)=k$.
  Without loss of generality, we may assume that
  $    
  Z\in\CT_0.
  $
  Otherwise, we swap $\CT$ and $\CT'$ and take $\bar Z$ instead of
  $Z$.
  
  If $\kappa(Z_0\cap Z)\le k$, then $Z_0\cap
  Z\in\CT_0$ by
  Lemma~\ref{lem:tangle-closure}(2) and $\bar{Z_0\cap Z}=\bar
  Z_0\cup\bar Z\in\CT_0'$ by
  Lemma~\ref{lem:tangle-closure}(1). Thus $Z_0\cap Z$
  is a minimum $(\CT_0,\CT_0')$-separation, and by the inclusionwise minimality of
  $Z_0$ it
  follows that $Z_0\subseteq Z_0\cap Z$ and thus $Z_0\subseteq Z$.

  So let us assume $\kappa(Z_0\cap Z)> k$. By
  submodularity, $ \kappa(Z_0\cup Z)<k$. We have $Z_0\cup
  Z\in\CT_0\cap\CT$ Lemma~\ref{lem:tangle-closure}(1). Thus $\bar{Z_0\cup
    Z}=\bar Z_0\cap\bar Z\not\in\CT_0'\cup\CT'$, because otherwise $Z_0\cup Z$ is a
  $(\CT_0,\CT_0')$-separation or a $(\CT,\CT')$-separation of order
  strictly less than $k$, which is impossible because $\KT$ is a
  coherent family. By
  Lemma~\ref{lem:tangle-closure}(2), this implies $\bar
  Z_0\not\in\CT'$, $\bar Z\not\in\CT_0'$ and thus $Z_0\in\CT'$ and $Z\in\CT_0'$.

   Then both $Z_0$ and $\bar Z$ are minimum
   $(\CT',\CT_0')$-separations. As $Z_0$ is an inclusionwise minimal
   $\KT$-separation, it is a leftmost minimum
   $(\CT',\CT_0')$-separation, and hence $Z_0\subseteq\bar Z$.
\end{proof}

\begin{lem}\label{lem:coherent}
  Let $\KT$ be
  a coherent family of $\kappa$-tangles of order $k+1$. Then there is
  a nested set of separations for $\KT$.
\end{lem}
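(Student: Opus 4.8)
The plan is induction on $|\KT|$, with Lemma~\ref{lem:coherent0} driving the inductive step. If $|\KT|\le 1$ there is nothing to separate, so $\CS:=\emptyset$ works (it is nested, closed under complementation, and \ref{li:tn1}, \ref{li:tn2} hold vacuously). So assume $|\KT|\ge 2$. First I would record some easy facts: distinct tangles in $\KT$ are incomparable, because a proper extension has strictly larger order while all members of $\KT$ have order $k+1$; hence $\KT$-separations exist, and by coherence every minimum $(\CT,\CT')$-separation has order exactly $k$, so the minimum $\KT$-separations are precisely the $\KT$-separations of order $k$. Since ${\U}$ is finite I can choose an inclusion-wise minimal minimum $\KT$-separation $Z_0$, say a minimum $(\CT_0,\CT_0')$-separation, so $Z_0\in\CT_0$ and $\bar Z_0\in\CT_0'$. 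By Lemma~\ref{lem:coherent0}, every minimum $\KT$-separation is nested with $Z_0$.

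Next I would use $Z_0$ to split $\KT$. As $\kappa(Z_0)=k<\ord(\CT)$ for every $\CT\in\KT$, axiom \ref{li:t1} puts each $\CT$ on exactly one side of $Z_0$ (both is impossible by \ref{li:t2}); set $\KT_2:=\{\CT\in\KT:Z_0\in\CT\}$ and $\KT_1:=\{\CT\in\KT:\bar Z_0\in\CT\}$. Then $\CT_0\in\KT_2$ and $\CT_0'\in\KT_1$, so both parts are nonempty and hence $1\le|\KT_i|\le|\KT|-1$. Each $\KT_i$ is again a coherent family of $\kappa$-tangles of order $k+1$, so the induction hypothesis yields a nested set of separations $\CS_i$ for $\KT_i$, and I would set
\[
\CS:=\CS_1\cup\CS_2\cup\{Z_0,\bar Z_0\}.
\]
Closure of $\CS$ under complementation is immediate. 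For \ref{li:tn2}: every element of $\CS_i$ is a minimum $(\CT,\CT')$-separation for some $\CT,\CT'\in\KT_i\subseteq\KT$ (this notion does not depend on the ambient family), and $Z_0,\bar Z_0$ are minimum $(\CT_0,\CT_0')$- and $(\CT_0',\CT_0)$-separations. For \ref{li:tn1}: given distinct $\CT,\CT'\in\KT$, if both lie in the same $\KT_i$ apply \ref{li:tn1} for $\CS_i$; if they lie in different parts then one of $Z_0,\bar Z_0$ is a $(\CT,\CT')$-separation of order $k$, hence a minimum one, and lies in $\CS$.

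The only real obstacle is showing $\CS$ is nested, and within that, that a separation $Z\in\CS_1$ is nested with a separation $Z'\in\CS_2$; nestedness of $Z_0$ with $\bar Z_0$, of each $Z\in\CS_i$ with $Z_0$ and $\bar Z_0$ (via Lemma~\ref{lem:coherent0}, since such a $Z$ is a minimum $\KT$-separation of order $k$), and of two members of the same $\CS_i$ are all immediate. The key sub-claim is: every $Z\in\CS_1$ satisfies $Z\subseteq\bar Z_0$ or $\bar Z\subseteq\bar Z_0$, and every $Z'\in\CS_2$ satisfies $Z'\subseteq Z_0$ or $\bar Z'\subseteq Z_0$. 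To prove the first, write $Z$ as a minimum $(\CT,\CT')$-separation with distinct $\CT,\CT'\in\KT_1$; then $Z\in\CT$, $\bar Z\in\CT'$, and $\bar Z_0\in\CT,\CT'$, so applying \ref{li:t2} to $Z,Z,\bar Z_0$ and to $\bar Z,\bar Z,\bar Z_0$ gives $Z\cap\bar Z_0\ne\emptyset$ and $\bar Z\cap\bar Z_0\ne\emptyset$; since $Z$ is nested with $Z_0$, the two surviving alternatives among $Z\subseteq Z_0$, $Z\subseteq\bar Z_0$, $\bar Z\subseteq Z_0$, $\bar Z\subseteq\bar Z_0$ are $Z\subseteq\bar Z_0$ and $\bar Z\subseteq\bar Z_0$. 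The claim for $\CS_2$ is symmetric, with $Z_0$ in place of $\bar Z_0$. Granting it, I would finish by a four-case check: in each combination one of the corners $Z\cap Z'$, $Z\cap\bar Z'$, $\bar Z\cap Z'$, $\bar Z\cap\bar Z'$ lies in $Z_0\cap\bar Z_0=\emptyset$, which is one of the four inclusions defining nestedness of $Z$ and $Z'$. This closes the induction. I expect the bookkeeping for \ref{li:tn1} and \ref{li:tn2} to be routine, and the uncrossing sub-claim to be the crux of the argument.
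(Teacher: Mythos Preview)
Your argument is correct. The induction on $|\KT|$ goes through: the split $\KT=\KT_1\sqcup\KT_2$ along $Z_0$ is well-defined, both parts are nonempty and strictly smaller, and your ``corner'' analysis using \ref{li:t2} together with Lemma~\ref{lem:coherent0} cleanly shows that every $Z\in\CS_1$ lies on the $\bar Z_0$-side and every $Z'\in\CS_2$ on the $Z_0$-side, from which cross-nestedness follows.

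This is, however, a genuinely different route from the paper's. The paper does not recurse on a single chosen $Z_0$; instead it runs an iterative greedy construction, at each stage adding \emph{all} inclusion-wise minimal minimum $(\KT\setminus\KT_i)$-separations simultaneously, and recording which tangles have thereby been separated off. Lemma~\ref{lem:coherent0} is then invoked to show that each newly added set is nested with everything added later. The practical difference is canonicity: the paper's construction makes no arbitrary choices (it always takes the full set of minimal separations), so the resulting $\CS$ is invariant under automorphisms of $({\U},\kappa,\KT)$---a property that is later used in the proof of Theorem~\ref{theo:candec}. Your divide-and-conquer argument is arguably simpler for bare existence, but the choice of $Z_0$ breaks canonicity, so if you wanted to plug your lemma into the canonical-decomposition machinery you would need to either repair this (e.g.\ by splitting along all inclusion-wise minimal minimum $\KT$-separations at once, which essentially recovers the paper's construction) or argue canonicity separately.
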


\begin{proof}
  By induction on $i\ge0$ we  define sets
  $\CS_i$ of minimum $\KT$-separations and families
  $\KT_i\subseteq\KT$ as follows.
  \begin{itemize}
  \item $\CS_0:=\emptyset$ and $\KT_0:=\emptyset$.
  \item     $\CS_{i+1}$ is the union of $\CS_i$ with all
    inclusion-wise minimal minimum $\KT\setminus\KT_i$-separations, and
    $\KT_{i+1}$ is the set of all tangles $\CT\in\KT$ such that for
    some $\CT'\in\KT$ the set
    $\CS_{i+1}$ contains a minimum $(\CT,\CT')$ separation.
  \end{itemize}
  Observe that 
  \[
  \Big|\KT\setminus\bigcup_{i\ge0}\KT_i\Big|\le
  1.
  \]
  We let $\CS$ be the closure of  $\bigcup_{i\ge0}\CS_i$ under
  complementation. We claim that $\CS$ is a nested set of separations for $\KT$.

  It follows from Lemma~\ref{lem:coherent0} that $\CS$ is nested:
  when we add  a $Z_0$ to $\CS_{i+1}$, it is nested with all
  minimum $\KT\setminus\KT_i$-separations and thus with all $Z\in\bigcup_{j\ge
    i+1}\CS_j$. 

  $\CS$ trivially satisfies \ref{li:tn2}, because each element of each
  $\CS_i$ is a minimum $\KT$-separation.

  To prove that $\CS$ satisfies \ref{li:tn1}, for all $i\ge0$
  we prove that for all $\CT\in\KT_{i+1}\setminus\KT_i$,
  $\CT'\in\KT\setminus\KT_i$ there is a $Z\in\CS_{i+1}$ such
  that $Z$ 
  is a minimum $(\CT,\CT')$-separation. As $\Big|\KT\setminus\bigcup_{i\ge0}\KT_i\Big|\le
  1$, this implies \ref{li:tn1}

Let $\CT\in\KT_{i+1}\setminus\KT_i$,
  $\CT'\in\KT\setminus\KT_i$. By the definition of $\KT_{i+1}$, there is a $Z\in\CS_{i+1}$ and a
  $\CT''\in\KT$ such that $Z$ is a minimum $(\CT,\CT'')$
  separation. By the definition of $\CS_{i+1}$, the set $Z$ is an inclusion-wise
  minimal $\KT\setminus\KT_i$-separation. Let
  $Z'$ be a minimum $(\CT,\CT')$-separation. By Lemma~\ref{lem:coherent0}, either $Z\subseteq
  Z'$ or $Z\subseteq \bar Z'$. If $Z\subseteq \bar Z'$, then $Z\cap
  Z'=\emptyset$, which contradicts $Z,Z'\in\CT$. Thus $Z\subseteq
  Z'$. By Lemma~\ref{lem:tangle-closure}(1), we have $\bar
  Z\in\CT'$, because $\bar Z\supseteq\bar Z'\in\CT'$. Thus $Z\in\CS_{i+1}$ is
  a minimum $(\CT,\CT')$-separation. 
\end{proof}

\subsection{Decomposing Arbitrary Families}
\label{sec:tree3}

In this section, we will describe how to build a ``global'' tree
decomposition of all tangles of order at most $k+1$ from ``local''
decompositions for coherent families of tangles. Let $\KT$ be a family
of $\kappa$-tangles that is closed under taking truncations. For every
$k\ge0$, we let $\KT^{\le k}$ be the tangles of order at most $k$ in
$\CT$, and we let $\KT^{\le k}_{\max}$ be the inclusionwise maximal
tangles in $\KT^{\le k}$. We call a tangle $\CT\in \KT^{\le k}_{\max}$
\emph{extendible} if there is a $\CT'\in\KT\setminus \KT^{\le k}$ such that $\CT\subseteq\CT'$.

Suppose that, for some $k\ge 0$, we have a nested family $\CS^{\le k}$
of separations for $\KT^{\le k}_{\max}$. Let $\CT^*\in \KT^{\le
  k}_{\max}$ be extendible. Then $\ord(\CT^*)=k$. Let $\KT^*$ be the set of all
$\CT\in\KT$ such that $\ord(\CT)=k+1$ and $\CT^*\subseteq\CT$.
Let $Z_1,\ldots,Z_m$ be a list of all inclusionwise minimal sets in
$\CT^*\cap\CS^{\le k}$. %

In Example~\ref{exa:contraction}, we introduced the \emph{contraction} 
$({\U}\contract_{\CA},\kappa\contract_{\CA})$ of a connectivity system
$({\U},\kappa)$  to a set $\CA$ of \emph{atoms}. Here we take the sets
$\bar Z_i$ as atoms. For the readers convenience, let me repeat the
necessary definitions, adapting them to our context and simplifying
the notation by dropping the index ${}_\CA$ everywhere in the notation. We let
\[ 
V^*:={\U}\setminus\bigcup_{i=1}^m\bar Z_i.
\]
We take fresh
  elements $z_1,\ldots,z_m\not\in {\U}$ and let
\[
{\U}\contract:=V^*\cup\{z_1,\ldots,z_m\}.
\]
For every set $X\subseteq {\U}\contract$, we define the \emph{expansion} of $X$
to be the set 
\[
X\expand:=(X\cap V^*)\cup\bigcup_{z_i\in X}\bar Z_i.
\]
Now we define $\kappa\contract:2^{{\U}\contract}\to\ZZ$ by 
\[
\kappa\contract(X):=\kappa(X\expand).
\]
Then $\kappa\contract$ is a connectivity function on ${\U}\contract$.
For every $\CT\in\{\CT^*\}\cup\KT^*$ we let
\[
\CT\contract:=\{X\subseteq {\U}\contract\mid X\expand\in\CT\}.
\]
Using the fact that $\bar Z_i\not\in\CT$ for all $i$, it is easy to
see that $\CT\contract$ is a $\kappa\contract$-tangle with
$\ord(\CT\contract)=\ord(\CT)$.

\begin{lem}\label{lem:intersection}
  Let $X\subseteq {\U}$ such that there are $\kappa$-tangles $\CT,\CT'$
  for which $X$ is a minimum $(\CT,\CT')$-separation. Then for every
  $Y\subseteq {\U}$, either $\kappa(X\cap Y)\le\kappa(Y)$ or
  $\kappa(X\setminus Y)\le\kappa(Y)$.
\end{lem}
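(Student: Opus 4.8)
The plan is to use the hypothesis that $X$ is a minimum $(\CT,\CT')$-separation for some $\kappa$-tangles $\CT,\CT'$, which means $\kappa(X) = \ord$-bounded and, crucially, $X \in \CT$ and $\bar X \in \CT'$, with $\kappa(X)$ minimal among all such separations. Let $k := \kappa(X)$, so $k < \min\{\ord(\CT), \ord(\CT')\}$. Fix an arbitrary $Y \subseteq \U$. First I would apply posimodularity (Lemma~\ref{lem:bp}(1)) together with submodularity to relate the four quantities $\kappa(X \cap Y)$, $\kappa(X \setminus Y)$, $\kappa(X \cup Y)$, and $\kappa(\bar X \setminus Y) = \kappa(X \cup \bar Y)$. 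The natural starting point is the submodularity inequality $\kappa(X) + \kappa(Y) \ge \kappa(X \cap Y) + \kappa(X \cup Y)$ and its symmetric counterpart $\kappa(X) + \kappa(\bar Y) \ge \kappa(X \setminus Y) + \kappa(X \cup \bar Y)$.

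The key idea is a case distinction based on whether $X \cap Y$ or $X \setminus Y$ lies in $\CT$. Since $\kappa(X \cap Y)$ and $\kappa(X \setminus Y)$ may or may not be less than $k$, I would argue as follows. Suppose for contradiction that both $\kappa(X \cap Y) > \kappa(Y)$ and $\kappa(X \setminus Y) > \kappa(Y)$. By submodularity, $\kappa(X \cap Y) + \kappa(X \cup Y) \le \kappa(X) + \kappa(Y)$, so $\kappa(X \cup Y) \le \kappa(X) + \kappa(Y) - \kappa(X \cap Y) < \kappa(X) = k$, and similarly, using $X \setminus Y = X \cap \bar Y$, we get $\kappa(X \cup \bar Y) < k$. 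Now both $X \cup Y$ and $X \cup \bar Y$ are supersets of $X \in \CT$ with $\kappa$-value less than $k \le \ord(\CT)$, so by Lemma~\ref{lem:tangle-closure}(1) both belong to $\CT$. But $\overline{X \cup Y} = \bar X \cap \bar Y$ and $\overline{X \cup \bar Y} = \bar X \cap Y$ are disjoint, which would contradict tangle axiom \ref{li:t2} — more precisely, I want to derive that some set of order $< k$ is a $(\CT,\CT')$-separation, contradicting minimality of $k$. For that I should check that $\bar X \cap \bar Y$ or $\bar X \cap Y$ lies in $\CT'$: since $\bar X \in \CT'$ and $\bar X = (\bar X \cap Y) \cup (\bar X \cap \bar Y)$, at least one of the two has to be in $\CT'$ by Lemma~\ref{lem:tangle-closure}(2) applied to the complements (if neither $\bar X \cap Y$ nor $\bar X \cap \bar Y$ were in $\CT'$, then both $X \cup \bar Y$ and $X \cup Y$ would be in $\CT'$, forcing their intersection $X$ into $\CT'$ via \ref{li:t2}, i.e.\ contradiction since $\bar X \in \CT'$ too). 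Say $\bar X \cap Y \in \CT'$; then $X \cup \bar Y \in \CT$ and $\overline{X \cup \bar Y} = \bar X \cap Y \in \CT'$ exhibit $X \cup \bar Y$ as a $(\CT,\CT')$-separation of order $< k$, contradicting the minimality of $k = \kappa(X)$.

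The main obstacle will be handling the bookkeeping of which sets land in $\CT$ versus $\CT'$ cleanly, since Lemma~\ref{lem:tangle-closure}(1) requires the order bound $\kappa(\cdot) < k \le \ord(\CT)$ to invoke closure under supersets, and I need to make sure I am only ever applying it when that bound genuinely holds. A subtle point is that the contradiction hypothesis "$\kappa(X \cap Y) > \kappa(Y)$ \emph{and} $\kappa(X \setminus Y) > \kappa(Y)$" must actually force \emph{both} $\kappa(X \cup Y) < k$ and $\kappa(X \cup \bar Y) < k$; this follows because $\kappa(X \cap Y) > \kappa(Y)$ gives $\kappa(X \cup Y) < \kappa(X) = k$ from submodularity, and symmetrically with $\bar Y$ in place of $Y$. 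Once both strict inequalities are in hand, the tangle argument above closes the case. I would therefore conclude that at least one of $\kappa(X \cap Y) \le \kappa(Y)$ or $\kappa(X \setminus Y) \le \kappa(Y)$ must hold, which is exactly the statement of the lemma. I expect the write-up to be short, essentially a one-paragraph proof-by-contradiction combining submodularity with Lemma~\ref{lem:tangle-closure} and the minimality of $\kappa(X)$.
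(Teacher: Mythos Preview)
Your proposal is correct and follows essentially the same approach as the paper: assume both inequalities fail, deduce $\kappa(X\cup Y)<\kappa(X)$ and $\kappa(X\cup\bar Y)<\kappa(X)$ from submodularity, place both in $\CT$ via Lemma~\ref{lem:tangle-closure}(1), and then argue that at least one of their complements lies in $\CT'$, yielding a smaller $(\CT,\CT')$-separation. The only cosmetic difference is that the paper dispatches the $\CT'$-step in one line via \ref{li:t2} applied to the triple $\bar X,\, X\cup Y,\, X\cup\bar Y$ (whose intersection is empty), whereas you route through Lemma~\ref{lem:tangle-closure}(2); both are equivalent.
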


\begin{proof}
  Suppose for contradiction that $\kappa(X\cap Y)>\kappa(Y)$ and
  $\kappa(X\setminus Y)>\kappa(Y)$. Then by submodularity,
  $\kappa(X\cup Y)<\kappa(X)$ and $\kappa(X\cup\bar Y)<\kappa(X)$. 

  Now let $\CT,\CT'$ be tangles $\CT,\CT'$
  such that $X$ is a minimum $(\CT,\CT')$-separation.
  As $X\subseteq X\cup Y,X\cup\bar Y$ and $X\in\CT$, we have $X\cup
  Y,X\cup\bar Y\in\CT$. 
  Furthermore, either $\bar{X\cup Y}\in\CT'$ or $\bar{X\cup\bar
    Y}\in\CT'$, because 
  $
  \bar X\cap(X\cup Y)\cap(X\cup\bar Y)=\emptyset.
  $
  Thus either $X\cup Y$ or $X\cup\bar Y$ is a $(\CT,\CT')$-separation
  of order less than $\kappa(X)$. This contradicts the minimality
  of $X$.
\end{proof}

\begin{lem}\label{lem:injective}
  Let $\CT,\CT'\in\KT^*$ be distinct. Then $\CT\contract$ and
  $\CT'\contract$ are distinct, and for every 
  minimum $(\CT\contract,\CT'\contract)$-separation $X$ the expansion
  $X\expand$ is a minimum $(\CT,\CT')$-separation.
\end{lem}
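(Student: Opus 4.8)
The plan is to build a dictionary identifying the $(\CT\contract,\CT'\contract)$-separations with a special class of $(\CT,\CT')$-separations, and then to show this class already contains a minimum $(\CT,\CT')$-separation. Call a set $Y\subseteq{\U}$ \emph{saturated} if $Y\cap\bar Z_i\in\{\emptyset,\bar Z_i\}$ for every $i$; the saturated sets are precisely the sets $X\expand$ for $X\subseteq{\U}\contract$. First I would record that the atoms $\bar Z_1,\dots,\bar Z_m$ are pairwise disjoint: for $i\neq j$ the sets $Z_i,Z_j$ lie in $\CT^*$ and in the nested family $\CS^{\le k}$, hence are nested; minimality of $Z_i,Z_j$ rules out $Z_i\subseteq Z_j$ and $Z_j\subseteq Z_i$, and axiom \ref{li:t2} applied in $\CT^*$ gives $Z_i\cap Z_j\neq\emptyset$, ruling out $Z_i\subseteq\bar Z_j$; so only $\bar Z_i\subseteq Z_j$ remains, i.e.\ $\bar Z_i\cap\bar Z_j=\emptyset$. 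Hence ${\U}=V^*\sqcup\bar Z_1\sqcup\dots\sqcup\bar Z_m$, so $(\bar X)\expand=\overline{X\expand}$ for all $X\subseteq{\U}\contract$. Combined with $\kappa\contract(X)=\kappa(X\expand)$ and $X\in\CT\contract\iff X\expand\in\CT$, this gives the dictionary: $X$ is a $(\CT\contract,\CT'\contract)$-separation of order $\ell$ if and only if $X\expand$ is a saturated $(\CT,\CT')$-separation of order $\ell$. Since $\KT^*$ is coherent (its tangles all truncate to $\CT^*$ at order $k$), every minimum $(\CT,\CT')$-separation has order exactly $k$, so it remains to find one that is saturated.

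To produce one I would start from an arbitrary minimum $(\CT,\CT')$-separation $W$ and ``round'' it one atom at a time, via the sub-claim: if $W$ is a minimum $(\CT,\CT')$-separation and $\emptyset\subsetneq W\cap\bar Z_i\subsetneq\bar Z_i$, then one of $W\cap Z_i$, $W\cup\bar Z_i$ is again a minimum $(\CT,\CT')$-separation. Recall $\ord(\CT)=\ord(\CT')=k+1$, $\kappa(W)=k$, and $Z_i\in\CT^*\subseteq\CT\cap\CT'$. If $\kappa(W\cap Z_i)\le k$, then $W\cap Z_i\in\CT$ and $\overline{W\cap Z_i}=\bar W\cup\bar Z_i\in\CT'$ by Lemma~\ref{lem:tangle-closure}(2),(1), so $W\cap Z_i$ is a $(\CT,\CT')$-separation, hence minimum. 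If $\kappa(W\cap Z_i)\ge k+1$, then since $Z_i$ is a minimum separation between two $\kappa$-tangles (condition \ref{li:tn2} for $\CS^{\le k}$) and $\kappa(Z_i\cap W)>\kappa(W)$, Lemma~\ref{lem:intersection} applied to $Z_i$ and $W$ forces $\kappa(Z_i\cap\bar W)\le k$; then $Z_i\cap\bar W\in\CT'$ and $\overline{Z_i\cap\bar W}=W\cup\bar Z_i\in\CT$ by Lemma~\ref{lem:tangle-closure}(2),(1), so $W\cup\bar Z_i$ is a minimum $(\CT,\CT')$-separation. In either case the new separation is saturated at $i$ and, because the atoms are disjoint, meets every $\bar Z_j$ with $j\neq i$ exactly as $W$ does; so applying the sub-claim for $i=1,\dots,m$ in turn yields a minimum $(\CT,\CT')$-separation $W^\dagger$ saturated at all atoms.

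It remains to cash this in. Put $X_0:=(W^\dagger\cap V^*)\cup\{z_i\mid\bar Z_i\subseteq W^\dagger\}\subseteq{\U}\contract$, so that $X_0\expand=W^\dagger$; then $X_0\in\CT\contract$, and $(\bar X_0)\expand=\overline{W^\dagger}\in\CT'$ gives $\bar X_0\in\CT'\contract$, whence $X_0\notin\CT'\contract$ by \ref{li:t2}. Thus $\CT\contract\neq\CT'\contract$, and $X_0$ is a $(\CT\contract,\CT'\contract)$-separation of order $\kappa\contract(X_0)=\kappa(W^\dagger)=k$. For the second claim, let $X$ be any minimum $(\CT\contract,\CT'\contract)$-separation; by the dictionary $X\expand$ is a $(\CT,\CT')$-separation with $\kappa(X\expand)=\kappa\contract(X)$, and this order is $\ge k$ (since $\KT^*$ is coherent) and $\le\kappa\contract(X_0)=k$ (since $X$ is minimum), hence $=k$ --- so $X\expand$ is a minimum $(\CT,\CT')$-separation, as required.

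The crux should be the rounding sub-claim, in particular the case where rounding down fails because $\kappa(W\cap Z_i)$ has risen above $k$: there one applies Lemma~\ref{lem:intersection} to $Z_i$ (a minimum separation between two tangles in $\KT^{\le k}_{\max}$) rather than to $W$, and one must check that nothing about $W$ is used beyond its being a minimum $(\CT,\CT')$-separation, so that the one-atom-at-a-time iteration stays valid and does not disturb the atoms already handled.
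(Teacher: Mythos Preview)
Your proof is correct and follows essentially the same approach as the paper: both pass through Lemma~\ref{lem:intersection} applied to $X=Z_i$ (which is a minimum tangle-separation by \ref{li:tn2}) and $Y=W$, and both use Lemma~\ref{lem:tangle-closure} together with $Z_i\in\CT^*\subseteq\CT\cap\CT'$ to conclude that one of $W\cap Z_i$, $W\cup\bar Z_i$ is again a minimum $(\CT,\CT')$-separation. The only difference is cosmetic: the paper phrases the saturation step as an extremal choice (pick $Y$ maximising the number of already-saturated atoms, then derive a contradiction if any atom is unsaturated), whereas you phrase it as an explicit one-atom-at-a-time rounding; your explicit verification that the $\bar Z_i$ are pairwise disjoint is a detail the paper leaves implicit.
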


Note that there is a $(\CT\contract,\CT'\contract)$-separation,
because distinct tangles of the same order are incomparable.

\begin{proof}
  We choose a minimum $(\CT,\CT')$-separation $Y$ in such a way that
  it maximises the number of $i\in[m]$ with $Y\cap \bar Z_i=\emptyset$
  or $\bar Z_i\subseteq Y$. Then $\kappa(Y)=k$.

  \begin{claim}[1]
    For all $i\in[m]$, either $Y\cap\bar Z_i=\emptyset$ or $\bar
    Z_i\subseteq Y$.

    \proof Suppose for contradiction that there is some $i\in[m]$ such
    that $\emptyset\subset \bar Z_i\cap Y\subset \bar Z_i$. By
    \ref{li:tn2} for $\CS^{\le k}$, there are tangles $\CT_i,\CT_i'$ such that
    $Z_i$ is a minimum
    $(\CT_i,\CT_i')$-separation.  By Lemma~\ref{lem:intersection}
    (applied to $X:=Z_i$ and $Y$), either $\kappa(Y\cap
    Z_i)\le\kappa(Y)$ or $\kappa(\bar Y\cap Z_i)\le\kappa(Y)$. 
    
    Suppose first that $\kappa(Y\cap
    Z_i)\le k$. Then by Lemma~\ref{lem:tangle-closure}(2) we
    have $Y\cap
    Z_i\in\CT$, because $Y\in\CT$ and
    $Z_i\in\CT^*\subseteq\CT$. Furthermore, by
    Lemma~\ref{lem:tangle-closure}(1) we have $\bar{Y\cap
      Z_i}=\bar Y\cup\bar Z_i\in\CT'$, because $\bar Y\in\CT'$. Thus
    $(Y\cap Z_i)$ is a minimum $(\CT,\CT')$-separation as
    well. Furthermore, $(Y\cap Z_i)\cap\bar Z_i=\emptyset$, and for
    all $j\neq i$, if $Y\cap\bar Z_j=\emptyset$ then $(Y\cap
    Z_i)\cap\bar Z_j=\emptyset$, and if $\bar Z_j\subseteq Y$, then
    $\bar Z_j\subseteq (Y\cup Z_i)$, because $\bar
    Z_j\subseteq Z_i$. This
    contradicts the choice of $Y$.

    Suppose next that $\kappa(\bar Y\cap
    Z_i)\le k$. Arguing as above with $Y,\bar Y$ and $\CT,\CT'$
    swapped, we see that $\bar Y\cap Z_i$ is a minimum
    $(\CT',\CT)$-separation. Thus $Y\cup\bar Z_i$ is a minimum
    $(\CT,\CT')$-separation. We have $\bar Z_i\subseteq Y\cup\bar
    Z_i$, and for all $j\neq i$, if $\bar Z_j\subseteq Y$ then $\bar
    Z_j\subseteq \bar Z_i\cup Y$, and if $\bar Z_j\cap Y=\emptyset$
    then $\bar Z_j\cap(\bar Z_i\cup Y)=\bar Z_j\cap \bar
    Z_i=\emptyset$. Again, this contradicts the choice of $Y$.
    \uend
  \end{claim}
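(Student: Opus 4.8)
The plan is to argue by contradiction, exploiting the extremality in the choice of $Y$: recall $Y$ is a minimum $(\CT,\CT')$-separation with $\kappa(Y)=k$ chosen to maximise the number of indices $i\in[m]$ for which $Y$ \emph{agrees} with the atom $\bar Z_i$, that is, $Y\cap\bar Z_i=\emptyset$ or $\bar Z_i\subseteq Y$. So I would suppose some $i$ witnesses $\emptyset\subsetneq\bar Z_i\cap Y\subsetneq\bar Z_i$ and then replace $Y$ by another minimum $(\CT,\CT')$-separation that agrees with every atom $Y$ agreed with \emph{and} with $\bar Z_i$, contradicting maximality. The two candidate replacements are the ``uncrossings'' $Y\cap Z_i$ and $Y\cup\bar Z_i$ of $Y$ against $Z_i$, and Lemma~\ref{lem:intersection} will tell me which one to take.

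First I would attach tangles to $Z_i$: since $Z_i\in\CT^*\cap\CS^{\le k}$ and $\CS^{\le k}$ is a nested set of separations for $\KT^{\le k}_{\max}$, condition~\ref{li:tn2} supplies tangles for which $Z_i$ is a minimum separation, so Lemma~\ref{lem:intersection} applies with $X:=Z_i$ and this $Y$, yielding $\kappa(Y\cap Z_i)\le\kappa(Y)=k$ or $\kappa(Z_i\setminus Y)=\kappa(\bar Y\cap Z_i)\le k$. In the first case I use $Y\in\CT$, $\bar Y\in\CT'$, $Z_i\in\CT^*\subseteq\CT$, and $\kappa(Y\cap Z_i)<k+1=\ord(\CT)=\ord(\CT')$: Lemma~\ref{lem:tangle-closure}(2) gives $Y\cap Z_i\in\CT$, Lemma~\ref{lem:tangle-closure}(1) gives $\bar Y\cup\bar Z_i=\bar{Y\cap Z_i}\in\CT'$, and, $k$ being the minimum order of a $(\CT,\CT')$-separation, $Y\cap Z_i$ is in fact a \emph{minimum} such separation. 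In the second case the symmetric argument (exchanging $Y\leftrightarrow\bar Y$ and $\CT\leftrightarrow\CT'$) shows $\bar Y\cap Z_i$ is a minimum $(\CT',\CT)$-separation, hence $Y\cup\bar Z_i=\bar{\bar Y\cap Z_i}$ is a minimum $(\CT,\CT')$-separation.

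It then remains to check the new separation agrees with strictly more atoms than $Y$. By construction it agrees with $\bar Z_i$ ($Y\cap Z_i$ is disjoint from $\bar Z_i$; $Y\cup\bar Z_i$ contains $\bar Z_i$), whereas $Y$ does not. For $j\neq i$ I would first establish the auxiliary fact $\bar Z_j\subseteq Z_i$, equivalently $\bar Z_i\cap\bar Z_j=\emptyset$: the sets $Z_i,Z_j$ are nested (both lie in $\CS^{\le k}$), neither contains the other (both are inclusionwise minimal in $\CT^*\cap\CS^{\le k}$ and $i\neq j$), and $Z_i\cap Z_j\neq\emptyset$ by tangle axiom~\ref{li:t2} for $\CT^*$ applied to the triple $Z_i,Z_j,Z_j$; these three facts together leave only $\bar Z_i\subseteq Z_j$, hence also $\bar Z_j\subseteq Z_i$. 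Granting this: if $\bar Z_j\cap Y=\emptyset$ then $\bar Z_j$ is disjoint from $Y\cap Z_i$ and from $Y\cup\bar Z_i$ (using $\bar Z_j\cap\bar Z_i=\emptyset$ for the latter), and if $\bar Z_j\subseteq Y$ then $\bar Z_j\subseteq Z_i$ forces $\bar Z_j\subseteq Y\cap Z_i$ and $\bar Z_j\subseteq Y\cup\bar Z_i$; so all agreements of $Y$ persist and index $i$ is gained, the desired contradiction. I expect the only non-mechanical point to be this disjointness of the atoms $\bar Z_j$ — in particular recognising that axiom~\ref{li:t2} is exactly what rules out the ``wrong'' nesting $Z_i\subseteq\bar Z_j$ — everything else being bookkeeping with Lemma~\ref{lem:tangle-closure} and the identity $\kappa(Y)=k$.
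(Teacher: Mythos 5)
Your proof is correct and follows essentially the same route as the paper: uncross $Y$ against $Z_i$ via Lemma~\ref{lem:intersection}, show the favourable half is again a minimum $(\CT,\CT')$-separation via Lemma~\ref{lem:tangle-closure}, and contradict the extremal choice of $Y$. You also explicitly justify the auxiliary fact $\bar Z_j\subseteq Z_i$ (nestedness of $\CS^{\le k}$, minimality of the $Z_i$, and tangle axiom \ref{li:t2} for $\CT^*$), which the paper asserts without proof; this is a genuine improvement in rigour, and you also avoid a small slip in the paper's Case~1 where it writes $\bar Z_j\subseteq Y\cup Z_i$ instead of $\bar Z_j\subseteq Y\cap Z_i$.
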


  It follows from Claim~1 that there is a $Y'\subseteq {\U}\contract$
  such that $Y=Y'\expand$.
   This set $Y'$ is a
  $(\CT\contract,\CT'\contract)$-separation. Thus the order $k'$ of a minimum
  $(\CT\contract,\CT'\contract)$-separation is at most
  $\kappa\contract(Y')=\kappa(Y)=k$.
  Now let $X'\subseteq {\U}\contract$ be a minimum
  $(\CT\contract,\CT'\contract)$-separation. Then the expansion
  $X'\expand$ is a
  $(\CT,\CT')$-separation, and this implies that
  \[
  k\le\kappa(X'\expand)=\kappa\contract(X')=k'\le k.
  \]
 Hence $k=k'$, and $X'\expand$ is a
  minimum $(\CT,\CT')$-separation.
\end{proof}

We let
\[
\KT^*\contract:=\{\CT\contract\mid\CT\in\KT^*\}.
\]
Observe that, $\KT^*\contract$ is a coherent family of
$\kappa\contract$-tangles of order $k+1$, because the truncation to
order $k$ of all elements of $\KT^*\contract$ is $\CT^*\contract$.

\begin{cor}\label{cor:injective}
  Let $\CS$ be a nested set of separations for
  $\KT^*\contract$. Then $\CS\expand:=\{ X\expand\mid X\in\CS\}$ is a
  nested set of separations for $\KT^*$.
\end{cor}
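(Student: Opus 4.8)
The plan is to show that $\CS\expand$ is nested, closed under complementation, and satisfies \ref{li:tn1} and \ref{li:tn2} for $\KT^*$, using the correspondence between $\CS$ and the contracted system established in Lemma~\ref{lem:injective}. First I would record the basic facts about the expansion map $X\mapsto X\expand$ on subsets of ${\U}\contract$: it is injective, it commutes with complementation (i.e.\ $\bar X\expand=\bar{X\expand}$, where the left complement is in ${\U}\contract$ and the right in ${\U}$), and it is inclusion-preserving and inclusion-reflecting on the image (if $X\subseteq Y$ then $X\expand\subseteq Y\expand$, and conversely). These follow directly from the definition $X\expand=(X\cap V^*)\cup\bigcup_{z_i\in X}\bar Z_i$ together with the fact that the $\bar Z_i$ are mutually disjoint (which holds because the $Z_i$ are the inclusionwise minimal elements of the nested family $\CT^*\cap\CS^{\le k}$, hence pairwise nested with each $Z_i\not\subseteq\bar Z_j$, forcing $\bar Z_i\cap\bar Z_j$ small — more precisely, minimality forces $\bar Z_i\subseteq Z_j$ for $i\neq j$, so $\bar Z_i\cap\bar Z_j=\emptyset$). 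From these facts, closure of $\CS\expand$ under complementation is immediate from closure of $\CS$, and nestedness of $\CS\expand$ follows from nestedness of $\CS$: if $X,Y\in\CS$ are nested, say $X\subseteq Y$, then $X\expand\subseteq Y\expand$, and similarly for the other three cases.

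Next I would verify \ref{li:tn2}: every $Z\in\CS$ is, by \ref{li:tn2} for $\KT^*\contract$, a minimum $(\CT\contract,\CT'\contract)$-separation for some distinct $\CT,\CT'\in\KT^*$, and then Lemma~\ref{lem:injective} says exactly that $Z\expand$ is a minimum $(\CT,\CT')$-separation, with $\CT,\CT'\in\KT^*$ distinct. For \ref{li:tn1}, I would take distinct $\CT,\CT'\in\KT^*$; then $\CT\contract,\CT'\contract$ are distinct elements of $\KT^*\contract$ (again by Lemma~\ref{lem:injective}), and since they have the same order $k+1$ they are incomparable, so by \ref{li:tn1} for $\KT^*\contract$ there is a $Z\in\CS$ that is a minimum $(\CT\contract,\CT'\contract)$-separation; by Lemma~\ref{lem:injective} again, $Z\expand\in\CS\expand$ is a minimum $(\CT,\CT')$-separation. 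One small point to address is that \ref{li:tn1} and \ref{li:tn2} for $\KT^*$ are stated in terms of the maximal tangles in $\KT^*$; but $\KT^*$ consists entirely of tangles of order $k+1$, hence is already a family of mutually incomparable tangles, so $(\KT^*)_{\max}=\KT^*$ and there is nothing extra to check. Likewise $\KT^*\contract$ is coherent of order $k+1$ (as observed just before the corollary), so it too is mutually incomparable and the hypothesis "nested set of separations for $\KT^*\contract$" is unambiguous.

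I do not expect a serious obstacle here; the corollary is essentially a transport-of-structure statement and the real content sits in Lemma~\ref{lem:injective}. The one place requiring a little care is confirming the disjointness of the $\bar Z_i$ and hence that $X\expand$ behaves well with respect to unions, intersections, and complements — this is what lets nestedness and complementation-closure pass through the expansion map cleanly. Once that is in hand, the proof is a short three-part check (nested, closed under complementation via the elementary properties of $\expand$; \ref{li:tn1} and \ref{li:tn2} via Lemma~\ref{lem:injective}), and I would write it as such.

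\begin{proof}
  Recall that the sets $\bar Z_1,\ldots,\bar Z_m$ are mutually disjoint: the $Z_i$ are the inclusionwise minimal elements of the nested family $\CT^*\cap\CS^{\le k}$, so for $i\neq j$ the sets $Z_i,Z_j$ are nested and $Z_j\not\subseteq Z_i$, whence $\bar Z_i\subseteq Z_j$ and therefore $\bar Z_i\cap\bar Z_j=\emptyset$. Consequently the expansion map $X\mapsto X\expand=(X\cap V^*)\cup\bigcup_{z_i\in X}\bar Z_i$, defined on subsets of ${\U}\contract$, is injective, satisfies $\bar{X}\expand=\bar{X\expand}$ (complementation on the left taken in ${\U}\contract$ and on the right in ${\U}$), and satisfies $X\subseteq Y\iff X\expand\subseteq Y\expand$ for all $X,Y\subseteq {\U}\contract$.

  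We check that $\CS\expand=\{X\expand\mid X\in\CS\}$ is a nested set of separations for $\KT^*$. Since $\KT^*$ consists of tangles all of order $k+1$, it is a family of mutually incomparable tangles, so $(\KT^*)_{\max}=\KT^*$; likewise $\KT^*\contract$ is coherent of order $k+1$ and hence mutually incomparable.

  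\emph{Closed under complementation.} If $Y\in\CS\expand$, say $Y=X\expand$ with $X\in\CS$, then $\bar X\in\CS$ and $\bar Y=\bar{X\expand}=\bar X\expand\in\CS\expand$.

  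\emph{Nested.} Let $X\expand,Y\expand\in\CS\expand$ with $X,Y\in\CS$. Since $\CS$ is nested, one of $X\subseteq Y$, $X\subseteq\bar Y$, $\bar X\subseteq Y$, $\bar X\subseteq\bar Y$ holds; applying $\expand$ and using that it is inclusion-preserving and commutes with complementation, the corresponding inclusion among $X\expand,Y\expand,\bar{X\expand},\bar{Y\expand}$ holds, so $X\expand$ and $Y\expand$ are nested.

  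\emph{Condition \ref{li:tn2}.} Let $Z\expand\in\CS\expand$ with $Z\in\CS$. By \ref{li:tn2} for $\KT^*\contract$ there are distinct $\CT,\CT'\in\KT^*$ such that $Z$ is a minimum $(\CT\contract,\CT'\contract)$-separation. By Lemma~\ref{lem:injective}, $\CT\contract$ and $\CT'\contract$ are distinct and $Z\expand$ is a minimum $(\CT,\CT')$-separation. Since $\CT,\CT'\in\KT^*$ are distinct, this is what is required.

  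\emph{Condition \ref{li:tn1}.} Let $\CT,\CT'\in\KT^*$ be distinct. By Lemma~\ref{lem:injective}, $\CT\contract$ and $\CT'\contract$ are distinct elements of $\KT^*\contract$; being of equal order they are incomparable. By \ref{li:tn1} for $\KT^*\contract$ there is a $Z\in\CS$ that is a minimum $(\CT\contract,\CT'\contract)$-separation, and by Lemma~\ref{lem:injective} the expansion $Z\expand\in\CS\expand$ is a minimum $(\CT,\CT')$-separation.

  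Thus $\CS\expand$ is nested, closed under complementation, and satisfies \ref{li:tn1} and \ref{li:tn2} for $\KT^*$; that is, $\CS\expand$ is a nested set of separations for $\KT^*$.
\end{proof}
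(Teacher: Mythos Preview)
Your proof is correct and follows exactly the route the paper intends; indeed, the paper states the corollary without proof, treating it as immediate from Lemma~\ref{lem:injective}, and your argument is precisely the unpacking of that implication.

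One small point: in your verification that the $\bar Z_i$ are mutually disjoint, the inference ``$Z_i,Z_j$ are nested and $Z_j\not\subseteq Z_i$, whence $\bar Z_i\subseteq Z_j$'' is too quick. From nestedness and minimality you also rule out $Z_i\subseteq Z_j$, but you still need to exclude $Z_i\subseteq\bar Z_j$ (equivalently $Z_i\cap Z_j=\emptyset$). This follows because $Z_i,Z_j\in\CT^*$ and hence $Z_i\cap Z_j\neq\emptyset$ by \ref{li:t2}. You alluded to this in your plan but dropped it in the formal proof; adding one clause fixes it.
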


Finally, we are ready to prove the main theorem this section.

\begin{theo}[Canonical Decomposition Theorem \cite{hun11}]\label{theo:candec}
  There is a canonical construction that associates with every finite
  set
  $\KT$ of $\kappa$-tangles a nested set
  of separations for $\KT$.
\end{theo}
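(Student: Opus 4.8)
The plan is to build the nested set of separations for an arbitrary finite family $\KT$ by induction on the maximal order of a tangle in $\KT$, gluing together the "local" decompositions for coherent families provided by Lemma~\ref{lem:coherent}. First I would reduce to the case where $\KT$ is closed under taking truncations: given an arbitrary finite $\KT$, replace it by the family $\KT'$ of all truncations of tangles in $\KT$; this is canonical, and a nested set of separations for $\KT'$ is in particular a nested set of separations for $\KT$, since \ref{li:tn1} and \ref{li:tn2} only refer to the maximal tangles and those are unaffected. So assume $\KT$ is closed under truncations, and let $K$ be the largest order occurring. Using the notation of Section~\ref{sec:tree3}, I would construct by induction on $k=0,1,\ldots,K$ a canonical nested set of separations $\CS^{\le k}$ for $\KT^{\le k}_{\max}$.

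The base case $k=0$ is trivial: $\KT^{\le 0}$ contains only the empty tangle (if anything), so $\CS^{\le 0}=\emptyset$ works. For the inductive step, suppose $\CS^{\le k}$ is given. I would enlarge it to a nested set $\CS^{\le k+1}$ of separations for $\KT^{\le k+1}_{\max}$ as follows. For each extendible $\CT^*\in\KT^{\le k}_{\max}$ — these are exactly the order-$k$ maximal tangles that get refined at the next level — form the coherent family $\KT^*=\{\CT\in\KT : \ord(\CT)=k+1,\ \CT^*\subseteq\CT\}$, contract along the inclusionwise minimal sets $Z_1,\ldots,Z_m\in\CT^*\cap\CS^{\le k}$ to get $\KT^*\contract$, apply Lemma~\ref{lem:coherent} to obtain a (canonical) nested set of separations $\CS_{\CT^*}$ for $\KT^*\contract$, and expand it via Corollary~\ref{cor:injective} to a nested set of separations $\CS_{\CT^*}\expand$ for $\KT^*$. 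Then set
\[
\CS^{\le k+1}:=\CS^{\le k}\cup\bigcup_{\CT^*\text{ extendible}}\CS_{\CT^*}\expand,
\]
and finally $\CS:=\CS^{\le K}$ is the claimed nested set of separations for $\KT=\KT^{\le K}_{\max}$ (after the truncation-closure reduction, the maximal tangles of $\KT$ all sit in $\KT^{\le K}_{\max}$).

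The verification splits into three parts. First, \ref{li:tn2}: every separation in $\CS^{\le k+1}$ is, by induction or by the construction via $\CS_{\CT^*}$, a minimum $(\CT,\CT')$-separation for tangles in $\KT$; here one uses Lemma~\ref{lem:injective} to transfer minimality from the contracted system back to $\kappa$. Second, \ref{li:tn1}: for incomparable maximal tangles $\CT,\CT'$ of $\KT^{\le k+1}$, if both have order $\le k$ we are done by induction; if they have orders $k'<k''\le k+1$ then the minimum $(\CT,\CT')$-separation has order $\ord(\CT)$ and equals (up to the leftmost-minimum uniqueness of Lemma~\ref{lem:lm-tansep}) a separation already in $\CS^{\le k'}$ witnessing that the order-$k'$ truncation of $\CT''$ is split off; if both have order $k+1$, they share the same order-$k$ truncation iff they lie in the same $\KT^*$, in which case \ref{li:tn1} for $\KT^*\contract$ (from Lemma~\ref{lem:coherent}) plus Corollary~\ref{cor:injective} gives the needed separation, and if they lie in different $\KT^*$'s the separating set again comes from $\CS^{\le k}$. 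Third, nestedness of $\CS^{\le k+1}$: separations coming from different extendible $\CT^*$'s, or one from $\CS^{\le k}$ and one new, are nested because the new separations from $\CS_{\CT^*}\expand$ all refine the region delimited by the $Z_i$, i.e.\ each lies inside some $\bar Z_i$ or contains all the $Z_j$; formally this is where I expect the main work, and it follows from Lemma~\ref{lem:coherent0} together with the observation that each $X\in\CS_{\CT^*}\expand$ is a minimum $\KT^*$-separation and hence nested with every element of $\CS^{\le k}$ by the leftmost-minimum argument. Canonicity is inherited at each stage: the reduction to truncation-closure, the choice of the $\KT^*$ (determined by $\CT^*$), the contraction, Lemma~\ref{lem:coherent}'s construction, and the expansion are all canonical, and canonical constructions compose, so an isomorphism of connectivity systems carrying $\KT$ to $\KT'$ carries the output nested set to the output nested set. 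The main obstacle is keeping the bookkeeping of the two-level induction honest — in particular checking that separations produced inside one extendible tangle's local decomposition are nested with those from another, which is where Lemma~\ref{lem:coherent0} and Lemma~\ref{lem:intersection} do the heavy lifting.
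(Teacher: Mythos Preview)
Your overall architecture is exactly the paper's: close under truncations, induct on $k$, at each extendible $\CT^*\in\KT^{\le k}_{\max}$ contract along the minimal $Z_i\in\CT^*\cap\CS^{\le k}$, apply Lemma~\ref{lem:coherent} to the coherent contracted family, expand, and take the union. The verification of \ref{li:tn1} and \ref{li:tn2} also follows the paper's pattern (though the paper phrases \ref{li:tn1} more cleanly: for distinct $\CT,\CT'\in\KT^{\le k+1}_{\max}$, compare their order-$k$ truncations; if these differ, a minimum separation already lies in $\CS^{\le k}$, otherwise both lie in the same $\KT^*$ and the new $\CS_{\CT^*}\expand$ supplies one).

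Where your proposal has a genuine gap is the nestedness argument. Your stated reason---that each $X\expand\in\CS_{\CT^*}\expand$ is a minimum $\KT^*$-separation and hence nested with every element of $\CS^{\le k}$ ``by the leftmost-minimum argument'' via Lemma~\ref{lem:coherent0}---does not work: Lemma~\ref{lem:coherent0} concerns separations within a single coherent family, but an element $Y\in\CS^{\le k}$ is a minimum separation for tangles of order $\le k$, not a $\KT^*$-separation, so that lemma gives no relation between $Y$ and $X\expand$. Likewise your informal description ``each lies inside some $\bar Z_i$ or contains all the $Z_j$'' is not the right dichotomy. The paper's argument is purely structural and does not use Lemma~\ref{lem:coherent0} or Lemma~\ref{lem:intersection} at this point: because $X\expand$ is an expansion, for \emph{every} atom $\bar Z_i$ one has either $\bar Z_i\subseteq X\expand$ or $\bar Z_i\cap X\expand=\emptyset$. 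Given $Y\in\CS^{\le k}$, orient so that $Y\in\CT^*$; since $\CS^{\le k}$ is nested, some minimal $Z_i\in\CT^*\cap\CS^{\le k}$ satisfies $Z_i\subseteq Y$, hence $\bar Y\subseteq\bar Z_i$, and the dichotomy for $\bar Z_i$ immediately gives $\bar Y\subseteq X\expand$ or $\bar Y\subseteq\overline{X\expand}$. For two new separations $X\expand\in\CS_{\CT^*}\expand$ and $X'\expand\in\CS_{\CT^{**}}\expand$ with $\CT^*\neq\CT^{**}$, take a minimum $(\CT^*,\CT^{**})$-separation $Y\in\CS^{\le k}$ and apply the same reasoning on both sides. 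So the contraction/expansion mechanism itself is what forces nestedness; once you replace your Lemma~\ref{lem:coherent0} appeal by this observation, the proof goes through exactly as in the paper.
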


\begin{proof}
  Without loss of generality we assume that $\KT$ is closed under
  taking truncations. By induction on $k\ge 0$ we construct a nested
  set $\CS^{\le k}$ of separations for $\KT^{\le k}$.

  We let $\CS^{\le 0}:=\emptyset$. 

  Now let $k\ge 0$, and suppose that $\CS^{\le k}$ is a nested set of
  separations for $\KT^{\le k}$. Let $\CT_1,\ldots,\CT_n$ be a list of
  all extendible tangles in $\KT^{\le k}_{\max}$. For every $i\in[n]$,
  we let $\KT_i$ be the set of all $\CT\in\KT$ such that
  $\ord(\CT)=k+1$ and $\CT_i\subseteq\CT$. Recall that $\KT_i$ is a
  coherent family of $\kappa$-tangles of order $k+1$ and observe that
  \[
  \KT^{\le k+1}=\KT^{\le k}\cup\bigcup_{i=1}^n\KT_i.
  \]
  We apply the contraction construction described above
  with $\CT^*:=\CT_i$. We let $({\U}_i,\kappa_i)$ be the contraction of
  $({\U},\kappa)$ at $\CT_i$ with respect to $\CS^{\le k}$. 

  We let $\KT_i\contract:=\{\CT\contract\mid\CT\in\KT_i\}$. We
  construct a canonical nested set $\CS_i$ of separations for
  $\KT_i\contract$ and let
  $\CS_i\expand:=\{ Z\expand \mid Z\in\CS_i\contract\}$. Now we let
  \[
  \CS^{\le k+1}:=\CS^{\le k}\cup\bigcup_{i=1}^n\CS_i\expand.
  \]

  \begin{claim}[1]
   $\CS^{\le k+1}$ is nested.

    \proof
    We already know that the family $\CS^{\le k}$ is
    nested. Furthermore, by Corollary~\ref{cor:injective}, 
   the family $\CS_i\expand$ is nested  for every
    $i\in[n]$.

    Thus we need to show that the sets in $\CS_i\expand$ are nested with all
    sets in $\CS^{\le k}$ as well as all sets in $\CS_j\expand$ for
    $j\neq i$. So let
    $X\expand\in\CS_i\expand$. 

    First consider a $Y\in\CS^{\le k}$. We have $\ord(\CT_i)=k$,
    because $\CT_i$ is extendible. Thus either $Y\in\CT_i$ or
    $\bar Y_i\in\CT_i$. Without loss of generality we assume that
    $Y\in\CT_i$. Let $Z\subseteq Y$ be inclusionwise minimal such that
    $Z\in\CT_i$. Then either $\bar Z\subseteq X\expand$ or $\bar Z\cap
    X\expand=\emptyset$, because $\bar Z$ is one of the sets (called
    $Z_i$ above) that are contracted to a
    vertex $z$ in the construction of $({\U}_i,\kappa_i)$, and either
    $z\in X$ or $z\not\in X$.

    If $\bar Z\subseteq X\expand$ then $\bar Y\subseteq X\expand$, and
    if $\bar Z\cap
    X\expand=\emptyset$ then $\bar Y\subseteq\bar Z\subseteq\bar
    X\expand$. Thus $Y$ and $X\expand$ are nested.

    Now consider a set $X'\expand\in\CS_j\expand$ for some $j\neq i$. Let
    $Y\in\CS^{\le k}$ be a minimum $(\CT_i,\CT_j)$-separation. Then
    $Y\in\CT_i$, and by the
    argument above, either $\bar Y\subseteq X\expand$ or $\bar
    Y\subseteq\bar X\expand$. Similarly, $\bar Y\in\CT_j$ and thus
    $Y\subseteq X'\expand$ or $Y\subseteq \bar X'\expand$. This
    implies that $X\expand$ and $X'\expand$ are nested. For example,
    if $\bar Y\subseteq X\expand$ and $Y\subseteq \bar X'\expand$ then
    $X'\expand\subseteq \bar Y\subseteq X\expand$.
    \uend
  \end{claim}

  \begin{claim}[2]
   $\CS^{\le k+1}$ is a nested set of separations for $\KT^{\le k+1}$.

    \proof
    By Claim~1, $\CS^{\le k+1}$ is nested. By construction,
    it is closed under complementation. 

    $\CS^{\le k+1}$ satisfies \ref{li:tn2}, because $\CS^{\le k}$ does for all $i\in[m]$, all
    $Z\in\CS_i\expand$ are minimum separations for tangles in
    $\KT_i\subseteq\KT^{\le k+1}_{\max}$.
   
    To see that $\CS^{\le k+1}$ satisfies \ref{li:tn1}, let
    $\CT,\CT'\in\KT^{\le k+1}_{\max}$ be distinct. Let $\CT^*$ be the
    truncation of $\CT$ to order $k$ if $\ord(\CT)=k+1$ and
    $\CT^*:=\CT$ otherwise, and let $\CT^{**}$ be defined similarly
    from $\CT'$. If $\CT^*\neq\CT^{**}$, there is a $Z\in\CS^{\le k}$
    that is a minimum $(\CT^*,\CT^{**})$-separation, and this $Z$ is
    also a minimum $(\CT,\CT')$-separation. Otherwise,
    $\ord(\CT)=\ord(\CT')=k+1$ and $\CT^*=\CT^{**}=\CT_i$ for some
    $i\in[n]$. Then $\CT,\CT'\in\KT_i$, and $\CS_i\expand$ contains a minimum
    $(\CT,\CT')$-separation.  \uend
\end{claim}

Now let $\ell$ be the maximum order of a tangle in $\KT$. Then
$\CS:=\CS^{\le \ell}$ is a nested set of separations for
$\KT$. Clearly, the construction of $\CS$ is canonical.
\end{proof}

\begin{cor}
  There is a canonical construction of a nested set of separations for
  the set of all $\kappa$-tangles.
\end{cor}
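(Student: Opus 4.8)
The plan is to derive this immediately from the Canonical Decomposition Theorem (Theorem~\ref{theo:candec}), once we observe that the set of all $\kappa$-tangles is finite. The only content of the proof is the finiteness observation; the rest is a direct appeal to the theorem just proved, together with a brief check that canonicity is inherited.

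First I would argue finiteness. Since ${\U}$ is a finite set, there are only finitely many set systems $\CT\subseteq 2^{\U}$. By Remark~\ref{rem:tangle-order1} a $\kappa$-tangle is formally a pair $(\CT,k)$, so the only extra datum beyond the set system is the order $k$. But by Corollary~\ref{cor:duality} a $\kappa$-tangle of order $k$ exists only if $\bw(\kappa)\ge k$, and by Lemma~\ref{lem:bw-upperbound} we have $\bw(\kappa)\le\val(\kappa)\cdot\ceil{|{\U}|/3}<\infty$. Hence every $\kappa$-tangle has order at most $\val(\kappa)\cdot\ceil{|{\U}|/3}$, and therefore the set of all $\kappa$-tangles is finite.

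Next I would let $\KT$ be the (finite) set of all $\kappa$-tangles and apply the canonical construction of Theorem~\ref{theo:candec} to it, obtaining a nested set of separations $\CS$ for $\KT$. Since $\KT$ is the set of all $\kappa$-tangles, the family $\KT_{\max}$ of inclusion-wise maximal tangles in $\KT$ is exactly the set of all maximal $\kappa$-tangles, so $\CS$ is a nested set of separations in precisely the sense intended by the statement.

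Finally, canonicity of the overall construction follows from the canonicity asserted in Theorem~\ref{theo:candec}. Every isomorphism $f$ from a connectivity system $({\U},\kappa)$ to $({\U}',\kappa')$ induces a bijection between the set of all $\kappa$-tangles and the set of all $\kappa'$-tangles, because being a $\kappa$-tangle is a property expressed purely in terms of $\kappa$ and the Boolean operations on $2^{\U}$, all of which $f$ preserves. Thus $f$ together with this bijection is an isomorphism between the two inputs fed to the construction of Theorem~\ref{theo:candec}, and hence commutes with an isomorphism between the two output tree decompositions (equivalently, the two nested sets of separations). I do not anticipate any genuine obstacle; the only step that needs to be stated explicitly rather than waved through is the finiteness of the tangle set.
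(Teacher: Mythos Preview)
Your proposal is correct and matches the paper's approach: the paper states this corollary without proof, treating it as an immediate consequence of Theorem~\ref{theo:candec}, and you have simply made explicit the one detail that needs checking, namely that the set of all $\kappa$-tangles is finite. Your finiteness argument via Corollary~\ref{cor:duality} and Lemma~\ref{lem:bw-upperbound} is sound, and the remarks on canonicity are fine if somewhat more than the paper bothers to spell out.
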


\begin{cor}
  There is a canonical construction that associates with every 
  set
  $\KT$ of $\kappa$-tangles a tree decomposition for $\KT$.
\end{cor}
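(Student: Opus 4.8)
The plan is to obtain the claimed construction by composing three canonical constructions that have already been established. Given a finite set $\KT$ of $\kappa$-tangles, I would first apply the Canonical Decomposition Theorem (Theorem~\ref{theo:candec}) to obtain, canonically in $\KT$, a nested set of separations $\CS$ for $\KT$ --- that is, for the subfamily $\KT_{\max}$ of inclusion-wise maximal tangles. Since $\CS$ is nested and closed under complementation, I would then feed it into the canonical construction of Lemma~\ref{lem:nested} to get a tree decomposition $(T_\CS,\beta_\CS)$ of $\kappa$ with $\Sep(T_\CS,\beta_\CS)=\CS$. Finally, the ``furthermore'' clause of Theorem~\ref{theo:td}, applied with $\KT_{\max}$ in place of $\KT$, canonically produces from $\CS$ this same tree decomposition together with the unique injective mapping $\tau\colon\KT_{\max}\to V(T_\CS)$ satisfying \ref{li:td1}--\ref{li:td5}. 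The triple $(T_\CS,\beta_\CS,\tau)$ is a tree decomposition for $\KT_{\max}$; as a tree decomposition for an arbitrary family is, by definition, one for its maximal subfamily (in analogy with the notion of nested set of separations), this is exactly a tree decomposition for $\KT$.

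The substantive part is to verify that the composite assignment $\KT\mapsto(T_\CS,\beta_\CS,\tau)$ is canonical, and I would do this by chaining commuting squares. An isomorphism $f$ from $({\U},\kappa)$ to $({\U}',\kappa')$ that carries $\KT$ to $\KT'$ carries $\KT_{\max}$ to $\KT'_{\max}$; by canonicity of Theorem~\ref{theo:candec} it carries the associated nested set $\CS$ to $\CS'$; by canonicity of Lemma~\ref{lem:nested} (as spelled out immediately before its proof) $f$ induces a tree isomorphism $g\colon T_\CS\to T_{\CS'}$ with $f(\beta_\CS(t))=\beta_{\CS'}(g(t))$ for all $t$, and this same $g$ carries $\Sep(T_\CS,\beta_\CS)=\CS$ to $\Sep(T_{\CS'},\beta_{\CS'})=\CS'$ compatibly with $f$. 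Then $g\circ\tau$ and $\tau'\circ f|_{\KT_{\max}}$ are both injective mappings $\KT_{\max}\to V(T_{\CS'})$ satisfying \ref{li:td1}--\ref{li:td5} with respect to $\KT'_{\max}$, so the uniqueness clause of Theorem~\ref{theo:td} forces them to coincide. This is precisely the commuting square expressing canonicity of the composite.

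Two degenerate cases should be handled separately. If $\KT=\emptyset$, then $\KT_{\max}=\emptyset$, the construction of Theorem~\ref{theo:candec} returns $\CS=\emptyset$, Lemma~\ref{lem:nested} returns the one-node tree decomposition $\beta(r)={\U}$, and $\tau$ is the empty map; canonicity is trivial. If $|\KT_{\max}|=1$, then by \ref{li:tn2} we again have $\CS=\emptyset$, and the one-node tree decomposition together with the map sending the unique maximal tangle to the unique node satisfies \ref{li:td1}--\ref{li:td5} (as noted at the start of the proof of Theorem~\ref{theo:td}); again canonicity is immediate.

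The step I expect to be the main obstacle --- though ``obstacle'' is perhaps too strong --- is the bookkeeping of the second paragraph: making sure the intermediate isomorphism data is threaded correctly, in particular that the tree isomorphism $g$ delivered by Lemma~\ref{lem:nested} is the very one that is then pinned down by the uniqueness part of Theorem~\ref{theo:td}. No new combinatorial argument is needed, since canonicity is already asserted at each stage; the only care required is in the interface between the stages.
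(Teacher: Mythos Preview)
Your proposal is correct and is precisely the argument the paper intends: the corollary is stated without proof because it follows immediately by composing the Canonical Decomposition Theorem~\ref{theo:candec} with the ``furthermore'' clause of Theorem~\ref{theo:td}, exactly as you describe. Your careful verification that the isomorphisms thread through the composition, and your handling of the degenerate cases $\KT=\emptyset$ and $|\KT_{\max}|=1$, are appropriate but go beyond what the paper spells out; the one definitional wrinkle you flag---that a tree decomposition for arbitrary $\KT$ should mean one for $\KT_{\max}$---is indeed the intended reading, by analogy with the earlier convention for nested sets of separations.
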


\begin{cor}
  Suppose that $n:=|{\U}|\ge 2$.
  Then there are at most $n-1$ maximal $\kappa$-tangles.
\end{cor}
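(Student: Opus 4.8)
The plan is to deduce the bound from the tree decomposition for the family of all maximal $\kappa$-tangles supplied by the Canonical Decomposition Theorem~\ref{theo:candec} together with Theorem~\ref{theo:td}. Write $\KT$ for the set of all maximal $\kappa$-tangles; it is finite, and it is mutually incomparable, because if $\CT'\subsetneq\CT$ then $\CT'$ is not maximal. First I would apply the canonical construction to $\KT$: this yields a nested set of separations $\CS$ for $\KT$, a tree decomposition $(T,\beta)$ of $\kappa$ with $\Sep(T,\beta)=\CS$ (Lemma~\ref{lem:nested}), and an injective map $\tau\colon\KT\to V(T)$ satisfying conditions \ref{li:td1}--\ref{li:td5} of Theorem~\ref{theo:td}. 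Since $\tau$ is injective, $|\KT|$ equals the number of \emph{tangle nodes}, i.e.\ of nodes of the form $\tau(\CT)$.

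If $|\KT|\le 1$ there is nothing to prove, as $n\ge 2$. So I would assume $|\KT|\ge 2$; then injectivity of $\tau$ forces $|V(T)|\ge 2$, so $T$ has an edge and hence at least two leaves. Put $\ell:=|L(T)|\ge 2$. Two structural facts about tangle nodes drive the count. (a) By \ref{li:td5} every leaf of $T$ is a tangle node; and if $t=\tau(\CT)$ is a leaf with unique neighbour $t'$, then $\beta(t)=\gamma(t',t)$, which lies in $\CT$ by \ref{li:td3}, so by axiom~\ref{li:t2} it is nonempty and by axiom~\ref{li:t3} it is not a singleton, whence $|\beta(t)|\ge 2$. (b) If $t=\tau(\CT)$ is an internal tangle node of degree $2$ or $3$, then $\beta(t)=\bigcap_{t'\in N(t)}\gamma(t',t)$, each set $\gamma(t',t)$ lies in $\CT$ by \ref{li:td3}, and applying axiom~\ref{li:t2} (repeating one set when the degree is $2$) gives $\beta(t)\ne\emptyset$.

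To finish, I would partition the tangle nodes into the $\ell$ leaves, the $p$ internal tangle nodes of degree at most $3$, and the $q$ internal tangle nodes of degree at least $4$, so that $|\KT|=\ell+p+q$. Because the bags $\beta(v)$ are pairwise disjoint subsets of ${\U}$, facts (a) and (b) give $n\ge 2\ell+p$, hence $p\le n-2\ell$. On the other hand, the handshake identity for the tree $T$ (which has at least one edge) reads $\sum_{v\,:\,\deg v\ge 3}(\deg v-2)=\ell-2$, so the number of vertices of degree at least $4$ — in particular $q$ — is at most $(\ell-2)/2$. Combining,
\[
|\KT|=\ell+p+q\le \ell+(n-2\ell)+\tfrac{\ell-2}{2}=n-1-\tfrac{\ell}{2}\le n-2\le n-1,
\]
using $\ell\ge 2$.

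The only genuinely delicate points are fact (b) — where one must observe that $\beta(t)$ is exactly the intersection of the cones $\gamma(t',t)$ pointing into $t$, and handle the degree-$2$ case of axiom~\ref{li:t2} by repeating a set — and fact (a)'s use of axiom~\ref{li:t3} to extract a factor of $2$ from the leaf bags, which is what makes the inequality close; everything else is routine bookkeeping. (Incidentally, the argument in fact delivers the slightly stronger bound $n-2$ whenever there are at least two maximal tangles.)
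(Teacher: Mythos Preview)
Your proof is correct and follows essentially the same approach as the paper: set up the tree decomposition for $\KT_{\max}$, show leaf bags have size at least $2$ via \ref{li:t2} and \ref{li:t3}, show low-degree internal tangle nodes have nonempty bags via \ref{li:td3} and \ref{li:t2}, and finish with a tree-counting argument. Your version is a mild refinement---you also exploit degree-$3$ tangle nodes (the paper only treats degree $2$) and use the handshake identity to bound degree-$\ge 4$ nodes by $(\ell-2)/2$ rather than the paper's cruder $n_{\ge 3}<n_1$, which is why you extract $n-2$ instead of merely $n-1$.
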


Note that if $|{\U}|\le 1$, then the empty tangle is the unique maximal $\kappa$-tangle.

\begin{proof}
  Let $\KT_{\max}$ be the family of all maximal $\kappa$-tangles, and
  let $(T,\beta,\tau)$ be a tree decomposition for
  $\KT_{\max}$. We assume without loss of generality that
   $|\KT_{\max}|\ge 2$. Then $|T|\ge 2$. By Theorem~\ref{theo:td}\ref{li:td5}, all leaves of $T$ are tangle nodes. Let $t$ be
   a leaf, $s$ the neighbour of $t$, and let $\CT_t\in
   \KT_{\max}$ be the tangle with $\tau(\CT_t)=t$. By
   Theorem~\ref{theo:td}\ref{li:td3} we have $\beta(t)=\gamma(s,t)\in\CT_t$. By
   \ref{li:t2} and \ref{li:t3}, this implies $|\beta(t)|>1$.

   Now let $t\in V(T)$ be a tangle node of degree $2$, say, with
   neighbours $s$ and $u$. Let $\CT_t\in \KT_{\max}$
   such that $\tau(\CT_t)=t$.  By Theorem~\ref{theo:td}\ref{li:td3}, we have
   $\gamma(u,t),\gamma(s,t)\in\CT_t$, which implies
   $\beta(t)=\gamma(u,t)\cap\gamma(s,t)\neq\emptyset$.

   Let $n_1,n_2,n_{\ge 3}$ be the numbers of tangle nodes of degree
   $1$, $2$, at least $3$, respectively. We have $2n_1+n_2\le |{\U}|$.
   Furthermore, $n_{\ge 3}<n_1$, because a tree with $n_1$ leaves has
   less than $n_1$ nodes of degree at least $3$. Thus
   \[
   |\KT_{\max}|=n_1+n_2+n_{\ge 3}<2n_1+n_2\le|{\U}|.\qedhere
   \]
\end{proof}

\section{Algorithmic Aspects}
\label{sec:alg}

In this section, we will briefly cover the main algorithmic aspects of
the theory. There are essentially three types of algorithmic results.
\begin{enumerate}
\item Algorithms for computing decompositions of bounded width. We
  will focus on branch decompositions for general connectivity systems
  below. A few references to algorithms for computing tree
  decompositions are \cite{arncorpro87,bod96,bodgrodre+13,feihajlee08,gm13}.
\item Algorithms for computing tangles and the canonical tangle tree
  decompositions.
\item Algorithm for solving otherwise hard algorithmic problems
  efficiently on structures of bounded branch or tree width. We will
  not consider these here. Some pointers to the rich literature are
  \cite{bod97,bodkos08,cou90,cygfomkow+15,coumakrot00,dowfel13,flufrigro02,flugro06,ggmss05,nie06,oum16}.
\end{enumerate}
Before we state any results, we need to describe the computation
model, which is not obvious for algorithms on connectivity
functions. Specifying a connectivity function explicitly requires exponential
space in
the size of the universe, and this is usually not what we
want. Specific connectivity functions are usually given implicitly. For example,
an algorithm that takes one of the connectivity functions
$\kappa_G,\nu_G,\mu_G,\rho_G$ as its input will usually just be given
the graph $G$, and an algorithm taking the connectivity function of a
representable matroid as its input will be given a representation of
the matroid. However, for the general theory we assume a more abstract
computation model that applies to all connectivity functions in the
same way.

In this model, algorithms expecting a connectivity function or any
other set function $\kappa:2^{\U}\to\NN$ as input are
given the universe ${\U}$ as actual input (say, as a list of objects),
and they are given an oracle that returns for~$X\subseteq {\U}$ the value
of~$\kappa({\U})$. The running time of such algorithms is measured in
terms of the size $|{\U}|$ of the universe, which in the following we
denote by $n$. We assume this computation
model whenever we say that an algorithm is given \emph{oracle access} to a
set function $\kappa$. 

A  fact underlying most of the following algorithms is that, under this model of
computation, submodular functions can be efficiently minimised \cite{iwaflefuj01,schri00}.

Seymour and Oum where the first to consider the computation of branch
decompositions in this abstract setting. In \cite{oumsey06a}, they
showed that given oracle access to a univalent connectivity function $\kappa$ of
branch width $k$, there is an algorithm computing a branch
decomposition of $\kappa$ of width at most $3k+1$ in time
$f(k)n^{O(1)}$ for some function $f$. At the price of an increase of
the approximation ratio by a factor of $\val(\kappa)$, this can be
generalised to connectivity functions that are not necessarily
univalent. Later, Oum and Seymour~\cite{oumsey07} showed that branch
decomposition of width exactly $k$ can be computed in time
$n^{O(k)}$. Hlinen{\'{y}} and Oum~\cite{hlioum08} showed that for
certain well-behaved connectivity functions, among them the
connectivity functions of matroids representable over a finite field
and the cut-rank function of graphs, this can be improved to
$f(k)n^3$.

To discuss algorithms for computing the tangle tree decomposition, we
first need to know how to deal with tangles
algorithmically. In~\cite{groschwe15a}, Schweitzer and I designed a data structure representing
all $\kappa$-tangles of order at most $k$ and providing basic
functionalities for these tangles, such as a membership oracle or a
function returning a leftmost minimum separation for two
tangles. Given oracle access to $\kappa$, this data structure can be
computed in time $n^{O(k)}$. Using this data structure, we showed
that the canonical tree decomposition for the family $\KT^{\le k}$ of
all $\kappa$-tangles of order at most $k$ can be computed in time $n^{O(k)}$.

\end{document}